\numberwithin{equation}{section}
\def\l{\left(}
\def\r{\right)}
\newtheorem{theorem}{Theorem}[section]
\newtheorem{lemma}[theorem]{Lemma}
\newtheorem{proposition}[theorem]{Proposition}
\newtheorem{corollary}[theorem]{Corollary}
\newtheorem{definition}[theorem]{Definition}
\newtheorem{remark}[theorem]{Remark}
\newcommand{\bea}{\begin{eqnarray}}
\newcommand{\eea}{\end{eqnarray}}
\def\beaa{\begin{eqnarray*}}
\def\eeaa{\end{eqnarray*}}
\def\ba{\begin{array}}
\def\ea{\end{array}}
\def\be#1{\begin{equation} \label{#1}}
\def \eeq{\end{equation}}
\newcommand{\lab}{\lababel}
\def\a{{\alpha}}
\def\b{{\beta}}
\def\be{{\beta}}
\def\ga{\gamma}
\def\Ga{\Gamma}
\def\de{\delta}
\def\De{\Delta}
\def\ep{\epsilon}
\def\ka{\kappa}
\def\la{\lambda}
\def\Si{\Sigma}
\def\om{\omega}
\def\Om{\Omega}
\def\ep{\epsilon}
\def\vphi{\varphi}
\def\th{\theta}
\def\ze{\zeta}
\def\ka{\kappa}
\def\nab{\nabla}
\def\pr{{\partial}}
\def\les{\lesssim}
\def\MM{{\mathcal M}}
\def\JJ{\mathcal{J}}
\def\II{{\mathcal I}}
\def\HH{{\mathcal H}}
\def\g{{\mathcal G}}
\def\OO{{\mathcal O}}
\def\UU{{\mathcal U}}
\def\KK{{\mathcal K}}
\def\RR{{\mathcal R}}
\def\HH{{\mathcal H}}
\def\D{{\bf D}}
\def\Jk{{\bf J}}
\def\g{{\bf g}}
\def\SSS{{\mathbb S}}
\def\f12{{\frac 1 2}}
\def\dual{{\,\,^*}}
\def\div{{\mbox {div}\,}}
\def\curl{{\mbox {curl}\,}}
\def\Lb{{\,\underline{l}}}
\def\Lb{{\,\underline{L}}}
\def\Xh{\,^{(h)}X}
\def\trch{{\mathrm{tr}}\, \chi}
\def\chih{{\widehat \chi}}
\def\chib{{\underline \chi}}
\def\chibh{{\underline{\chih}}}
\def\etab{{\underline \eta}}
\def\omb{{\underline{\om}}}
\def\bb{{\underline{\b}}}
\def\aa{\protect\underline{\a}}
\def\xib{{\underline \xi}}
\def\nabb{{\nab}}
\def\tr{{\mathrm{tr}}}
\def\atr{\,^{(a)}\mathrm{tr}}
\def\trchb{{\tr \,\chib}}
\def\atrch{\atr\, \chi}
\def\atrchb{\atr\, \chib}
\def\hot{\widehat{\otimes}}
\def\fb{\protect\underline{f}}
\def\f12{\frac 1 2}
\def\lab{\label}
\def\bsplit{\begin{split}}
\newcommand{\Mext}{{\,{}^{(ext)}\mathcal{M}}}
\DeclareFontFamily{U}{mathx}{\hyphenchar\font45}
\DeclareFontShape{U}{mathx}{m}{n}{
      <5> <6> <7> <8> <9> <10>
      <10.95> <12> <14.4> <17.28> <20.74> <24.88>
      mathx10
      }{}
\DeclareSymbolFont{mathx}{U}{mathx}{m}{n}
\DeclareMathAccent{\widecheck}{0}{mathx}{"71}
\def\Gac{\widecheck{\Ga}}
\def\nabzero{{\overset{\circ}{ \nab}}}
\def\Db{\dot{\D}}
\def\eS{\,^{S} \hspace{-1.5pt} e}
\DeclareFontFamily{U}{mathx}{\hyphenchar\font45}
\DeclareFontShape{U}{mathx}{m}{n}{
      <5> <6> <7> <8> <9> <10>
      <10.95> <12> <14.4> <17.28> <20.74> <24.88>
      mathx10
      }{}
\DeclareSymbolFont{mathx}{U}{mathx}{m}{n}
\DeclareMathAccent{\widecheck}{0}{mathx}{"71}
\def\psit{\tilde{\psi}}
\def\pat{\widetilde{\partial}}
     \def\Jk{\bf j}
\def\pa{\partial}
 \def\grad{{\mathrm{grad}}}
\def\dual{\prescript{*}{}}
\def\anti{\prescript{(a)}{}}
\def\Fb{\underline{F}}
\def\eS{\,^{(S)}e}
\def\trchS{\, ^{(S)}\trch}
\def\Mint{{\protect \, ^{(int)}\MM}}
 \def\Jk{\mathfrak{J}}
\def\ip{\mathcal{I}^+}
\def\hp{\mathcal{H}^+}
\def\Nv{\,^{(0)}N}
\def\ezero{ ^{(0)}  \hspace{-1pt}    e}
\def\fbzero{\,  ^{(0)} \hspace{-2 pt} \fb}
\def\lazero{\,  ^{(0)}\hspace{-2pt} \la}
\begin{document}

\title{Regularity of the Future Event Horizon in Perturbations of Kerr}
\author{Xuantao Chen  and Sergiu  Klainerman}
\date{}

\maketitle

\begin{abstract}
The goal of the paper  is to show that the event horizons  of the spacetimes  constructed in  \cite{KS}, see also  \cite{KS-Schw}, in the proof
  of the nonlinear stability  of slowly rotating  Kerr    spacetimes $\KK(a_0,m_0)$,   are necessarily smooth null hypersurfaces. Moreover   we show that   the result remains true for 
   the entire  range  of $|a_0|/m_0$   for which  stability
 can be established.
\end{abstract}

\tableofcontents

\section{Introduction}  
    It is known that in dynamical situations  the event horizon\footnote{ Recall  that an embedded    achronal hypersurface  $\HH$ in a Lorentzian manifold          $\MM$ is       a future  horizon if  it   is  ruled by  future null geodesics, i.e. every point $p\in \HH$ belongs to  a future, inextendible,  null geodesic $\Ga\subset\HH$, called a generator (Note that  $\Ga$  is allowed  to have past end points  in $\HH$ but no future  end points) of $\HH$.
    The primary example  is that of the    future event horizon      $\HH^+:=\partial \JJ^{-} (\ip)$      of an   asymptotically flat Lorentzian  manifold  with  complete  future null infinity $\II^+$, see  for example \cite{HE}, page 312 or   \cite{Wald}.} 
 (EV)      is  typically  non-smooth\footnote{  This  is expected to be the case  for  black hole mergers,  see \cite{Ga-Reall}   and the  references within.  The authors  of   \cite{Ga-Reall}    argue  that in   realistic  physical  situations    the  EV is smooth at late times and  take this, see Section 2.2,  as an assumption in their analysis.},   even non-differentiable on a dense set,  see \cite{C-Ga}, \cite{Chrusc}.     
  We  show that,  in contrast to   the  general   situation, the future event horizons  of the spacetimes  constructed in  \cite{KS}, see also  \cite{KS-Schw}  are necessarily smooth null hypersurfaces. 
    Recall that,  according to the main result of \cite{KS},  
the future globally hyperbolic   development  of  a general,   asymptotically  flat,    initial data set, sufficiently close  (in a suitable  topology)  to a   $Kerr(a_0, m_0) $   initial data set,  
  for sufficiently  small $|a_0|/m_0$,  has a complete    future null infinity  $\II^+$ and converges 
in  its causal past  $\JJ^{-1}(\II^{+})$  to another  nearby Kerr spacetime $Kerr(a_f, m_f)$ with parameters    $(a_f, m_f)$,  close to $(a_0, m_0)$, and possesses a future event horizon
 $\HH^+$. The goal of  the present paper  is to show that  $\HH^+$ is  in fact smooth.  We also show  that   the result remains true for  the entire  range  of $|a_0|/m_0$   for which  stability  can be established.

\begin{theorem}[Regularity]
\lab{thm:regularity}
The     event horizon $\hp$  of  the  perturbed  Kerr spacetime $\KK(a,m) $  constructed in \cite{KS}  is   regular\footnote{In the future of the initial Cauchy hypersurface.  See   Definition \ref{Def:RegularNC}  for  the  precise    notion   of regularity  used  here.  We note that  \cite{DHRT} also contains a proof  of   regularity of  EV  for the special case  of  the ultimately  Schwarzschildian   spacetimes   constructed in 
       that work.   See also \cite{CDGH}.}. Moreover,  as already shown in \cite{KS}, in the coordinates used there,
$\HH^+$ lies near $r=r_+:=m_f+\sqrt{m_f^2-a_f^2}$, where $m_f$, $a_f$ are the final  mass and angular momentum. The result  holds in fact   for  the full sub-extremal case  $|a_f|<m_f$  provided that    the     main  estimates\footnote{ It is important
 to note    that  our    results    depend only on  uniform  bounds   with respect to the perturbation parameter   and not 
   on   decay    with respect to  the advanced time $v$, see Remark  \ref{remark:v-decay}.  Such estimates, however,  cannot be proved in the absence  of  a  stability proof.  }  derived in \cite{KS}   remain valid. 
\end{theorem}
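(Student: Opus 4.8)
The plan is to characterize $\hp$ as the boundary of the past of future null infinity, $\hp = \partial \JJ^{-}(\ip)$, and then show that through every point of $\hp$ there passes a future-complete null generator that never leaves $\hp$ and never develops a future endpoint. The key structural input is the detailed control over the spacetime $\KK(a,m)$ furnished by \cite{KS}: in the coordinates used there, $\HH^+$ is a graph $r = r_{\HH^+}(v,\theta,\phi)$ over a region near $r = r_+$, with uniform (perturbation-parameter-independent) bounds on $r_{\HH^+} - r_+$ and on a sufficiently large number of its derivatives. First I would set up the candidate generating vector field: one constructs a null vector field $L_{\HH^+}$ tangent to the level set $\{r = r_{\HH^+}\}$, normalized against the ambient null frame of \cite{KS}, and shows that its integral curves are null geodesics \emph{reparametrized} — i.e.\ $\nab_{L_{\HH^+}} L_{\HH^+} = \lambda L_{\HH^+}$ for a controlled function $\lambda$. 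This is where the main estimates of \cite{KS} enter quantitatively: one needs $\trch|_{\HH^+}$, $\chih|_{\HH^+}$ and the relevant connection coefficients to satisfy a Raychaudhuri-type transport equation whose right-hand side is integrable along the generators, forcing $\trch \geq 0$ and preventing focal points.

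The second step is the no-future-endpoint / no-crossing argument. The general non-smoothness of event horizons comes precisely from crossing generators (caustics) which, by the theorems of Chru\'sciel et al.\ cited in the introduction, can form a dense set. Here the point is that the uniform closeness of $\KK(a,m)$ to the exact Kerr horizon — on which the generators are the orbits of the Killing generator of $\HH^+$ and are manifestly non-crossing — persists under the perturbation. Concretely, I would run a continuity/bootstrap argument along $v$: assuming the generators are smoothly foliating $\hp$ up to advanced time $v$, the transport equations of \cite{KS} (which hold with \emph{uniform} bounds, not requiring $v$-decay, per Remark \ref{remark:v-decay}) propagate a lower bound on the Jacobian of the generator flow, ruling out the formation of a crossing at $v$; combined with the expansion being non-negative (Raychaudhuri plus the dominant energy / null convergence condition, which holds since the Ricci terms are controllably small) this forbids future endpoints, so $\hp$ is a $C^1$ — and then, bootstrapping the elliptic/transport structure, $C^\infty$ — embedded null hypersurface.

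The third step is to identify the smooth hypersurface so constructed with the genuine event horizon $\partial \JJ^{-}(\ip)$. One inclusion is automatic: a future-inextendible null hypersurface ruled by complete generators that stays in a fixed $r$-neighborhood of $r_+$ lies in the complement of $\JJ^{-}(\ip)$, so it lies on or inside $\hp$. For the reverse I would use the completeness of $\II^+$ established in \cite{KS} together with the convergence to $Kerr(a_f,m_f)$ in $\JJ^{-}(\ip)$: any point strictly inside our candidate surface cannot be connected to $\ip$ (it is trapped by the near-Kerr mean-curvature structure), while any point strictly outside can, using the good null foliation of the exterior region. Hence the two surfaces coincide, giving Theorem \ref{thm:regularity}. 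For the final clause — validity in the full sub-extremal range $|a_f| < m_f$ — the argument is \emph{agnostic} to how the main estimates of \cite{KS} are obtained: it uses only their conclusions (uniform control of the null structure equations and curvature near $r_+$, completeness of $\II^+$, convergence to final Kerr). So, provided those estimates remain valid for a given $|a_0|/m_0$, every step above goes through verbatim with $r_+ = m_f + \sqrt{m_f^2 - a_f^2}$.

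The hard part, in my view, is Step 2: showing that the uniform (non-decaying) bounds are genuinely enough to exclude caustics for all advanced times. One must be careful that the constants in the Jacobian lower bound do not deteriorate as $v \to \infty$ — this is exactly why the emphasis on uniform-in-$v$ estimates (rather than decay) in the footnote and in Remark \ref{remark:v-decay} is the crucial structural point, and why the proof would be organized so that the only quantities that ever appear are those controlled uniformly by \cite{KS}.
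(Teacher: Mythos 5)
Your proposal has a genuine gap, and it occurs at the very first step: you take as input that $\hp$ is a graph $r=r_{\HH^+}(v,\theta,\phi)$ near $r=r_+$ ``with uniform bounds on $r_{\HH^+}-r_+$ and on a sufficiently large number of its derivatives,'' attributing this to \cite{KS}. What \cite{KS} provides is only that the set $\hp$ lies in the region $|r-r_+|\lesssim \delta$; it gives no differentiability of $\hp$, let alone uniform derivative bounds. Since event horizons are in general non-differentiable on a dense set, the graph regularity you start from is essentially the statement of the theorem, so the argument is circular. Everything downstream (the tangent null field $L_{\HH^+}$ with $\nab_{L_{\HH^+}}L_{\HH^+}=\lambda L_{\HH^+}$, the Raychaudhuri/Jacobian bootstrap along the generators of $\hp$) presupposes the object whose regularity is in question. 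The paper avoids this by never working on $\hp$ directly: it constructs the horizon as a limit object, taking the past-incoming null cones $C_v$ of the spheres $S(v,r_++\delta)$, proving these are \emph{regular} null cones uniformly in $v$ (Immersion and Embedding Criteria, Propositions \ref{Prop:Immersion-Criterion} and \ref{lem:im-em1}), obtaining uniform $C^N$ bounds for the graph functions of $S^\tau_v=C_v\cap\Sigma_\tau$, extracting a limit sphere by Arzela--Ascoli, and only then identifying the null cone of the limit sphere with $\partial\JJ^-(\ip)$.

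A second, related gap is quantitative: you correctly flag that the hard point is keeping the no-caustic bounds uniform as $v\to\infty$, but you do not identify a mechanism that achieves this, and ``closeness to Kerr where generators do not cross'' plus $\trch\geq 0$ does not by itself give it (errors along generators could accumulate over an infinite advanced-time interval). The paper's mechanism is the strict sign $\om\leq -C\delta_*^{1/2}$ near $r=r_+$ (Lemma \ref{le:nab-omega-near-extremal}, the redshift), which makes the transport equations \eqref{eq:nab_4'f} and \eqref{eq:nab'4nab'if} for the frame coefficient $f$ and its derivatives exponentially damped when integrated backward from $S(v,r_++\delta)$, so $|\nab'^{\leq N}f|\lesssim\delta$ and hence $|\trch'|\lesssim\delta$ with constants independent of $v$; this is exactly what rules out focal points and crossing generators uniformly. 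Finally, your last-step inclusion ``a null hypersurface staying in a fixed $r$-neighborhood of $r_+$ lies in the complement of $\JJ^-(\ip)$'' is false as stated: points with $r$ slightly above $r_+$ inside that neighborhood do lie in $\JJ^-(\ip)$ (see \eqref{eq:pastofscri}), and distinguishing the true horizon from other null hypersurfaces confined to $\{|r-r_+|\leq\delta_{\HH}\}$ is precisely the content of the separate uniqueness theorem, which again uses the sign of $\om$.
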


We also prove the following  uniqueness result.

\begin{theorem}[Uniqueness]
\label{thm:uniqueness}
 Under the same assumptions, any point $p$ that is not on $\HH^+$ either lies in the past of null infinity, or satisfies the property that, any future null geodesic emanating from it hits the spacelike hypersurface $\{r=r_+-\delta_{\HH}\}$ in the trapping region ($\delta_{\HH}$ is a small positive constant).  In particular, any future complete null hypersurface, different  from  $\HH^{+}$, cannot stay in the region $\{|r-r_+|\leq \delta_{\HH}\}$ where $\HH^+$ is located, see Figure \ref{Figure:Intro}.
\end{theorem}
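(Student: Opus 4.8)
We plan to derive Theorem~\ref{thm:uniqueness} from a study of future null geodesics in the near--horizon (``trapping'') region $\Mtrap$ of \cite{KS}, where the metric is $O(\ep)$--close to $\mathrm{Kerr}(a_f,m_f)$, the region is foliated by the incoming null cones $\{v=\mathrm{const}\}$, it carries the radial coordinate $r\ge r_+-\deh$ and is bounded towards the interior by the spacelike hypersurface $\Si_*=\{r=r_+-\deh\}$, and, by Theorem~\ref{thm:regularity}, the event horizon $\hp$ is there a smooth null hypersurface, writable as a graph $\{r=F(v,\theta^A)\}$ with $F$ equal to $r_+$ up to $O(\ep)$, with $O(\ep)$ derivatives. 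Set $\BB:=\MM\setminus\JJ^-(\II^+)$; since $\JJ^-(\II^+)$ is a past set, $\BB$ is a future set, and since (by \cite{KS}) every point with $r$ larger than the near--horizon range lies in the domain of outer communication, $\BB\subset\Mtrap$, with $\hp$ separating $\Mtrap$ into the exterior part $\{r>F\}\subset\JJ^-(\II^+)$ and the black hole part $\mathrm{int}(\BB)=\{r_+-\deh\le r<F\}$. For $p\notin\hp$ this already gives the first alternative unless $p\in\mathrm{int}(\BB)$, i.e.\ $w(p)>0$ where $w:=F-r$; so assume henceforth $p\in\mathrm{int}(\BB)$.

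\emph{Step 1: a future null geodesic from $p$ stays strictly inside $\hp$.} Let $\ga$ be future null with $\ga(0)=p$. Since $\BB$ is a future set, $\ga\subset\BB\cup\hp$. If $\ga$ met $\hp$ at $q=\ga(s_1)$, then either $p\in I^-(q)\subset I^-(\hp)\subset\JJ^-(\II^+)$, contradicting $p\in\BB$; or $\ga|_{[0,s_1]}$ is an achronal null geodesic terminating at $q$, in which case — using that $\hp$ is a \emph{smooth} null hypersurface, so that its generator is the unique null direction tangent to it at $q$ while every other past null ray from $q$ enters the exterior side $\{w<0\}$ — the segment must be the generator of $\hp$ through $q$, forcing $p\in\hp$, again a contradiction. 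Hence $w(\ga(s))>0$ while $\ga(s)\in\Mtrap$; and since $\ga\subset\BB\subset\Mtrap$ is future--directed, the only way $\ga$ can leave $\Mtrap$ is by reaching $\{r=r_+-\deh\}$ inside the trapping region.

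\emph{Step 2 (the crux): $\ga$ must in fact reach $\{r=r_+-\deh\}$ at finite affine parameter} — equivalently, $w$ cannot stay bounded along $\ga$. This is the quantitative instability of $\hp$, i.e.\ the red--shift, which already underlies the proof of Theorem~\ref{thm:regularity} and the main estimates of \cite{KS}. Writing the null geodesic equations in the frame adapted to the incoming foliation and using the $O(\ep)$--closeness to $\mathrm{Kerr}(a_f,m_f)$ — for which, once $|a_f|<m_f$ and $\deh$ is chosen small depending on $m_f-|a_f|$, the radial potential has no critical point in $\{|r-r_+|\le 2\deh\}$ — one expects a transport relation along $\ga$ of the schematic shape
\[
\tfrac{d}{d\la}w=\big(\ka+O(\ep)+O(w)\big)\,w ,
\]
with $\la$ comparable to the affine parameter and $\ka>0$ the surface gravity of $\mathrm{Kerr}(a_f,m_f)$. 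As $w(\ga(0))>0$, this integrates to $w(\ga(\la))\ge w(\ga(0))\,e^{\ka\la/2}$, which exceeds the $O(\deh)$ bound valid throughout $\Mtrap$ at finite $\la$; hence $\ga$ leaves $\Mtrap$ at finite affine parameter, and by Step~1 it does so through $\{r=r_+-\deh\}$ in the trapping region. This proves the first assertion. I expect the main obstacle to be precisely this transport estimate with perturbation--independent constants (cf.\ Remark~\ref{remark:v-decay}): one must isolate the hyperbolic structure of $\hp$ from the \cite{KS} estimates without appealing to $v$--decay, and, since $F$ may bulge to either side of $r=r_+$, work with $w=F-r$ (coupled to the graph equation for $F$) rather than $r_+-r$, so that the degeneracy of $g^{-1}(dr,dr)$ at $r=r_+$ is absorbed.

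\emph{Step 3: the ``in particular''.} Let $\mathcal H'$ be a future--complete null hypersurface (connected, without loss of generality) with $\mathcal H'\subset\{|r-r_+|\le\deh\}\subset\Mtrap$. Each generator of $\mathcal H'$ is a future--complete null geodesic confined to $\{|r-r_+|\le\deh\}$, hence does not meet the spacelike hypersurface $\{r=r_+-\deh\}$, so by the first assertion every $p'\in\mathcal H'$ lies in $\JJ^-(\II^+)\cup\hp$. If $\mathcal H'$ had a point in $\mathrm{int}\,\JJ^-(\II^+)$ then, $\mathcal H'\cap\hp$ being relatively closed in $\mathcal H'$ and $\mathcal H'\cap\mathrm{int}\,\JJ^-(\II^+)$ relatively open and nonempty, connectedness would force $\mathcal H'\subset\mathrm{int}\,\JJ^-(\II^+)$; but then each generator of $\mathcal H'$ is a future--complete null geodesic confined near $\hp$ and strictly on its exterior side $\{w<0\}$, which violates the same no--trapping property — the time--reversed form of Step~2 shows such a geodesic can neither escape to larger $r$ into the bulk domain of outer communication nor cross $\hp$ while staying confined — a contradiction. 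Therefore $\mathcal H'\subset\hp$: a future--complete null hypersurface remaining in $\{|r-r_+|\le\deh\}$ has no point off $\hp$, which is the stated uniqueness, see Figure~\ref{Figure:Intro}.
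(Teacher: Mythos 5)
Your Step 2 is the entire quantitative content of the theorem, and, as you yourself flag (``one expects a transport relation\dots'', ``I expect the main obstacle to be precisely this transport estimate''), it is not proved: the schematic relation $\frac{d}{d\lambda}w=(\kappa+O(\ep)+O(w))\,w$ for $w=F-r$ along an \emph{arbitrary} future null geodesic is asserted, not derived, and it is not a pointwise fact about $w$ alone. The rate of change of $w$ along $\gamma$ depends on the direction of $\dot\gamma$ and on how that direction and the affine normalization evolve relative to a frame adapted to $\HH^+$ (for nearly ingoing directions the rate is of order one, not of order $w$; for the dangerous nearly tangential directions the claimed lower bound \emph{is} the red-shift estimate one must prove, with constants uniform in $\ep$ and without $v$-decay). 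The same unproved estimate is invoked again, in time-reversed form, in your Step 3 to exclude a confined exterior-side hypersurface. So the proposal identifies the right phenomenon (the sign of $\om$ near $r_+$) but leaves the crux open.

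The paper closes exactly this gap by a different mechanism that avoids any ODE for $w$ along the geodesic. For each large $V$ it constructs, with the machinery of Section \ref{section:RegularityC_vS_v} (regularity of the past incoming cones $C_{V,r}$ of the spheres $S(V,r)$, with $|f|,|\nab' f|\lesssim\delta$), an optical function $u$ equal to $-(r-r_+)$ on $\{v=V\}$ and constant on each cone; the identity $\nab_4'(e_3'(u))=2\om'\,e_3'(u)$ combined with $\om'\le-C\delta_*^{1/2}$ (Lemma \ref{le:nab-omega-near-extremal}) gives $e_3'(u)\ge e^{\frac12 C\delta_*^{1/2}(V-v_0)}$ on $\{v=v_0\}$, hence $u(p_1)-u(p_0)\ge d(v_0)\,e^{\frac12 C\delta_*^{1/2}(V-v_0)}$, where $p_1$ lies on the putative geodesic and $p_0$ is the point of $\HH^+$ on the same integral curve of $\pa_r$ in $\{v=v_0\}$ (Proposition \ref{prop:u-difference-estimate-uniqueness}). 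On the other hand, monotonicity of the optical function along null geodesics together with the characterization \eqref{eq:pastofscri} forces $0\le u(p_1)-u(p_0)\le 2\delta$, so letting $V\to\infty$ yields $d(v_0)=0$ and $\gamma\subset\HH^+$. Note also that the exterior-side case is handled in the paper without any red-shift: strict-exterior points lie in the causal past of $S(v,r_\delta^*)$ and hence in $\JJ^-(\II^+)$, so your appeal to a ``time-reversed Step 2'' there is both unnecessary and unproved. To repair your write-up you would either have to establish your transport inequality uniformly over all null directions (essentially redoing the frame and transformation estimates of Section \ref{section:RegularityC_vS_v} along $\gamma$), or adopt the paper's optical-function comparison argument.
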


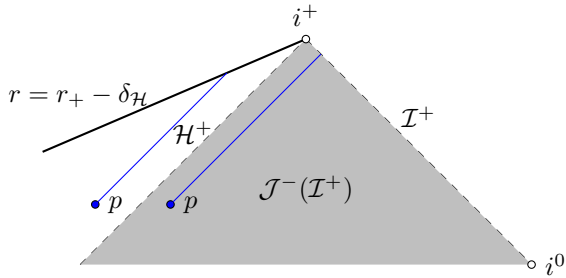
\begin{figure}
\begin{tikzpicture}
\useasboundingbox (-8,-0.5) rectangle (4,4); 
\def\size{3}
    \def\height{3}

    \draw[dashed] (-\size, 0) -- (0,\height) node[midway,above] {${\cal H}^+$};
    \draw[dashed]  (0,\height) -- (\size, 0) node[midway,above=0.2cm] {${\cal I}^+$};

    \draw[thick] (0,\height) -- (-3.5,\height/2) node[midway,left=0.2cm] {$r=r_+-\delta_{\HH}$};
    
    \fill[gray!50] (-\size,0)--(\size,0)--(0,\height)--(-\size,0) ;
    \node at (0,\height/3) {$\JJ^-(\mathcal I^+)$};

    \node[draw, circle, fill=white, inner sep=1pt, label=above:{$i^+$}] at (0,\height) {};
    \node[draw, circle, fill=white, inner sep=1pt, label=right:{$i^0$}] at (\size,0) {};

    \node[draw, circle, fill=blue, inner sep=1pt,label=right:{$p$}] at (-2.8,0.8) {};
    \draw[blue] (-2.8,0.8)--(-21/20,51/20) {}
    ;
    \node[draw, circle, fill=blue, inner sep=1pt,label=right:{$p$}] at (-1.8,0.8) {};
    \draw[blue] (-1.8,0.8)--(0.2,2.8) {}
    ;
\end{tikzpicture}
\caption{The event horizon}\lab{Figure:Intro}
\end{figure}

We describe below the main  features  of the   paper.
\begin{enumerate}

\item The strategy of the proof   for the regularity theorem  \ref{thm:regularity}  is described  
in Section  \ref{section:strategy}.

\item Section  \ref{Sect:preliminaries}  contains a  detailed  description  of  the properties  of the background spacetimes  we  consider,
 based on the   stability results of \cite{KS}, \cite{GKS}.     As mentioned  in the abstract     our results hold     true for   the entire  range  of $|a|/m$   for which  stability  can be established and, moreover,     they   only rely  on 
  uniform bounds and not  on   the more  precise   decay estimates  derived in \cite{KS}, \cite{GKS}, see Remark
   \ref{remark:v-decay}.  Most of our analysis is  restricted to the interior region $\Mint$, more precisely in a small  neighborhood 
   $\RR_\delta:=\{|r-r_+|\leq \delta<1- \frac{|a|}{m} \}$ of  EV,  see \eqref{def:RR_de}.  
    As in \cite{KS}, \cite{GKS},   the  perturbation estimates  in Section \ref{subsect;Kerr-perturbations} are described relative to a non-integrable   frame, yet  through the proof we need to invoke various integrable  frames,   
   such as  that described in Section \ref{subsect:adapted-integr}.  The general  change of frame formulas   are described in  Section \ref{subsect:NullFrameTransformation} and the main estimates    for the integrable frame  are done  in  Section \ref{section:estim-integrable}.
   
  \item  Section \ref{subsection:Causalstructure-nullcone}  contains two   important     criteria for  the regularity of null cones over spheres  as well as  a description of the main steps in the proof of the regularity  theorem,  mentioned  above.
   
 \item   Section \ref{section:RegularityC_vS_v}  contains  a detailed   proof of the steps  described in  Section \ref{section:strategy}. The main quantitative estimates are obtained in Proposition \ref{Prop:Boundsfor-f} and Proposition \ref{prop:estimate-trch'},   which makes use of the sign of $\om$ near the horizon, see Lemma \ref{le:nab-omega-near-extremal}.

 \item Section \ref{section:ProofUniqueness}  contains the proof of the uniqueness theorem.  The main  quantitative ingredient, described in Proposition \ref{prop:u-difference-estimate-uniqueness}, makes again use of the sign of $\om$  near  $\HH^+$.

  \end{enumerate}


\section{Preliminaries}
\label{Sect:preliminaries}
\subsection{Horizontal structures}

We  work within the framework   of general horizontal structures  $\big(e_3, e_4, H=\{e_3, e_4\}^\perp \big) $, where $e_3$ and $e_4$ are null vector fields with  $g(e_3,e_4)=-2$,  introduced  in \cite{GKS}. Given an arbitrary  orthonormal basis $\{e_1,e_2\}$ of $ H$    we  consider the null  frame  $\{e_3,e_4,e_a\}$ ($a=1,2$) and the associated Ricci and curvature  coefficients\footnote{Here $\dual R$ is defined by $\dual R_{\a\b\mu\nu}=\frac 12 \in_{\mu\nu}^{\quad \rho\sigma} R_{\a\b\rho\sigma}$, with $\in$ the volume form on $\MM$.}
\begin{equation*}
    \chi_{ab}=g(D_a e_4,e_b),\quad \chib_{ab}=g(D_a e_3,e_b),      \quad  \eta_a=\frac 12 g(D_3 e_4,e_a),\quad \etab_a=\frac 12 g(D_4 e_3,e_a), \quad \zeta_a=\frac 12 g(D_a e_4,e_3), 
\end{equation*}
\begin{equation*}
    \om=\frac 14 g(D_4 e_4,e_3),\quad \omb=\frac 14(D_3 e_3,e_4),\quad  \xi_a=\frac 12 g(D_4 e_4,e_a), \quad \xib=\frac 12 g(D_3 e_3,e_a),
\end{equation*}
\begin{equation*}
    \a_{ab}=R_{a4b4},\quad \b_a=\frac 12 R_{a434},\quad \rho=\frac 14 R_{3434},\quad \dual\rho=\frac 14\dual R_{3434},\quad \bb_a=\frac 12 R_{a334},\quad \underline \a_{ab}=R_{a3b3}.
\end{equation*}
Here $D$ denotes  the spacetime covariant derivative operator. We further decompose 
\beaa
\chi_{ab}=\chih_{ab}+\frac  12  \trch\, \de_{ab}+\frac 1 2 \atrch \in_{ab},  \qquad  \chib_{ab}=\chibh_{ab}+\frac  12  \trchb\, \de_{ab}+\frac 1 2  \atrchb \in_{ab},
\eeaa
where the trace and anti-trace are defined by
\begin{equation*}
    \trch:=\delta^{ab}\, \chi_{ab},\quad \trchb:=\delta^{ab}\, \chib_{ab},\quad \atrch:=\in^{ab}\chi_{ab},\quad \atrchb:=\in^{ab}\chib_{ab}.
\end{equation*}
and the horizontal volume form $\in_{12}=-\in_{21} =\frac 12\in (e_1, e_2,e_3,e_4)=1$. 
Recall  that   the horizontal structure is integrable if and only if   the asymmetric traces $\atrch, \atrchb$ vanish identically.

For a spacetime vector field $X$, we define its projection onto the horizontal structure $H$ by
\begin{equation*}
    \Xh:=X+\frac 12 g(X,e_3)e_4+\frac 12 g(X,e_4)e_3.
\end{equation*}
A $k$-covariant tensor field $U$ is called horizontal, if 
\begin{equation*}
    U(X_1,\cdots, X_k)=U(\Xh_1,\cdots,\Xh_k).
\end{equation*}
The horizontal covariant derivative operator $\nab$ is defined by
\begin{equation*}
    \nab_X Y:=\, ^{(h)}(D_X Y)=D_X Y-\frac 12 \chib(X,Y)e_4-\frac 12 \chi(X,Y)e_3
\end{equation*}
for two horizontal vector fields $X,Y$. Similarly, one can define $\nab_3 X$ and $\nab_4 X$ as the projections of $D_3 X$ and $D_4 X$. Then the horizontal covariant derivative can be generalized for tensors in the standard way
\begin{equation*}
    \nab_Z U(X_1,\cdots,X_k)=Z (U(X_1,\cdots X_k))-U(\nab_Z X_1,\cdots, X_k) -\cdots - U(X_1, \cdots, \nab_Z X_k),
\end{equation*}
and similarly for $\nab_3 U$ and $\nab_4 U$.

The left dual of a horizontal $1$-form  $\psi$ and a horizontal covariant $2$-tensor $U$  are defined by
\begin{equation*}
    \dual \psi_a:=\in_{ab}\, \psi_b,\quad (\dual U)_{ab}:=\in_{ac}\, U_{cb}.
\end{equation*}
For two horizontal $1$-forms $\psi$, $\phi$, we also define
\begin{equation*}
    \psi\cdot \phi:=\delta^{ab}\psi_a\phi_b,\quad \psi\wedge\phi:=\in^{ab} \psi_a \phi_b,\quad (\psi\hot\phi)_{ab}=\psi_a\phi_b+\psi_b\phi_a-\delta_{ab}\psi\cdot\phi.
\end{equation*}
In particular $|\psi|:=(\psi\cdot\psi)^\frac 12$ with the straightforward generalization to general horizontal covariant tensors. 

Similarly we define the derivative operators
\begin{equation*}
    \div \psi:=\delta^{ab}\, \nab_a \psi_b,\quad \curl \psi:=\, \in^{ab} \nab_a \psi_b,\quad (\nab\hot \psi)_{ab}:=\nab_a \psi_b+\nab_b \psi_a-\delta_{ab}\, \div\psi.
\end{equation*}


\subsection{Null frame transformations}\lab{subsect:NullFrameTransformation}

A general frame transformation 
$(f,\fb,\lambda)$ between two null  horizontal structures  $(e_3, e_4, H)$ and $(e_3', e_4', H')$ can be written in the form, see Section 2.2 in \cite{KS},
\begin{equation}\label{eq:general-null-frame-transformations}
    \begin{split}
    e_4'&=\lambda\left(e_4+f^b e_b+\frac 14|f|^2 e_3\right),\\
    e_a'&=e_a+\frac 12 \fb_a f^b e_b+\frac 12\fb_a e_4+\left(\frac 12 f_a+\frac 18|f|^2 \fb_a\right)e_3,\\
    e_3'&=\lambda^{-1}\left(\left(1+\frac 12 f\cdot\fb+\frac {1}{16}|f|^2|\underline f|^2\right)e_3+\left(\fb_b+\frac 14|\fb|^2 f^b\right)e_b+\frac 14|\fb|^2 e_4\right).
    \end{split}
\end{equation}
The  Ricci coefficients   transform according to   Proposition 2.2.3 in \cite{KS}.  In our work we will consider frame transformations of the  form \footnote{ Note that the general transformation can be obtained by composing this  with a second transformation which reverses the roles of $e_3-e_4$, i.e. with  $f=0$ and 
  $\fb$ nontrivial.}
\bea
\lab{eq:trasportation.formulas-1}
    e_4'=\la\l e_4+f^a e_a+\frac 14 |f|^2 e_3\r,\quad e_a'=e_a+\frac 12 f_a e_3,\quad e_3'=\la^{-1} e_3.
\eea
\begin{definition}\label{def:psit}
If $\psi=\psi_{a_1\ldots a_k}$ is an $H$-horizontal  tensor,  we  define  the  associated  $H'$-horizontal  tensor by the formula
\bea
\lab{eq:def-psi'}
\psit_{a_1\ldots a_k} =\psit( e_{a_1}',\ldots, e_{a_k}')=\psi( e_{a_1},\ldots, e_{a_k}).
\eea
When taking derivatives with respect to  the primed frame, we define  for an $H$-horizontal  $k$-tensor $\psi$
\begin{equation*}
     \nab_b'\psi_{a_1\cdots a_k}:=\nab_b'\tilde\psi_{a_1\cdots a_k}=e_b'(\tilde\psi_{a_1\cdots a_k})-\tilde\psi(D_{e_b'}e_{a_1}',\cdots,e_{a_k}')-\cdots-\tilde\psi (e_{a_1}',\cdots,D_{e_b'}e_{a_k}').
\end{equation*}
We also define $\nab_3'\psi$ and $\nab_4'\psi$ similarly.
By abuse of language,        we  will  often  refer to $\psit$  as simply  $\psi$ \footnote{ In  \cite{KS} and \cite{GKS}   $\psit$   was systematically   denoted  by  $\psi$.}. 
\end{definition}

\begin{lemma}
\label{Le:trasportation.formulas}
Consider a  null frame $\{e_3, e_4, e_a\}$   for which $\xib=0$, i.e.  $e_3$ is proportional to a geodesic vectorfield. Then,
\begin{enumerate}
\item Under the transformation \eqref{eq:trasportation.formulas-1},  we have\footnote{  Note that  $\nab'f$ and $\nab'_4 f$ in the transformation formulas above are defined  according to the definition  \eqref{eq:def-psi'}.}
\bea
\label{eq:trasportation.formulas}
\bsplit
\lambda^{-1}\chi_{ab}' &= \chi_{ab}  +  \nab_a'f_b + f_a \eta_b + f_a \ze_b-\frac 14|f|^2 \chib_{ab}-\omb f_a f_b,\\
\lambda^{-1}\om' &=  \om -\frac{1}{2}\la^{-1}e_4'(\log\la)+\frac{1}{2}f^a (\ze-\etab)_a -\frac{1}{4}|f|^2\omb - \frac{1}{4}f^a f^b \chib_{ab},\\
\lambda^{-2}\xi_a' &= \xi_a +\frac{1}{2}\la^{-1}\nab_4'f_a+\frac 12 f_b \chi_{ba}+\om f_a +\frac 14 |f|^2\eta_a-\frac 14|f|^2 \etab_a+\frac 12 f_a f_b \zeta_b-\frac 18 |f|^2 f_b \chib_{ba},\\
\ze_a' &= \ze_a -e_a'(\log\lambda)-\omb f_a -\frac{1}{2}f_b \chib_{ab},\\
\etab_a' &= \etab_a + \frac{1}{2}f_b \chib_{ba},\\
\lambda\, \chib_{ab}'&=\chib_{ab},\\
\la^{-1}\b_a'&=\b_a+\frac 32(f\rho+\dual f\dual\rho)+\frac 14 |f|^2 \bb_a-f_a f_b \bb_b- f_c \dual f_a \dual\bb_c-\frac 12 |f|^2 f_c \aa_{ac}+f^a f^b f^c\aa_{bc}.
\end{split}
\eea
\item Given  an $H$-horizontal  tensor $\psi$ we have, schematically,
\begin{equation}\lab{eq:nab'i-on-H-tensor}
    \nab'_{a_1}\cdots\nab_{a_i}'\psi=\nab_{e_{a_1}'}\cdots \nab_{e_{a_i}'}\psi+f\cdot\chib\cdot(\nab_{e_{a}'})^{\leq i-1}\psi+\nab'^{\leq i-1}(f\cdot\chib\cdot\psi).
\end{equation}
where the differentiation $\nab'$ is defined in Definition \ref{def:psit}.
\item Similar   formulas   hold true \footnote{By the usual $e_3-e_4$ duality.}    for  frame transformations of the form
\begin{equation*}
    e_4'=e_4,\quad e_a'=e_a+\frac 12 \fb_a e_4,\quad e_3'=e_3+\fb^a e_a+\frac 14 |\fb|^2 e_4,
\end{equation*}
for which  $\xi=0$. 

\end{enumerate}
\end{lemma}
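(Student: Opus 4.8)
The plan is to derive the transformation formulas \eqref{eq:trasportation.formulas} by brute-force application of the definitions of the Ricci coefficients in the primed frame, using the hypothesis $\xib = 0$ to simplify throughout. First I would record the transformation \eqref{eq:trasportation.formulas-1} as a special case of the general formula \eqref{eq:general-null-frame-transformations}, obtained by setting $\fb = 0$; this is the crucial structural simplification, since it means the primed horizontal distribution $H'$ is obtained from $H$ only through a shear along $e_3$, and $e_3' = \la^{-1} e_3$ is simply a rescaling of $e_3$. For each Ricci coefficient $\psi'$ I would write its defining contraction (e.g. $\la^{-1}\chi'_{ab} = \la^{-1} g(D_{e_a'} e_4', e_b')$), substitute the frame relations, and expand $D$ using its Leibniz property together with the known connection coefficients of the unprimed frame. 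The appearance of $\nab_a' f_b$ rather than $\nab_a f_b$ is then a bookkeeping matter: one separates the part of $e_a'(f_b)$-type terms that assembles into the primed covariant derivative of $f$ (viewed as an $H'$-tensor via Definition \ref{def:psit}) from the remaining algebraic terms involving $\chib$, $\eta$, $\ze$, $\omb$, which are the error terms displayed on the right-hand sides. The $\xib = 0$ hypothesis enters precisely to kill the terms that would otherwise arise from $D_3 e_3$ having a horizontal component, which is what makes the formula for $\la^{-1}\chi'_{ab}$ as clean as stated (no $\xib$-contribution), and similarly streamlines $\xi'_a$ and $\ze'_a$.

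For the curvature transformation (the last line, for $\b'_a$) the approach is the same in spirit but uses the algebraic transformation of the Riemann tensor under the frame change: $\b'_a = \tfrac12 R(e_a', e_4', e_3', e_4')$, and one substitutes \eqref{eq:trasportation.formulas-1} and collects by order in $f$. The schematic multi-derivative commutator identity \eqref{eq:nab'i-on-H-tensor} in part (2) follows by induction on $i$: the base case $i=1$ is essentially the statement that $\nab_a' \psi - \nab_{e_a'}\psi$ is a zeroth-order expression of the form $f \cdot \chib \cdot \psi$ (read off from the difference of the two Koszul-type formulas, again using $\xib = 0$ so that the correction to the connection when passing to the primed frame is controlled by $f$ and $\chib$ alone), and the inductive step is Leibniz applied to $\nab'$ acting on such an expression, absorbing lower-order pieces into the two schematic error terms. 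Part (3) requires no new work: it is the image of parts (1)–(2) under the involution interchanging $e_3 \leftrightarrow e_4$, which swaps $\chi \leftrightarrow \chib$, $\om \leftrightarrow \omb$, $\eta \leftrightarrow \etab$, $\xi \leftrightarrow \xib$, $f \leftrightarrow \fb$, and turns the hypothesis $\xib = 0$ into $\xi = 0$; one simply cites this symmetry rather than repeating the computation.

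The main obstacle is purely organizational rather than conceptual: keeping track of the many $O(|f|^2)$, $O(|f|^3)$, and $O(|f|^4)$ terms generated by the quadratic and quartic dependence of $e_4'$ and $e_3'$ on $f$ in \eqref{eq:trasportation.formulas-1}, and making sure that the covariant-derivative terms are correctly packaged as $\nab_a' f_b$ (primed frame, primed connection) and not as some mixture of primed and unprimed differentiations. In practice I would first derive everything to first order in $f$, where the formulas are short and transparent, then add the higher-order corrections one contraction at a time, checking each against the general Proposition 2.2.3 of \cite{KS} by specializing $\fb = 0$ there. Since the lemma is stated as a specialization of results already available in \cite{KS}, the cleanest exposition is to quote Proposition 2.2.3 of \cite{KS}, set $\fb = 0$, impose $\xib = 0$, and simplify — relegating the term-by-term verification to a remark.
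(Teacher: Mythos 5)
Your proposal matches the paper's proof in all essentials: part (1) is handled exactly as you ultimately suggest, by quoting Proposition 2.2.3 of \cite{KS} specialized to $\fb=0$ with $\xib=0$; part (2) is proved by the same base-case computation of $\nab_b'\psi-\nab_{e_b'}\psi$ (yielding the $f\cdot\chib\cdot\psi$ correction) followed by induction; and part (3) is dispatched by the $e_3$--$e_4$ duality. The extra term-by-term derivation you sketch for part (1) is sound but unnecessary, and the paper omits it.
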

\begin{proof}
The first part of the Lemma is a special case of  Proposition 2.2.3 in \cite{KS}. To prove the second part we proceed as follows:
\begin{equation}\label{eq:nab'-on-H-tensors}
\begin{split}
    \nab_b'\psi_{a_1\cdots a_k}&=\nab_b'\tilde\psi_{a_1\cdots a_k}=e_b'(\tilde\psi_{a_1\cdots a_k})-\tilde\psi(D_{e_b'}e_{a_1}',\cdots,e_{a_k}')-\cdots-\tilde\psi (e_{a_1}',\cdots,D_{e_b'}e_{a_k}')\\
    &=e_b'(\psi_{a_1\cdots a_k})-\tilde\psi(D_{e_b'}e_{a_1}',\cdots,e_{a_k}')-\cdots-\tilde\psi (e_{a_1}',\cdots,D_{e_b'}e_{a_k}')\\
    &=(D_{e_b'}\psi)(e_{a_1},\cdots,e_{a_k})+\psi(D_{e_b'}e_{a_1},\cdots,e_{a_k})+\cdots+\psi(e_{a_1},\cdots,D_{e_b'}e_{a_k})\\
    &\quad -\tilde\psi(D_{e_b'}e_{a_1}',\cdots,e_{a_k}')-\cdots-\tilde\psi (e_{a_1}',\cdots,D_{e_b'}e_{a_k}')\\
    &=\nab_{e_b'}\psi_{a_1\cdots a_k}+\sum_{i=1}^k\left(g(D_{e_b'}e_{a_i},e_c)-g(D_{e_b'}e_{a_i'},e_c')\right)\psi_{a_1\cdots c\cdots a_k}\\
    &=\nab_{e_b'}\psi_{a_1\cdots a_k}+\sum_{i=1}^k\left(-\frac 12 f_{a_i}\chib_{bc}\psi_{a_1\cdots c\cdots a_k}+\frac 12 f_c \chib_{a_i b} \psi_{a_1\cdots c \cdots a_k}\right).
    \end{split}
\end{equation}
Note that all terms on the last line, including $\nab_{e_b'}\psi_{a_1\cdots a_k}=(\nab_b+\frac 12 f_b\nab_3)\psi_{a_1\cdots a_k}$, are $H$-horizontal $(k+1)$-tensors. This serves as a general formula when we apply $\nab'$ to an $H$-horizontal tensor $\psi$.

To write $\nab_a'\nab_b'\psi$ using $\nab_{e_a'}\nab_{e_b'}\psi$, we repeat the calculation in \eqref{eq:nab'-on-H-tensors} with $\psi$ replaced by the last line of \eqref{eq:nab'-on-H-tensors}, and obtain schematically
\begin{equation*}
    \nab_a'\nab_b'\psi=\nab_{e_a'}\nab_{e_b'}\psi+f\cdot \chib \cdot \nab_{e_c'}\psi+\nab'(f\cdot\chib\cdot\psi).
\end{equation*}
Repeating this we obtain, inductively,
\beaa
    \nab'_{a_1}\cdots\nab_{a_i}'\psi=\nab_{e_{a_1}'}\cdots \nab_{e_{a_i}'}\psi+f\cdot\chib\cdot(\nab_{e_{a}'})^{\leq i-1}\psi+\nab'^{\leq i-1}(f\cdot\chib\cdot\psi)
\eeaa
as stated.
\end{proof}
\begin{remark}
    When $\lambda=1$ in \eqref{eq:trasportation.formulas-1}, one can also derive similarly as \eqref{eq:nab'-on-H-tensors} the relation
    \begin{equation*}
        \nab_4'\psi_{a_1\cdots a_k}=\nab_{e_4'}\psi_{a_1\cdots a_k}+\sum_{i=1}^k\left(-\frac 12 f_{a_i}\left(\etab_c+f_b\chib_{bc}\right)\psi_{a_1\cdots c\cdots a_k}+\frac 12 f_c \left(\etab_{a_i}+f_b\chib_{ba_i}\right) \psi_{a_1\cdots c \cdots a_k}\right),
    \end{equation*}
    and \eqref{eq:nab'i-on-H-tensor} can be generalized to
    \begin{equation}\label{eq:nab'andnab_4'on-H-tensors}
        (\nab_4',\nab')^{i}\psi=(\nab_{e_4'},\nab_{e_a'})^{i}\psi+f\cdot \left(\etab,f\cdot \chib\right)\cdot (\nab_{e_4'},\nab_{e_a'})^{\leq i-1}\psi+(\nab_4',\nab')^{\leq i-1} \left(f\cdot (\etab,f\cdot\chib)\cdot \psi\right).
    \end{equation}
\end{remark}

\subsection{Basic  facts about the  Kerr  spacetime \texorpdfstring{$\KK(a,m)$}{K(a,m)}}
\subsubsection{Boyer-Lindquist (BL) coordinates}


Relative to the BL coordinates, the  Kerr  metric  $\g=\g_{a,m}$  takes  the form
 \beaa
\g &=& \,    -\frac{\left(\Delta-a^2\sin^2\theta\right)}{|q|^2}dt^2-
\frac{ 4 amr }{|q|^2}   \sin^2\theta  \,   dt  d\phi+\frac{|q|^2}{\Delta}dr^2+ |q|^2 d\theta^2+
\frac{ \Si^2}{|q|^2}\sin^2\theta
\, d\phi^2.
\eeaa
where $q=r+ i a \cos\th$ and
\beaa
\left\{\ba{lll}
\Delta &=& r^2-2mr+a^2,\\
|q|^2 &=& r^2+a^2(\cos\theta)^2,\\
\Sigma^2 &=& (r^2+a^2)|q|^2+2mra^2(\sin\theta)^2=(r^2+a^2)^2-a^2(\sin\theta)^2\Delta.
\ea\right.
\eeaa
The function $\Delta=r^2-2mr+a^2$ has two zeros  $ r_\pm=m\pm \sqrt{m^2-a^2}$  with 
 $\{r=r_+\}$  the event horizon of  $\KK(a, m)$.
The ingoing principal null (PN) frame, regular towards the future  for all $r>0$,  is given by
\bea
\lab{eq:null-pair-in}
\bsplit
e_4=\frac{r^2+a^2}{|q|^2} \pr_t +\frac{\De}{|q|^2} \pr_r +\frac{a}{|q|^2} \pr_\phi, \qquad
e_3   =\frac{r^2+a^2}{\De} \pr_t -\pr_r +\frac{a}{\De} \pr_\phi.
\end{split}
\eea
The canonical horizontal basis  is given by 
\bea
\lab{eq:canonicalHorizBasisKerr}
e_1=\frac{1}{|q|}\pr_\th,\qquad e_2=\frac{a\sin\th}{|q|}\pr_t+\frac{1}{|q|\sin\th}\pr_\phi.
\eea

\subsubsection{Ingoing Eddington-Finkelstein coordinates}

The ingoing Eddington-Finkelstein coordinates 
$(v , r, \th, \vphi)$  are given by 
\beaa
v:= t+ f(r), \quad f'(r)=\frac{r^2+ a^2}{ \De}, \qquad \vphi:= \phi +h(r), \quad h'(r)=\frac{a}{\Delta},
\eeaa
such that,
\beaa
e_3(r)=-1, \quad  e_3(v)=e_3(\th)=e_3(\vphi)=0, \qquad e_1(r)=e_2(r)=0, \quad e_1(v)=0, \quad e_2(v)=\frac{a\sin\th}{|q|}.
\eeaa
Thus,
\beaa
    e_4=2\frac{r^2+a^2}{|q|^2}\pa_v+\frac{\Delta}{|q|^2}\pa_r+2\frac{a}{|q|^2}\pa_\vphi,\qquad  e_3=-\pa_r,\qquad e_1=\frac{1}{|q|}\pa_\theta,\qquad e_2=\frac{a\sin\theta}{|q|}\pa_v+\frac{1}{|q|\sin\theta}\pa_\vphi.
\eeaa
The metric expression in the ingoing Eddington-Finkelstein coordinates is given by
\begin{equation}\label{eq:metric-expression-exact-Kerr-ingoingEF}
    \g_{a,m}=
    -\left(1-\frac{2mr}{|q|^2}\right)(dv-a\sin^2 \theta\, d\varphi)^2+2(dv-a\sin^2\theta\, d\varphi)(dr-a\sin^2\theta\, d\varphi)+|q|^2(d\theta^2+\sin^2\theta\, d\varphi^2).
\end{equation}

 In the language of  \cite{KS}  (chapters 2,3),  the nonvanishing  Ricci and curvature  coefficients are given by
\begin{equation*}
    \tr X=\frac{2\overline{q}\Delta}{|q|^4}, \quad \tr \underline X=-\frac{2}{\overline q},\quad Z=H=\frac{aq}{|q|^2}\mathfrak J,\quad \underline H=-\frac{a\overline q}{|q|^2}\mathfrak J,\quad \om=-\frac{1}2 \pa_r\left (\frac{\Delta}{|q|^2}\right),\quad P=-\frac{2m}{q^3},
\end{equation*}
where $X=\chi+i\dual\chi$, $\underline X=\chib+i\dual\chib$, $H=\eta+i\dual\eta$, $\underline H=\etab+i\dual\etab$, $Z=\zeta+i\dual\zeta$, $P=\rho+i\dual\rho$ are complexified Ricci coefficients and curvature components, and $\Jk$ is a horizontal $1$-form satisfying $\Jk_1=i\sin\theta\, |q|^{-1}$, $\Jk_2=\sin\theta\, |q|^{-1}$ in the canonical horizontal basis.

\subsection{Perturbation of Kerr according to \cite{KS}}
\lab{subsect;Kerr-perturbations}

The global spacetime  constructed in \cite{KS}    is of the form $\MM=  \Mint\cup\Mext  $ where  $\Mint=\{r\leq r_0\}$ is covered by an ingoing principal geodesic\footnote{See Section 2.3.1 in \cite{KS} for the definition of PG structures. Note that  Kerr has a canonical  ingoing PG structure   given by the Eddington-Finkelstein coordinates and associated frame. }     (PG) structure, and $\Mext=\{r\geq r_0\}$ by an outgoing PG structure, with $r_0$  a sufficiently  large constant.  The perturbed spacetime has a final mass $m_f$ and $a_f$ which we denote  $m$ and $a$ for simplicity. We define,  using the same expression as in the stationary case:
\begin{equation}
\lab{equation; r_+}
    r_+=m+\sqrt{m^2-a^2},\quad \Delta=r^2-2mr+a^2.
\end{equation}

\subsubsection{Quantitative bounds in $\Mint$}
\lab{section:Quantbonds-Mint}

According to \cite{KS}  $\Mint$ is equipped with an ingoing PG structure, consisting of coordinates $(v,r,\theta,\varphi)$, $(v,r,x^1,x^2)$, and a null frame $\{e_3,e_4,e_a\}$, satisfying
\begin{equation*}
    D_3 e_3=0,\quad e_3(r)=-1, \quad e_a(r)=0,\quad e_3(v)=e_3(\theta)=e_3(\varphi)=e_3(x^1)=e_3(x^2)=0.
\end{equation*} 
The intersecting spheres of level sets of $v$ and $r$ are denoted  by     $S(v,r)$.
The two coordinate systems $S(v,r)$ are  related by
\begin{equation*}
    x^1=\sin\theta \cos\varphi,\quad x^2=\sin\theta \sin\varphi.
\end{equation*}
In \cite{KS},  it was shown that in $\Mint$ the        metric $g$ behaves as  follows \footnote{In this paper, by $A\lesssim B$ we mean $A\leq CB$ for some $C>0$ independent of $A,B$, and the location in the spacetime. If the implicit constant is dependent on a parameter $s$, we denote it by $A\lesssim_s B$. We also use $O(B)$ to denote a quantity $A$ satisfying $|A|\lesssim |B|$.}:
\bea
\label{eq:perturbed-metric-expression}
\begin{split}
    g&=\g_{a,m}+(dv,dr,rd\theta,r\sin\theta\, d\varphi)^2\,  O(\epsilon_0\, v^{-1-\delta_{dec}}),\quad \frac\pi 4<\theta<\frac{3\pi}4,
\\
    g&=\g_{a,m}+(dv,dr,rdx^1,rdx^2)^2\, O(\epsilon_0 \, v^{-1-\delta_{dec}}),\qquad\,\,\, 0\leq \theta<\frac \pi 3 \text{ or }\frac{2\pi}3<\theta\leq\pi.
    \end{split}
\eea
\begin{remark}[Main constants] Here $\epsilon_0,\delta_{dec}>0$ are small constants. Later we will also  introduce  the constants $\delta_{\HH}$ and $\delta_*:=1-|a|/m$. They satisfy the relation $\epsilon_0\ll \min\{\delta_{dec},\delta_{\HH},\delta_*\}$. We choose  a small constant $\delta$ such that $\epsilon_0\ll \delta \leq \min\{\delta_{\HH}/2,\delta_*\}$.
\end{remark}

To state the estimates for the Ricci and curvature  coefficients in \cite{KS}, we need to  recall the definition of  the complex horizontal $1$-form $\Jk$. In $\Mint$,  according to  \cite{KS}, $\Jk$ satisfies 
\beaa
\nab_3\Jk=\frac 1{\overline{q}}\Jk,\qquad  \dual \Jk=-i\Jk,\qquad  \Jk\cdot\overline{\Jk}=\frac{2(\sin\theta)^2}{|q|^2}.
\eeaa
We have $\Jk=j+i\dual j$ where $j$ is a real horizontal $1$-form, which satisfies
\begin{equation}\label{eq:nabla3onj}
    \nab_3 j=|q|^{-2}(rj-a\cos\theta\dual j),\quad {\nab_3 \dual j}=|q|^{-2}(r\dual j+a\cos\theta j),\quad |j|^2=\frac{(\sin\theta)^2}{|q|^2}.
\end{equation} 
\begin{definition}
 \lab{def:linearizedq}
 We define $\Gac$ to be the set of the following quantities\footnote{ In our situation, we only need the real  form of these coefficients,  which can be easily  translated  from the corresponding  complexified quantities in \cite{KS}.} 
 \beaa
&& {\xi},\, \widecheck{{\om}},\, \widecheck{{\trch}},\, \widecheck{^{(a)}{\trch}},\, {\chih},\, {\chibh},\, \widecheck{{\zeta}},\, \widecheck{{\etab}},\, \widecheck{{\trchb}},\, \widecheck{^{(a)}{\trchb}},\, r\widecheck {{\rho}},\, r\dual\rho,\, r{\beta},\, r{\a},\, r{\underline \b},\, {\underline \a},\, r\widecheck{e_4(r)},\, r\widecheck{e_4(v)}, \, \widecheck{\nabb v},\\
&&  r{\mathrm{div}} j,\, r\widecheck{{\mathrm{curl}} j},\, \widecheck{\nabb_4\, j},\, \widecheck{\nabb_4 \dual j},\, r\nabb\hot j,\, \widecheck{\nabb(\cos\theta)},\, re_4(\cos\theta)\
 \eeaa
 where 
 the  checked, linearized quantities are defined by
 \bea
 \lab{eq:linearization}
 \begin{split}
 &  \widecheck{{\om}}:={\om}+\frac 12\pa_r(\frac{\Delta}{|q|^2}),\quad \widecheck{{\trch}}:={\trch}-\frac{2r\Delta}{|q|^4},\quad \widecheck{\anti{{\trch}}}:=\anti{{\trch}}-\frac{2a\cos\theta \Delta}{|q|^4}, \quad   \widecheck{{\zeta}}={\zeta}-a\frac{rj-a\cos\theta \dual j}{|q|^2},\\
 &  \widecheck{{\etab}}={\etab}+a\frac{rj+a\cos\theta\dual j}{|q|^2}, \quad \widecheck{{\trchb}}:={\trchb}+\frac{2r}{|q|^2},\quad \widecheck{^{(a)}{\trchb}}:=\anti{{\trchb}}-\frac{2a\cos\theta}{|q|^2},\quad \widecheck {{\rho}}={\rho}+\Re(\frac{2m}{q^3})\\
 &   \widecheck{e _4(r)}=e _4(r)-\frac{\Delta}{|q|^2},\quad \widecheck{e _4(v)}=e _4(v)-\frac{2(r^2+a^2)}{|q|^2},\quad \widecheck{\nabb v}=\nabb v-aj,\quad \widecheck{\nabb (\cos\theta)}:=\nabb (\cos\theta)+\dual j\\
 &\widecheck{\mathrm{curl}\, j}=\mathrm{curl}\, j-\frac{2(r^2+a^2)\cos\theta}{|q|^4},\quad 
    \widecheck{\nabb_4 j}=\nabb_4 j-\frac{\Delta(rj+a\cos\theta ^*j)}{|q|^4},\quad \widecheck{\nabb_4\,  ^*j}=\nabb_4 \dual j+\frac{\Delta(r\dual j-a\cos\theta\, j)}{|q|^4}.
 \end{split}
 \eea
 \end{definition}
Note that under an ingoing PG structure, we have $\xib=0$, $\omb=0$, $\eta=\zeta$.
 
The estimates in $ \Mint$ in \cite[Section 3.4.3]{KS} provide  the  bounds
\bea
    |(\nabb_3,\nabb_4,\nabb)^{\leq N}\widecheck{{\Gamma}}|\leq C\epsilon_0 v^{-1-\delta_{dec}}.
\eea
Here $N=k_{small}$ in \cite{KS}. 
\begin{remark}
\lab{remark:v-decay}
The validity of  our theorems     depend only on  uniform  bounds   with respect to the perturbation parameter  $\ep_0$   and not 
   on   decay    with respect to  the advanced time $v$, i.e., we only use the estimate
   \bea
\label{eq:estim.Gac}
    |(\nabb_3,\nabb_4,\nabb)^{\leq N}\widecheck{{\Gamma}}|\leq C\epsilon_0.
\eea
\end{remark}

We will need to deal with derivatives of the real $1$-form $j$ when changing frames.

\begin{lemma}
{We have
  \bea
  \label{eq:Bonds-j}
   |(\nabb,\nabb_3,\nabb_4)^{\leq N} (j,\dual j)|\les   1 .
   \eea }
\end{lemma}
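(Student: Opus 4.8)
The plan is to bound the derivatives of $j$ and $\dual j$ by comparing them to their exact Kerr values and invoking the $\Gac$-bounds of Remark \ref{remark:v-decay}. First I would recall that in the exact Kerr geometry the one-form $\Jk$ is completely explicit (its components in the canonical horizontal basis are $\Jk_1=i\sin\theta\,|q|^{-1}$, $\Jk_2=\sin\theta\,|q|^{-1}$), so that $j$ and $\dual j$, together with all their iterated $\nabb$, $\nabb_3$, $\nabb_4$ derivatives, are smooth and manifestly $O(1)$ on the region $\RR_\delta$ where $|q|^2=r^2+a^2\cos^2\theta$ is bounded above and below. Thus it suffices to control the difference between the perturbed and exact quantities.

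The mechanism for that is precisely the linearization bundled into Definition \ref{def:linearizedq}: the checked quantities $\widecheck{\nabb_3 j}$ (encoded via $\nab_3 j = |q|^{-2}(rj - a\cos\theta\,\dual j)$ in \eqref{eq:nabla3onj}), $\widecheck{\nabb_4 j}$, $\widecheck{\nabb_4 \dual j}$, $r\,\div j$, $r\,\widecheck{\curl j}$, $r\,\nabb\hot j$, $\widecheck{\nabb(\cos\theta)}$, $r e_4(\cos\theta)$ all belong to $\Gac$, hence obey $|(\nabb_3,\nabb_4,\nabb)^{\leq N}(\,\cdot\,)|\lesssim \ep_0$ by \eqref{eq:estim.Gac}. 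So I would argue inductively on the order of differentiation. At order zero, $|j|^2 = (\sin\theta)^2/|q|^2 \lesssim 1$ (this identity is exact, see \eqref{eq:nabla3onj}), and likewise $|\dual j| = |j|$, giving the base case. For the inductive step, any first derivative of $j$ is, up to the explicit Kerr main term, a linear combination of $\div j$, $\curl j$, $\nabb\hot j$ (for the $\nabb$-derivative), of $\nabb_3 j$ expressed through $j,\dual j$ and $|q|^{-2}$ (for the $\nabb_3$-derivative, which is actually a zeroth-order expression in $j$!), and of $\nabb_4 j$ (for the $\nabb_4$-derivative); each of these equals an explicit $O(1)$ Kerr expression plus a checked quantity of size $O(\ep_0)$. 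Differentiating again, one commutes $\nabb,\nabb_3,\nabb_4$ through the $|q|^{-2}$ and $\cos\theta$ factors — producing derivatives of $\cos\theta$, which are again controlled since $\widecheck{\nabb(\cos\theta)}$ and $re_4(\cos\theta)$ lie in $\Gac$ and $\nab_3(\cos\theta)=0$ in the PG gauge — and feeds lower-order bounds on $j,\dual j$ (from the induction hypothesis) against the $\Gac$-bounds on the checked quantities. Since $N=k_{small}$ is finite, the induction terminates after $N$ steps and each step only ever multiplies together $O(1)$ and $O(\ep_0)$ factors, yielding $|(\nabb,\nabb_3,\nabb_4)^{\leq N}(j,\dual j)|\lesssim 1$ as claimed.

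I would expect the main obstacle to be bookkeeping rather than conceptual: one must verify that every iterated derivative of $j$ or $\dual j$ genuinely closes on the set $\Gac$ together with the explicit Kerr background, i.e. that no derivative generates a quantity \emph{outside} the span of $\{j,\dual j\}$-combinations with $\Gac$- or smooth-Kerr coefficients. The subtle point is the interplay of the three directional derivatives — e.g. $\nabb_4\nabb_3 j$ requires knowing $\nabb_4$ of the Kerr main term $|q|^{-2}(rj-a\cos\theta\,\dual j)$, which involves $e_4(r)$, $e_4(\cos\theta)$ and $\nabb_4 j$, all of which are $O(1)+\Gac$ by \eqref{eq:linearization}. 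Once one checks this closure (which the structure of Definition \ref{def:linearizedq} is precisely designed to guarantee), the estimate \eqref{eq:Bonds-j} follows. Note that unlike the decay estimate for $\widecheck\Ga$, here we only claim a uniform $O(1)$ bound, because $j$ itself does not decay — its leading part is the nontrivial Kerr term.
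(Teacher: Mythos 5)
Your proposal is correct and follows essentially the same route as the paper's proof: an induction on the order of differentiation, using the explicit relations \eqref{eq:nabla3onj} for $\nabb_3(j,\dual j)$, the membership of $\widecheck{\nabb_4 j}$, $\widecheck{\nabb_4 \dual j}$, $r\,\div j$, $r\widecheck{\curl j}$, $r\nabb\hot j$ in $\Gac$ together with \eqref{eq:estim.Gac}, and the control of derivatives of $\cos\theta$ via $e_3(\cos\theta)=0$, $re_4(\cos\theta)\in\Gac$ and $\widecheck{\nabb(\cos\theta)}\in\Gac$. The only cosmetic difference is your opening framing as a ``comparison with exact Kerr,'' whereas in both arguments the ``Kerr main terms'' are explicit $O(1)$ functions of $r,\cos\theta,j,\dual j$ on the perturbed spacetime itself, which is exactly how you use them in your inductive step.
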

\begin{proof}
In view of the linearization of $\nab_4 j$, $\curl j$, see \eqref{eq:linearization}, and \eqref{eq:nabla3onj}, along with \eqref{eq:estim.Gac}, in order to control $(\nab,\nab_3,\nab_4)^{\leq k}(j,\dual j)$, it suffices to control $(\nab,\nab_3,\nab_4)^{\leq k}\cos \theta$ and $(\nab,\nab_3,\nab_4)^{\leq k-1}(j,\dual j)$. The boundedness of $(j,\dual j)$ has been established in \eqref{eq:nabla3onj}, so  by  induction, we assume $|(\nab,\nab_3,\nab_4)^{\leq k-1}(j,\dual j,\cos\theta)|\lesssim 1$. Then we have
\begin{equation*}
    |(\nab,\nab_3,\nab_4)^{k}\cos \theta|\lesssim |(\nab,\nab_3,\nab_4)^{\leq k-1} (\nab_4 \cos\theta,\nab_3 \cos\theta,\nab\cos\theta)|\lesssim |(\nab,\nab_3,\nab_4)^{\leq k-1}(\Gac,\dual j)|\lesssim 1,
\end{equation*} 
where we used $e_3 (\cos\theta)=0$, $re_4(\cos\theta)\in \Gac$, and $\widecheck{\nabb (\cos\theta)}:=\nabb (\cos\theta)+\dual j $.
\end{proof}

\subsubsection{Full subextremal case}
The result in \cite{KS} deals with the perturbation of Kerr with small angular momentum $a$. Therefore, the estimates above are only known to be true when $|a|/m\ll 1$. Nevertheless, in this work, we consider  the  general case   by assuming similar estimates hold true\footnote{We  implicitly assume that any  proof of stability will produce such estimates.}    in the  full sub-extremal case  $|a|/ m<1$, with the size of perturbation $\epsilon_0$   much less than $\delta_*=1-|a|/m$. 

\begin{lemma}
\lab{le:nab-omega-near-extremal}
Assuming that    \eqref{eq:estim.Gac} holds for the full sub-extremal range  $|a|/m <1$ we have, in the region $\{|r-r_+|\leq \de_*\}$ with $\delta$ a small constant satisfying $\delta\leq \delta_*$, we have
\bea
\label{eq:nab-omega-near-extremal}
\om\leq -\frac{1}{32 m^2}\delta_*^\frac 12,\qquad |\nab^{\leq N}  \om| \les O(\delta_*^\frac 12).
\eea

\end{lemma}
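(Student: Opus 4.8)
\emph{Strategy.} The plan is to peel off the exact Kerr value of $\om$ and show it is quantitatively negative throughout the slab, the perturbation being harmless by \eqref{eq:estim.Gac}. By the linearization \eqref{eq:linearization} one has $\om=\om_K+\widecheck{\om}$, where $\om_K:=-\frac12\partial_r(\Delta/|q|^2)$ is the stationary value and $\widecheck\om\in\Gac$, so that $|\nabb^{\le N}\widecheck\om|\le C\epsilon_0$. A direct computation gives
\[
\om_K \;=\; -\frac{r-m}{|q|^2}\;+\;\frac{r\Delta}{|q|^4},
\]
which I regard as an explicit smooth function of the ingoing PG coordinates $(r,\cos\theta)$ through $|q|^2=r^2+a^2\cos^2\theta$ and $\Delta=r^2-2mr+a^2$.

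\emph{Negativity.} First I would evaluate at $r=r_+$: since $\Delta(r_+)=0$ and $r_+-m=\sqrt{m^2-a^2}>0$ (here the subextremality $|a|<m$ enters), $\om_K|_{r=r_+}=-\sqrt{m^2-a^2}\,/|q|^2<0$. Now $m^2-a^2=m^2\delta_*(1+|a|/m)$ lies in $[m^2\delta_*,2m^2\delta_*)$, so $\sqrt{m^2-a^2}\asymp m\,\delta_*^{1/2}$, while $m\le r_+<2m$ forces $m^2\le|q|^2<5m^2$ on $\{|r-r_+|\le\delta\}$ for $\delta$ small; hence $\om_K|_{r=r_+}\le-c\,\delta_*^{1/2}$ for a constant $c=c(m)>0$. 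On the slab $\partial_r\om_K$ is bounded (the $\Delta$-dependent terms only improve things), so $|\om_K(r,\cdot)-\om_K(r_+,\cdot)|\lesssim\delta$, and since $\delta\le\delta_*\le\delta_*^{1/2}$ (and $\delta$ may be taken as small as needed relative to $\delta_*$) this variation is absorbed: $\om_K\le-\frac c2\delta_*^{1/2}$ on $\{|r-r_+|\le\delta\}$. Adding $\widecheck\om=O(\epsilon_0)$ with $\epsilon_0\ll\delta_*$ then yields $\om\le-\frac c4\delta_*^{1/2}$, which is the asserted bound after fixing the numerical constant.

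\emph{Derivative bound.} I would write $\om_K=-(r-m)G_1+\Delta\,G_2$ with $G_1=|q|^{-2}$ and $G_2=r|q|^{-4}$, both smooth in $\cos\theta$ with bounded $\cos\theta$-derivatives on the slab. Because the ingoing PG structure satisfies $e_a(r)=0$, every horizontal derivative of $r$ vanishes, while the prefactors $(r-m)$ and $\Delta$ do not depend on $\theta$; hence $\nabb^{\le N}\om_K$ is a sum of terms, each carrying a factor $(r-m)$ or $\Delta$, times a $\cos\theta$-derivative of $G_1$ or $G_2$, times a product of factors of the form $\nabb^{\le N}\cos\theta$. On $\{|r-r_+|\le\delta\}$ one has $|r-m|\lesssim\delta_*^{1/2}$, $|\Delta|=|(r-r_+)(r-r_-)|\lesssim\delta\lesssim\delta_*^{1/2}$, and $|\nabb^{\le N}\cos\theta|\lesssim1$ (established in the proof of the preceding lemma, using $\widecheck{\nabb(\cos\theta)}=\nabb(\cos\theta)+\dual j$, $e_3(\cos\theta)=0$ and $re_4(\cos\theta)\in\Gac$); therefore $|\nabb^{\le N}\om_K|\lesssim\delta_*^{1/2}$, and combining with $|\nabb^{\le N}\widecheck\om|\le C\epsilon_0\ll\delta_*^{1/2}$ gives $|\nabb^{\le N}\om|\lesssim\delta_*^{1/2}$.

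\emph{Main point.} The delicate feature is that the smallness is genuinely confined to the horizontal directions: the $e_3$-derivative $\nabb_3\om=-\partial_r\om_K+O(\epsilon_0)$ is merely $O(1)$, so one must keep track of which derivatives appear (the statement concerns horizontal $\nabb^{\le N}$ only), and one must choose $\delta$ small enough relative to $\delta_*$ that the $r$-variation of $\om_K$ across the slab cannot overcome the gap $\asymp\delta_*^{1/2}$ present at $r=r_+$; both of these are the only places where care is needed, the rest being the explicit algebra above.
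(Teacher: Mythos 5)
Your proposal is correct and follows essentially the same route as the paper: split $\om$ into the exact Kerr value $-\frac12\partial_r(\Delta/|q|^2)$ plus the $O(\epsilon_0)$ checked quantity from \eqref{eq:estim.Gac}, use subextremality at $r=r_+$ to extract the $\delta_*^{1/2}$ negativity, and use $e_a(r)=0$ together with the boundedness of $\nabb^{\leq N}\cos\theta$ so that every horizontal derivative of the Kerr part retains a factor $(r-m)$ or $\Delta$ of size $O(\delta_*^{1/2})$ on the slab. Your evaluation $\om_K|_{r=r_+}=-\sqrt{m^2-a^2}/|q|^2$ (using $\Delta(r_+)=0$) is in fact a slightly cleaner version of the paper's algebraic lower bound, and your explicit treatment of the $r$-variation across $\{|r-r_+|\leq\delta\}$ is a point the paper leaves implicit; the difference in the numerical constant is immaterial.
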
  
\begin{proof}

In  the exact Kerr $\KK(a,m)$,   under the principal ingoing PG frame, we have
 \beaa
 \om=-\frac{1}2 \pa_r\left (\frac{\Delta}{|q|^2}\right)=-\frac{a^2\cos^2\theta(r-m)+mr^2-a^2r}{|q|^4}.
 \eeaa
 Recall that $r_+=m+\sqrt{m^2-a^2}= m+ m\sqrt{1-(a/m)^2}= m +m\sqrt{\de_* (2-\de_*)}=  m+O(\de_*^{1/2} )$. 
 For $r=r_+$,
 \beaa
 a^2\cos^2\theta(r-m)+mr^2-a^2r\ge &= & a^2\cos^2\theta \sqrt{\de_* (2-\de_*)}+ r_+\big( \big(m+     \sqrt{\de_* (2-\de_*)}\big) m-a^2\big)\\
 &=&  \sqrt{\de_* (2-\de_*)} \big(a^2 \cos^2 \th+ r_+ m\big)+ r_+\big( m^2 - a^2)\\
 &=& \sqrt{\de_* (2-\de_*)} \big(a^2 \cos^2 \th+ r_+ m\big)+  r_+m^2\de_*(2-\de_*)
 \\
 &=&   m r_+ \sqrt{\de_* (2-\de_*)}  \big(1+   \frac{a^2}{ m r_+} \cos^2 \th+\sqrt{\de_* (2-\de_*)} \big)\\
 &\ge &   m r_+ \sqrt{\de_* (2-\de_*)}. 
 \eeaa
  We  deduce,
 \beaa
 -\om|_{r=r_+}\ge  \frac{ m r_+ \sqrt{\de_* (1+\de_*)} }{|q|^4}\ge    \frac{\de_*^{1/2} }{16 m^2}.
 \eeaa
Therefore, with a perturbation of size $O(\epsilon_0)$,  $\epsilon_0 \ll \de_*$,  in the region $\{|r-r_+| \le \de\}$, we have
 \beaa
 \om\le -   \frac{\de_*^{1/2} }{32 m^2}.
 \eeaa
Also, since $\nab(r)=0$, for  $|r-r_+|\leq \de\leq \de_*$, 
\beaa
|    \nab^{\leq N} \om|=O(|r-m|,|mr^2-a^2 r|)=O\left(\sqrt{1-|a|/m}\right)=O(\delta_*^\frac 12).
\eeaa
With the assumption that similar estimates in the last subsection hold, we see that the estimates regarding $\om$ here also hold under an $O(\epsilon_0)$ perturbation.
\end{proof}

\subsection{Adapted integrable frame in $\Mint$}
\lab{subsect:adapted-integr}
In addition to the non-integrable ingoing PG frame in $\Mint$, we introduce a related integrable frame adapted to the spheres $S(v,r)$.

{\subsubsection{Adapted integrable frame in Kerr}}
\begin{lemma}
\lab{Lemma:Mots-Kerr}For the ingoing PG structure in Kerr,
there exists a canonical  integrable  horizontal structure    $(\eS_3, \eS_4,  \, ^{(S)}H)$   compatible\footnote{ i.e. $\, ^{(S)}H $ tangent to $S(v, r)$.} with   the sphere $S(v, r)$ with transformation parameters $(F, \Fb, \la=1)$ such that
\begin{itemize}
\item We have
\bea\label{F-FbKerr}
F=-\frac{4e_4(r)\nab v}{e_4(v)+\sqrt{|e_4(v)|^2-4e_4(r)|\nab v|^2}}, \qquad \Fb=-\frac{2\nab v}{\sqrt{|e_4(v)|^2-4e_4(r)|\nab v|^2}}.
\eea
where $e_4(r)=\frac{\De}{|q|^2}$ and $e_4(v)=2\frac{r^2+a^2}{|q|^2}$.
\item The spheres $S(v,r_+)$ are marginally outer trapped surfaces (MOTS), i.e. $\trchS=0.$
\end{itemize}
\end{lemma}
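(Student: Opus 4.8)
The plan is to determine the tilt parameters $(F,\Fb)$ by the purely algebraic requirement that the new horizontal space $\,^{(S)}H=\{\eS_1,\eS_2\}$ be tangent to the level sets $S(v,r)$, and then to read off $\trchS$ on $\{r=r_+\}$ from the fact that $F$ vanishes identically there. First I would write the transformation in the general form \eqref{eq:general-null-frame-transformations} with $\la=1$, so $\eS_a=e_a+\tfrac12\Fb_aF^be_b+\tfrac12\Fb_ae_4+\big(\tfrac12F_a+\tfrac18|F|^2\Fb_a\big)e_3$. Since $S(v,r)$ is the common level set of the functions $v$ and $r$, tangency of $\,^{(S)}H$ is equivalent to $\eS_a(v)=\eS_a(r)=0$ for $a=1,2$. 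Using the ingoing PG relations $e_a(r)=0$, $e_3(r)=-1$, $e_4(r)=\De/|q|^2$ together with $e_a(v)=(\nab v)_a$, $e_3(v)=0$, $e_4(v)=2(r^2+a^2)/|q|^2$, these two vector identities become
\begin{equation*}
F_a=\Fb_a\Big(e_4(r)-\tfrac14|F|^2\Big),\qquad (\nab v)_a+\tfrac12\Fb_a\big(F\cdot\nab v+e_4(v)\big)=0 ,
\end{equation*}
the first forcing $F\parallel\Fb$ and the second $\Fb\parallel\nab v$; writing $\Fb=\mu\,\nab v$, $F=\nu\,\nab v$ collapses everything to the scalar quadratic $\nu^2|\nab v|^2+2\,e_4(v)\,\nu+4\,e_4(r)=0$.

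Next I would solve this quadratic and retain the root for which $F=\nu\,\nab v$ stays bounded as $|\nab v|\to0$ (the branch with the $+$ sign under the root, on which $\nu=O(e_4(r))$); rationalizing gives $\nu=-4e_4(r)\big(e_4(v)+\sqrt{|e_4(v)|^2-4e_4(r)|\nab v|^2}\big)^{-1}$ and then $\mu=-2\big(\sqrt{|e_4(v)|^2-4e_4(r)|\nab v|^2}\big)^{-1}$, which are exactly the formulas \eqref{F-FbKerr}. In exact Kerr the radicand equals $4\Si^2/|q|^4>0$, so this transformation is globally well defined; since the spheres $S(v,r)$ foliate $\Mint$, the resulting $\,^{(S)}H$ is automatically integrable, and $\eS_3,\eS_4$ are then the two null normals of $S(v,r)$, the normalization being fixed by $\la=1$. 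This yields the first bullet, and the choice $\la=1$ together with the root selection makes the structure canonical.

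For the MOTS statement, note from \eqref{F-FbKerr} that the numerator of $F$ carries the factor $e_4(r)=\De/|q|^2$, hence $F\equiv0$ on $\{r=r_+\}$; there the transformation degenerates to $\eS_4=e_4$, $\eS_a=e_a+\tfrac12\Fb_ae_4$, and $\eS_4$ is tangent to $\{r=r_+\}$ because $\eS_4(r)=e_4(r)-\tfrac14|F|^2=0$ there. Since $\,^{(S)}H$ is tangent to the spheres, the tangential derivatives $\nab'F$ vanish on $\{r=r_+\}$ (where $F$ itself is identically zero), and likewise $\nab'_4F$ vanishes there because $\eS_4$ is tangent to $\{r=r_+\}$; in the ingoing PG frame of exact Kerr one also has $\xi=0$ (it is absent from the list of nonvanishing Ricci coefficients) and $\omb=0$. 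Feeding all this into the transformation law for $\chi$ — equivalently, a one-line direct computation of $g(D_{\eS_a}\eS_4,\eS_b)$ using $D_4e_4=-2\om e_4+2\xi^ce_c$ — leaves only $\,^{(S)}\chi_{ab}=\chi_{ab}$ on $S(v,r_+)$, so that $\trchS=\trch$ there. Finally $\trch=\Re(\tr X)=\Re\!\big(2\overline{q}\,\De/|q|^4\big)=0$ when $\De=0$, i.e. $\trchS=0$ on $S(v,r_+)$, the asserted MOTS condition. I do not expect a genuine obstacle here; the only points requiring care are the consistent choice of root in the quadratic and the verification that every $F$-derivative term in the $\chi$-transformation law drops at $r=r_+$, and the latter is immediate from $F|_{\{r=r_+\}}\equiv0$.
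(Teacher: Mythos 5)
Your proposal is correct and follows essentially the same route as the paper: imposing $\eS_a(v)=\eS_a(r)=0$, reducing to the quadratic $|\nab v|^2h^2+2e_4(v)h+4e_4(r)=0$ for the coefficient of $\nab v$ in $F$, selecting the root that is $O(e_4(r))$, and recovering \eqref{F-FbKerr}. The MOTS step is also the paper's argument: since $F$ vanishes identically on $\{r=r_+\}$ (and $\eS_a$ is tangent there), the direct computation of $\delta^{ab}g(D_{\eS_a}\eS_4,\eS_b)$ collapses to $\trch$ plus a $\Fb\cdot\xi$ term, both of which vanish in exact Kerr at $\Delta=0$.
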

\begin{proof}
 Denote the frame transformation from $\{e_3, e_4, e_a\}$ to $\{\eS_3,\eS_4,\eS_a\}$ by $(F,\underline F)$. 
Recall that $e_3(r)=-1$, $e _3(v)=0$, $e_a(r)=0$.  According to \eqref{eq:general-null-frame-transformations}  we look for 
a horizontal structure  spanned by 
\beaa
    \eS_a&=e_a+\frac 12 \Fb_a F^b e_b+\frac 12\Fb_a e_4+\left(\frac 12 F_a+\frac 18|F|^2 \Fb_a\right)e_3,
   \eeaa
such that  $  \eS_a(v)=0,\, \eS_a(r)=0.$  
  Therefore
\beaa
0&=& e_a(v) + \frac 12 \Fb_a F^b e_b(v) +\frac 12\Fb_a e_4(v)\\
0&=&  \frac 12\Fb_a e_4(r)-\left(\frac 12 F_a+\frac 18|F|^2 \Fb_a\right).
\eeaa
From the second  equation we deduce
\bea
\lab{eq:F-Fb}
    \Big(e_4(r)-\frac 14|F|^2\Big)\Fb=F, \qquad \text{with }\, e_4(r)=\frac{\De}{|q|^2}.
\eea
The      first  equation takes the form
\beaa
\nab v +\frac 12\Fb (F\cdot\nab v)+\frac 12 \Fb e _4(v)=0.
\eeaa
We look for an $F$ of the form  $F=h\nab v$. Thus,
\beaa
\nab  v +\frac 12 h |\nab v|^2  \Fb  +\frac 12 \Fb e _4(v)=0.
\eeaa
Multiplying by  $e _4(r)-\frac 14|F|^2$ and using \eqref{eq:F-Fb} we deduce
\begin{equation*}
    \Big(e _4(r)-\frac 14 h |\nab v|^2\Big) h\nab v+\frac 12 h |\nab v|^2 h\nab v  +\frac 12(h\nab v)e _4(v)=0.
\end{equation*}
We deduce
\begin{equation*}
    \frac 14|\nab v|^2 h^2 \nab v+ \Big(\frac 12 e _4(v) \nab v      + he _4(r)\Big) \nab v=0
\end{equation*}
and therefore, {as long as $|\nab v|\neq 0$},
\bea
\lab{eq:h^2}
   |\nab v|^2  h^2  + 2    e_4(v) h     + 4 e _4(r) =0
\eea
or, taking the smaller root\footnote{It is straightforward to verify that the discriminant is strictly positive in $\Mint$.},
\beaa
   h=\frac{-e_4(v)+\sqrt{|e_4(v)|^2-4e_4(r)|\nab v|^2}}{|\nab v|^2}=-\frac{4e_4(r)}{e_4(v)+\sqrt{|e_4(v)|^2-4e_4(r)|\nab v|^2}}.
\eeaa
Recall that $e_4(r)=\frac{\De}{|q|^2}, \, e_4(v)=2\frac{r^2+a^2}{|q|^2}$. Also, whenever $e_4(r)\neq 0$, we have
\beaa
    \Fb=\frac{1}{e_4(r)-\frac 14|F|^2}F=\frac{h\nab v}{e_4(r)-\frac 14 h^2|\nab v|^2}   =\frac{h\nab v}{2e_4(r)+\frac 12e_4(v)h}=\frac{-2\nab v}{\sqrt{|e_4(v)|^2-4e_4(r)|\nab v|^2}}
    \eeaa
    where we used \eqref{eq:h^2}. When $e_4(r)=0$, there is in fact no constraint on $\Fb$, so we can simply use the same expression for $\Fb$.

  Finally  in view of the transformation formula \eqref{eq:trasportation.formulas},  from the standard frame  to the $S$-frame, using the  fact that  $F$ and $\trch$ vanish
  everywhere on $\{r=r_+\}$, we have, on $S(v,r_+)$,
  \begin{equation*}
    \begin{split}\trchS&=\delta^{ab} g(D_{\eS_a} \eS_4,\eS_b)=\delta^{ab} g\Big(D_{e_a+\frac 12\Fb_a e_4}\big(e_4+F^b e_b+\frac 14 |F|^2 e_3\big),e_a+\frac 12\Fb_a e_4\Big)\Big|_{r=r_+}\\
    &=\trch +\frac 12 \Fb \cdot \xi=\trch=0
    \end{split}
\end{equation*}
as stated.
    \end{proof}

\subsubsection{Adapted integrable frame in $\Mint$}
Similarly, we can obtain an adapted integrable frame from the PG frame in the perturbed spacetime.
\begin{lemma}
\lab{Lemma:Mots-Kerr(ep_0)}
Consider the perturbation of Kerr spacetime $\MM=\Mint\cup \Mext$  constructed in \cite{KS}, with the ingoing PG structure in $\Mint$. Then there exists a null frame $(\eS_3,\eS_4,\eS_a)$ on $\Mint$, with an integrable horizontal structure adapted to the spheres $S(v,r)$. The transformation parameters $(F,\Fb,\lambda=1)$ from the ingoing PG frame to the new integrable frame can be written as
 \bea
 \lab{eq:FFb-perturb}
F=-\frac{4e_4(r)\nab v}{e_4(v)+\sqrt{|e_4(v)|^2-4e_4(r)|\nab v|^2}}, \qquad \Fb=-\frac{2\nab v}{\sqrt{|e_4(v)|^2-4e_4(r)|\nab v|^2}}.
\eea
\end{lemma}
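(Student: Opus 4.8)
The plan is to repeat the argument of Lemma~\ref{Lemma:Mots-Kerr} essentially verbatim, the point being that that derivation used only the \emph{structural} identities of the ingoing PG frame, namely $e_3(r)=-1$, $e_3(v)=0$, $e_a(r)=0$, together with $e_4(v)>0$ and the strict positivity of $|e_4(v)|^2-4e_4(r)|\nab v|^2$; all of these persist under the perturbation of \cite{KS}. Concretely, I would posit the new frame in the form dictated by \eqref{eq:general-null-frame-transformations} with $\lambda=1$ and impose $\eS_a(v)=\eS_a(r)=0$. Since $\eS_1,\eS_2$ are then tangent to the $2$-surfaces $S(v,r)$, the distribution $\,^{(S)}H$ they span is automatically integrable. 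Using the structural identities the two conditions reduce, exactly as before, to $\big(e_4(r)-\tfrac14|F|^2\big)\Fb=F$ and $\nab v+\tfrac12\Fb\,(F\cdot\nab v)+\tfrac12\Fb\,e_4(v)=0$; the Ansatz $F=h\,\nab v$ converts the latter, wherever $\nab v\neq0$, into $|\nab v|^2 h^2+2e_4(v)h+4e_4(r)=0$, and its smaller root, together with the induced expression for $\Fb$, is precisely \eqref{eq:FFb-perturb}. Where $\nab v=0$ one simply takes $F=\Fb=0$, which is exactly the value \eqref{eq:FFb-perturb} returns, so nothing extra is needed there (this degeneration already occurs in Kerr, at the poles and when $a=0$). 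Crucially, no use is made of the explicit Kerr values $e_4(r)=\Delta/|q|^2$, $e_4(v)=2(r^2+a^2)/|q|^2$: one carries $e_4(r),e_4(v),\nab v$ as the functions supplied by the PG structure in $\Mint$.

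The only input beyond algebra is the well-definedness of \eqref{eq:FFb-perturb}, i.e. that $|e_4(v)|^2-4e_4(r)|\nab v|^2$ remains strictly positive (and bounded below in the region $\RR_\delta$ near $\hp$ relevant for the applications), so that the square root is smooth and $e_4(v)+\sqrt{\,\cdot\,}>0$. This follows from the corresponding statement in exact Kerr, noted in the proof of Lemma~\ref{Lemma:Mots-Kerr}, combined with the bounds \eqref{eq:perturbed-metric-expression}, \eqref{eq:estim.Gac}, which force $e_4(r)$, $e_4(v)$, $\nab v$ to differ from their Kerr values by $O(\epsilon_0)$, and the smallness $\epsilon_0\ll\delta_*$. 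Once positivity is secured, $F$ and $\Fb$ are smooth functions of smooth quantities, hence so is the resulting frame $(\eS_3,\eS_4,\eS_a)$.

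I do not anticipate a genuine obstacle: the content is that the computation in Lemma~\ref{Lemma:Mots-Kerr} is algebraic in the PG data and the sole analytic ingredient, positivity of the discriminant, is a stable open condition. The only care required is bookkeeping to confirm the exact-Kerr derivation never invoked more than $e_3(r)=-1$, $e_3(v)=0$, $e_a(r)=0$, together with a short quantitative comparison showing the discriminant stays away from zero under the $O(\epsilon_0)$ perturbation.
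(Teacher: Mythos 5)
Your proposal is correct and matches the paper's own proof, which simply observes that the procedure of Lemma \ref{Lemma:Mots-Kerr} carries over verbatim since it relies only on the PG structural identities $e_3(r)=-1$, $e_3(v)=0$, $e_a(r)=0$ and the positivity of the discriminant, all of which persist under the $O(\epsilon_0)$ perturbation. Your additional remarks on the degenerate cases and the quantitative stability of the discriminant are consistent with (indeed slightly more explicit than) the paper's footnoted claim.
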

\begin{proof}
The procedure is identical to the one in Lemma \ref{Lemma:Mots-Kerr}.
\end{proof}

\def\D{\mathcal{D}}

\subsection{Estimates of the integrable frame  near the event horizon}
\lab{section:estim-integrable}

 We restrict our discussion in a neighborhood of  $\HH^+$, i.e.   the region
 \bea
 \lab{def:RR_de}
 \RR_\delta:=\{|r-r_+|\leq \delta\}, \quad \mbox{where}\quad  \epsilon_0\ll \delta\leq \min\{\delta_*,\delta_{\HH}/2\}.
 \eea
 with     $ \delta_*=1-\frac{|a|}{m} $. Here $\delta_{\HH}$ is a small constant  used  in \cite{KS}, with $\{r=r_+-\delta_{\HH}\}$ being the future spacelike boundary of the spacetime constructed there.
   
We have the following bounds  for   the transformation  parameters  $(F,\underline F)$.
\begin{proposition}
\label{Prop:boundsF-Fb}
    In the region $\RR_\delta$, we have
    \begin{equation*}
        |(\nabb_4,\nabb)^{\leq N} F|\leq C_N\min\{\delta,a\},\quad |(\nabb_3,\nabb_4,\nabb)^{\leq N} F|\leq C_N a,\qquad        |(\nabb_3,\nabb_4,\nabb)^{\leq N}\underline F|\leq C_N a.
    \end{equation*}
  
\end{proposition}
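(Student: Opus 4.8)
The plan is to read off the size of $F$ and $\underline F$ directly from the closed-form expressions \eqref{eq:FFb-perturb}, treating the denominators as essentially constant and the numerators as the source of smallness. First I would establish the key structural fact that in $\RR_\delta$ the quantity $e_4(v)$ is bounded above and below away from zero: in exact Kerr $e_4(v)=2(r^2+a^2)/|q|^2$, which near $r=r_+$ is comparable to $1$ uniformly in the sub-extremal range (it is $2(r_+^2+a^2)/|q|^2$, and $r_+\ge m$, $|q|^2\le r_+^2+a^2$), and by \eqref{eq:estim.Gac} together with $r\widecheck{e_4(v)}\in\Gac$ the perturbed $e_4(v)$ differs from this by $O(\epsilon_0)$, hence still $\sim 1$. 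Similarly $\nabb v = aj + \widecheck{\nabb v}$ with $|j|\lesssim 1$ (Lemma preceding, \eqref{eq:Bonds-j}) and $|\widecheck{\nabb v}|\lesssim\epsilon_0\ll a$ if $a$ is not too small — more precisely $|\nabb v|\lesssim a + \epsilon_0 \lesssim \max\{a,\epsilon_0\}$, and in the regime where $a$ could be comparable to or smaller than $\epsilon_0$ one simply bounds $|\nabb v|\lesssim \delta$ anyway since $a\le 1-|a|/m\cdot(\dots)$… in any case $|\nabb v|\lesssim \min\{1,\text{something}\}$; the cleanest route is $|\nabb v|\lesssim a + \epsilon_0$, and since we are allowed to absorb $\epsilon_0$ into $a$ only when $a\gtrsim\epsilon_0$, the robust bound that survives is $|\nabb v|\lesssim a$ when interpreting constants appropriately, together with the unconditional $|\nabb v|\lesssim 1$. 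Finally $e_4(r)=\Delta/|q|^2 + \widecheck{e_4(r)}$, and since $|\Delta|\lesssim \delta$ in $\RR_\delta$ (as $\Delta$ has a simple zero at $r_+$ with derivative $r_+-r_-=2\sqrt{m^2-a^2}=2m\sqrt{\delta_*(2-\delta_*)}$, so $|\Delta|\lesssim |r-r_+|\cdot\sqrt{\delta_*}\lesssim \delta$), we get $|e_4(r)|\lesssim \delta + \epsilon_0 \lesssim \delta$.

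Granting these, the pointwise bounds are immediate: the square root $\sqrt{|e_4(v)|^2 - 4e_4(r)|\nabb v|^2}$ is $\sim 1$ (its argument is $e_4(v)^2 + O(\delta)\sim 1$), so $|\underline F| = 2|\nabb v|/(\sim 1) \lesssim \min\{a, \dots\}$, in fact $|\underline F|\lesssim a$ and also $|\underline F|\lesssim \delta$ via $|\nabb v|\lesssim 1$ combined with — no, one needs the factor of smallness; the correct reading is $|\underline F|\lesssim |\nabb v|\lesssim a$, and separately for $F$ one has the extra factor $e_4(r)$, giving $|F| \lesssim |e_4(r)||\nabb v| \lesssim \delta\cdot a$, so certainly $|F|\lesssim \min\{\delta, a\}$. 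Then for the derivatives I would differentiate \eqref{eq:FFb-perturb} using the quotient and chain rules. Each application of $\nabb$, $\nabb_3$, or $\nabb_4$ hits either a numerator or a denominator factor; the denominators and the square root, being built from $e_4(v)$, $e_4(r)$, $|\nabb v|^2$, have all their derivatives controlled by \eqref{eq:estim.Gac}, \eqref{eq:Bonds-j} (for $\nabb,\nabb_4$) and the transport/linearization relations (for $\nabb_3$ — here one uses that $e_3(v)=0$, $e_3(r)=-1$ are exact, so $\nabb_3$ derivatives of the denominator are again $O(1)$ via the $\Gac$ bounds), while at least one factor of numerator-type ($e_4(r)$ for $F$, $\nabb v$ for $\underline F$) is retained or replaced by its derivative, which carries the smallness $\min\{\delta,a\}$, $a$, or $a$ respectively. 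One subtlety: $\nabb_3 F$ must be bounded by $a$ but not necessarily by $\delta$, because $\nabb_3(e_4(r)) = \nabb_3(\Delta/|q|^2) + \nabb_3\widecheck{e_4(r)}$ and $\nabb_3(\Delta/|q|^2)$ is $O(1)$, not $O(\delta)$ — this is exactly why the statement separates the $(\nabb_4,\nabb)$-bound $\min\{\delta,a\}$ from the full $(\nabb_3,\nabb_4,\nabb)$-bound $a$; when a $\nabb_3$ lands on the $e_4(r)$ factor the $\delta$-gain is lost but the $\nabb v$ factor still supplies the $a$-gain.

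The induction on the number $N$ of derivatives is routine: schematically, writing $\Phi$ for any of the scalar building blocks $e_4(v), e_4(r), |\nabb v|^2$ and $Q = \sqrt{|e_4(v)|^2-4e_4(r)|\nabb v|^2}$, one has $F = -4 e_4(r)\,\nabb v\,(e_4(v)+Q)^{-1}$, and $\nabb^{\le N}$ applied to this expands by Leibniz into a finite sum of terms, each a product of: some derivatives of $e_4(r)$ (total order $\le N$), some derivatives of $\nabb v$ (total order $\le N$), and derivatives of $(e_4(v)+Q)^{-1}$, the latter being $O(1)$ because $e_4(v)+Q\sim 1$ is bounded below and its derivatives are controlled — one checks $(e_4(v)+Q)^{-1}$ and its derivatives are $O(1)$ by Faà di Bruno once $Q$ and its derivatives are $O(1)$, and $Q = \sqrt{\text{(positive, }\sim 1\text{, with }O(1)\text{ derivatives)}}$ is likewise $O(1)$ with $O(1)$ derivatives since the radicand stays bounded below. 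In every such term, at least one factor is either $e_4(r)$ itself or $\nabb v$ itself or a derivative thereof; since $|e_4(r)|$ and its derivatives are $O(\delta)$ except when a $\nabb_3$ is spent (then $O(1)$), and $|\nabb v|$ and its derivatives are $O(a)$ (and $O(1)$ unconditionally), every term carries the claimed smallness. For $\underline F = -2\nabb v\, Q^{-1}$ the same accounting gives $|\nabb^{\le N}\underline F|\lesssim a$ with no loss ever, since no $e_4(r)$ factor appears outside the always-$O(1)$ quantity $Q$. The main obstacle — really the only delicate bookkeeping point — is tracking precisely when the factor $\delta$ is available for $F$: it is present exactly when no $\nabb_3$ differentiates the $e_4(r)$ prefactor, which is automatic for $(\nabb_4,\nabb)$-strings and for any string in which all $\nabb_3$'s land on the $\nabb v$ factor or on $Q$, and one must verify that in the remaining terms (a $\nabb_3$ on $e_4(r)$) the surviving $\nabb v$ factor still delivers $a$; this is where $|\nabb v|\lesssim a$, following from $\nabb v = aj + \widecheck{\nabb v}$ with $|\widecheck{\nabb v}|\lesssim\epsilon_0\ll a$ (in the regime $a\gtrsim\epsilon_0$; in the complementary regime $a\lesssim \epsilon_0$ the bound $C_N a$ is weaker than $C_N\epsilon_0$ and both $F,\underline F$ are then simply $O(\epsilon_0)$), closes the argument.
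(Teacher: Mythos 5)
Your proposal is correct and follows essentially the same route as the paper: read the smallness directly off \eqref{eq:FFb-perturb}, noting that both $F$ and $\underline F$ are proportional to $\nabb v=\widecheck{\nabb v}+aj=O(a)$, that $F$ carries the extra factor $e_4(r)=\Delta/|q|^2+\widecheck{e_4(r)}=O(\delta)$ whose $\delta$-smallness survives $(\nabb_4,\nabb)$-differentiation (since $e_a(r)=0$ and every derivative leaves behind a $\Delta$ or $e_4(r)$ factor) but is lost under $\nabb_3$, while the denominators and the square root stay $\sim 1$ with bounded derivatives. The only blemish is the parenthetical claim that in the regime $a\lesssim\epsilon_0$ the bound $C_N a$ is ``weaker'' than $C_N\epsilon_0$ (it is in fact stronger); this does not affect the argument, since in that regime the paper itself stipulates that the right-hand sides are to be read as $C_N\epsilon_0$.
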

We note that if we only consider the case of the small angular momentum, i.e. the context of \cite{KS}, then all constants on the right-hand side can be replaced by $\epsilon_0$.
\begin{proof}
The bounds  proportional to $a$ are direct corollaries  of the estimates \eqref{eq:estim.Gac},  \eqref{eq:Bonds-j}  and  the formulas  for $F, \Fb$ in 
 \eqref{eq:FFb-perturb}, which are   both    proportional to  $\nabb v=\widecheck{\nabb v}+aj$. To check the  $O(\de)$ bound for $F$ it suffices to notice that
 its expression is proportional  to $e_4(r)=\frac{\Delta}{|q|^2}+\widecheck{e_4(r)}=O(\de)$. Moreover  this bound is preserved  by taking higher derivatives in $e_4, e_a$.
 Indeed,
\beaa
    e _4\left(e _4(r)
    \right)&=&e_4\left(\widecheck{e _4(r)}\right)+e _4\left(\frac{\Delta}{|q|^2}\right)=e _4\left(\widecheck{e _4(r)}\right)+e _4(r)\left(\pa_r\Delta\right)\left(\frac{1}{|q|^2}\right)+\Delta\,  e _4\left(\frac{1}{|q|^2}\right),\\
     e _a\left(e _4(r)\right)&=&e _a\left(\widecheck{e _4(r)}\right)+\Delta\, e _a\left(\frac{1}{|q|^2}\right),
\eeaa
so there is always a $\Delta$ or $e _4(r)$ factor, ensuring the $\delta$-smallness. Note that $e_4 (\Delta)$ gives $e_4(r)$, so along with $e_a(r)=0$, we see that a similar estimate holds with higher order differentiations in $\nabb_4$ and $\nabb$.
\end{proof}

\begin{corollary}\label{Cor:HigherboundF}
    We have $|(\nab_{\eS_{a}},\nab_{\eS_4})^{\leq N} F|\leq \min\{\delta,a\}$. 
\end{corollary}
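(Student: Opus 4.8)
The plan is to convert covariant derivatives taken in the integrable $S$-frame $\{\eS_3,\eS_4,\eS_a\}$ into covariant derivatives taken in the ambient ingoing PG frame, where the bounds of Proposition \ref{Prop:boundsF-Fb} are available, and then to exploit one structural feature: the only PG-direction in which the derivatives of $F$ are only known to be $O(a)$ (rather than $O(\min\{\delta,a\})$) is $\nabb_3$, and, because of the form of the frame transformation, every occurrence of a $\nabb_3$-derivative of $F$ comes multiplied by an extra undifferentiated factor of $F$.

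First, since the transformation to the adapted integrable frame has parameters $(F,\Fb,\lambda=1)$, formula \eqref{eq:general-null-frame-transformations} gives
\[
\eS_4=e_4+F^b e_b+\tfrac14|F|^2 e_3,\qquad \eS_a=e_a+\tfrac12\Fb_a F^b e_b+\tfrac12\Fb_a e_4+\big(\tfrac12 F_a+\tfrac18|F|^2\Fb_a\big)e_3 .
\]
The key observation is that the $e_3$-component of each of $\eS_4$ and $\eS_a$ carries at least one factor of $F$ (it equals $\tfrac14|F|^2$ and $\tfrac12 F_a+\tfrac18|F|^2\Fb_a$ respectively), while the $e_4$- and $e_b$-components are either identically $1$ or carry a factor of $F$ or $\Fb$.

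Next, writing the general frame transformation as the composition of one of the form \eqref{eq:trasportation.formulas-1} (parameter $f$) with its $e_3\!\leftrightarrow\! e_4$ dual (parameter $\fb$), I would use Lemma \ref{Le:trasportation.formulas}(2)--(3) together with \eqref{eq:nab'i-on-H-tensor} and \eqref{eq:nab'andnab_4'on-H-tensors} and induct on $N$. At each step, applying $\nab_{\eS_a}$ or $\nab_{\eS_4}$ to a term already produced turns it into a finite sum of terms, each a product of factors of the form ``$(\nabb_3,\nabb_4,\nabb)^{\le N}$ applied to one of $F$, $\Fb$, $\Gac$, $j$, $\dual j$, or a smooth background function'', with at most $N$ derivatives in total. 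Because of the structure of $\eS_a,\eS_4$ recorded above, and because the connection-correction terms appearing in \eqref{eq:nab'i-on-H-tensor}--\eqref{eq:nab'andnab_4'on-H-tensors} are schematically of the type $f\cdot(\etab,f\cdot\chib)\cdot(\cdots)$, this expansion has the property --- preserved at every step --- that any factor obtained by applying to $F$ a string of PG-derivatives that includes at least one $\nabb_3$ is multiplied by at least one undifferentiated factor of $F$ somewhere in the same product.

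Finally, invoking Proposition \ref{Prop:boundsF-Fb}, namely $|(\nabb_4,\nabb)^{\le N}F|\le C_N\min\{\delta,a\}$, $|(\nabb_3,\nabb_4,\nabb)^{\le N}F|\le C_N a$ and $|(\nabb_3,\nabb_4,\nabb)^{\le N}\Fb|\le C_N a$, together with \eqref{eq:estim.Gac}, \eqref{eq:Bonds-j} and the obvious boundedness of the background functions, every term of the expansion containing no $\nabb_3$-derivative of $F$ is bounded by $C_N\min\{\delta,a\}$, and every term containing a $\nabb_3$-derivative of $F$ is bounded by $C_N\,\min\{\delta,a\}\cdot a\le C_N\min\{\delta,a\}$ (using $|a|<m$ and absorbing the $m$-dependence into $C_N$). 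Summing over the finitely many terms yields $|(\nab_{\eS_a},\nab_{\eS_4})^{\le N}F|\le C_N\min\{\delta,a\}$, which is the assertion (with the constant $C_N$ suppressed, as elsewhere in the paper). The only real work is the bookkeeping in the induction: verifying that the pairing of each $\nabb_3 F$ with an undifferentiated $F$ genuinely survives iterated $S$-frame differentiation, including differentiation of the $F,\Fb$-dependent coefficients in $\eS_a,\eS_4$ and of the connection-correction terms; there is no analytic obstacle.
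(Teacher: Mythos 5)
Your key observation is exactly the one the paper uses: in the expansions of $\eS_4$ and $\eS_a$ the coefficient of $e_3$ always carries a factor of $F$, and $e_3$ is the only direction whose derivatives of $F$ are merely $O(a)$ rather than $O(\min\{\delta,a\})$; combining this with Proposition \ref{Prop:boundsF-Fb} gives the corollary. Two remarks, one on economy and one on a genuine (though repairable) slip in your bookkeeping. First, the machinery you invoke --- splitting the transformation into \eqref{eq:trasportation.formulas-1} and its dual and using \eqref{eq:nab'i-on-H-tensor}, \eqref{eq:nab'andnab_4'on-H-tensors} --- is not needed here: as the paper points out right before the proof, $\nab_{\eS_a}$ and $\nab_{\eS_4}$ in this corollary are still the $H$-horizontal operator $\nab$ of the PG frame, merely contracted with the directions $\eS_a$, $\eS_4$, i.e. $\nab_{\eS_a}=\nab_a+\frac 12 \Fb_a F^b \nab_b+\frac 12 \Fb_a \nab_4+\big(\frac 12 F_a+\frac 18|F|^2 \Fb_a\big)\nab_3$, so no change-of-horizontal-structure correction terms enter and the proof reduces to expanding these directional combinations and applying Proposition \ref{Prop:boundsF-Fb}.

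Second, the invariant you propose to propagate --- that every factor of the form ($\nab_3$-containing string of PG derivatives of $F$) is multiplied by an \emph{undifferentiated} factor of $F$ --- does not survive the induction as stated. Already at second order, applying $\nab_{\eS_b}$ to the term $\big(\frac 12 F_a\big)\nab_3 F$ and letting the Leibniz rule hit the coefficient produces the monomial $\big(\frac12\nab_b F_a\big)\nab_3 F$, which contains no undifferentiated $F$. The correct invariant, which is preserved and suffices, is weaker: every monomial in the expansion contains at least one factor that is either an undifferentiated $F$ or a string of PG derivatives of $F$ containing no $\nab_3$ (each such factor is $O(\min\{\delta,a\})$ by Proposition \ref{Prop:boundsF-Fb}), while all remaining factors are merely bounded (being $\leq N$-fold PG derivatives of $F$, $\Fb$, $\Gac$, $j$, $\dual j$ or smooth background quantities, each $O(a)$ or $O(1)$). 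The preservation is immediate: a derivative $\nab_b$, $\nab_4$ applied to the distinguished good factor keeps it good, a derivative applied elsewhere leaves it untouched, and a $\nab_3$ applied to it comes, by the structure of $\nab_{\eS_a}$, $\nab_{\eS_4}$, with a new coefficient $\frac 12 F_a+\frac 18|F|^2\Fb_a$ or $\frac14|F|^2$ supplying a fresh undifferentiated $F$. With this corrected invariant your argument closes and gives the stated bound; with it uncorrected, the induction as you set it up would get stuck at the term displayed above.
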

Note that we are still using the $H$-horizontal operator $\nab$. Recall that $\eS_a=e_a+\frac 12 \Fb_a F^b e_b+\frac 12 \Fb_a e_4+(\frac 12 F_a+\frac 18|F|^2 \Fb_a)e_3$, so $\nab_{\eS_a}$ is defined by $\nab_{\eS_a}=\nab_a+\frac 12 \Fb_a F^b \nab_b+\frac 12 \Fb_a \nab_4+(\frac 12 F_a+\frac 18|F|^2 \Fb_a)\nab_3$. Similarly $\nab_{\eS_4}=\nab_4+F^a \nab_a+\frac 14 |F|^2 \nab_3$.
\begin{proof}
    The only derivative that does not give $\delta$-smallness is the $e_3$ direction, but one can see from the expression
    of $\nab_{\eS_a}$ and $\nab_{\eS_4}$ that $\nab_3$ is always paired with a factor in $F$, which provides a $\delta$-smallness factor. Therefore, the estimate is a direct corollary of Proposition \ref{Prop:boundsF-Fb}.
\end{proof}

\subsection{The time function $\tau$}\label{subsect:timefunction}
We  endow     the region $\{|r-r_+|<2\delta\}$ with a spacelike foliation $\Si_\tau$,   given    by  the level hypersurfaces of the  function  $\tau:=v-r$.
  One may extend $\Sigma_\tau$ to the whole spacetime, but here it suffices to focus on $\{|r-r_+|<2\delta\}$.
To see that $\tau$ is indeed a time function and  therefore $\Si_\tau$  are  spacelike, we compute (for $|r-r_+|\leq 2\delta$),
\begin{equation*}
\begin{split}
    g(\grad\, \tau,\grad\, \tau)&=\pa^\mu \tau\,  \pa_\mu \tau=-e_3(\tau) e_4(\tau)+e_a(\tau)e_a(\tau)=e_3(r) e_4(v-r)+e_a(v) e_a(v)\\
    &=-(e_4(v)+O(\delta))+|\nab v|^2\\
    &=-2\frac{r^2+a^2}{|q|^2}+\frac{a^2\sin^2\theta}{|q|^2}+O(\delta)+O(\epsilon_0)\\
    &=-\frac{2r^2+a^2\cos^2\theta}{|q|^2}+O(\delta)\leq -1<0,
\end{split}
\end{equation*}
so $\tau$ is indeed a time function.

\def\D{\mathcal{D}}
\def\const{\mathrm{const}}

{\bf The coordinate system on $\Sigma_\tau$.} 
From the spacetime coordinates $(v,r,\theta,\varphi)$ (and $(v,r,x^1,x^2)$), we see that $(r,\theta,\varphi)$ (and $(r,x^1,x^2)$) become coordinate systems on $\Sigma_\tau=\{v-r=\tau\}$ for each $\tau$. The coordinate basis can be expressed by the spacetime coordinate basis by $(\overline \pa_r,\overline \pa_\theta,\overline \pa_\vphi)=(\pa_r+\pa_v,\pa_\theta,\pa_\varphi)$ (similarly for the $(r,x^1,x^2)$ coordinate). 

In fact, by \eqref{eq:perturbed-metric-expression}, there exists a diffeomorphism
\begin{equation}\label{eq:def-diffeomorphism-Sigma0}
    \Phi_{\Sigma_\tau}\colon (r_+-2\delta,r_+ +2\delta)\times \mathbb{S}^2 \to \Sigma_\tau \cap \{|r-r_+|<2\delta\}
\end{equation}
where $(\theta,\varphi)$ (and $(x^1=\sin\theta\cos\vphi,x^2=\sin\theta\sin\vphi)$) are the spherical coordinates on $\mathbb{S}^2$, where the pullback metric, denoted by $\Phi_{\Sigma_\tau}^* \overline g$, or simply $\overline g$, satisfies
\begin{equation*}
    \overline g=\overline{\g}_{a,m}+O(\epsilon_0)
\end{equation*}
where $
\overline{\g}_{a,m}$ is the correponding pullback metric induced on $\{\tau=v-r=\const\}$ in exact Kerr. One can compute from the expression \eqref{eq:metric-expression-exact-Kerr-ingoingEF} that
\begin{equation*}
\begin{split}
    \overline\g_{a,m}&=\left(1+\frac{2mr}{|q|^2}\right) dr^2-2\left(1+\frac{2mr}{|q|^2} \right)a\sin^2\theta drd\vphi+|q|^2 d\theta^2+\left(\Big(1+\frac{2mr}{|q|^2}\Big) a^2 \sin^4\theta+|q|^2\sin^2\theta\right) d\vphi^2.
    \end{split}
\end{equation*}  
One can also derive the expression in $(r,x^1,x^2)$ coordinates by straightforward calculation:
\begin{equation*}
\begin{split}
    \overline \g_{a,m}&=\left(1+\frac{2mr}{|q|^2}\right) dr^2+2\left(1+\frac{2mr}{|q|^2}\right)\left(a x^2 dr dx^1-a x^1 dr dx^2\right) \\
    &+\left(a^2 \left(1+\frac{2mr}{|q|^2}\right) (x^2)^2 +|q|^2\frac{1-(x^2)^2}{1-|x|^2}\right)(dx^1)^2\\
    &+2\left(-a^2 \left(1+\frac{2mr}{|q|^2}\right) x^1 x^2+|q|^2\frac{x^1 x^2}{1-|x|^2}\right)dx^1 dx^2\\
    &\quad +\left(a^2 \left(1+\frac{2mr}{|q|^2}\right) (x^1)^2+|q|^2\frac{1-(x^1)^2}{1-|x|^2}\right)(dx^2)^2,
    \end{split}
\end{equation*}
which is regular away from $\theta=\pi/2$.

In particular, we see that
\begin{equation}\label{eq:grr}
    1\leq g(\overline\pa_r,\overline\pa_r)=1+\frac{2mr}{|q|^2}+O(\delta)\leq 3.
\end{equation}
It is also clear from the metric expression that there exists a constant only dependent on $a,m,\delta$ such that
\begin{equation}\label{eq:equilvalence-metrics-on-Sigma}
    C^{-1}  \overline g_0\leq \overline g\leq C\overline g_0,\quad |r-r_+|\leq 2\delta.
\end{equation}
where $\overline g_0$ denotes the induced Euclidean metric when we embedd $(r_+ -2\delta,r_+ +2\delta)\times \mathbb{S}^2$ into $\mathbb{R}^3$ using the standard spherical coordinates.

We will frequently use this diffeomorphism $\Phi_{\Sigma_\tau}$ from \eqref{eq:def-diffeomorphism-Sigma0}. Denote the natural projection $(r_+ -2\delta,r_+ +2\delta)\times \mathbb{S}^2\to \mathbb{S}^2$ by $P_{\mathbb{S}^2}$. Then we also define
\begin{equation}\label{eq:def-projection-S^2}
    P_{\Sigma_\tau;\mathbb{S}^2}:=P_{\mathbb{S}^2}\circ \Phi_{\Sigma_\tau}^{-1}\colon \Sigma_\tau \cap \{|r-r_+|<2\delta\}\to \mathbb{S}^2.
\end{equation}
This assigns to each point on $\Sigma_\tau \cap\{|r-r_+|<2\delta\}$ a coordinate value on $\mathbb{S}^2$.

\subsection{Causal relations}
\begin{definition}
    Given a set $S$ in the spacetime, we define its chronological future and causal future by
    \begin{equation*}
        I^+(S):=\{q\colon \text{There exists a point $p\in S$ and a future-directed timelike curve from $p$ to $q$}\}.
    \end{equation*}
    \begin{equation*}
        \JJ^+(S):=\{q\colon \text{There exists a point $p\in S$ and a future-directed causal curve from $p$ to $q$}\}.
    \end{equation*}
\end{definition}
One can also define its chronological past $I^-(S)$ and causal past $\JJ^-(S)$ by replacing ``future" with ``past".

By a causal curve we mean a $C^1$ curve with velocity always timelike or null. We have the transitivity properties \cite[Proposition 2.5]{Penrose1972}\footnote{Strictly speaking, the set $I^+(S)$ and $\JJ^+(S)$ is defined differently by piecewise $C^1$ geodesics in \cite{Penrose1972}, but the equivalence was also discussed in the same chapter.}:
\begin{proposition}
    If $a\in \JJ^+(b)$, $b\in \JJ^+(c)$, then $a\in \JJ^+(c)$.
\end{proposition}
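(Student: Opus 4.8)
The plan is to concatenate the two given causal curves and then replace the resulting broken curve by a genuine $C^1$ causal curve with the same endpoints. Concretely, let $\gamma_1\colon[0,1]\to\MM$ be a future-directed causal $C^1$ curve with $\gamma_1(0)=c$, $\gamma_1(1)=b$, and let $\gamma_2\colon[1,2]\to\MM$ be a future-directed causal $C^1$ curve with $\gamma_2(1)=b$, $\gamma_2(2)=a$. Their concatenation $\gamma\colon[0,2]\to\MM$ is continuous, piecewise $C^1$, and at every parameter its one-sided velocities are future-directed causal; in the piecewise-$C^1$ convention of \cite[Prop.~2.5]{Penrose1972} this already exhibits $a\in\JJ^+(c)$, so it remains only to match the $C^1$ convention adopted here (cf.\ the footnote to the preceding proposition). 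If $\dot\gamma_1(1)$ and $\dot\gamma_2(1)$ are positively proportional, $\gamma$ is already $C^1$ at $b$ after a $C^1$ reparametrization of one piece, and we are done; otherwise there is a genuine corner at $p=\gamma(1)=b$ that must be removed.

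To remove the corner I would work inside a geodesically convex normal neighborhood $\UU\ni b$: choose $b_-=\gamma_1(1-\eta)\in\UU$ and $b_+=\gamma_2(1+\eta)\in\UU$ for $\eta>0$ small. The broken causal segment $b_-\to b\to b_+$ is a future-directed causal curve in $\UU$ which is \emph{not} a null geodesic (a geodesic is $C^1$, while this curve has a true corner at $b$), hence $b_+\in I^+(b_-)$ and the radial geodesic of $\UU$ joining $b_-$ to $b_+$ is timelike and future-directed. Replacing the portion of $\gamma$ between $b_-$ and $b_+$ by this timelike geodesic produces a new broken curve whose only corners, at $b_-$ and $b_+$, are now corners between two future \emph{timelike} directions; each such corner is rounded off by an elementary interpolation that keeps the tangent vector inside the open, convex future timelike cone throughout. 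The result is a $C^1$ (indeed piecewise smooth, and after one more rounding smooth) future-directed causal curve from $c$ to $a$, so $a\in\JJ^+(c)$.

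The only step requiring care is this corner-smoothing: one must verify (i) that a broken future-directed causal curve with a genuine corner chronologically relates nearby points, and (ii) that the interpolation between two future timelike directions can be carried out while the tangent vector stays in the open future timelike cone. Both are standard consequences of working in a convex normal neighborhood, where the causal and chronological relations are governed by the sign of $g$ along radial geodesics and the timelike cone is convex. Everything else is a routine concatenation argument, and in any case the statement is precisely \cite[Prop.~2.5]{Penrose1972} together with the equivalence of the piecewise-$C^1$ and $C^1$ causal curve classes recalled there.
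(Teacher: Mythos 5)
Your proposal is correct, but it does more work than the paper, which offers no argument at all: the proposition is quoted directly from \cite{Penrose1972} (Proposition 2.5), with the footnote only recording that Penrose's piecewise-$C^1$ definition of $\JJ^+$ is equivalent to the $C^1$ convention used here — exactly the observation you make in your closing sentence. Your additional corner-smoothing sketch is a legitimate direct route: concatenation, then local repair in a convex normal neighborhood $\UU$, is the standard way to pass from piecewise-$C^1$ to $C^1$ causal curves, and it makes the equivalence the paper merely cites explicit. Two points deserve care if you flesh it out. First, your step (i) — a broken future causal curve with a genuine corner chronologically relates its endpoints — is precisely the nontrivial ``push-off'' lemma (Hawking--Ellis \cite{HE}, Prop.\ 4.5.10, or Penrose's Prop.\ 2.20), so your argument still rests on an external standard result; it just replaces one citation by another, which is fine but worth acknowledging. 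Second, after inserting the timelike radial geodesic from $b_-$ to $b_+$, the corners at $b_\pm$ need not be between two \emph{timelike} directions: the incoming velocity $\dot\gamma_1(1-\eta)$ (resp.\ outgoing $\dot\gamma_2(1+\eta)$) may be null. This does not break the rounding, since a convex combination of a future causal vector and a future timelike vector is future timelike (work in a local frame to compare tangent spaces along the rounding arc), but the statement should be corrected. With those adjustments your argument is complete, and arguably more self-contained than the paper's citation; what the paper's route buys is brevity and avoidance of re-proving textbook causality theory.
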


The spacetime we study here is globally hyperbolic. In this case, we have the following properties (\cite[Chapter 8.3]{Wald})
\begin{proposition}
    In any globally hyperbolic spacetime, $\JJ^+(K)$ is closed if $K$ is compact. Moreover, we have $\overline{I^+(K)}=\JJ^+(K)$.
\end{proposition}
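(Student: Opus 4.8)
The plan is to deduce both claims from the standard characterization of global hyperbolicity, namely that for any compact sets $K_1,K_2\subset\MM$ the causal diamond $\JJ^+(K_1)\cap\JJ^-(K_2)$ is compact, combined with a limit curve argument. Throughout I would fix an auxiliary complete Riemannian metric $h$ on $\MM$ and parametrize all causal curves by $h$-arc length, so that they become uniformly $1$-Lipschitz and Arzel\`{a}--Ascoli applies.

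\emph{Closedness of $\JJ^+(K)$.} Let $q\in\overline{\JJ^+(K)}$ and choose $q_n\to q$ with $q_n\in\JJ^+(p_n)$, $p_n\in K$, joined by future-directed causal curves $\gamma_n$. By compactness of $K$ we may pass to a subsequence with $p_n\to p\in K$. Pick any $q^+$ with $q\ll q^+$; then $q_n\in\JJ^-(q^+)$ for $n$ large, so each $\gamma_n$ lies in the compact set $\JJ^+\big(\{p\}\cup\{p_n : n\}\big)\cap\JJ^-(q^+)$. In particular the $h$-lengths of the $\gamma_n$ are uniformly bounded, and by the limit curve lemma a subsequence converges uniformly to a future-directed causal curve $\gamma$ from $p$ to $q$ (the endpoints converge because those of $\gamma_n$ converge to $p$ and $q$). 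Hence $q\in\JJ^+(p)\subset\JJ^+(K)$, so $\JJ^+(K)$ is closed.

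\emph{The identity $\overline{I^+(K)}=\JJ^+(K)$.} Since $I^+(K)\subset\JJ^+(K)$ and $\JJ^+(K)$ is closed by the previous step, $\overline{I^+(K)}\subset\JJ^+(K)$. Conversely, let $q\in\JJ^+(K)$, say $q\in\JJ^+(p)$ with $p\in K$. The set $I^+(q)$ is open and nonempty, and $q\in\overline{I^+(q)}$ since every point is a limit of points in its own chronological future. By the ``push-up'' property $I^+\circ\JJ^+\subset I^+$ one has $I^+(q)\subset I^+(p)\subset I^+(K)$, so $q\in\overline{I^+(q)}\subset\overline{I^+(K)}$. Thus $\JJ^+(K)\subset\overline{I^+(K)}$, and equality holds.

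The only substantive input is the limit curve lemma — that a sequence of causal curves confined to a compact region, with converging endpoints, subconverges in $C^0$ to a causal curve with the limiting endpoints — whose proof needs the arc-length reparametrization, an Arzel\`{a}--Ascoli argument, and the nontrivial verification that a $C^0$ limit of causal curves is causal; together with the compactness of causal diamonds, which is part of the definition/characterization of global hyperbolicity. Both are classical, so in the write-up I would cite \cite{Wald}, \cite{Penrose1972}, \cite{HE} for them, everything else being a formal manipulation of the causal relations.
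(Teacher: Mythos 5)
Your proof is correct, and it coincides with the paper's treatment: the paper offers no argument of its own but simply cites the classical results in \cite{Wald} (Chapter 8.3), and your limit-curve argument for closedness together with the push-up property $I^+\circ \JJ^+\subset I^+$ is exactly the standard proof underlying that citation.
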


\begin{definition}\label{def:achronal-set}
    A set is called achronal, if any timelike curve cannot intersect it more than once.\footnote{  A typical example  is     the light cone from a   point in Minkowski space.}
\end{definition}


\subsubsection{Null cones over spheres}\label{subsubsection:null-cones-over-spheres}


 Consider a spacetime $\MM$ foliated by a family of spacelike hypersurfaces $\{\Sigma_\tau\}_{\tau\in \mathbb{R}}$, and each $\Sigma_\tau$ is diffeomorphic to $\mathbb{R}^3\backslash B$, where $B$ is the unit ball in $\mathbb{R}^3$. Any embedded sphere on $\Sigma_\tau$ corresponds to a sphere on $\mathbb{R}^3\backslash B$ through a  diffeomorphism map and  divides \footnote{The Jordan-Brouwer separation theorem applies, for example.}  $\Sigma_\tau\backslash B$ into two regions, the interior, and the exterior (the latter containing the spatial infinity).  This then defines the outgoing and incoming null direction perpendicular to $S$.

\begin{definition}
\lab{Def:C(S)} The  future  outgoing (incoming) null cone of a sphere $S$,  denoted by $C^+(S)$ ($C^{-}(S)$),  is the  null hypersurface generated by the  congruence of future  outgoing (incoming)   null geodesics  perpendicular to $S$.  We can define the past  null  cones  in the same way. 
\end{definition}
\begin{definition}
\lab{Def:RegularNC}
A (portion of) null cone $C(S)=C^{\pm}(S)$, generated by an embedded sphere $S$, is said to be regular if it  is smooth and embedded, i.e. it is the image of a smooth embedding from $[0,T]\times \mathbb{S}^2$ to the spacetime $\MM$.   
 
\end{definition}

The following   definition is   important in determining the regularity of $C(S)$, see Remark \ref{Remark:immersioncriterion}.
\begin{definition}
\lab{Def:trch-C(S)}
 Let  $L$  be  a  null geodesic vectorfield   of  $C(S)$, i.e.      tangent to the  null   generators of $C(S)$  and $D_L L=0$, and 
      define  the null expansion along each null generator   to be     $\trch=\delta^{ab}  g(D_{E_a}L, E_b) $  where $E_a$  is an arbitrary   choice  of spacelike orthonormal frame  perpendicular to  $L$.   Note that $\trch$  is invariant, i.e.,  it does not dependent  on the choice of $E_a$\footnote{Note that when $L$ is fixed,   any two null frames with horizontal structure perpendicular to $L$  can be related by a null frame transformation $(f=0,\fb,\la=1)$ (with $L$ viewed as $e_4$). In view of the transformation formulas, see Section \ref{subsect:NullFrameTransformation},   we  easily see that $\trch$ is invariant.}. 
\end{definition}
We now give some causal properties when the future outgoing  null cone $C^+(S)$ is regular. Clearly similar properties also hold true for a regular past incoming cone of $S$ (i.e. $C^+(S)$ when extended towards the past).
\begin{definition}
Assume that  $C=C^+(S)$   is regular, so $S_\tau=C\cap \Si_{\tau}$ is an embedded sphere on $\Sigma_\tau$. In this case, $S_\tau$ defines the interior and exterior region of $\Sigma_\tau$,  denoted by $\Sigma_\tau^i$ and $\Sigma_\tau^e$. Unless specifically mentioned  otherwise, we  assume  that they     include  the boundaries.
\end{definition}

\begin{proposition}
    Assume $C=C^+(S)$ is a regular null cone with $\Sigma_\tau^i$, $\Sigma_\tau^e$ defined as above. Then there is no future-directed timelike curve\footnote{By convention, a single point is not a timelike curve; it is however often counted as a causal curve.} emanating from $\Sigma_{\tau}^i$ that intersects  $C$.
\end{proposition}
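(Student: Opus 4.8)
The plan is to argue by contradiction, using the fact that a timelike curve starting in the interior $\Sigma_\tau^i$ and reaching $C = C^+(S)$ would, at the moment it first touches $C$, have to "catch up" to the null generators of $C$, which is geometrically impossible for a causal (let alone timelike) curve crossing a null hypersurface from the correct side. Concretely, suppose $\gamma\colon[0,1]\to\MM$ is a future-directed timelike curve with $\gamma(0)\in\Sigma_{\tau}^i$ and $\gamma(s_0)\in C$ for some $s_0\in(0,1]$; choose $s_0$ minimal, so that $\gamma([0,s_0))$ does not meet $C$. First I would observe that since $C$ is regular (a smooth embedded null hypersurface by Definition \ref{Def:RegularNC}), it locally separates a neighborhood of $\gamma(s_0)$ into two pieces, and I want to show $\gamma([0,s_0))$ lies in the component \emph{to the interior side} of $C$. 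This follows because $\Sigma_\tau$ meets $C$ transversally in the sphere $S_\tau$ (as $\Sigma_\tau$ is spacelike and $C$ is null), so near $S_\tau$ the two sides of $C$ within $\Sigma_\tau$ are exactly $\Sigma_\tau^i$ and $\Sigma_\tau^e$; by continuity $\gamma$ stays on the interior side until $s_0$.

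Next I would set up the local picture at the first crossing point $q = \gamma(s_0)$. Let $L$ be the null geodesic generator vector field of $C$ (as in Definition \ref{Def:trch-C(S)}), extended to a neighborhood, and complete it to a null frame $\{e_3 = \underline{L}, e_4 = L, e_a\}$ with $g(L,\underline L)=-2$, where $\underline L$ points to the interior side of $C$. The key algebraic point: the one-form $g(\underline L, \cdot)$ is conormal to $C$ (it annihilates $L$ and the $e_a$, which span $T_qC$), and it is a \emph{null} conormal; the interior side of $C$ is the one into which $\underline L$, equivalently $-g(\underline L,\cdot)^\sharp$, points — i.e. the function $\varphi$ with $d\varphi|_q = -g(\underline L,\cdot)|_q$ and $\varphi = 0$ on $C$ is negative on the interior side. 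Then for the future-directed causal tangent vector $\gamma'(s)$, one has $\frac{d}{ds}\varphi(\gamma(s)) = -g(\underline L, \gamma'(s))$; since $\gamma'$ is future-directed causal and $\underline L$ is future-directed null, $g(\underline L,\gamma'(s)) \le 0$, with equality only if $\gamma'(s) \parallel \underline L$, which cannot persist for a timelike curve. Hence $\varphi\circ\gamma$ is non-decreasing near $s_0$, so it cannot have been strictly negative just before $s_0$ and equal to zero at $s_0$ — unless it is identically zero, contradicting the transversal separation in the previous step. This is the contradiction.

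The main obstacle I anticipate is the global/topological bookkeeping rather than the local causal inequality: one must be careful that "interior side of $C$" defined via the conormal $-g(\underline L,\cdot)$ at each point of $C$ is \emph{consistently} the same as "interior region $\Sigma_\tau^i$" defined via the Jordan–Brouwer separation on each slice $\Sigma_\tau$, and that $\gamma$ does not sneak around $C$ through a region where $C$ has not been extended (recall $C$ is only a \emph{portion} of a null cone, the image of $[0,T]\times\mathbb S^2$). To handle the latter, I would note that $C$ together with $S = S_{\tau_0}$ (its initial sphere, lying on $\Sigma_{\tau_0}$) and the hypersurfaces $\Sigma_\tau$ bound the relevant region, and that any curve leaving $\Sigma_{\tau_0}^i$ toward the future with $\tau$ increasing along it (which holds since $\tau$ is a time function by Section \ref{subsect:timefunction} and $\gamma$ is future-directed causal) must either stay in $\bigcup_\tau \Sigma_\tau^i$ or cross $C$; it cannot exit through $S$ since $S$ lies in the past. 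Once this containment is established, the first-crossing argument above applies verbatim. The regularity (smoothness and embeddedness) of $C$ is used precisely to guarantee that $C$ is a genuine separating hypersurface with a well-defined, continuously varying null conormal $\underline L$, which is what makes the local separation statement and the sign of $\varphi\circ\gamma$ meaningful.
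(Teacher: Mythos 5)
Your overall strategy (first crossing point, sign of a defining function for $C$ along the curve) is the right one and is close in spirit to the paper's proof, but the local step at the crossing point contains two genuine errors. First, the one-form $g(\underline{L},\cdot)$ is \emph{not} conormal to $C$: it annihilates $e_a$ and $\underline{L}$, but $g(\underline{L},L)=-2\neq 0$, and $L$ is tangent to $C$. The conormal of a null hypersurface is the metric dual of its own null generator, i.e. $g(L,\cdot)$, which kills $L$ (null) and $e_a$ (orthogonal) but not the transversal $\underline{L}$. Consequently the function $\varphi$ you posit, with $\varphi\equiv 0$ on $C$ and $d\varphi=-g(\underline{L},\cdot)$ there, cannot exist: any defining function of $C$ must have $d\varphi(L)=0$, whereas $-g(\underline{L},L)=2$. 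Second, even granting your setup, the final inference is logically broken: a \emph{non-decreasing} $\varphi\circ\gamma$ is perfectly compatible with being strictly negative on $[0,s_0)$ and vanishing at $s_0$ — that is exactly how a monotone function reaches $0$ from below — so no contradiction is obtained. What the argument actually needs is the opposite strict sign: taking $d\varphi=c\,g(L,\cdot)$ on $C$ with the orientation fixed so that the interior is $\{\varphi<0\}$ (which forces $c>0$, since $L$ projects outward on $\Sigma_{\tau'}$), one gets $\tfrac{d}{ds}\varphi(\gamma(s_0))=c\,g(L,\gamma'(s_0))<0$ for a future-directed timelike $\gamma'$, and this strict negativity contradicts $\varphi(\gamma(s))<0$ for $s<s_0$ together with $\varphi(\gamma(s_0))=0$, which would force the one-sided derivative at $s_0$ to be $\ge 0$.

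Once corrected in this way, your proof essentially reproduces the paper's: the paper writes $V=aL+b\underline{L}+V_s$ at the first intersection point on $S_{\tau'}$, observes that the first-crossing/interior condition forces $b\le 0$ (which is precisely the statement $g(L,V)=-2b\ge 0$, the same inequality as above), and then uses $g(V,V)=-4ab+|V_s|^2<0$ together with future-directedness of $V$ relative to $\Sigma_{\tau'}$ to reach a contradiction. Your global containment discussion (that $\gamma$ cannot evade the portion $C$ through its past boundary sphere, since $\tau$ increases along a future-directed causal curve) is sound and, if anything, more explicit than the paper's; the defect is entirely in the identification of the conormal and the direction of the resulting inequality.
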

\begin{proof}
  Suppose there is such a curve $\ga$ initiating on $\Si_\tau$, and denote  the first intersecting point  $p\in S_{\tau'}\subset C$ for some $\tau'> \tau$.   Take a null frame $\{L,\Lb,E_1,E_2\}$ so that $E_1$, $E_2$ are tangent to $S_{\tau'}$ (hence $\Sigma_{\tau'}$), $L$ is tangent to the null generators of $C$, and both $L$, $\Lb$ are pointing to the future side of $\Sigma_{\tau'}$. Then the velocity vector $V$  along $\ga$ can be written as $V=aL+b\Lb+V_s$, where $V_s\in \mathrm{span}\{E_1,E_2\}$ is spacelike. Since $p$ is the first intersecting point and $V$ is future  directed, we must have $b\leq 0$. Note that
    \begin{equation*}
        g(V,V)=-4ab+g(V_s,V_s)\geq -4ab.
    \end{equation*}
    Since $V$ is timelike,    we must have    $ab>0$. Therefore  $a,b<0$, which  contradicts the assumption that $V$ is pointing to the future of $\Sigma_{\tau'}$.
\end{proof}
Similarly, one can show that the past incoming null cone, if regular, does not intersect  the chronological past of $\Sigma_{\tau}^e$. 
\begin{definition}\label{def:C-Sigma-achronal}
    For any $\tau_1,\tau_2$ with $\tau_1<\tau_2$, denote the portion of $C=C^+(S)$ between $\Sigma_{\tau_1}$ and $\Sigma_{\tau_2}$ by $C({\tau_1},{\tau_2})$. Assume that $C(\tau_1,\tau_2)$ is regular, then from above, the set $\Sigma_{\tau_1}^i\cup C({\tau_1},{\tau_2})\cup \Sigma_{\tau_2}^e$ is an achronal set, i.e., no timelike curve can intersect with it more than once. We denote it by $\widehat C(\tau_1,\tau_2)$.
\end{definition}

Recall that in a globally hyperbolic spacetime, the closure of $I^+(S)$ is $\JJ^+(S)$. Therefore we have
\begin{corollary}\label{Cor:int-ext}
    Under the assumption of Definition \ref{def:C-Sigma-achronal}, we have $\JJ^+(\Sigma_{\tau_1}^i)\cap \Sigma_{\tau_2}^e=S_{\tau_2}$, $\JJ^-(\Sigma_{\tau_2}^e)\cap \Sigma_{\tau_1}^i=S_{\tau_1}$.
\end{corollary}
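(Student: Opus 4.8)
\textbf{Proof plan for Corollary \ref{Cor:int-ext}.}

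The plan is to deduce this directly from the achronality of $\widehat{C}(\tau_1,\tau_2)=\Sigma_{\tau_1}^i\cup C(\tau_1,\tau_2)\cup\Sigma_{\tau_2}^e$ established in Definition \ref{def:C-Sigma-achronal}, together with the global hyperbolicity identity $\overline{I^+(S)}=\JJ^+(S)$. First I would handle the inclusion ``$\supseteq$'' for the first identity: the sphere $S_{\tau_2}=C\cap\Sigma_{\tau_2}$ is the future endpoint on $\Sigma_{\tau_2}$ of the null generators of $C$ issuing from $S_{\tau_1}\subset\partial\Sigma_{\tau_1}^i$, hence $S_{\tau_2}\subset\JJ^+(S_{\tau_1})\subset\JJ^+(\Sigma_{\tau_1}^i)$, and of course $S_{\tau_2}\subset\Sigma_{\tau_2}^e$ since we take the exterior region to include its boundary. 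For ``$\subseteq$'': suppose $q\in\JJ^+(\Sigma_{\tau_1}^i)\cap\Sigma_{\tau_2}^e$. By $\overline{I^+(\Sigma_{\tau_1}^i)}=\JJ^+(\Sigma_{\tau_1}^i)$ (applied to compact exhausting subsets, or directly), either $q\in I^+(\Sigma_{\tau_1}^i)$ or $q$ is a limit of such points.

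In the first case there is a future-directed timelike curve from some $p\in\Sigma_{\tau_1}^i$ to $q$. The endpoint $p$ lies on the achronal set $\widehat{C}(\tau_1,\tau_2)$; if $p$ is in the open interior $\Sigma_{\tau_1}^i\setminus S_{\tau_1}$ the curve cannot meet $\widehat{C}(\tau_1,\tau_2)$ again, so $q\notin\Sigma_{\tau_2}^e$ unless $q\in S_{\tau_2}$ — but $q\in\Sigma_{\tau_2}^e$ forces $q\in S_{\tau_2}$; if $p\in S_{\tau_1}$ one uses the Proposition preceding Definition \ref{def:C-Sigma-achronal} (no future timelike curve from $\Sigma_{\tau_1}^i$ meets $C$) applied to a point of $\Sigma_{\tau_1}^i$ slightly to the interior of $p$, or more cleanly: a timelike curve from $S_{\tau_1}$ entering $\Sigma_{\tau_2}^e$ must cross $\widehat{C}(\tau_1,\tau_2)$, and since it starts on that achronal set it can touch it only at its initial point, so it never reaches $\Sigma_{\tau_2}^e\setminus S_{\tau_2}$; reaching exactly $S_{\tau_2}$ along a timelike curve is impossible since $S_{\tau_2}$ then lies in $I^+(S_{\tau_1})$ which, being achronal with $C$, gives a contradiction. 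In the limiting case, $q=\lim q_n$ with $q_n\in I^+(\Sigma_{\tau_1}^i)$; by the same argument each $q_n$ lies in $\Sigma_{\tau_2}^i$ (or off $\Sigma_{\tau_2}$ to the future), so $q_n\notin\Sigma_{\tau_2}^e\setminus S_{\tau_2}$, and since $\Sigma_{\tau_2}^e\setminus S_{\tau_2}$ is open in $\Sigma_{\tau_2}$ while $q\in\Sigma_{\tau_2}^e$, we conclude $q\in S_{\tau_2}=\partial\Sigma_{\tau_2}^e$. The second identity $\JJ^-(\Sigma_{\tau_2}^e)\cap\Sigma_{\tau_1}^i=S_{\tau_1}$ follows by the time-dual argument, using that the past incoming cone of $S_{\tau_2}$ (which is $C$ extended, equivalently $C(\tau_1,\tau_2)$ traversed toward the past) does not meet the chronological past of $\Sigma_{\tau}^e$, as noted right after the Proposition.

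The main obstacle is the careful bookkeeping of boundary inclusions: because $\Sigma_\tau^i$ and $\Sigma_\tau^e$ both contain the dividing sphere $S_\tau$, the achronality of $\widehat{C}(\tau_1,\tau_2)$ permits a single contact of a timelike curve precisely at the common boundary, and one must argue that such a contact can only be the initial point of the curve and only at $S_{\tau_1}$, never producing an interior point of $\Sigma_{\tau_2}^e$. Equivalently, the delicate point is ruling out a timelike curve from $\Sigma_{\tau_1}^i$ reaching a point of $\Sigma_{\tau_2}^e\setminus S_{\tau_2}$: this is exactly where achronality of the hatted hypersurface (rather than merely of $C$ itself) is used, since the curve would have to pass through the exterior sheet $\Sigma_{\tau_2}^e$ which is glued into $\widehat{C}$. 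Once this separation-of-sheets logic is set up, the rest is a routine application of $\overline{I^+}=\JJ^+$ and a limiting argument using that the exterior regions minus their boundary spheres are open.
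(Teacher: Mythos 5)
Your overall route---achronality of $\widehat C(\tau_1,\tau_2)$ combined with $\overline{I^+}=\JJ^+$ in a globally hyperbolic spacetime---is exactly the one the paper intends (the paper states the corollary with no further argument beyond recalling $\overline{I^+(K)}=\JJ^+(K)$), and your chronological case is correct: a future timelike curve from $\Sigma_{\tau_1}^i$ to any point of $\Sigma_{\tau_2}^e$ (including $S_{\tau_2}$) would meet the achronal set $\widehat C(\tau_1,\tau_2)$ twice, so $I^+(\Sigma_{\tau_1}^i)\cap\Sigma_{\tau_2}^e=\varnothing$; the inclusion $S_{\tau_2}\subset \JJ^+(\Sigma_{\tau_1}^i)$ via the null generators is also fine, as is the time-dual reduction for the second identity.

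There is, however, a genuine gap in the limiting case, which is where the real content of the corollary sits (points of $S_{\tau_2}$ are reached causally but not chronologically, and one must rule out the same for strictly exterior points). You write: $q=\lim q_n$ with $q_n\in I^+(\Sigma_{\tau_1}^i)$, each $q_n\notin\Sigma_{\tau_2}^e\setminus S_{\tau_2}$, and "since $\Sigma_{\tau_2}^e\setminus S_{\tau_2}$ is open in $\Sigma_{\tau_2}$" conclude $q\in S_{\tau_2}$. This is a non sequitur: the $q_n$ are spacetime points that need not lie on $\Sigma_{\tau_2}$ at all, and a sequence of off-slice points can converge to an interior point of the relatively open set $\Sigma_{\tau_2}^e\setminus S_{\tau_2}$ (e.g.\ points just to the past of it), so relative openness in the slice gives no obstruction. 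What is missing is an argument excluding strictly exterior points from $\JJ^+(\Sigma_{\tau_1}^i)\setminus I^+(\Sigma_{\tau_1}^i)$. A standard repair: if $q$ is strictly exterior, every $q^+\in I^+(q)$ lies in $I^+(\Sigma_{\tau_1}^i)$ and has $\tau(q^+)>\tau_2$; since $\tau$ is a time function, the timelike curve from $\Sigma_{\tau_1}^i$ to $q^+$ crosses $\Sigma_{\tau_2}$ exactly once, and by achronality the crossing point must lie in the strict interior $\Sigma_{\tau_2}^i\setminus S_{\tau_2}$. Taking $q_k^+\in I^+(q)$ with $q_k^+\to q$, the crossing points $y_k$ lie in the compact set $\Sigma_{\tau_2}^i$, so along a subsequence $y_k\to y\in\Sigma_{\tau_2}^i$, and closedness of the causal relation in a globally hyperbolic spacetime gives $q\in\JJ^+(y)$ with $y\neq q$ both on the level set $\{\tau=\tau_2\}$---impossible, since $\tau$ is strictly increasing along nontrivial future causal curves. (Alternatively, one can note that any causal curve realizing a point of $\JJ^+\setminus I^+$ is a null geodesic and argue it must be a generator of $C$.) Without some step of this kind, the distinction between $q\in S_{\tau_2}$ and $q$ strictly exterior---which is the whole point of the corollary---is not actually established.
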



\section{Strategy of the  proof of  Theorem \ref{thm:regularity}}\lab{Section:main}

The goal  of the section  is to  describe the main ideas of the proof of  Theorem \ref{thm:regularity},  restated below:
\begin{theorem}
For the spacetime $\MM$, described in Section \ref{Sect:preliminaries}, satisfying the smallness condition \eqref{eq:estim.Gac}, the     event horizon $\hp$, defined as the boundary of the region $\mathcal J^-(\ip)$, is a regular null hypersurface in the future of the spacelike hypersurface $\Sigma_0$\footnote{In the case of the stability result of \cite{KS}  this 
would be the   initial Cauchy hypersurface,  which  lies in the initial data layer.}.
\end{theorem}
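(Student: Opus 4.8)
By \cite{KS} the event horizon $\hp=\partial\JJ^-(\ip)$ lies in $\RR_\delta=\{|r-r_+|\le\delta\}$, and in the future of $\Sigma_0$ it is ruled by future outgoing null geodesics; a generator of $\hp$ is precisely an outgoing null geodesic which remains in $\RR_\delta$ for all future affine parameter, since it can neither escape to $\ip$ nor reach the spacelike boundary $\{r=r_+-\delta_{\HH}\}$. The plan is to realize $\hp\cap I^+(\Sigma_0)$ as the future outgoing null cone $C^+(S)$ over the embedded sphere $S=\hp\cap\Sigma_0$, and then to verify that this cone meets the hypotheses of the regularity criteria of Section \ref{subsubsection:null-cones-over-spheres} (see Remark \ref{Remark:immersioncriterion}): its null expansion $\trch$ stays bounded on every generator, and its generators are pairwise disjoint, so that $C^+(S)$ is smooth and embedded, i.e. regular in the sense of Definition \ref{Def:RegularNC}. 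To organize the causal bookkeeping we work with the family $\{C^+(S(v,r))\}$ of outgoing cones over the spheres $S(v,r)$ with $|r-r_+|\le\delta$.

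\textbf{The transport system along generators.} For a member $C^+(S(v,r))$ let $L$ be its affine geodesic generator, $D_LL=0$. Since the ingoing PG frame has $\xib=0$, $L$ is obtained from $e_4$ by a transformation of the type \eqref{eq:trasportation.formulas-1}, $L=\lambda(e_4+f^ae_a+\tfrac14|f|^2e_3)$, with $f=0$, $\lambda=1$ on $S(v,r)$ and with $\xi'=0$, $\om'=0$ imposed by the affine geodesic normalisation; composing with the adapted $(F,\Fb,1)$ transformation of Lemma \ref{Lemma:Mots-Kerr(ep_0)}, whose $F$ is of size $O(\delta)$ by Proposition \ref{Prop:boundsF-Fb}, one reduces the starting frame to the $S$-adapted one. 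Lemma \ref{Le:trasportation.formulas} turns the conditions $\xi'=0$, $\om'=0$ into a closed transport system along the generators for $(f,\lambda)$ and for the cone expansion $\trch'$, coupled to the evolution of $r$ by $L(r)=\lambda\,e_4(r)$ with $e_4(r)=\tfrac{\Delta}{|q|^2}+\widecheck{e_4(r)}$. Taking the trace in the first line of \eqref{eq:trasportation.formulas}, $\lambda^{-1}\trch'=\trch+\div'f+O(f\cdot(\text{curvature}))$, and using that the background $\trch=\tfrac{2r\Delta}{|q|^4}+\widecheck{\trch}=O(\delta)$ near $r_+$, reduces the boundedness of $\trch'$ to the smallness of $f$ and its tangential derivatives along the generators.

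\textbf{The role of the sign of $\om$.} By Lemma \ref{le:nab-omega-near-extremal}, $\om\le-c\,\delta_*^{1/2}<0$ in $\RR_\delta$: the surface gravity $\kappa=-\om|_{r_+}$ is positive and bounded below by $c\,\delta_*^{1/2}$, so $\partial_r(\Delta/|q|^2)=-2\om>0$ and $r=r_+$ is a \emph{repelling} point of the $r$-flow along outgoing generators, with uniform rate $\sim\delta_*^{1/2}$. Hence an outgoing generator with $r>r_+$ is driven monotonically to larger $r$, enters the region where $\MM$ is close to the Kerr exterior and converges to $\ip$; one with $r<r_+$ reaches $\{r=r_+-\delta_{\HH}\}$ in finite affine parameter; the borderline generators stay in $\RR_\delta$. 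Equivalently, under \emph{backward} evolution $r=r_+$ is attracting, which is the source of a red-shift-type gain: running the transport equations for $(f,\trch')$ from late advanced times backwards toward $\Sigma_0$ along the borderline generators, the sign $\om<0$ enters with a favourable (definite) coefficient, and together with the bounds \eqref{eq:estim.Gac}, \eqref{eq:Bonds-j} and Proposition \ref{Prop:boundsF-Fb} this yields uniform estimates $|(\nab_{\eS_a},\nab_{\eS_4})^{\leq N}f|=O(\delta)$ and $\trch'=O(\delta)$ along every generator of infinite affine length; this is the content of Propositions \ref{Prop:Boundsfor-f} and \ref{prop:estimate-trch'}. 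The vacuum Raychaudhuri equation $\tfrac{d}{ds}\trch'=-\tfrac12(\trch')^2-|\chih'|^2$ (with $\Ric(L,L)=0$) then forbids conjugate points, a comparison of rates shows that on the infalling cones the exit time precedes any caustic time, and distinct generators of a fixed cone stay separated, so the relevant members $C^+(S(v,r))$ are regular.

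\textbf{Identification and conclusion; main obstacle.} Using Corollary \ref{Cor:int-ext} and the achronality of regular cones (Definition \ref{def:C-Sigma-achronal}), the members of the family with $r$ above the borderline lie in $\JJ^-(\ip)$ while those below lie in the black hole interior, so the borderline cone coincides with $\partial\JJ^-(\ip)=\hp$ inside $\RR_\delta\cap I^+(\Sigma_0)$; moreover the borderline value of $r$ — i.e. the sphere $\hp\cap\Sigma_0$ — depends $C^N$ on the angular variable, by the uniform transversality of the family to the slices $\Sigma_\tau$ (again a consequence of $\om<0$) together with the implicit function theorem. Since all the bounds are uniform in the advanced time $v$ (Remark \ref{remark:v-decay}), this limit is non-degenerate, $\hp\cap I^+(\Sigma_0)=C^+(\hp\cap\Sigma_0)$ is swept out by the borderline generators, and the regularity criteria of Section \ref{subsubsection:null-cones-over-spheres} apply to it, whence $\hp$ is regular. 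The main obstacle is the uniform transport estimate of the third step — closing the bootstrap for $(f,\trch')$ along generators of infinite affine length, which rests entirely on the positive–surface–gravity sign $\om<0$ and on its quantitative lower bound surviving the extremal limit $\delta_*\to0$ — together with the exclusion, in the last step, of a crease set of $\hp$ above $\Sigma_0$, i.e. of a generator of $\hp$ with a past endpoint in $I^+(\Sigma_0)$.
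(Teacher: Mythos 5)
There is a genuine gap, and you have in fact named it yourself: your argument takes as its starting point that $\hp\cap I^+(\Sigma_0)$ is the future outgoing null cone $C^+(S)$ over an embedded sphere $S=\hp\cap\Sigma_0$, with no generator having a past endpoint above $\Sigma_0$ (no crease set), and with $S$ regular enough to launch the cone construction. A priori, however, the event horizon is only an achronal Lipschitz boundary, possibly non-differentiable on a dense set; ruling out creases and producing a smooth cross-section is precisely the content of the theorem, so it cannot be an input. The quantitative mechanism you identify (the sign $\om\le -c\,\delta_*^{1/2}$, the transport system for $(f,\lambda,\trch')$ from the change-of-frame formulas, the immersion/embedding criteria) is indeed the same as the paper's, but it only closes in the \emph{backward} direction: the damping $e^{\int(4\om+\trch)}$ helps when one integrates from a late anchor sphere, on which $f$ is known to be $O(\delta)$, down toward $\Sigma_0$; forward along a generator the same sign produces exponential growth $\sim e^{c\delta_*^{1/2}s}$. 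The horizon generators admit no finite-time anchor sphere on which $f=F=O(\delta)$ is known (they are not anchored at $\{r=r_+ +\delta\}$), so the bootstrap of Propositions \ref{Prop:Boundsfor-f} and \ref{prop:estimate-trch'} cannot be run along the "borderline generators" directly, and your appeal to uniformity in $v$ is only the germ of the limiting argument that would be needed.

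This is exactly why the paper never works on $\hp$ itself. It constructs the past incoming cones $C_v$ from the spheres $S(v,r_+ +\delta)$, where the adapted transformation parameter is explicitly $O(\delta)$, proves their uniform regularity backward to $\Sigma_0$, obtains uniform $C^N$ bounds for the graph functions $R_v$ of the sections $S^0_v=C_v\cap\Sigma_0$ (Lemma \ref{Lemma:Derivatives-R}), uses the monotonicity in $v$ together with Arzel\`a--Ascoli to extract a smooth limit sphere $S^*$ (Proposition \ref{Proposition:convergence}), and only then shows by causal-theoretic arguments (Corollary \ref{Cor:int-ext}, the coincidence $S^*_{\tau_1;\tau}=S^*_\tau$, and a continuation argument in Section \ref{section:EvenHorizon}) that $C^+(S^*)$ is globally regular to the future and coincides with $\partial\JJ^-(\ip)$ above $\Sigma_0$. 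Your "borderline member of the family $C^+(S(v,r))$" cannot substitute for this: the horizon sections are not coordinate spheres $\{r=\mathrm{const}\}$, so the borderline object is not a member of your family, and the claimed $C^N$ dependence of the borderline radius on the angles "by the implicit function theorem" is unsubstantiated --- the smoothness of that limit is exactly what the uniform estimates and the convergence scheme of Steps 2--4 of the paper are designed to deliver. As written, your proposal establishes the right local estimates on the wrong family of hypersurfaces and leaves the global identification, and hence the theorem itself, open.
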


\begin{remark}
    In the proof, we only need to assume the smallness estimate \eqref{eq:estim.Gac} in $\RR_\delta=\{|r-r_+|\leq \delta\}$ and the characterization \eqref{eq:pastofscri} of the past of null infinity.
\end{remark}


\subsection{Regular  null cones over spheres}\label{subsection:Causalstructure-nullcone}

We start with  the following definition.

\begin{definition}    An immersion $i\colon X\to Y$ is a smooth  map  whose  tangent map is  injective  at all points  of $X$. An embedding is an injective immersion where the map is a homeomorphism onto its image, with the image $i(X)$ viewed in the subspace topology of $Y$.
\end{definition}
In the case when $X$ is compact,   an  injective  immersion map is  an  embedding.

We now assume that $S$ is an embedded sphere on $\Sigma_{0}$ in a spacelike foliation $\{\Sigma_\tau\}$ of $\MM$. Consider the outgoing (or incoming) cone of $S$, denoted by $C(S)$. Denote the null generator emanating from $p\in S$ by $\ga_p$, and its unique intersection with $\Sigma_\tau$ by $\ga_{p,\tau}$. Define the following map:
\bea
i_\tau\colon S\to \Sigma_\tau,\quad p\mapsto \ga_{p,\tau}.
\eea
The regularity of the null cone $C(S)$,  see Definition \ref{Def:C(S)},   is crucially dependent on whether $i_\tau$ is an embedding into $\Sigma_\tau$. Since $S$ is diffeomorphic to $\mathbb{S}^2$, we  can, by abuse of language,  write this map as $i_\tau\colon \mathbb{S}^2\to \Si_\tau$.

\begin{remark}
\lab{Remark:immersioncriterion}
A null cone  $C(S)$ generated by a smooth embedded sphere $S$,  in a smooth spacetime $\MM$,   is   regular,  if and only if:
\begin{itemize}
\item      Nearby null geodesic generators do not intersect.   This is equivalent to saying that $i_\tau$ is an immersion for each $\tau$.
\item No null geodesic  generators   of $C(S)$ intersect. This is   equivalent to saying that $i_\tau$ is in fact an embedding for each $\tau$.
\end{itemize} 
\end{remark}

\subsubsection{Immersion Criterion}
 
Given a smooth choice of the  vector field $L$ on $S$, one can extend it  as a geodesic vector along the null generators of $C(S)$. Then the null expansion $\trch$ is well-defined, see Definition \ref{Def:trch-C(S)}.

\begin{proposition}[Immersion Criterion]\label{Prop:Immersion-Criterion}
 The map $i_\tau$ is an immersion for all $\tau\in [0,T]$ if and only if   the null expansion  $\trch$  is   bounded away from $-\infty$ for $\tau\in [0,T]$. This also means that the map $[0,T]\times S\to \MM$, $(\tau,p)\mapsto \gamma_{p,\tau}$ is an immersion.
\end{proposition}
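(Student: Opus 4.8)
The plan is to connect the rank of the tangent map $di_\tau$ at a point $p\in\mathbb{S}^2$ to the behaviour of the Jacobi fields along the generator $\gamma_p$, and then to recognize that the null expansion $\trch$ is precisely the logarithmic derivative of the area element carried by these Jacobi fields. First I would fix a smooth null geodesic generator field $L$ on $C(S)$ (geodesic, $D_LL=0$) and choose an affine parameter $s$ so that each $\Sigma_\tau$ is crossed at a parameter value $s(\tau,p)$ depending smoothly on $p$ (this uses that $\Sigma_\tau$ is spacelike and transverse to $L$, so the crossing is transversal and the implicit function theorem applies; here one also checks $L$ is nowhere tangent to $\Sigma_\tau$, which follows since $L$ is null and $\Sigma_\tau$ spacelike). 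The differential $di_\tau$ at $p$ in a direction $v\in T_p\mathbb{S}^2$ is then, modulo an $L$-component that is irrelevant for the rank computation (it lies along the generator, which is transverse to $\Sigma_\tau$, so it does not affect injectivity on the horizontal part), the value at $\gamma_{p,\tau}$ of the Jacobi field $J_v$ along $\gamma_p$ with $J_v(0)=v$ generated by the geodesic variation $p\mapsto\gamma_p$. Hence $i_\tau$ fails to be an immersion at $p$ exactly when some nontrivial such variational Jacobi field vanishes at $\gamma_{p,\tau}$, i.e. when $\gamma_{p,\tau}$ is a conjugate (focal) point of $S$ along $\gamma_p$.

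Next I would make the link to $\trch$ explicit. Pick a smooth orthonormal horizontal frame $E_1,E_2$ along each generator (Fermi-propagated, say), write the two variational Jacobi fields $J_1,J_2$ spanning the image of $di$ in components $J_A = J_A^B E_B$, and set $\omega(s):=\det(J_A^B)$, the area element of the infinitesimally displaced spheres. A standard computation — differentiating $\omega$ and using the definition $\trch=\delta^{AB}g(D_{E_A}L,E_B)$ together with $D_L J_A = D_{J_A} L$ (equality of mixed covariant derivatives for a variation through geodesics) — gives $\frac{d}{ds}\log\omega = \trch$ along each generator, wherever $\omega\neq0$. Therefore $\omega(s) = \omega(0)\exp\!\big(\int_0^s \trch\,ds'\big)$. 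Since $\omega(0)>0$ (the initial variation fields span $T_pS$), we get: $\omega$ stays strictly positive on $[0,s(\tau,p)]$ — equivalently $i_{\tau}$ is an immersion at $p$ — precisely as long as $\int_0^{s}\trch\,ds'$ does not diverge to $-\infty$, and $\omega\to0$ at a parameter value exactly when $\int_0^s\trch\to-\infty$ there.

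From this the stated equivalence follows by a compactness/ODE argument. If $\trch$ is bounded below by $-M$ on the relevant range $\tau\in[0,T]$, then $\int_0^{s(\tau,p)}\trch\,ds' \ge -M\,s(\tau,p)$ is bounded below uniformly in $p$ (using that $s(\tau,p)$ is a continuous function on the compact $[0,T]\times\mathbb{S}^2$, hence bounded), so $\omega$ is bounded away from $0$, $di_\tau$ has full rank $2$ everywhere, and the joint map $[0,T]\times\mathbb{S}^2\to\MM$, $(\tau,p)\mapsto\gamma_{p,\tau}$ has injective differential (its differential in the $\tau$-direction is a nonzero multiple of $L$, transverse to the span of $J_1,J_2$). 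Conversely, if $\trch$ is not bounded below on $\tau\in[0,T]$, then along the corresponding generators $\trch$ must actually blow up to $-\infty$ in finite affine parameter — the Raychaudhuri equation $\frac{d\trch}{ds} = -\tfrac12(\trch)^2 - |\hat\chi|^2 - \mathrm{Ric}(L,L)$ shows $\trch$ can only become unbounded by going to $-\infty$, and once $\trch$ is sufficiently negative it reaches $-\infty$ in finite parameter — so $\omega$ vanishes at a focal point before $\Sigma_T$, and $i_\tau$ fails to be an immersion there. I expect the main obstacle to be the careful bookkeeping in the first step: showing that the crossing parameter $s(\tau,p)$ depends smoothly on $p$, that the irrelevant $L$-component of $di_\tau$ genuinely does not interfere with the immersion property, and that the variational Jacobi fields are exactly the objects controlled by $\trch$ — once these identifications are in place, the ODE $\frac{d}{ds}\log\omega=\trch$ and the Raychaudhuri monotonicity do the rest.
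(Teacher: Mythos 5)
Your argument for the substantive direction is essentially the paper's own proof: you identify $di_\tau(v)$, modulo a multiple of $L$, with the variational Jacobi field $J_v$ (which is horizontal, so the $L$-component indeed cannot produce cancellations), introduce a Fermi-propagated orthonormal frame, and derive $\frac{d}{ds}\log\det M=\trch$ for the deformation matrix, so that a lower bound on $\trch$ keeps $\det M$ bounded away from zero and hence $i_\tau$ (and the joint map, by transversality of $L$ to $\Sigma_\tau$) an immersion. This is exactly the content of Proposition \ref{Prop; radius-conjugacy} together with the tangent-map computation in Appendix \ref{Appendix:Null-hypersurfaces}.

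The one step that does not quite close is your converse (``immersion for all $\tau\in[0,T]$ implies $\trch$ bounded below''), which you run through the Raychaudhuri equation: from a point where $\trch\leq -K$ you conclude blow-up within affine parameter of order $2/K$, but nothing guarantees that this blow-up occurs \emph{before} the generator crosses $\Sigma_T$; if the very negative values of $\trch$ are only attained at parameters close to $s(T,p)$, the focal point you produce may lie at $\tau>T$ and yields no contradiction. The clean argument needs no Raychaudhuri at all: if no degeneration occurs on the compact parameter region $\{(s,p)\colon 0\le s\le s(T,p)\}$, then $M$ is invertible there, so $\trch=\mathrm{tr}\big(\dot M M^{-1}\big)$ is continuous and hence bounded; equivalently, a sequence along which $\trch\to-\infty$ has a limit point in this compact region at which $M$ must degenerate, giving a failure of the immersion at some $\tau\le T$. (The paper does not spell this direction out either; its appendix only proves the direction actually used.) With that replacement your proof is complete.
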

\begin{proof}
See Appendix \ref{Appendix:Null-hypersurfaces}.
\end{proof}


\subsubsection{Embedding Criterion}
\begin{lemma}\label{lem:local-in-time-regular}
 Suppose $i_{\tau_0} :\SSS^2 \to  \Si_{\tau_0} $ is an embedding for some $\tau_0$.   Then there exists   a sufficiently small  $\ep$  such that
  the maps  $i_{\tau} :\SSS^2 \to      \Si_{\tau} $ remain embeddings for all $\tau$, $|\tau-\tau_0|\le \ep$.
\end{lemma}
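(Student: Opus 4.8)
The plan is to use the Immersion Criterion (Proposition \ref{Prop:Immersion-Criterion}) together with a straightforward openness argument for embeddings. First, I would observe that since $i_{\tau_0}$ is an embedding, the null cone $C(S)$ is regular in a neighborhood of $\Sigma_{\tau_0}$ in the sense that the generators do not focus there; in particular the null expansion $\trch$ is finite (bounded away from $-\infty$) on a slab $\{|\tau-\tau_0|\le \ep_1\}$ for some small $\ep_1$, by continuity of $\trch$ along the generators (the generators are smooth solutions of the geodesic equation and $\trch$ solves a Riccati equation along them with bounded coefficients on the compact slab). By Proposition \ref{Prop:Immersion-Criterion}, this means $i_\tau$ is an immersion for all $\tau$ with $|\tau-\tau_0|\le \ep_1$, and moreover the map $\Psi\colon [\tau_0-\ep_1,\tau_0+\ep_1]\times \SSS^2\to\MM$, $(\tau,p)\mapsto \gamma_{p,\tau}$ is a (smooth) immersion.

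Next I would upgrade the immersion property to an embedding for $\tau$ near $\tau_0$. Since $\SSS^2$ is compact and $i_{\tau_0}$ is an injective immersion, it is an embedding onto its image $S_{\tau_0}\subset\Sigma_{\tau_0}$. The map $\Psi$ above restricts to $i_{\tau_0}$ on $\{\tau_0\}\times\SSS^2$, which is an embedding, and $\Psi$ is an immersion on the whole slab; a standard compactness argument (often phrased as: an immersion that is injective on a compact subset remains injective on a neighborhood of it) then yields $\ep_2\le\ep_1$ such that $\Psi$ restricted to $[\tau_0-\ep_2,\tau_0+\ep_2]\times\SSS^2$ is injective, hence an embedding (injective immersion from a compact space). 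Concretely: suppose not; then there exist sequences $\tau_n\to\tau_0$ and $p_n\ne q_n$ in $\SSS^2$ with $\gamma_{p_n,\tau_n}=\gamma_{q_n,\tau_n}$; passing to subsequences $p_n\to p$, $q_n\to q$, continuity gives $i_{\tau_0}(p)=i_{\tau_0}(q)$, so $p=q$ by injectivity of $i_{\tau_0}$; but then $p_n,q_n$ eventually lie in a coordinate chart where $\Psi$ is a diffeomorphism onto its image (by the immersion property and the inverse function theorem, after restricting to a transverse slice in $\SSS^2$), contradicting $\gamma_{p_n,\tau_n}=\gamma_{q_n,\tau_n}$ with $p_n\ne q_n$.

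Finally, for each fixed $\tau$ with $|\tau-\tau_0|\le\ep_2$, the map $i_\tau\colon\SSS^2\to\Sigma_\tau$ is the restriction of the injective immersion $\Psi$ to $\{\tau\}\times\SSS^2$, hence is itself an injective immersion from the compact manifold $\SSS^2$, therefore an embedding onto its image. Taking $\ep=\ep_2$ completes the proof.

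The main obstacle is the passage from "$i_{\tau_0}$ embedding and $i_\tau$ immersion for $\tau$ near $\tau_0$" to "$i_\tau$ embedding for $\tau$ near $\tau_0$", i.e. ruling out the appearance of non-local self-intersections of the generators (two generators starting at distant points of $S$ crossing at some $\Sigma_\tau$, $\tau$ close to $\tau_0$). This is handled by the compactness/sequential argument above, which uses crucially that $\SSS^2$ is compact and that $\Psi$ is an immersion (so that near the diagonal, injectivity is automatic from the local diffeomorphism property). One should be slightly careful that the contradiction is set up on the product $[\tau_0-\ep,\tau_0+\ep]\times\SSS^2$ rather than on each slice separately, since a priori the "bad" pair of points could depend on $\tau$; working with $\Psi$ directly, as above, takes care of this uniformly.
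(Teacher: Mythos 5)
Your proposal is correct and follows essentially the same route as the paper: boundedness of $\trch$ near the embedded sphere gives, via the Immersion Criterion, that $(\tau,p)\mapsto \gamma_{p,\tau}$ is an immersion (hence locally injective) on a slab, and a compactness/sequential contradiction argument — distinct points $p_n\neq q_n$ with $i_{\tau_n}(p_n)=i_{\tau_n}(q_n)$, limits forced to coincide by uniform convergence and injectivity of $i_{\tau_0}$, then contradiction with local injectivity near the common limit — rules out failure of injectivity for $\tau$ close to $\tau_0$. The only cosmetic difference is that the paper first reduces to $\tau_0=0$ and argues one-sided, while you work directly on the two-sided slab with the product map $\Psi$.
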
 
\begin{proof}
It is clear that we only need to prove it towards the future, as the other side would be the same. Also, since $i_{\tau_1}\circ i_{\tau_2}=i_{\tau_1+\tau_2}$, it suffices to prove the statement  for $\tau_0=0$. We thus  assume $\tau_0=0$  and denote 
  by $S$ the image  of   $i_{0}$, i.e.  $i_{0}(\mathbb{S}^2)=S$. If the lemma does not hold true,  then there exists a sequence $\tau_n\searrow 0$, and there exist  $p_1^n$, $p_2^n\in \SSS^2$, distinct for each $n$, satisfying $i_{\tau_n}(p_1^n)=i_{\tau_n}(p_2^n)$. By compactness,  we can assume\footnote{Or just extract a subsequence.}   that $p_1^n\to p_1$, $p_2^n\to p_2$. Since $S$ is compact, $i_{\tau_n}(p)\to i_{0}(p)$ uniformly\footnote{Evaluated e.g.  with  respect  to the  Riemannian metric  associated  to  $g$ under the spacelike foliation $\Sigma_\tau$.} in $p\in \mathbb{S}^2$ as $n\to\infty$, and therefore, $i_{\tau_n}(p_1^n)\to i_{0}(p_1)$, $i_{\tau_n}(p_2^n)\to i_{0}(p_2)$, so $i_{0}(p_1)=i_{0}(p_2)$, i.e. $p_1=p_2$. Since $S$ is embedded, the null expansion is clearly bounded near $S$, so by Proposition \ref{Prop:Immersion-Criterion}, we know that for each $p$, there exist a $\ep_p>0$ and a neighborhood $\OO_p$ on $\mathbb{S}^2$ such that $[0,\ep_p)\times \OO_p\to \MM$, $(\tau,p')\mapsto i_\tau(p')$ is injective. Taking $n$ large such that $i_{\tau_n}(p_1^n)$, $i_{\tau_n}(p_2^n)$ lie in this injective neighborhood for $p_1=p_2$ gives a contradiction.
\end{proof}
Clearly  for $\tau $ far away  from $\tau_0$,  $i_\tau$ may fail to be an embedding, even  if it    remains an immersion.  See Figure \ref{Figure:example-non-embedding} for an illustration of the situation.

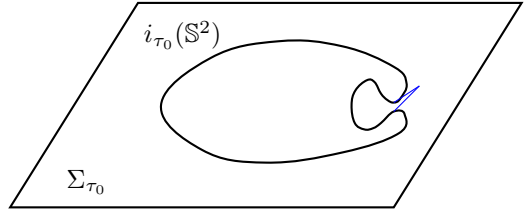
\begin{figure}[ht]
\centering
\begin{tikzpicture}[scale=1.7]
    \draw[thick] (-2,-0.8) -- (-1,0.8) -- (2,0.8) -- (1,-0.8) -- cycle;

    \draw[thick,smooth cycle,tension=0.7] plot coordinates {
        (-0.82,0) 
        (-0.43,0.37)
        (0.1,0.5) 
        (0.6,0.46)
        (1,0.3)
        (1.1,0.17)
        (1.05,0.05)
        (1,0.02)
        (0.95,0.04)
        (0.9,0.09)
        (0.85,0.18)
        (0.79,0.195)
        (0.7,0.115)
        (0.67,0) 
        (0.695,-0.14)
        (0.77,-0.19)
        (0.84,-0.18)
        (0.9,-0.14)
        (1,-0.05)
        (1.084,-0.07)
        (1.04,-0.22)
        (0.6,-0.37)
        (0,-0.45)
        (-0.5,-0.35)
    };
    \draw[blue] (1,-0.05) -- (1.2,0.15);
    \draw[blue] (1,0.02) -- (1.2,0.15);
    \node at (-1.4,-0.6) {$\Si_{\tau_0}$}; 
    \node at (-0.65,0.55) {$i_{\tau_0}(\mathbb{S}^2)$};
\end{tikzpicture}
\captionsetup{width=.7\linewidth, justification=centering}
\caption{A picture in $2+1$ dimension illustrating the case where there will be $\tau>\tau_0$ such that $i_\tau$ is an immersion but not an embedding. The short blue lines represent intersecting null generators.}\label{Figure:example-non-embedding}
\end{figure}
The following gives a useful global criterion. In the following, the spacelike foliation $\{\Sigma_\tau\}$ is the one defined in Section \ref{subsect:timefunction}, and recall $r$ is a coordinate function on $\Si_\tau$.

\vspace{2ex}
\begin{proposition}[Embedding Criterion]  
\lab{lem:im-em1}
    Fix any $\tau\geq 0$. Let           $i:\mathbb{S}^2\to \Sigma_\tau$     be   an immersion  that verifies  the following properties:
    \begin{enumerate}
   \item  For each point $p\in\mathbb{S}^2$, there is a neighborhood $\OO_p$ such that\footnote{ That is  $i$ is a local  embedding. }  $i|_{\OO_p}$ is an embedding to $\Sigma_\tau$.

   \item 
     There exists a small number $\mathring \delta$  such that  every  $\OO_p$   admits  an orthonormal frame $\{\ezero_a\}$,  tangent to the embedded submanifold $i(\OO_p)$,  for which  $|\ezero_a(r)|\leq \mathring \delta$.
         \end{enumerate}
      Then the immersion $i:\mathbb{S}^2\mapsto \Sigma_\tau$ is an embedding. Moreover, there exists a smooth map
    \begin{equation*}
        R\colon \mathbb{S}^2\to [r_+-\delta,r_+ +\delta], \quad p\mapsto R(p)
    \end{equation*}
    such that $\Phi_{\Sigma}(\{(R(p),p)\colon p\in \mathbb{S}^2\})=i(\mathbb{S}^2)$ (with  $\Phi_{\Sigma}=\Phi_{\Sigma_\tau}$ defined in \eqref{eq:def-diffeomorphism-Sigma0}).
\end{proposition}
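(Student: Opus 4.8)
The plan is to reduce the proposition to the single claim that the composition $\pi:=P_{\Sigma_\tau;\mathbb{S}^2}\circ i\colon\mathbb{S}^2\to\mathbb{S}^2$, with $P_{\Sigma_\tau;\mathbb{S}^2}$ as in \eqref{eq:def-projection-S^2}, is a diffeomorphism. Granting this, one writes $\Phi_{\Sigma_\tau}^{-1}\circ i$ in the product coordinates $(r,p)\in(r_+-2\delta,r_++2\delta)\times\mathbb{S}^2$ as $q\mapsto(\rho(q),\pi(q))$ with $\rho$ smooth; then $R:=\rho\circ\pi^{-1}$ is smooth and $\Phi_{\Sigma_\tau}(\{(R(p),p):p\in\mathbb{S}^2\})=i(\mathbb{S}^2)$, so $i(\mathbb{S}^2)$ is the graph of $R$, and $i=\Phi_{\Sigma_\tau}\circ(\rho,\pi)$ is injective because $\pi$ is; an injective immersion of the compact manifold $\mathbb{S}^2$ being automatically an embedding, this yields the whole conclusion, the bound $R(\mathbb{S}^2)\subset[r_+-\delta,r_++\delta]$ being immediate since $i(\mathbb{S}^2)$ lies in $\RR_\delta$.

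The first step is to show that $\pi$ is a local diffeomorphism. Since $i$ is an immersion and, by hypothesis (1), $i|_{\OO_p}$ is an embedding, the tangent map $d\pi$ is injective at $p$ iff the tangent plane $di_p(T_p\mathbb{S}^2)=T_{i(p)}\big(i(\OO_p)\big)$ meets $\ker\big(dP_{\Sigma_\tau;\mathbb{S}^2}\big)$ only at $0$. In the coordinates provided by $\Phi_{\Sigma_\tau}$ this kernel is spanned by the coordinate field $\overline{\partial}_r$, which satisfies $dr(\overline{\partial}_r)=1$ and $1\le g(\overline{\partial}_r,\overline{\partial}_r)\le 3$ by \eqref{eq:grr}. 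On the other hand, hypothesis (2) gives on $\OO_p$ an orthonormal frame $\{\ezero_a\}$ tangent to $i(\OO_p)$ with $|\ezero_a(r)|\le\mathring\delta$, so every $v=v^a\ezero_a$ tangent to $i(\OO_p)$ obeys $|dr(v)|\le\sqrt 2\,\mathring\delta\,|v|$. Hence if some nonzero $v=\lambda\overline{\partial}_r$ were tangent to $i(\OO_p)$ we would get $|\lambda|=|dr(v)|\le\sqrt 2\,\mathring\delta\,|v|\le\sqrt 6\,\mathring\delta\,|\lambda|$, impossible for $\mathring\delta$ small (say $\mathring\delta<1/\sqrt 6$, which is the smallness assumed here). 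Thus $d\pi$ is injective everywhere, and as both manifolds are two-dimensional, $\pi$ is a local diffeomorphism.

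The second step is topological. Because $\mathbb{S}^2$ is compact, $\pi$ is proper; a proper local homeomorphism between connected manifolds is a covering map, and since $\mathbb{S}^2$ is simply connected this covering is one-sheeted, so $\pi$ is a homeomorphism and, being also a local diffeomorphism, a diffeomorphism. Together with the first paragraph this completes the argument.

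The only genuinely substantive point is the local-diffeomorphism step, i.e.\ the transversality of $i(\mathbb{S}^2)$ to $\overline{\partial}_r$. What makes it work is that hypothesis (2) produces a frame tangent to $i(\OO_p)$ that is \emph{orthonormal} for the induced Riemannian metric on $\Sigma_\tau$, so the pointwise smallness of $\ezero_a(r)$ controls $dr$ on the entire tangent plane, and this is played off against the fixed, non-degenerate length of the radial field $\overline{\partial}_r$ supplied by \eqref{eq:grr}. Once this transversality is established the remaining ingredients (a proper local homeomorphism is a covering, $\pi_1(\mathbb{S}^2)=0$, and the read-off of the graph function $R$ from the product structure of $\Phi_{\Sigma_\tau}$) are routine.
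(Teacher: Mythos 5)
Your proof is correct, and its global part takes a genuinely different route from the paper's. The local step---transversality of $i(\OO_p)$ to $\overline\pa_r$, obtained by playing the orthonormal-frame bound $|\ezero_a(r)|\leq\mathring\delta$ against \eqref{eq:grr}---is essentially the paper's Step 1; the paper packages it via the implicit function theorem to get the local graph $r=R(\theta,\varphi)$, while you phrase it as injectivity of $d\pi$ for $\pi=P_{\Sigma_\tau;\mathbb{S}^2}\circ i$, which is equivalent. Where you diverge is the proof that $\pi$ is globally injective: the paper devotes its entire Appendix A to a bare-hands point-set argument (assuming two points with the same projection, foliating by the polar angle $\vartheta$ about that projection, and deriving a contradiction through the connectedness dichotomy {\bf A1}--{\bf A3} together with the path-modification Lemma \ref{Lemma-paths}), and then obtains surjectivity by noting that no proper subset of $\mathbb{S}^2$ is homeomorphic to $\mathbb{S}^2$. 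You instead invoke covering-space theory: $\pi$ is proper since $\mathbb{S}^2$ is compact, a proper local homeomorphism into a connected, locally compact Hausdorff target has open and closed image (hence is surjective) and is a covering map, and simple connectivity of the base forces the covering to be one-sheeted, so $\pi$ is a diffeomorphism. This is shorter and more conceptual; the paper's argument buys a self-contained, elementary proof that avoids the covering-map machinery, at the cost of length. One small remark: the inclusion $i(\mathbb{S}^2)\subset\RR_\delta$, which you use to place $R$ in $[r_+-\delta,r_++\delta]$, is not explicit in the statement of Proposition \ref{lem:im-em1} but is made explicit in its restatement (Proposition \ref{prop:immersed-embedded}) and holds in all applications, so your use of it is consistent with the paper.
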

\begin{proof}
    See Appendix \ref{subsubsect:proof-im-em1}.
\end{proof}

\begin{corollary}
For any $T>0$, if $i_\tau$ are embeddings for all $\tau\in [0,T]$, then the null cone $C(S)$ is regular between $\Si_0$ and $\Si_T$.
\end{corollary}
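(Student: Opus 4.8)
The plan is to exhibit the single map $\Phi\colon[0,T]\times\mathbb{S}^2\to\MM$, $(\tau,p)\mapsto\gamma_{p,\tau}$, and to verify that it is a smooth embedding; by Definition \ref{Def:RegularNC} this is precisely the statement that the portion of $C(S)$ lying between $\Si_0$ and $\Si_T$ is regular. Smoothness of $\Phi$ is immediate: $\gamma_p$ depends smoothly on $p\in S$ by smooth dependence of null geodesics on their initial data, and since $\tau$ is strictly monotone along each generator, the parameter value at which $\gamma_p$ meets $\Si_\tau$ is a smooth function of $(\tau,p)$.

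For the immersion property I would simply invoke the Immersion Criterion, Proposition \ref{Prop:Immersion-Criterion}: since each $i_\tau$ is an embedding it is in particular an immersion for every $\tau\in[0,T]$, so the null expansion $\trch$ is bounded away from $-\infty$ on $[0,T]$, and the proposition then guarantees that $(\tau,p)\mapsto\gamma_{p,\tau}$ is an immersion on the entire cylinder $[0,T]\times\mathbb{S}^2$, including the two boundary slices $\tau\in\{0,T\}$. For injectivity: if $\Phi(\tau_1,p_1)=\Phi(\tau_2,p_2)$, the common point lies in $\Si_{\tau_1}\cap\Si_{\tau_2}$, and since the $\Si_\tau$ are the level sets of the time function $\tau$ of Section \ref{subsect:timefunction} this forces $\tau_1=\tau_2$; injectivity of the embedding $i_{\tau_1}$ then gives $p_1=p_2$, so $\Phi$ is injective.

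It remains only to upgrade the injective immersion $\Phi$ to an embedding, and here one uses that $[0,T]\times\mathbb{S}^2$ is compact and $\MM$ Hausdorff, so the continuous injection $\Phi$ is automatically a homeomorphism onto its image; that image is, by construction, exactly the portion of $C(S)$ between $\Si_0$ and $\Si_T$, which is therefore regular. I do not expect a genuine obstacle: all the analytic content has already been absorbed into the Immersion and Embedding Criteria. The only two points that deserve a sentence of care are (i) that the immersion assertion must be read up to and including the boundary slices $\tau=0,T$, which is exactly the content of the last sentence of Proposition \ref{Prop:Immersion-Criterion}, and (ii) that the hypothesis ``$i_\tau$ is an embedding for all $\tau\in[0,T]$'' already presupposes that every null generator of $C(S)$ reaches each $\Si_\tau$ with $0\le\tau\le T$, so that the congruence genuinely sweeps out the full cylinder with no generator terminating early.
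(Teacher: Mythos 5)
Your proposal is correct and follows essentially the same route as the paper's (much terser) proof: the immersion of the cylinder map is exactly the last assertion of Proposition \ref{Prop:Immersion-Criterion}, injectivity across slices is the causal-structure fact the paper invokes (each generator meets each level set of the time function $\tau$ exactly once, equivalently distinct $\Si_\tau$ are disjoint), and injectivity within a slice is the hypothesis that $i_\tau$ is an embedding. Your extra remarks on compactness of $[0,T]\times\mathbb{S}^2$ upgrading the injective immersion to an embedding, and on the generators sweeping the full cylinder, simply make explicit what the paper leaves implicit; there is no gap.
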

\begin{proof}
    Note that by causal structure, each null generator intersects a spacelike hypersurface only once. Therefore the injectivity  w.r.t.  the time variable is verified and the regularity of the cone follows.
\end{proof}

\subsection{Main steps  in  the proof of  the regularity theorem  \ref{thm:regularity}}
\lab{section:strategy}

\begin{definition}\label{def:past-of-scri}
    We say that a point $p\in \mathcal M$ is in the causal past of the future null infinity $\ip$, denoted by $\mathcal J^-(\ip)$, if the causal future of $p$, $\JJ^+(p)$, satisfies
    \begin{equation*}
        \sup_{\JJ^+(p)} r=+\infty.
    \end{equation*}
\end{definition}

\begin{definition}
    The event horizon of $\mathcal M$, denoted by $\HH^+$, is defined as the boundary of $\mathcal J^-(\ip)$.
\end{definition}


We now give a characterization of $\mathcal J^{-1}(\ip)$. 
\begin{proposition}[Characterization of $\mathcal J^-(\ip)$]
    We have 
\begin{equation}\label{eq:pastofscri}
    \mathcal J^{-}(\ip)=\{p: \mathcal J^+(p)\cap \{r\geq r_++\delta\}\neq \varnothing\}.
\end{equation}
In particular,  any point  in the set $  \{r\geq r_++\delta\}$  belongs to   $ J^{-}(\ip)$.
\end{proposition}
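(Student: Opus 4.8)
The plan is to prove \eqref{eq:pastofscri} by a double inclusion, the nontrivial direction being $\{p:\JJ^+(p)\cap\{r\geq r_++\delta\}\neq\varnothing\}\subseteq\JJ^-(\ip)$. The reverse inclusion is immediate from Definition \ref{def:past-of-scri}: if $\sup_{\JJ^+(p)}r=+\infty$ then $\JJ^+(p)$ contains points of arbitrarily large $r$, in particular one with $r\geq r_++\delta$. The ``in particular'' assertion is equally direct: any $q$ with $r(q)\geq r_++\delta$ lies in its own causal future, so it satisfies the right-hand side.

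For the main inclusion, suppose $q\in\JJ^+(p)$ with $r(q)\geq r_++\delta$. By transitivity of the causal future, $\JJ^+(q)\subseteq\JJ^+(p)$, so it suffices to produce, for every $R$, a future-directed causal curve from $q$ reaching $\{r>R\}$; equivalently, to show $\sup_{\JJ^+(q)}r=+\infty$. I would argue by flowing along the outgoing null direction. If $q\in\Mint=\{r\leq r_0\}$, i.e. $r_++\delta\leq r(q)\leq r_0$, consider the integral curve of the future-directed null vector field $e_4$ of the ingoing PG frame issuing from $q$. Along it $\frac{d}{ds}r=e_4(r)=\frac{\Delta}{|q|^2}+\widecheck{e_4(r)}$; on the set $\{r_++\delta\leq r\leq r_0\}$ the quantity $\Delta/|q|^2$ is bounded below by a positive constant $c=c(a,m,\delta)$ (as $\Delta>0$ there), while $|\widecheck{e_4(r)}|\lesssim\epsilon_0$ by \eqref{eq:estim.Gac}, so for $\epsilon_0$ small $e_4(r)\geq c/2>0$ and $r$ increases monotonically. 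Since $r$ stays in $[r_++\delta,r_0]$ the curve moves away from the future boundary $\{r=r_+-\delta_{\HH}\}$, and since $e_4(v)>0$ and bounded $v$ stays finite; hence the curve remains inside the spacetime and is defined until it reaches $\{r=r_0\}$, i.e. enters $\Mext$.

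It then remains to show that from any point of $\Mext$ there is a future causal curve along which $r\to+\infty$. This is a direct consequence of the asymptotic structure constructed in \cite{KS}: $\Mext=\{r\geq r_0\}$ is endowed with an outgoing PG structure and possesses a complete future null infinity $\ip$, and the outgoing principal null geodesics there (the integral curves of the geodesic vector field $e_4$ of that structure, along which $e_4(r)>0$) have $r\to+\infty$. Concatenating this curve with the $\Mint$ segment at $\{r=r_0\}$ produces a future-directed causal curve from $q$ of unbounded $r$, so $q\in\JJ^-(\ip)$, and therefore $p\in\JJ^-(\ip)$, which completes the proof.

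The step I expect to be the crux is the second one: one has to check carefully that the $e_4$-integral curve in $\Mint$ does not leave the spacetime before reaching $\{r=r_0\}$ (tracking its behaviour in the coordinates $v$ and $\theta$), and one must quote from \cite{KS} the precise statement that outgoing null geodesics in $\Mext$ run off to $\{r=+\infty\}$ — this being the only place where genuine input from the stability/asymptotic-flatness analysis enters. The smallness requirement $\epsilon_0\ll c(a,m,\delta)$ is harmless, since $\epsilon_0$ may be taken as small as desired relative to the fixed parameters $a,m,\delta,r_0$.
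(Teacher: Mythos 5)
Your argument is essentially the paper's own proof: the easy inclusion plus transitivity, and for the converse the integral curve of $e_4$ from a point with $r\geq r_++\delta$, along which $e_4(r)=\frac{\Delta}{|q|^2}+\widecheck{e_4(r)}$ is bounded below by a positive constant so that $r$ grows without bound. Your extra care in splitting at $r=r_0$ and quoting the outgoing structure of $\Mext$ is just a more explicit version of the same step, so the proposal is correct and matches the paper's route.
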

\begin{proof}
Clearly, the left is included in the right. 
For the converse, if $p$ is in the set on the right hand side, then there is a point $p'\in \mathcal J^+(p)$ with $r$-value of $p'$ no less than $r_++\delta$. Consider the integral curve of $e_4$ starting from $p'$. This is a causal curve, and we know that, for all  $r\geq r_++\delta$,
\beaa
e_4(r)=\frac{\Delta}{|q|^2}+\widecheck{e_4(r)} >c\delta
\eeaa
 for some constant $c>0$. This implies that along this curve one can reach arbitrarily large values of $r$. Therefore, $\sup_{\JJ^+(p')}r=+\infty$, and the result follows from the transitivity of causal relations.
\end{proof}

To prove Theorem  \ref{thm:regularity},   we consider a spacelike hypersurface  $\Si_0$     given as  the level set  $\{\tau=0\}$, where      $\tau$  is the time function defined in Section \ref{subsect:timefunction}, the outgoing (past incoming) null cones $C_v$  generated  backward from a sphere $S(v,r^*_\delta)\subset \{r=r_\delta^*:=r_++\delta\}$,   and   the intersections $S_v^0=\Si_0\cap C_v$.
To prove the theorem we need to implement the following steps (see Figure \ref{Figure:Strategy-proof}  for a graphic representation of the proof):

\begin{figure}
\begin{tikzpicture}
\useasboundingbox (-8,-0.5) rectangle (4,4);
\def\size{5}
    \def\height{5}

    \draw[dashed] (-\size, 0) -- (0,\height) node[midway,above] {${\cal H}^+$};
    \draw[dashed]  (0,\height) -- (\size, 0) node[midway,above=0.2cm] {${\cal I}^+$};

    \draw[thick] (0,\height) -- (-\size,\height/2) node[midway,left=0.2cm] {$r=r_+-\delta$};
    \draw[thick] (0,\height) -- (-\size/3,0) node[midway,right] {$r=r^*_\delta:=r_++\delta$} node[pos=0.6, inner sep=1pt](point){}
    node[pos=0.4, inner sep=1pt](point2){}
    node[pos=0.2,inner sep=1pt](point3){};
    \filldraw [blue] (point3) circle (1.5pt) node[anchor=west] {$S(v,r^*_\delta)$};
    \filldraw [blue] (point2) circle (1.5pt) node[anchor=south] {};
    \filldraw [blue] (point) circle (1.5pt) node[anchor=south] {};

    \node[draw, circle, fill=white, inner sep=1pt, label=above:{$i^+$}] at (0,\height) {};
    \node[draw, circle, fill=white, inner sep=1pt, label=right:{$i^0$}] at (\size,0) {};

    \draw[thick, name path=curve] (\size/2,0) node[right] {$\Sigma_0$} .. controls (0,0) and (-\size/2,
    \size/8) .. (-\size,\height/2) ;

    \draw[thick] (2.5,1) node[right] {$\Sigma_\tau$} .. controls (0.8,1.03) and (-\size/2+\size/5,
    \size/8+\size/5) .. (-4,3) ;
    \node at (3,0.68) {($\tau=v-r$)};

    \path[name path=tangent] (point) -- +(-2,-2); 
    \path[name path=horizon] (0,\height) -- +(-7,-7);

    \path[name intersections={of=tangent and curve, by=E}];
    \path[name intersections={of=horizon and curve, by=Estar}];

    \draw[blue] (point) -- (E);

    \filldraw [red] (E) circle (1.5pt) node[anchor=south] {};
    \filldraw [] (Estar) circle (1.5pt) node[above=1mm] {$S^*$};

    \path[name path=tangent2] (point2) -- +(-5,-5); 

    \path[name intersections={of=tangent2 and curve, by=E2}];
    
    \draw[blue] (point2) -- (E2);

    \filldraw [red] (E2) circle (1.5pt) node[anchor=south] {$S^0_v$};

    \path[name path=tangent3] (point3) -- +(-8,-8); 
    
    \path[name intersections={of=tangent3 and curve, by=E3}];
    
    \draw[blue] (point3) -- (E3);

    \filldraw [red] (E3) circle (1.5pt) node[anchor=south] {};
\end{tikzpicture}
\caption{Idea of the proof of Theorem \ref{thm:regularity}}\label{Figure:Strategy-proof}
\end{figure}
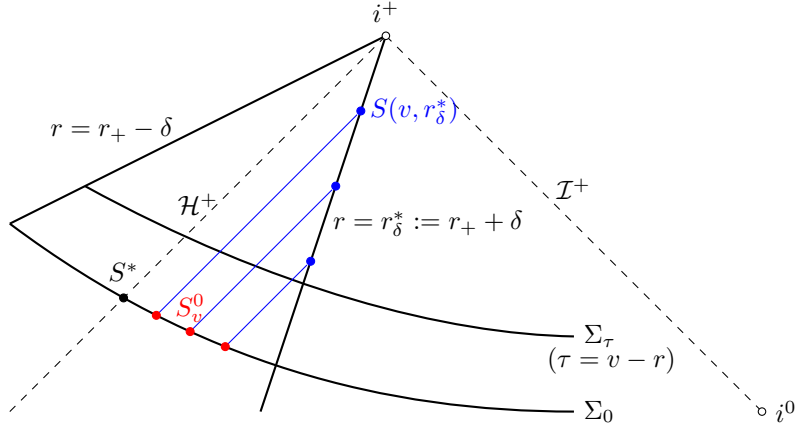

\begin{enumerate}

\item[\bf{Step 1.}]  \textit{Monotonicity.}    Assuming that the hypersurfaces $C_v$ are regular, we  show  that for   each  $v_1< v_2$,  the sphere 
 $S_{v_1}^0= C_{v_1}\cap \Si_0$  is included in the  exterior of the sphere  $S^0_{v_2}=C_{v_2} \cap \Si_0$  on $\Sigma_0$.

\item[{\bf Step 2.}]     \textit{Regularity of $C_v$.}      This,  the heart of the proof,  is given in  Section  \ref{subsection:C_v-frame}.   The  argument   depends in an essential way 
 on   the bounds  \eqref{eq:estim.Gac},  Lemma \ref{le:nab-omega-near-extremal} and the transformation formulas of Section   \ref{subsect:NullFrameTransformation}.
The proof  shows moreover  that for each $\tau$, $S_v^\tau$ (the $\Sigma_\tau$-section of $C_v$)  is an embedded sphere that can be written as $r=R_v(\theta,\varphi)$ (switched to $r=R_v(x^1,x^2)$ when needed) on $\Sigma_\tau$. This  fact  relies  heavily  on the Embedding Criterion (Proposition \ref{lem:im-em1}). Note  that the  main  assumption 2.  of the Embedding Criterion is verified in view of Lemma \ref{lem:ezero-on-r}.

\item[{\bf Step 3.}]     \textit{Regularity of  the spheres  $S_v^0=C_v\cap\Si_0$.}  Using the estimates in Step 2 we derive,  in Section \ref{subsubsection:Higher-order-sphere}, geometric   estimates for the spheres    $S_v^0=C_v\cap\Si_0$, in particular uniform bounds on the derivatives of the extrinsic curvuture of $S_v^0$ on $\Sigma_0$.

\item[{\bf Step 4.}]      \textit{Limiting sphere $S^*$.}    Using  Step 1 and  the  trivial lower bound  for  the functions $R_v$,  $|R_v|\geq r_+-\de$, it is easy to show   that the induced surfaces  $S_v^0$ on $\Si_0$ have a   limit  $S^*$ as $v\to \infty$.   In  Section  \ref{section:Convergence of S^0_v}  we  rely on  the bounds obtained in Step 3 to  show   that  $S^*$ is   a smooth sphere.

\item[{\bf Step 5.}]     \textit{Future Horizon.}       In Section \ref{section:EvenHorizon}  we   show  that the future outgoing null  cone generated from $ S^*$ coincides  with the  future horizon   $\HH^+\cap \JJ^+(\Si_0)$.
\end{enumerate}

\subsection{Monotonicity property}
We first prove a monotonicity property  under   the assumption that each $C_v$ is regular. 
For $v_2>v_1$, each point on $S(v_1,r^*_\delta)$ can be connected by a point on $S(v_2,r^*_\delta)$ through  the integral curve of $\pa_v$, which is always timelike on $\{r=r^*_\de\}$ (recall that $  r^*_\de=r_+ +\delta$), i.e.,   $S(v_1,r^*_\delta)\subset I^-(S(v_2,r^*_\delta))$. Therefore, the past of $S(v_1,r_\delta^*)$ is contained in the chronological past of the achronal set $\widehat{C}_{v_2}(0,v_2-r_\delta^*)$ (see Definition \ref{def:achronal-set} and \ref{def:C-Sigma-achronal}). Therefore, we have $\JJ^-(S(v_1,r_\delta^*))\cap \widehat{C}_{v_2}(0,v_2-r_\delta^*)=\varnothing$. In particular, $S_{v_1}^0\subset \JJ^-(S(v_1,r_\delta^*))$ does not contain any point in the interior of $S_{v_2}^0$ on $\Si_0$, hence must be in the exterior of $S^0_{v_2}$ on $\Sigma_0$.


\section{ Proof of Theorem \ref{thm:regularity}}
\lab{section:RegularityC_vS_v}

      In this section,   we  follow Steps 2-5     and prove   Theorem   \ref{thm:regularity}.

\subsection{Geometry of the null cones $C_v$}
\lab{subsection:C_v-frame}

We now study the geometry of  $C_v$.  We only care about the part of $C_v$ that is in the past of $S(v,r^*_\delta)$, so  in what  follows     $C_v$ only refers to this part.
\subsubsection{Null generators}
Recall that $C_v$ is spanned by the  null geodesics emanating from points on $S(v,r^*_\delta)$  orthogonal to the sphere at these points, called the   null generators of $C_v$.  To determine the tangent vector field $e_4'$  corresponding to these  null generators\footnote{For simplicity, we do not label the vector with $v$, but clearly this is dependent on $v$.}  we consider a frame  transformation  of the form
\bea
    \lab{eq:e-e'}
    e'_4=\lambda \left(e_4+f^b e_b+\frac 14|f|^2  e_3\right), \quad e'_a=e_a+\frac 12 f_a e_3, \quad e_3'=\lambda^{-1} e_3, 
\eea
and impose the condition (Recall that $\eS_4$ is orthogonal to the spheres $S(v,r)$)
\bea
\lab{eq:prime-frame-r=r+delta}
    \om'=0,\quad \xi'=0,\quad e_4'|_{S(v,r^*_\delta)}=\eS_4|_{S(v,r^*_\delta)},
\eea
which ensure  that $e_4'$ is geodesic.  The third condition in \eqref{eq:prime-frame-r=r+delta}  can be rewritten as
\begin{equation*}
    f|_{S(v,r^*_\delta)}=F|_{S(v,r^*_\delta)},\quad \lambda|_{S(v,r^*_\delta)}=1,
\end{equation*}
where $F$ is defined in \eqref{eq:FFb-perturb}.
Recall that $|F|\lesssim \min\{\delta,a\}$.
\begin{remark}
Note that  the parameters $(f, \la)$ also depend on $v$, so they should in fact be denoted by $(f_v, \la_v)$. We omit however the dependence on $v$ as long as there is no danger of confusion.
\end{remark}

In view of the transformation formulas \eqref{eq:trasportation.formulas},       we have
\begin{equation*}
0=    \lambda^{-2} \xi'=\xi+\frac 12 \la^{-1}\nab_{4}' f+\frac 14\trch f+\om f+\frac 12 f\cdot\chih-\frac 14 \atrch \dual f+O(|f|^2),
\end{equation*}
\bea
\lab{eq:transf-om}
    0=\lambda^{-1}\om'=\om-\frac 12 \lambda^{-1} e_4'(\log\lambda)+\frac 12 f\cdot (\zeta-\etab)+O(|f|^2).
\eea
The first equation can be  written  as
\bea
\label{eq:nab_4'f}
    \nab'_{e_4+f^b e_b+\frac 14|f|^2 e_3}f+\frac 12\trch f+2\om f=-2\xi-\chih\cdot f+\frac 12 \atrch \dual f+O(|f|^2),\quad f|_{S(v,r^*_\delta)}= F,
\eea
and it is manifestly independent of $\la$ as the horizontal structure $\{e_a'\}$ is also independent of $\la$.  

 Let $s$ denote a time parameter  along the null geodesics spanning $C_v$ such that  
\bea
\lab{eq:def-s}
\la^{-1}e_4'(s)=\Big(e_4+f^b e_b+\frac 14 |f|^2 e_3\Big)(s)=1, \qquad s|_{S(v,r^*_\delta)}=v.
\eea
\begin{proposition}
\lab{Prop:Boundsfor-f}
      For each null generator $\gamma$ of $C_v$,  assume  that  $\ga$    belongs  to $\RR_\delta=\{r\colon |r-r_+|\leq \delta\},$ (see \eqref{def:RR_de}) for $s_*\leq s\leq v$. Then, on this portion of $\ga$, 
    \bea\label{eq:zero-order-bound-f}
    |f|\lesssim \delta, \qquad e^{C_1 \delta_*^\frac 12(v-s)} \leq \lambda\leq  e^{C_2 (v-s)}. 
    \eea
    \end{proposition}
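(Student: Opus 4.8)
The plan is to obtain the bound on $f$ by a Grönwall-type argument applied to the transport equation \eqref{eq:nab_4'f}, and then deduce the bounds on $\la$ from the equation \eqref{eq:transf-om} for $\om'=0$, using crucially the sign of $\om$ from Lemma \ref{le:nab-omega-near-extremal}. First I would set up the transport ODE for $|f|^2$ along a fixed null generator $\ga$: applying $\la^{-1}e_4'$ (which by \eqref{eq:def-s} equals $\frac{d}{ds}$ once we restrict to $\ga$) to $|f|^2 = f\cdot f$ and using \eqref{eq:nab_4'f} together with the schematic relation \eqref{eq:nab'i-on-H-tensor} (to convert $\nab'$ into $\nab_{e'}$ up to lower-order $f\cdot\chib$ terms, which are harmless here since $\chib$ is bounded by \eqref{eq:estim.Gac} and its Kerr main term), one gets
\begin{equation*}
\frac{d}{ds}|f|^2 = -\big(\tfrac12\trch + 2\om\big)|f|^2 + (\text{terms linear in }f\text{ with bounded coefficients}) + O(|f|^3).
\end{equation*}
Now $\trch = \frac{2r\De}{|q|^4} + \widecheck{\trch} = O(\delta) + O(\ep_0)$ in $\RR_\delta$ and $\om = O(\delta_*^{1/2})$ by Lemma \ref{le:nab-omega-near-extremal}, so the coefficient $\frac12\trch+2\om$ is $O(\delta)$; moreover $\xi,\chih,\atrch$ are all $O(\ep_0)$ by \eqref{eq:estim.Gac}, and the initial data satisfies $|f|_{S(v,r^*_\delta)}=|F|\lesssim \min\{\delta,a\}\lesssim\delta$. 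Integrating backward from $s=v$ to $s=s_*$ over an interval of length $\le v-s_*$, a standard continuity/bootstrap argument (bootstrapping, say, $|f|\le 1$, which makes the $O(|f|^3)$ term absorbable) yields $|f|\lesssim \delta$ on the whole portion of $\ga$ in $\RR_\delta$; the length of the $s$-interval does not enter the bound on $|f|$ because the inhomogeneous source is itself $O(\delta)$ and the homogeneous coefficient has a sign that does not destabilize (in fact the dominant contribution $-2\om|f|^2 \ge 0$ is favorable, being a decay term going backward in $s$). This is essentially the content of Proposition \ref{Prop:Boundsfor-f}'s first inequality and requires only care with the bootstrap constants.

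For the bounds on $\la$, I would rewrite \eqref{eq:transf-om} along $\ga$. Since $\la^{-1}e_4'(\log\la) = \frac{d}{ds}\log\la$ on $\ga$, the condition $\om'=0$ gives
\begin{equation*}
\frac{d}{ds}\log\la = 2\om + f\cdot(\ze-\etab) + O(|f|^2),\qquad \log\la|_{S(v,r^*_\delta)}=0.
\end{equation*}
Here $\ze-\etab$ is bounded (its Kerr value is $O(a/|q|^2)$ and the correction is $O(\ep_0)$), and $|f|\lesssim\delta$ from the first part, so $f\cdot(\ze-\etab)+O(|f|^2) = O(\delta)$. The key point is the sign of $\om$: by Lemma \ref{le:nab-omega-near-extremal}, $-\frac{1}{32m^2}\delta_*^{1/2}\ge \om$ and $|\om|\lesssim\delta_*^{1/2}$ in $\RR_\delta$. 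Therefore $\frac{d}{ds}\log\la \le 2\om + O(\delta) \le -C\delta_*^{1/2} + O(\delta)$, and (choosing $\delta$ small enough relative to $\delta_*$, as permitted by \eqref{def:RR_de}) $\frac{d}{ds}\log\la \le -C_1\delta_*^{1/2} < 0$; integrating from $v$ down to $s$ gives $\log\la \ge C_1\delta_*^{1/2}(v-s)$, i.e. $\la \ge e^{C_1\delta_*^{1/2}(v-s)}$. For the upper bound, $\frac{d}{ds}\log\la = 2\om + O(\delta) \ge -C\delta_*^{1/2} - O(\delta) \ge -C_2$, so integrating gives $\log\la \le C_2(v-s)$, i.e. $\la\le e^{C_2(v-s)}$. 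Both inequalities in \eqref{eq:zero-order-bound-f} follow.

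The main obstacle, and the step demanding the most care, is the first part: making the continuity argument for $|f|\lesssim\delta$ rigorous while controlling the genuinely nonlinear $O(|f|^2)$ (and $O(|f|^3)$) error terms in \eqref{eq:nab_4'f}, and in particular ensuring that the conversion between $\nab'_{e_4+f^be_b+\frac14|f|^2e_3}f$ and the honest $s$-derivative $\frac{d}{ds}f$ along $\ga$ via \eqref{eq:nab'i-on-H-tensor} does not introduce uncontrolled terms (the correction is $f\cdot\chib\cdot f$-type, hence $O(|f|^2)$ with a bounded coefficient, so it is absorbable, but this must be checked). Everything else — the sign of $\om$, the boundedness of $\xi,\chih,\ze,\etab,\chib$, the $O(\delta)$ size of $\trch$ and of the initial datum $F$ — is already available from \eqref{eq:estim.Gac}, Lemma \ref{le:nab-omega-near-extremal}, and Proposition \ref{Prop:boundsF-Fb}, so once the bootstrap for $|f|$ is in place the $\la$ bounds are immediate.
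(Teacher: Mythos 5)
Your overall route is the same as the paper's: derive a transport identity for $|f|^2$ from \eqref{eq:nab_4'f}, close a bootstrap along each generator, and then integrate \eqref{eq:transf-om} (with $\om'=0$) for $\log\la$, using Lemma \ref{le:nab-omega-near-extremal}. Your treatment of $\la$ is essentially the paper's argument verbatim and is correct.

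The gap is in the closure of the $f$-estimate, the very step you flag as the main obstacle but then dispatch too quickly, and as written your mechanism does not give a bound uniform in $v$. First, the coefficient $\frac 12\trch+2\om$ is not $O(\delta)$: by Lemma \ref{le:nab-omega-near-extremal} it has size $\delta_*^{1/2}$, which dominates $\delta$, and what is needed is not its smallness but its quantitative strict negativity, $4\om+\trch\leq -C\delta_*^{1/2}$. The paper exploits this through the integrating factor $e^{\int (4\om+\trch)}$: in the resulting Duhamel formula for $|f|^2(s_0)$ the kernel $e^{-C\delta_*^{1/2}(s'-s_0)}$ converts the integral over $[s_0,v]$ --- whose length is unbounded as $v\to\infty$ --- into a factor $\delta_*^{-1/2}$. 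Your justification that the length of the $s$-interval does not enter ``because the source is $O(\delta)$ and the homogeneous sign does not destabilize'' is insufficient: integrating an $O(\epsilon_0)$ (or $O(\delta)$) source over a length $v-s_*$ without this exponential kernel yields a bound growing linearly in $v-s_*$.

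Second, the bootstrap $|f|\leq 1$ does not close. Under it the cubic error contributes, after the time integration, a term of size $\delta_*^{-1/2}\sup|f|^2$, which cannot be reabsorbed into $\sup|f|^2$ since $\delta_*^{-1/2}>1$. One must bootstrap at the scale of the data, $|f|\leq C_b\delta$, as the paper does: then the linear source $\xi\cdot f$ contributes $\delta_*^{-1/2}\epsilon_0\sup|f|$ and the quadratic and cubic errors contribute at worst $\delta_*^{-1/2}\delta\sup|f|^2\leq \delta_*^{1/2}\sup|f|^2$, all absorbable thanks to the hierarchy $\epsilon_0\ll\delta\leq\delta_*$ (the paper in effect uses $\epsilon_0\lesssim\delta^2$); this improves the bootstrap constant and yields $|f|\lesssim\delta$ uniformly in $v$ and $s_*$, exactly as in Proposition \ref{Prop:Boundsfor-f}. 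With these corrections your sketch coincides with the paper's proof; the worry about converting $\nab'_4$ into $d/ds$ is harmless here, since for the scalar $|f|^2$ the compatibility of $\nab_4'$ with the horizontal metric gives $\frac{d}{ds}|f|^2=2f\cdot(\la^{-1}\nab_4' f)$ with no extra correction terms.
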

\begin{proof}
On $\RR_\delta$ we have   $ |\trch|\leq C\delta$ and,        in view of  Lemma \ref{le:nab-omega-near-extremal},  $ \om \leq -{C\delta_*^\frac 12}$.
    Therefore,  for $\de\leq \de_*$ sufficiently small,
\beaa
\phi:=4\om+\trch<-C \delta_*^\frac 12.
\eeaa
In view of \eqref{eq:nab_4'f}, along each null geodesic, we have
\begin{equation*}
    \frac{d}{ds}\left(|f|^2\right)+\trch |f|^2+4\om |f|^2=-4\xi\cdot f+O(|\chih,\atrch||f|^2)+O(|f|^3).
\end{equation*}
Therefore,
\begin{equation*}
    \frac d{ds}\left(e^{\int_{s_*}^s \phi ds'} |f|^2\right)=-4e^{\int_{s_*}^s \phi ds'} \xi\cdot f+e^{\int_{s_*}^s \phi ds'}\left(O(|\chih,\atrch||f|^2)+O(|f|^3)\right).
\end{equation*}
We now make  the bootstrap assumption 
\bea\label{eq:bootstrap-zero-order}
|f|\leq C_b\delta, \quad \mbox{for } s\in [s_1,v]
\eea
 which holds when $s=v$ with $C_b$ replaced by some $C_0$, using the bound of $F$. Now, for any $s_0\in [s_1,v]$,   we integrate the equation over $s\in [s_0,v]$ to get
\begin{equation*}
    e^{\int_{s_*}^{v} \phi\, ds'}|f|^2\Big|_{s=v}-e^{\int_{s_*}^{s_0} \phi\, ds'}|f|^2\Big|_{s=s_0}=-\int_{s_0}^{v} 4e^{\int_{s_*}^{s'} \phi ds''} \xi\cdot f ds'+\int_{s_0}^v e^{\int_{s_*}^{s'} \phi ds''}\Big(O(|\chih,\atrch||f|^2)+O(|f|^3)\Big) ds'.
\end{equation*}
We now make use of the bound $\phi\leq -C\de^{1/2}_*$,  $|\xi|,|\chih|,|\atrch|\leq C\epsilon_0$ in $\RR_\delta$, and the bound  for $f=F$ on $S(v, r^*_\delta)$ 
    provided by   Proposition \ref{Prop:boundsF-Fb} to deduce, in   $\RR_\delta$,
\begin{equation*}
\begin{split}
    |f|^2\Big|_{s=s_0}&=e^{\int_{s_0}^{v} \phi ds}|f|^2\Big|_{s=v}+\int_{s_0}^{v} 4e^{\int_{s_0}^{s'} \phi ds''} \xi\cdot f ds+\int_{s_0}^v e^{\int_{s_0}^{s'} \phi ds''}\Big(O(|\chih,\atrch||f|^2)+O(|f|^3)\Big) ds'\\
    &\leq C_0\delta^2+
    \int_{s_0}^{v} e^{-C\delta_*^\frac 12(s'-s_0)} \Big(C\epsilon_0 |f|+C\epsilon_0 |f|^2+|f|^3\Big)\, ds'.
    \end{split}
\end{equation*}
Using the bootstrap bound,
  we deduce
\begin{equation*}
\begin{split}
    |f|^2\Big|_{s=s_0}&\leq C_0\delta^2+C_b\int_{s_0}^v e^{-C\delta_*^\frac 12(s'-s_0)} (\epsilon_0+\delta^2)|f|\, ds' \leq C_0 \delta^2+C_b\int_0^\infty e^{-C\delta_*^\frac 12 s}\Big(\delta^\frac 32\cdot \delta^\frac 12\sup_{s\in [s_1,v]} |f|\Big) ds\\
    &\leq C_0\delta^2+\frac{2C_b}{C\delta_*^\frac 12} \Big(\delta^3+\delta\sup_{s\in [s_1,v]}|f|^2\Big),
    \end{split}
\end{equation*}
so as long as $\delta\leq  \delta_*$, taking the supremum over $s_0\in [s_1,v]$ we obtain the bound
\begin{equation*}
    \sup_{s\in[s_1,v]}|f|^2\leq 2C_0 \delta^2,
\end{equation*}
which improves the bootstrap assumption \eqref{eq:bootstrap-zero-order} once we pick up appropriate $C_b$. Therefore $|f|\lesssim \delta$ for all $s\in [s_*,v]$. 

Next we  estimate $\lambda$. By \eqref{eq:transf-om} we have
\begin{equation*}
    \frac{d}{ds}(\log\lambda)=2\om+f\cdot (\zeta-\etab)+O(|f|^2).
\end{equation*}
Using the  above  estimate  for $f$, the estimates for $\etab, \ze$,   and the upper bound for $\om$,  we have  for the right-hand side
\begin{equation*}
    -C_2 \leq 2\om+f\cdot (\zeta-\etab)+O(|f|^2) \leq -C_1 \delta_*^\frac 12,\quad \text{when }|r-r_+|\leq \delta.
\end{equation*}
Therefore, along each null generator, when $s_*\leq s\leq v$, we have
\bea
\label{eq:bound-la}
    e^{C_1 \delta_*^\frac 12(v-s)} \leq \lambda\leq  e^{C_2 (v-s)}
\eea
as stated.
\end{proof}


\subsubsection{Regularity of the null cones $C_v$}
\lab{subsection:HigherOrder}

We now start to prove that $C_v$ are regular null cones in $\MM$. 
According to  the transformation formula 
\eqref{eq:trasportation.formulas} for $\trch$, we have
\bea\label{transformation-trch'}
    \lambda^{-1}\trch'=\trch+\div'f+f\cdot(\eta+\zeta)+O(|f|^2).
\eea
We already have estimates for  $\la$ and $f$, provided by Proposition \ref{Prop:Boundsfor-f}, but we  still  need to  estimate $\div' f$.
To do this  we  need to commute the equation \eqref{eq:nab_4'f}   for  $f$ with $\nab'$, using the following standard commutation lemma, see Section 2.2.7 in \cite{GKS}.

\begin{lemma}[Commutation formula]
Given a null frame and a horizontal covariant tensor $\psi_A=\psi_{a_1\cdots a_k}$, we have
\begin{equation}\label{eq:commutation4a}
    \begin{split}
        [\nab_4,\nab_b]\psi_A&=-\chi_{bc}\nab_c \psi_A+(\etab_b+\zeta_b)\nab_4 \psi_A+\sum_{i=1}^k \left(\chi_{b a_i}\etab_c-\chi_{bc}\etab_{a_i}\right)\psi_{a_1\cdots c\cdots a_k}\\&\quad +\sum_{i=1}^k\left(\chib_{b a_i}\xi_c-\chib_{bc}\xi_{a_i}+\in_{a_i c}\dual\b_b\right)\psi_{a_1\cdots c \cdots a_k}+\xi_b\nab_3\psi_A.
    \end{split}
\end{equation}
\end{lemma}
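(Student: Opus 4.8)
The plan is to establish \eqref{eq:commutation4a} by a direct computation in the null frame, first for scalars and then for arbitrary horizontal tensors, the latter differing from the former only by one curvature correction per index. For $k=0$ and a function $\psi$ one has $\nab_b\psi=e_b(\psi)$, while $\nab_4\nab_b\psi$ is the $b$-component of $\nab_4$ applied to the horizontal gradient one-form and $\nab_b\nab_4\psi=e_b(e_4\psi)$ since $\nab_4\psi=e_4\psi$ is a scalar; hence $[\nab_4,\nab_b]\psi=e_4(e_b\psi)-(\nab_4e_b)^c\nab_c\psi-e_b(e_4\psi)=[e_4,e_b]\psi-(\nab_4e_b)^c\nab_c\psi$, where $\nab_4e_b$ is the horizontal projection of $D_4e_b$. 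It then remains to decompose $[e_4,e_b]=D_4e_b-D_be_4$ in the frame. Using $g(D_4e_b,e_3)=-2\etab_b$, $g(D_4e_b,e_4)=-2\xi_b$, $g(D_be_4,e_c)=\chi_{bc}$, $g(D_be_4,e_3)=2\zeta_b$ and $g(D_be_4,e_4)=0$, one gets $[e_4,e_b]=\nab_4e_b-\chi_{bc}e_c+(\etab_b+\zeta_b)e_4+\xi_be_3$, so that $[\nab_4,\nab_b]\psi=-\chi_{bc}\nab_c\psi+(\etab_b+\zeta_b)\nab_4\psi+\xi_b\nab_3\psi$, which is \eqref{eq:commutation4a} with empty sums.

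For $k\geq 1$ I would write $\psi_A=\psi(e_{a_1},\dots,e_{a_k})$ and expand both $\nab_4\nab_b\psi_A$ and $\nab_b\nab_4\psi_A$ via the Leibniz rule for the projected connections $\nab_4,\nab_b$. Their difference organizes into (i) the ``transport'' operator $-\chi_{bc}\nab_c+(\etab_b+\zeta_b)\nab_4+\xi_b\nab_3$ applied to $\psi_A$, exactly as in the scalar step, plus (ii) for each slot $a_i$ the contraction of $\psi$ against the leftover second-derivative combination on the frame vector $e_{a_i}$ — that is, against the horizontal curvature operator $\mathcal R(e_4,e_b)e_{a_i}$. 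So the entire task reduces to computing the horizontal components $g\bigl(\mathcal R(e_4,e_b)e_a,e_c\bigr)$.

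For this I would expand $\nab_4,\nab_b$ as horizontal projections of $D_4,D_b$ and insert the null decompositions of $D_4e_a$, $D_be_a$, $D_4e_3$, $D_4e_4$, $D_be_3$, $D_be_4$. The outcome is the horizontal part of the ambient curvature $R(e_4,e_b)e_a$ plus quadratic terms coming from the non-horizontal components of those derivatives: the $e_4$-legs produce $\chi_{ba}\etab^c-\chi_b{}^c\etab_a$ and the $e_3$-legs produce $\chib_{ba}\xi^c-\chib_b{}^c\xi_a$. Finally, by the algebraic symmetries of the Riemann tensor in a four-dimensional spacetime, the only component with one leg along $e_4$ and three horizontal legs is $R_{4bac}=\in_{ac}\dual\b_b$ (using $\b_a=\frac12 R_{a434}$ and the horizontal Hodge dual), which supplies the term $\in_{a_ic}\dual\b_b$. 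Summing the three contributions slot by slot reproduces \eqref{eq:commutation4a}.

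The main obstacle I expect is the bookkeeping in step (ii): because $\nab_4$ and $\nab_b$ are \emph{projected} connections rather than the ambient Levi-Civita connection, the naive Ricci identity acquires a correction from every term in which a derivative falls on the horizontal projection, and one must verify that these corrections assemble exactly into the stated $\chi\cdot\etab$ and $\chib\cdot\xi$ quadratics, with the correct signs, and that no spurious $\om$ or $\omb$ terms remain. As consistency checks I would specialize to an integrable horizontal structure, and compare with the companion commutators $[\nab_3,\nab_b]$ and $[\nab_a,\nab_b]$ recorded in \cite{GKS}.
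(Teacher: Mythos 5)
The paper does not actually prove this lemma: it is quoted as a standard identity with a pointer to Section 2.2.7 of \cite{GKS}, so there is no in-paper argument to compare against. Your blind derivation is the standard one and is essentially correct. The scalar step is complete: the frame decompositions you use ($D_4e_b=\nab_4e_b+\xi_be_3+\etab_be_4$, $D_be_4=\chi_{bc}e_c-\zeta_be_4$) are right, and the $\nab_4e_b$ contributions indeed cancel, giving $[\nab_4,\nab_b]\psi=-\chi_{bc}\nab_c\psi+(\etab_b+\zeta_b)\nab_4\psi+\xi_b\nab_3\psi$. For the tensor case, your reduction to a per-slot curvature term of the projected connection is also sound: extending $\psi$ by $\psi(e_3)=\psi(e_4)=0$ and comparing $\nab_4\nab_b$ and $\nab_b\nab_4$ with the ambient second derivatives $D^2_{4,b}$, $D^2_{b,4}$, the discrepancies come exactly from the non-horizontal parts of $D_4e_b$, $D_be_a$ and from $(D_4\psi)(e_3)=-2\etab\cdot\psi$, $(D_4\psi)(e_4)=-2\xi\cdot\psi$, and they assemble slot by slot into $(\chi_{ba_i}\etab_c-\chi_{bc}\etab_{a_i})$ and $(\chib_{ba_i}\xi_c-\chib_{bc}\xi_{a_i})$ with no leftover $\om$, $\omb$ or $\eta$ terms; the remaining piece is $-\psi_c\,g(R(e_4,e_b)e_{a_i},e_c)$ via the ambient Ricci identity. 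So the "main obstacle" you flag is real but resolves exactly as you predict.

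Two points deserve explicit mention rather than assertion. First, the identification of the leftover curvature component with $\in_{a_ic}\dual\b_b$ is not a consequence of the algebraic symmetries of a general Riemann tensor alone: for arbitrary curvature the mixed component $g(R(e_4,e_b)e_a,e_c)$ also carries Ricci-trace contributions, and it reduces to the single Weyl datum $\dual\b$ only because the spacetimes here satisfy the vacuum Einstein equations (Riemann $=$ Weyl). That hypothesis should be stated, and the sign/index bookkeeping checked against the paper's conventions $\b_a=\frac12R_{a434}$, $\dual\psi_a=\in_{ab}\psi_b$. Second, your step (ii) is only sketched; since the whole content of the lemma beyond the scalar case lies in that bookkeeping, a complete write-up should carry it out at least for $k=1$ (the general $k$ case is verbatim per slot). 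With those additions your argument is a correct, self-contained proof of the quoted formula.
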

We will apply this to the frame $\{e_3',e_4',e_a'\}$.

We now state the main result in this part.
\begin{proposition}
\label{prop:estimate-trch'}
   For every $v$, the null geodesic congruence $C_v$ is a regular null cone in $\MM\cap \{0\leq \tau \leq v-r^*_\delta\}$. Moreover, we have the estimate
   \begin{equation}
       |\nab'^{\leq i} f|\leq C_N\delta,\quad 0\leq i\leq N.
   \end{equation}
\end{proposition}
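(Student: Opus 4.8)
The plan is to upgrade the zeroth–order bound $|f|\lesssim\delta$ of Proposition \ref{Prop:Boundsfor-f} to bounds on all derivatives $\nab'^{\leq N}f$ along $C_v$ by commuting the transport equation \eqref{eq:nab_4'f} for $f$ with $\nab'$ and iterating, and then to read off the regularity of $C_v$ from the Immersion and Embedding Criteria.

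First I would commute \eqref{eq:nab_4'f} with the primed horizontal derivative $\nab'$, using the commutation formula \eqref{eq:commutation4a} for the primed frame $\{e_3',e_4',e_a'\}$. Since this frame is normalized so that $\xi'=0$ (and $\om'=0$), the transversal term $\xi'_b\nab'_3\psi$ in \eqref{eq:commutation4a} is absent, so no derivative $\nab'_3 f$ is ever produced; after $i$ commutations one gets, along each null generator,
\begin{equation*}
\frac{d}{ds}\,\nab'^{(i)}f+\Big(\tfrac12\trch+2\om\Big)\nab'^{(i)}f=\mathcal S_i,
\end{equation*}
where $\mathcal S_i$ depends only on $\nab'^{\leq i-1}f$, on $\nab'^{\leq i}$ of the background coefficients $\{\trch,\om,\xi,\chih,\atrch,\eta,\ze,\etab,\dots\}$ and of $\log\lambda$, and on $\nab'^{\leq i}$ of the primed coefficients $\chi',\etab',\ze',\dual\b$ produced by the commutators. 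The decisive point is the transport coefficient: by Lemma \ref{le:nab-omega-near-extremal} and $|\trch|\lesssim\delta$ on $\RR_\delta$ we have $\tfrac12\trch+2\om\le -c\,\delta_*^{1/2}<0$, the same favorable sign exploited in Proposition \ref{Prop:Boundsfor-f}.

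Next I would estimate $\mathcal S_i$ on the part of $C_v$ lying in $\RR_\delta$. Here \eqref{eq:estim.Gac} gives $|\nab^{\leq N}\widecheck\Ga|\lesssim\epsilon_0$, the explicit Kerr parts of $\trch$ and $\om$ together with all their $\nab,\nab_4$ derivatives are $O(\delta)$, resp.\ $O(\delta_*^{1/2})$, as in Proposition \ref{Prop:boundsF-Fb} and Lemma \ref{le:nab-omega-near-extremal}, and the data $f=F$ on $S(v,r^*_\delta)$ satisfies $|\nab'^{\leq N}F|\lesssim\min\{\delta,a\}$ by Proposition \ref{Prop:boundsF-Fb} and Corollary \ref{Cor:HigherboundF}; converting $\nab'$ to $\nab$ on a background tensor via \eqref{eq:nab'i-on-H-tensor} costs only powers of $f\cdot\chib=O(\delta)$ together with factors $\nab'^{\leq i-1}f$ controlled at the previous stage. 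The delicate point is that the generators are long — their length grows like $v$ — and $\lambda$, hence a priori $\nab'^{(i)}\log\lambda$ and the primed coefficient $\chi'=\lambda(\chi+\nab'f+\dots)$, grows along them; I would deal with this by noting that $\log\lambda$ solves $\frac{d}{ds}\log\lambda=2\om+f\cdot(\ze-\etab)+O(|f|^2)$, so that $\nab'^{(i)}\log\lambda$ in turn satisfies a transport equation whose coefficient is $2\om+(\text{small})\le -c\,\delta_*^{1/2}$ and therefore stays bounded uniformly in $v$, while $\lambda^{-1}\chi'=O(\delta)$ directly from \eqref{eq:trasportation.formulas}. Feeding this into $\mathcal S_i$, using the induction hypothesis on $\nab'^{\leq i-1}f$ and a bootstrap $|\nab'^{(i)}f|\le C_b\delta$, makes $\mathcal S_i$ small; integrating the transport equation backward from $s=v$ as in Proposition \ref{Prop:Boundsfor-f}, the homogeneous factor is $\le e^{-c\,\delta_*^{1/2}(v-s)}$, so $\int e^{-c\delta_*^{1/2}(s'-s_0)}|\mathcal S_i|\,ds'\lesssim\delta_*^{-1/2}\sup|\mathcal S_i|\lesssim\delta$, which together with the bound at $s=v$ closes the bootstrap; inducting $i=1,\dots,N$ yields $|\nab'^{\leq N}f|\le C_N\delta$.

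For the regularity statement: by \eqref{transformation-trch'}, $\lambda^{-1}\trch'=\trch+\div'f+f\cdot(\eta+\ze)+O(|f|^2)$ is bounded, hence $\trch'$ is finite, in particular bounded below, on the compact part of $C_v$ between $\Si_0$ and $S(v,r^*_\delta)$, so the Immersion Criterion (Proposition \ref{Prop:Immersion-Criterion}) makes each $i_\tau$ an immersion. The first hypothesis of the Embedding Criterion (Proposition \ref{lem:im-em1}) then holds since an immersion is locally an embedding, and the second holds by Lemma \ref{lem:ezero-on-r} (the adapted frame has $\eS_a(r)=O(|F|)=O(\delta)$); hence $i_\tau$ is an embedding, $S_v^\tau=\{r=R_v(\theta,\varphi)\}$ with $R_v\in[r_+-\delta,r_++\delta]$, and $C_v$ is a regular null cone in $\MM\cap\{0\le\tau\le v-r^*_\delta\}$ by the corollary following the Embedding Criterion. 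I expect the main obstacle to be the middle step — bounding $\mathcal S_i$ uniformly in $v$ in the presence of arbitrarily long generators — whose resolution is precisely that every quantity that could grow along a generator is slaved to a transport equation inheriting the decay $e^{-c\delta_*^{1/2}(v-s)}$ furnished by the sign of $\om$ near the (possibly nearly–extremal) horizon, combined with the bookkeeping of commutators that $\xi'=0$ keeps free of uncontrolled transversal derivatives.
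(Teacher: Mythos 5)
Your overall strategy is the paper's: commute \eqref{eq:nab_4'f} with $\nab'$, exploit the sign $\om\le -c\,\delta_*^{1/2}$ from Lemma \ref{le:nab-omega-near-extremal} to integrate backward along the generators with an exponentially damped kernel, close a bootstrap for $|\nab'^{\le N}f|$, and then invoke the Immersion and Embedding Criteria. However, your linear ordering (first all derivative estimates along the generators, then regularity) hides a genuine gap. Every estimate you use is conditional on the generator lying in $\RR_\delta$ (Proposition \ref{Prop:Boundsfor-f} is stated exactly that way), on the congruence being at least locally an immersed null hypersurface so that $\nab'^{i}f$ and the commutation formula \eqref{eq:commutation4a} make sense on $C_v$, and on the local embeddedness that Lemma \ref{lem:ezero-on-r} presupposes; conversely, you deduce the immersion/embedding properties, and implicitly the confinement of the cone, from those very estimates. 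You never address why the generators remain in $\{|r-r_+|\le \delta\}$ all the way down to $\Sigma_0$: a priori they could drift out over the long parameter range $v-s\sim v$, since the bounds only give $|e_4'(r)|\lesssim \lambda\,\delta$. The paper closes this circle with an explicit simultaneous bootstrap/continuity structure in $\tau$: the assumption \eqref{eq:bootstrap-trch'} with a minimal $\tau_*$, the causal argument (footnote in the proof of Proposition \ref{Proposition:C_v is regular}) that the regular part of $C_v$, lying on the boundary of $\JJ^-(S(v,r^*_\delta))$, cannot cross $\{r=r_+\pm\delta\}$, Lemma \ref{lem:local-in-time-regular} to extend regularity past a putative first failure time, and only then the improvement of $|\lambda^{-1}\trch'|$ from $\delta^{1/2}$ to $C\delta$ which forces $\tau_*=0$. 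Some version of this interlocking induction is indispensable; as written, your argument is circular on this point.

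Two further issues. First, the initial data: $f=F$ holds only on $S(v,r^*_\delta)$, and $e_a'$ is not tangent to that sphere, so $\nab'^{\,i}f$ restricted to $S(v,r^*_\delta)$ is not directly controlled by Proposition \ref{Prop:boundsF-Fb} and Corollary \ref{Cor:HigherboundF}; one must trade the $e_4'$-components using the transport equation itself, which is precisely Part 1 of Proposition \ref{Prop:intermediary} (via the vector fields $e_a'+\tfrac12\Fb_a e_4'$, tangent to the cone and coinciding with $\eS_a$ on the initial sphere). Second, your handling of the growth of $\lambda$ is a real deviation from the paper: the paper sidesteps the issue entirely by performing all higher-order estimates with $\lambda=1$, which is legitimate because $H'$, $\nab'$ and equation \eqref{eq:nab_4'f} are independent of $\lambda$, so neither $\lambda$ nor $\chi'$ (which grows like $\lambda\sim e^{c\delta_*^{1/2}(v-s)}$) ever enters the commuted system. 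Your alternative, propagating $\nab'^{\le i}\log\lambda$ by transport equations with favorable coefficient, is plausible but asserted rather than proved, and your claim $\lambda^{-1}\chi'=O(\delta)$ already uses the bound on $\nab'f$ being proven, so it must be folded into the same bootstrap. These points are fixable, but the proposal as it stands is incomplete where the paper's proof does its real work.
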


\begin{proof}
The  strategy of the proof is as follows:

{\bf Step 1.}  Fix a value of $v$.  We  assume  that there exists  a minimal value $\tau_*\in [0,v-r^*_\delta]$ such that
\bea
\lab{eq:bootstrap-trch'}
|\lambda^{-1} \trch'|\leq \delta^\frac 12 ,\qquad  \tau_*\leq \tau\leq v - r_\de^*.
\eea
By the Immersion Criterion (Proposition \ref{Prop:Immersion-Criterion}), this shows that the intersection $S^\tau_v:=C_v \cap \Sigma_\tau$, for   $\tau_*\leq \tau\leq v - r_\de^*$, is the image of an immersion $i_v^\tau\colon \mathbb{S}^2\to \Sigma_\tau$. Thus every point $p\in \mathbb{S}^2$ has a neighborhood $\OO_p\subset \SSS^2$ such that $i^\tau_v|_{\OO_p}$ is an embedding into $\Sigma_\tau$. It then makes sense to talk about a local orthonormal frame w.r.t. the induced metric on $\Sigma_\tau$.
We have the following lemma.

\def\etau{\, ^{(\tau)}e}
\begin{lemma}\label{lem:ezero-on-r}
 Under the assumption \eqref{eq:bootstrap-trch'}, consider a neighborhood $\OO_p$ for which $i_v^\tau|_{\OO_p}$ is an embedding. If\footnote{Recall that we have defined $f$ everywhere along each null generators, see \eqref{eq:e-e'} and \eqref{eq:prime-frame-r=r+delta}.} $|f|\leq C\delta$ on $i_v^\tau(\OO_p)$, then there exists an orthonormal basis $\{\ezero_a\}$ on $i_v^\tau(\OO_p)$ w.r.t the induced metric on $\Sigma_\tau$ satisfying \bea\label{eq:e(r)-delta} |\ezero_a(r)|\leq C' \delta,\eea with $C'$ depedent on $C$ but independent of $v$, $\tau$, $\tau^*$, $p$.
\end{lemma}
\begin{proof}
    See Section \ref{subsection:RegularityS_v^0}.
\end{proof}

The bound \eqref{eq:e(r)-delta} provided by the lemma verifies the condition in the Embedding Criterion (Proposition \ref{lem:im-em1}). Therefore we have the following:
\begin{proposition}
\lab{Proposition:C_v is regular}
    Under the assumption \eqref{eq:bootstrap-trch'}, $C_v$ is a regular null cone in $\MM\cap \{\tau_*\leq \tau\leq v-r^*_\delta\}$.
\end{proposition}
\begin{proof} Recall that $\tau=v-r$  and that   the  cones $C_v$ initiate on  $r=r^*_\de= r_+ +\de$.
According to Proposition \ref{Prop:Immersion-Criterion} and the assumption  \eqref{eq:bootstrap-trch'}  on $\trch'$, it suffices to show that the immersion $i_v^\tau\colon \mathbb{S}^2\to \Sigma_\tau$ is an embedding for any $v$  and $\tau \in [\tau_*,v-r^*_\delta]$. By compactness of $\mathbb{S}^2$, for it not to be an embedding,  $i^\tau_v$ must fail to be injective. Denote $\tau_*'\geq \tau_*$ as the first (backward) value of $\tau$ such that $i^\tau_v$ fails to be injective. This means that the part of $C_v$  strictly above (in the future of) $\Sigma_{\tau_*'}$ is regular, so all null generators are in $\RR_\delta$ \footnote{Indeed,  $\{r= r_+ +\delta\}$ is a timelike  hypersurface,  spanned by timelike curves (the integral curves of $\pa_v$) from $S(v,r^*_\delta)$.  The  regular part of  $C_v$   lies on the boundary of $\mathcal J^-(S(v,r^*_\delta))$ and  cannot intersect  either  $\{r= r_+ +\delta\}$ or 
   $\{r=r_+-\delta\}$.}. By continuity, they remain in $\RR_{\delta}$ on $\Sigma_{\tau_*'}$, so the estimate of $|f|$ in Proposition \ref{Prop:Boundsfor-f} applies, and we have $|f|\lesssim \delta$. Therefore, for $\delta$ small enough, applying Lemma \ref{lem:ezero-on-r} and the Embedding criterion (Proposition \ref{lem:im-em1}), we deduce that the intersection $C_v\cap \Si_{\tau'_*}$  must also be an embedded sphere on $\Sigma_{\tau_*'}$. This leads to  a contradiction as we can now extend the cone     from this embedded sphere on $\Sigma_{\tau_*'}$ regularly,  for  a small neighborhood of $\tau_*'$ in $\tau$, by Lemma \ref{lem:local-in-time-regular}.
\end{proof}
Therefore, we  conclude   that for $\tau_*\leq \tau\leq v-r^*_\delta$, we have $C_v$ regular  and $|f|\lesssim \delta$.

{\bf Step 2.}    We  prove bounds for    derivatives of $f$. We derive a higher order version which we will use later. Recall that from Step 1, we know that $C_v$ is regular for $\tau\in [\tau_*,v-r_\delta^*]$.
\begin{lemma}
\label{Lemma:estimate-trch'}
   Suppose the bootstrap assumption \eqref{eq:bootstrap-trch'} holds. Then, on $C_v$, whenever $\tau\in [\tau_*,v-r^*_\delta]$, we have $|\nab'^{\leq i} f|\leq C_N\delta$ for all $i\leq N$.
\end{lemma}
\begin{proof}
    See Section \ref{subsubsect:proof-higher-order-f}.
\end{proof}

{\bf Step 3.}   We improve  the bootstrap assumption \eqref{eq:bootstrap-trch'}  by  showing  that  in fact 
     $|\lambda^{-1}\trch'|\leq C\delta$ when $\tau\in [\tau_*,v-r^*_\delta]$, independent of $\tau_*$ and $v$.   This implies that $C_v$ remains regular in $\MM$ for all $\tau\in [0,v-r^*_\delta]$.
\begin{proof}[Proof of Step 3]
    In view of  the transformation formula \eqref{transformation-trch'},  the estimate of $|f|$ in \eqref{eq:zero-order-bound-f}, and the estimates of $\eta, \ze$ in \eqref{eq:estim.Gac},
     it suffices to show the bound for $|\div' f |$. This directly follows from the bound of $|\nab'f|$ in Lemma \ref{Lemma:estimate-trch'}.
\end{proof}
This,     modulo  the   proofs  of  the  Lemmas \ref{lem:ezero-on-r} and  \ref{Lemma:estimate-trch'},  ends the proof of Proposition  \ref{Proposition:C_v is regular}.
 \end{proof}
 It remains to prove Lemma \ref{lem:ezero-on-r} and Lemma \ref{Lemma:estimate-trch'}.


\subsection{Proof of Lemma  \ref{lem:ezero-on-r} } 
\label{subsection:RegularityS_v^0}
In this subsection, we prove Lemma \ref{lem:ezero-on-r}. We prove in fact  the following more detailed version below,  in Lemma \ref{Lemma:ezero}. Note that we only make use of things that have been proved before the statement of Proposition \ref{prop:estimate-trch'} and the bootstrap assumption \eqref{eq:bootstrap-trch'}.

Recall that for each fixed $v$ and $\tau\in [\tau^*,v-r_\delta^*]$, $S^\tau_v$ is the image of an immersion $i_v^\tau\colon \mathbb{S}^2\to \Sigma_\tau$, and for each $p\in \mathbb{S}^2$, there is an open neighborhood $\OO_p$ such that $i_v^\tau|_{\OO_p}$ is an embedding. In the following, we drop the dependence on $\tau$ and $v$ and denote $S=S_v^\tau$, $i=i_v^\tau$.

\begin{lemma}
\lab{Lemma:ezero}
    With the notation above, for each $p\in \mathbb{S}^2$, there exists  a frame transformation $(\, ^{(0)}f, \fbzero$, $\lazero)$ from $\{e_3,e_4,e_a\}$ on the embedded submanifold $i(\OO_p)$ such that  \begin{enumerate}
    \item  The new frame $\{\ezero_3,\ezero_4,\ezero_a\}$ is adapted to $i(\OO_p)$, in the sense that $\{\ezero_a\}$ is tangent to $i|_{\OO_p}$.
    \item  $\Nv=\frac 12 \left(\ezero_4-{\ezero}_3\right)$ is the unit normal vector of $i(\OO_p)$ on $\Sigma_\tau$.
  
    \item   The transformation coefficients can be written as 
    \bea\label{eq:expression-f0-fb0-la0}
 \, ^{(0)} f_a=f_a,\qquad \fbzero_a=-\frac{f_a+2e_a(v)}{e_4(\tau) +f^b e_b(v)+\frac 1 4 |f|^2},
 \eea
 \begin{equation*}
    \lazero^2=\frac{1+\frac 12 f\cdot \fbzero+\frac 1{16} |f|^2 |\fbzero|^2+(\fbzero^b+\frac 14|\fbzero|^2 f^b) e_b(v)+\frac 14|\fbzero|^2 e_4(\tau)}{e_4(\tau)+f^b e_b(v)+\frac 14|f|^2},
\end{equation*}
where $f$ is the one defined in Section \ref{Section:main}. Note that one also has
\begin{equation*}
    \ezero_4=\lazero e_4',\quad \ezero_a=e_a'+\frac 12 \fbzero e_4',\quad e_3'=\lazero^{-1} \left(e_3+\fbzero^a e_a'+\frac 14 |\fbzero|^2 e_4'\right),
\end{equation*}
that is, the new frame can be obtained from the $e'$-frame through the transformation $(0,\fbzero,\lazero)$.
\item We have the estimates
\begin{equation*}
    |\fbzero|\leq C\max\{\delta,a\},\quad C^{-1}\leq \lazero\leq C,
\end{equation*}
where $C>0$ is independent of $v$, $\tau$, $\tau^*$, $p$.
   \end{enumerate}
\end{lemma}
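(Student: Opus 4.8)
The plan is to build $\{\ezero_3,\ezero_4,\ezero_a\}$ in two stages: first freeze the null direction of the geodesic congruence $C_v$, then rotate the horizontal structure into $\Si_\tau$. Keeping $\ezero_4$ parallel to the geodesic generator $e_4'$ of \eqref{eq:e-e'}--\eqref{eq:prime-frame-r=r+delta} forces the horizontal transformation parameter to be $f$ itself and leaves $(\fbzero,\lazero)$ free; I would fix these by imposing (i) $\ezero_a(\tau)=0$ for $a=1,2$, and (ii) $\ezero_4(\tau)=\ezero_3(\tau)$, i.e. $\Nv:=\f12(\ezero_4-\ezero_3)$ is tangent to $\Si_\tau$. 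Since $\ezero_a$ is a combination of $e_4',e_a'$ it always lies in $TC_v$, so (i) says exactly $\ezero_a\in TC_v\cap T\Si_\tau=TS^\tau_v$. Inserting the general transformation formulas \eqref{eq:general-null-frame-transformations} (from $\{e_3,e_4,e_a\}$ to $\{\ezero_3,\ezero_4,\ezero_a\}$) into (i), and using the ingoing PG identities $e_3(\tau)=1$, $e_a(r)=0$, $e_a(\tau)=e_a(v)$, $e_4(\tau)=e_4(v)-e_4(r)$ of Section \ref{section:Quantbonds-Mint}, condition (i) collapses to a single linear equation for $\fbzero_a$ whose unique solution is the formula \eqref{eq:expression-f0-fb0-la0}, with $D:=e_4(\tau)+f^b e_b(v)+\f14|f|^2$; one checks directly that $\lazero D=\ezero_4(\tau)$, and $D>0$ (verified below) makes the construction well-defined and $e_4'$ transversal to $\Si_\tau$. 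Substituting $\fbzero$ into (ii) yields a quadratic in $\lazero$ whose future-pointing root ($\lazero>0$) is the displayed expression.

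For items (1)--(3): (1) holds by construction. For (2), $g(\ezero_3,\ezero_4)=-2$ is preserved by \eqref{eq:general-null-frame-transformations}, so a one-line computation gives $g(\Nv,\Nv)=1$, $g(\Nv,\ezero_a)=0$, $g\big(\f12(\ezero_4+\ezero_3),\f12(\ezero_4+\ezero_3)\big)=-1$, and $\f12(\ezero_4+\ezero_3)\perp\ezero_a,\Nv$; condition (ii) puts $\Nv\in T\Si_\tau$, and since $\Si_\tau$ is spacelike (Section \ref{subsect:timefunction}) and $\{\ezero_1,\ezero_2\}$ span $TS^\tau_v$, the triple $\{\ezero_1,\ezero_2,\Nv\}$ is an orthonormal basis of $T\Si_\tau$; hence $\Nv$ is the unit normal of $i(\OO_p)$ in $\Si_\tau$ and $\f12(\ezero_4+\ezero_3)$ is the future unit normal to $\Si_\tau$ (used below). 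Item (3) is the bookkeeping that the composite $\{e_3,e_4,e_a\}\to\{e_3',e_4',e_a'\}\to\{\ezero_3,\ezero_4,\ezero_a\}$ coincides with the single transformation of parameters $(f,\fbzero,\lazero)$, read off from \eqref{eq:general-null-frame-transformations}.

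It remains to prove the bounds (4). From $\nabb v=\widecheck{\nabb v}+aj$ (see \eqref{eq:linearization}), the smallness \eqref{eq:estim.Gac}, and $|j|\les 1$ (from \eqref{eq:nabla3onj} and \eqref{eq:Bonds-j}) one gets $|e_a(v)|=|\nabb_a v|\les a$; combined with $|f|\les\de$ on $i(\OO_p)$ (the hypothesis of Lemma \ref{lem:ezero-on-r}, equivalently Proposition \ref{Prop:Boundsfor-f}, a regular generator staying in $\RR_\de$) this bounds the numerator of $\fbzero$ in \eqref{eq:expression-f0-fb0-la0} by $C\max\{\de,a\}$. For the denominator, the explicit PG values $e_4(v)=2(r^2+a^2)/|q|^2+\widecheck{e_4(v)}$, $e_4(r)=\De/|q|^2+\widecheck{e_4(r)}$ give $e_4(v)\geq 2-O(\eps_0)$ (using $|q|^2\leq r^2+a^2$) and $|e_4(r)|\les\de$ on $\RR_\de$ (using $|\De|\les\de$ there), so $D=e_4(v)-e_4(r)+O(\de a)+O(\de^2)\in[\f12,C]$ for $\de$ small; hence $|\fbzero|\les\max\{\de,a\}$. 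For $\lazero$, the quickest route is the geometric identity $\lazero=\ezero_4(\tau)/D$ together with $\ezero_4(\tau)=\Nv(\tau)+\big(\f12(\ezero_4+\ezero_3)\big)(\tau)=|\grad\tau|$ — the last step since $\f12(\ezero_4+\ezero_3)=-\grad\tau/|\grad\tau|$ is the future unit normal to $\Si_\tau$, with $|\grad\tau|:=(-g(\grad\tau,\grad\tau))^{1/2}$ — together with $|\grad\tau|^2=(2r^2+a^2\cos^2\theta)/|q|^2+O(\de)\in[1,3]$ from the computation of Section \ref{subsect:timefunction}; therefore $C^{-1}\leq\lazero\leq C$. (Alternatively one bounds the explicit numerator of $\lazero^2$ directly; the only non-obvious terms are $(\fbzero^b+\f14|\fbzero|^2 f^b)e_b(v)=O(a)$ and $\f14|\fbzero|^2 e_4(\tau)\geq 0$, which keep $\lazero^2$ away from $0$ and $\infty$.) All constants depend only on $a,m$ and the bound \eqref{eq:estim.Gac}, hence are independent of $v,\tau,\tau_*,p$.

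The main point requiring care is quantitative rather than conceptual: the lower bound $D\ges 1$, which is the one place where the explicit near-horizon form of the ingoing PG frame — concretely, that $e_4(v)$ is bounded below by a positive constant — is genuinely used; everything else is linear algebra on \eqref{eq:general-null-frame-transformations} plus the already-established bounds on $f$, $j$ and $\widecheck{\Gamma}$.
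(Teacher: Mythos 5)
Your construction coincides with the paper's: you impose the same two conditions ($\ezero_a(\tau)=0$ and $\ezero_4(\tau)=\ezero_3(\tau)$), obtain the same formulas for $\fbzero$ and $\lazero^2$, and prove $|\fbzero|\les\max\{\de,a\}$ exactly as the paper does, via $\nabb v=\widecheck{\nabb v}+aj$, $|f|\les\de$, and the lower bound $e_4(\tau)\geq 2+O(\de)$ for the denominator $D$. The one place where you genuinely deviate is the bound on $\lazero$, and this is also where there is a gap. Your primary argument, $\lazero=\ezero_4(\tau)/D=|\grad\tau|/D$, identifies $\frac12(\ezero_4+\ezero_3)$ as the future unit normal of $\Si_\tau$, which already uses condition (ii); so it yields the two-sided bound only \emph{after} you know the frame exists, i.e.\ after you know the quadratic $\lazero^2=N/D$ has a positive root. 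You never prove $N>0$: you simply take "the future-pointing root". Your parenthetical fallback --- that $(\fbzero^b+\frac14|\fbzero|^2 f^b)e_b(v)=O(a)$ and $\frac14|\fbzero|^2e_4(\tau)\geq 0$ "keep $\lazero^2$ away from $0$" --- is not a proof in the regime the lemma is meant to cover: for $a$ comparable to $m$ the term $\fbzero^b e_b(v)\approx -2|\nabb v|^2/e_4(\tau)$ is of size one and negative, so asserting it is $O(a)$ does not give a lower bound on the numerator. The paper handles precisely this point by the cancellation $\fbzero^b e_b(v)+\frac14|\fbzero|^2 e_4(\tau)=-|\nabb v|^2/e_4(\tau)+O(\de)$ together with $|\nabb v|^2\leq 1+O(\de)<e_4(\tau)$, giving $N\geq\frac12+O(\de)$.

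The gap is easy to close, and in fact your geometric idea contains the fix: since $\ezero_a$ and $\fbzero$ do not involve $\lazero$, the frame relations hold for every $\lazero>0$, and expanding $g(\grad\tau,\grad\tau)=-\ezero_3(\tau)\,\ezero_4(\tau)+\sum_a(\ezero_a(\tau))^2$ with $\ezero_a(\tau)=0$ gives the $\lazero$-independent identity $N\,D=(\lazero^{-1}N)(\lazero D)=-g(\grad\tau,\grad\tau)=|\grad\tau|^2$. Hence $N=|\grad\tau|^2/D>0$ follows from $D>0$, the positive root exists, and $\lazero=|\grad\tau|/D$ is recovered without circularity; combined with $1\leq|\grad\tau|^2\leq 3$ and $D\sim 1$ this gives $C^{-1}\leq\lazero\leq C$ uniformly. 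With that one line added (or with the paper's direct numerator estimate substituted for your parenthetical), your proof is complete and is otherwise the same as the paper's, with the identity $\lazero=|\grad\tau|/D$ being a slightly cleaner route to item (4) than the paper's explicit computation.
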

\begin{remark}
    As a result, we have
\begin{equation*}
    \ezero_a(r)=e_a(r)+\frac 12\fb_a f^b e_b(r)+\frac 12\fb_a e_4(r)+\left(\frac 12 f_a+\frac 18|f|^2 \fb_a\right)e_3(r)=\frac 12 \fb_a e_4(r)+O(|f|)=O(\delta).
\end{equation*}
This proves Lemma \ref{lem:ezero-on-r}.
\end{remark}

\begin{proof}
We first take $\, ^{(0)}f_a=f_a$, which ensures that $\ezero_4$ is tangent to the null generators.
To have $\{\ezero_a\}$ tangent to $i(\OO_p)\subset \Sigma_\tau$, we impose (using $e_3(\tau)=1$, $e_a(r)=0$)
  \begin{equation*}
    \begin{split}
    0={\ezero}_a(\tau)  &=  e_a(\tau)+\frac 12\fbzero_a f^b e_b(\tau)+\frac 1 2 \fbzero_a e_4(\tau) +\Big(\frac 12 f_a+\frac 18|f|^2 \fbzero_a\Big)e_3(\tau)\\
    &=e_a(v)+\frac 12\fbzero_a f^b e_b(v)+\frac 1 2 \fbzero_a e_4(\tau)+\Big(\frac 12 f_a+\frac 18|f|^2 \fbzero_a\Big).
    \end{split}
\end{equation*}
Hence,
 \bea\label{eq:def-fb}
 \fbzero_a=-\frac{f_a+2e_a(v)}{e_4(\tau) +f^b e_b(v)+\frac 1 4 |f|^2}.
 \eea
  This determines $\fbzero$.  Using the bound for  $f$ provided by Proposition \ref{Prop:Boundsfor-f}  and the bounds  for $\nab v= \widecheck{\nabb v}+a j$   in Section \ref{section:Quantbonds-Mint}  we deduce  that
  \beaa
  |\fbzero|\lesssim \max \{\de,a\}.
  \eeaa

  To determine $\lazero$, we impose the condition $\ezero_4(\tau)= {\ezero}_3(\tau)$. This gives
\begin{equation*}
    \lazero^2=\frac{1+\frac 12 f\cdot \fbzero+\frac 1{16} |f|^2 |\fbzero|^2+(\fbzero^b+\frac 14|\fbzero|^2 f^b) e_b(v)+\frac 14|\fbzero|^2 e_4(\tau)}{e_4(\tau)+f^b e_b(v)+\frac 14|f|^2}.
\end{equation*}
The denominator is away from zero since $e_4(\tau)$ is away from zero. To see the numerator is also away from zero, we notice that, similar to Section \ref{subsect:timefunction},
\begin{equation*}
    e_4(\tau)=\frac{2(r^2+a^2)}{r^2+a^2\cos\theta}+O(\delta)\geq 2+O(\delta),\quad \fbzero_a=-\frac{2e_a(v)}{e_4(\tau)}+O(\delta), \quad |\nab v|^2=\frac{a^2\sin^2\theta}{|q|^2}+O(\delta)\leq 1+O(\delta),
\end{equation*}
so the numerator is of the size
\begin{equation*}
    1+\fbzero^b e_b(v)+\frac 14|\fbzero|^2 e_4(\tau)+O(\delta)=1-\frac{|\nab v|^2}{e_4(\tau)}+O(\delta)\geq \frac 12+O(\delta).
\end{equation*}
Therefore $\lazero$ is well-defined, regular, and bounded from above and below (by a constant dependent on $\delta$ and $a$).

Note that $\ezero_a=e_a'+\frac 12 \fbzero_a e_4'$, so $\ezero_a$ is indeed tangent to $i(\OO_p)$. It is also straightforward to verify that $\Nv=\frac 12\left(\ezero_4-{\ezero}_3\right)$ is the unit normal of $i(\OO_p)$ on $\Sigma_\tau$.    
\end{proof}

\subsection{ Higher  order estimates and  proof of Lemma \ref{Lemma:estimate-trch'} }
\lab{subsubsect:proof-higher-order-f}

\subsubsection{Higher order estimates on $C_v$}
From now, since it is not relevant anymore, we consider  transformations of type   \eqref{eq:e-e'} with  $\lambda=1$, i.e.,
\begin{equation*}
    e_4'=e_4+f^a e_a+\frac 14 |f|^2 e_3,\quad e_a'=e_a+\frac 12 f_a e_3,\quad e_3'=e_3.
\end{equation*}
In particular, this does not change the result of Lemma \ref{Lemma:estimate-trch'} since the horizontal structure $H'$ (and hence $\nab'$) remains unchanged.

We want to estimate higher-order derivatives of $f$. Note that $f$ is only defined on each null generator, hence on the null cone $C_v$, so we should commute the equation of $f$ by $e_a'$ derivatives, which are tangent to the null cone. 

We first derive the bounds of $H'$-horizontal derivatives on spacetime geometric quantities in the original frame.
\begin{proposition}
    Let $\Gamma$ denote all Ricci coefficients defined in the original $e$-frame. We have, for each $\psi\in \Gamma$,
    \begin{equation}\label{eq:nab'nab'4onGamma}
        |(\nab',\nab'_4)^i \psi|\lesssim \left(1+|(\nab',\nab_4')^{\leq i-1}f|\right)|(\nab,\nab_4,f\nab_3)^{\leq i}\psi|.
    \end{equation}
    
\end{proposition}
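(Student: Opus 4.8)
The plan is to prove \eqref{eq:nab'nab'4onGamma} by induction on $i$, the engine being the schematic identity \eqref{eq:nab'andnab_4'on-H-tensors} together with the explicit formulas $\nab_{e_4'}=\nab_4+f^a\nab_a+\frac14|f|^2\nab_3$ and $\nab_{e_a'}=\nab_a+\frac12 f_a\nab_3$, valid since we now take $\lambda=1$. The case $i=0$ is trivial. For the inductive step I would apply \eqref{eq:nab'andnab_4'on-H-tensors} to write $(\nab',\nab_4')^i\psi$ as the sum of the three pieces $(\nab_{e_4'},\nab_{e_a'})^i\psi$, $\ f\cdot(\etab,f\cdot\chib)\cdot(\nab_{e_4'},\nab_{e_a'})^{\leq i-1}\psi$, and $(\nab',\nab_4')^{\leq i-1}\big(f\cdot(\etab,f\cdot\chib)\cdot\psi\big)$, and estimate each in turn.

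First I would expand the first two pieces by substituting the formulas for $\nab_{e_4'},\nab_{e_a'}$ and applying the Leibniz rule, keeping careful track of where each differentiation lands. A differentiation landing on $\psi$ or on one of its covariant derivatives produces one of $\nab$, $\nab_4$ or $f\nab_3$ acting on it: every $\nab_3$ is automatically accompanied by a factor of $f$ coming from the $\frac12 f_a$ or $\frac14|f|^2$ in the frame change, while the stray $f^a\nab_a$ term in $\nab_{e_4'}$ is harmless because $|f|\lesssim\delta\lesssim1$, so the $\psi$-factor produced is always an $(\nab,\nab_4,f\nab_3)^{\leq i}\psi$ string. A differentiation landing on an $f$-coefficient must be kept as a $\nab_{e_b'}$- or $\nab_{e_4'}$-derivative of $f$ (recall $f$ is defined only on $C_v$, so $\nab_b f$, $\nab_3 f$, $\nab_4 f$ are individually meaningless and only the tangential combinations $\nab_b' f$, $\nab_4'f$ occur); re-expressing $\nab_{e_b'}\phi=\nab_b'\phi+O(|f|^2|\chib|)$ and $\nab_{e_4'}\phi=\nab_4'\phi+O(|f|(|\etab|+|f||\chib|))$ for $H$-horizontal $\phi$, these become $\nab'$-derivatives of $f$ plus bounded corrections. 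Since in any composition the rightmost operator always differentiates $\psi$, at least one derivative falls on $\psi$, so the $f$-side collects at most $i-1$ derivatives; the remaining coefficients are built from $\etab$, $\chib$ and powers of $f$, all $O(1)$ by \eqref{eq:estim.Gac} and the explicit Kerr expressions for $\trchb$, $\chibh$, $\etab$ and the $1$-form $j$, which are smooth and bounded with bounded $(\nab,\nab_3,\nab_4)$-derivatives on $\RR_\delta$. Hence the first two pieces are $\lesssim (1+|(\nab',\nab_4')^{\leq i-1}f|)\,|(\nab,\nab_4,f\nab_3)^{\leq i}\psi|$, using also that $|(\nab',\nab_4')^{\leq i-1}f|\lesssim1$ in the inductive scheme of Lemma \ref{Lemma:estimate-trch'} within which this proposition is applied, so that a product of several such factors collapses into a single linear one.

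For the third piece I would Leibniz-expand the product and distribute the $\leq i-1$ derivatives among $f$, $(\etab,f\cdot\chib)$ and $\psi$. The $\nab'$-derivatives of $\psi$ of order $\leq i-1$ are handled by the inductive hypothesis, and so are those of $\etab$ and $\chib$ — these are themselves Ricci coefficients in $\Gamma$, and \eqref{eq:nab'nab'4onGamma} applied to them at order $\leq i-1$ yields $O\big(1+|(\nab',\nab_4')^{\leq i-2}f|\big)$ since $|(\nab,\nab_4,f\nab_3)^{\leq i-1}(\etab,\chib)|\lesssim1$; the $\nab'$-derivatives of $f$ are bounded by $|(\nab',\nab_4')^{\leq i-1}f|$. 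Multiplying and once more collapsing the bounded factors, this piece is $\lesssim(1+|(\nab',\nab_4')^{\leq i-1}f|)\,|(\nab,\nab_4,f\nab_3)^{\leq i-1}\psi|$, which is dominated by the right-hand side of \eqref{eq:nab'nab'4onGamma}. This closes the induction.

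The one genuinely delicate point is not any individual estimate but the combinatorics of the Leibniz expansion: organizing the many monomials so that each is unambiguously a product of a bounded geometric coefficient, a number of $\nab'$-derivatives of $f$ of order at most $i-1$, and a single $(\nab,\nab_4,f\nab_3)$-string on $\psi$, all the while respecting that $f$ lives only on $C_v$ so that only the combinations $\nab'f$, $\nab_4'f$ may appear. No step is hard in isolation; the care lies entirely in the bookkeeping.
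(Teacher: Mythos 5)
Your proof is correct and follows essentially the same route as the paper: induction on $i$ driven by the identities \eqref{eq:nab'i-on-H-tensor} and \eqref{eq:nab'andnab_4'on-H-tensors}, with derivatives landing on $\psi$ producing $(\nab,\nab_4,f\nab_3)$-strings, derivatives landing on $f$ remaining tangential (hence of order $\leq i-1$ since the rightmost derivative hits $\psi$), and lower-order $f$-factors collapsed using the boundedness available in the surrounding bootstrap. The only organizational difference is that the paper first closes the estimate for $\chib$ itself (because $\chib$ appears in the correction terms of \eqref{eq:nab'i-on-H-tensor}) and then treats the remaining elements of $\Gamma$, whereas you absorb the $\chib$, $\etab$ factors by applying the inductive hypothesis to them directly inside the third piece — the same bootstrapping idea, so no genuine gap.
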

\begin{proof}
    Recall the formula \eqref{eq:nab'i-on-H-tensor}
    \begin{equation}\label{eq:repeat-nab'kformula-main-text}
        \nab'_{a_1}\cdots\nab_{a_i}'\psi=\nab_{e_{a_1}'}\cdots \nab_{e_{a_i}'}\psi+f\cdot\chib\cdot(\nab_{e_{a}'})^{\leq i-1}\psi+\nab'^{\leq i-1}(f\cdot\chib\cdot\psi).
    \end{equation}
    Note that schematically
    \begin{equation*}
        \nab_{e_{a_1}'}\cdots \nab_{e_{a_i}'}\psi=\l 1,(\nab_{e_a'})^{\leq i-1}f \r\cdot (f\nab_3,\nab)^{\leq i}\psi.
    \end{equation*}
    We first apply \eqref{eq:repeat-nab'kformula-main-text} to $\chib=(\chibh,\trchb,\atrchb)$ to get
    \begin{equation*}
        \nab'_{a_1}\cdots\nab_{a_i}'\chib=\nab_{e_{a_1}'}\cdots \nab_{e_{a_i}'}\chib+f\cdot\chib\cdot(\nab_{e_{a}'})^{\leq i-1}\chib+\nab'^{\leq i-1}(f\cdot\chib\cdot\chib),
    \end{equation*}
    then by induction, we see that
    \begin{equation*}
        |\nab'_{a_1}\cdots\nab_{a_i}'\chib|\lesssim \l 1+O(|\nab'^{\leq i-1}f|)\r|(\nab,f\nab_3,\nab_4)^{\leq i}\chib|\lesssim 1+|\nab'^{\leq i-1}f|.
    \end{equation*}
    Then applying \eqref{eq:nab'i-on-H-tensor} to all other quantitiesin $\Gamma$ we get similar estimates. The case with $\nab_4'$ can be obtained similarly using \eqref{eq:nab'andnab_4'on-H-tensors}.
\end{proof}
We now commute the equation  \eqref{eq:nab_4'f} with $\nab'$ for $i$ times to get 
    \begin{equation*}
    \nab'_4 \nab'^i f+\nab'^i\Big(\big(2\om+\frac {1}2\trch\big) f\Big)=[\nab'_4,\nab'^i]f-2\nab'^i\xi+\nab'^{i}\big(-\chih\cdot f+\frac 12 \atrch \dual f+O(|f|^2)\big)
\end{equation*}
where $O(|f|^2)$ is an expression quadratic in $f$.
\begin{lemma}
\lab{Lemma.commutator-primes}
    We have 
    \begin{equation*}
        [\nab_4',\nab'^i]U=-i\chi\cdot \nab'^i U+\nab'^{\leq i} f\cdot \nab'^{\leq i}U+ \nab'^{\leq i-1}(\Gamma,f)\cdot(\nab'^{\leq i-1}\nab_4' U+\nab'^{\leq i-1}U).
    \end{equation*}
\end{lemma}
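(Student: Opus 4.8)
The plan is to first establish the one-step commutator $[\nab'_4,\nab'_b]U$ by specializing the universal commutation identity \eqref{eq:commutation4a} to the primed frame, re-expressing the primed Ricci and curvature coefficients in terms of the original ones via \eqref{eq:trasportation.formulas}, and then to obtain the general formula by induction on $i$ using the elementary operator identity $[\nab'_4,\nab'^i]=\nab'\circ[\nab'_4,\nab'^{i-1}]+[\nab'_4,\nab']\circ\nab'^{i-1}$.

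\textbf{The one-step commutator.} I would apply \eqref{eq:commutation4a} with every Ricci/curvature coefficient replaced by its primed analogue — legitimate since that identity holds for an arbitrary null frame (Section 2.2.7 of \cite{GKS}). By construction the primed frame is pre-geodesic, that is $\xi'=0$: indeed the transport equation \eqref{eq:nab_4'f} defining $f$ is exactly the statement $\xi'=0$, and this persists after resetting $\lambda=1$ since \eqref{eq:nab_4'f} depends only on the horizontal structure $\{e'_a\}$. Hence the term $\xi'_b\nab'_3\psi$ and the $\xi'$-summands in \eqref{eq:commutation4a} drop, which is what prevents any $\nab'_3$-derivative from entering; one is left with $[\nab'_4,\nab'_b]U=-\chi'_{bc}\nab'_cU+(\etab'_b+\zeta'_b)\nab'_4U$ plus terms of zeroth order in $U$ (sums of the shape $\chi'\cdot\etab'\cdot U$ and $\dual\b'\cdot U$). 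Substituting the $\lambda=1$ forms of \eqref{eq:trasportation.formulas} together with $\omb=0$, $\eta=\zeta$ (ingoing PG), namely $\chi'=\chi+\nab'f+f\cdot\Gamma+O(|f|^2)$, $\etab'=\etab+f\cdot\Gamma$, $\zeta'=\zeta+f\cdot\Gamma$, $\chib'=\chib$, $\b'=\b+f\cdot\Gamma+O(|f|^2)$ — where $\Gamma$ is understood to include also the curvature components occurring in \eqref{eq:commutation4a} and in the transformation of $\b$, all controlled by \eqref{eq:estim.Gac} — and folding bounded coefficients into the implicit constants of the schematic products, one arrives exactly at the $i=1$ case $[\nab'_4,\nab'_b]U=-\chi\cdot\nab'U+\nab'^{\leq1}f\cdot\nab'^{\leq1}U+(\Gamma,f)\cdot(\nab'_4U+U)$.

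\textbf{The induction.} Writing $\nab'^iU=\nab'(\nab'^{i-1}U)$ and commuting, the cross term $\nab'\nab'^{i-1}\nab'_4U=\nab'^i\nab'_4U$ cancels against $\nab'^i\nab'_4U$, leaving $[\nab'_4,\nab'^i]U=\nab'\big([\nab'_4,\nab'^{i-1}]U\big)+[\nab'_4,\nab']\big(\nab'^{i-1}U\big)$. To the last bracket I apply the one-step formula with $\nab'^{i-1}U$ playing the role of $U$, producing one copy of $-\chi\cdot\nab'^iU$ and errors of the allowed type after re-expanding $\nab'_4\nab'^{i-1}U=\nab'^{i-1}\nab'_4U+[\nab'_4,\nab'^{i-1}]U$ and invoking the inductive hypothesis. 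To the first term I apply the inductive hypothesis followed by the Leibniz rule: differentiating the inductive main term $-(i-1)\chi\cdot\nab'^{i-1}U$ yields $-(i-1)\chi\cdot\nab'^iU$ — which merges with the previous copy into $-i\chi\cdot\nab'^iU$ — together with $-(i-1)(\nab'\chi)\cdot\nab'^{i-1}U$; here I use \eqref{eq:nab'i-on-H-tensor} and the Proposition just proved (estimate \eqref{eq:nab'nab'4onGamma}) to place $\nab'\chi$ in the class $\nab'^{\leq i-1}(\Gamma,f)$, so this falls into the claimed lower-order error. Differentiating the remaining inductive error terms keeps everything inside $\nab'^{\leq i}f\cdot\nab'^{\leq i}U+\nab'^{\leq i-1}(\Gamma,f)\cdot(\nab'^{\leq i-1}\nab'_4U+\nab'^{\leq i-1}U)$ by Leibniz, which closes the induction.

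\textbf{Main difficulty.} The only real point is the bookkeeping rather than any estimate: getting the coefficient $i$ of the top-order term $\chi\cdot\nab'^iU$ exactly right — it accrues one unit from the one-step commutator applied to $\nab'^{i-1}U$ and $i-1$ further units from $\nab'$ hitting the $(i-1)\chi$ already present — and verifying that the genuinely new term $\nab'\chi\cdot\nab'^{i-1}U$, coming from differentiating an \emph{unprimed} Ricci coefficient inside a primed derivative, is indeed of lower order, which rests on \eqref{eq:nab'i-on-H-tensor} and on the boundedness of the $\nab_3$-derivatives of the background Ricci coefficients. The lemma is a structural identity, to be fed afterwards into the Grönwall/bootstrap bound for $\nab'^{\leq N}f$.
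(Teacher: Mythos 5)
Your proposal is correct and follows essentially the same route as the paper: apply the commutation formula \eqref{eq:commutation4a} in the primed frame (where $\xi'=0$ kills the $\nab_3'$ terms), substitute the $\lambda=1$ transformation formulas for $\chi',\etab',\zeta',\b'$ to get the one-step identity, and then iterate/induct to reach the stated schematic formula — the paper compresses your induction into "applying this formula recursively". Your explicit bookkeeping of the coefficient $i$ (one unit from the one-step commutator on $\nab'^{i-1}U$, plus $i-1$ from the Leibniz term where $\nab'$ falls on $\nab'^{i-1}U$ in the inductive main term, with $\nab'\chi\cdot\nab'^{i-1}U$ relegated to the lower-order class) is a correct elaboration of what the paper leaves implicit.
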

\begin{proof}
    Recall the commutation formula \eqref{eq:commutation4a} which can be written schematically as 
    \begin{equation*}
        [\nab_4',\nab']U=-\chi'\cdot \nab'U+(\etab'+\zeta')\cdot \nab_4' U+(\chi',\etab',\b')\cdot U.
    \end{equation*}
    We also have (recall we are now transforming with $\la=1$), schematically,
    \begin{equation*}
        \chi'=\chi+\nab'f+f\cdot(\eta,\zeta,f\cdot\chih),\quad \etab'=\etab+\frac 12 \chib\cdot f,\quad \zeta'=\zeta+\frac 12 \chib\cdot f,\quad \b'=f\cdot (\rho,\dual\rho,\b)
    \end{equation*}
    so we obtain
    \begin{equation*}
        [\nab_4',\nab']U=-\nab'f\cdot\nab'U-\chi\cdot \nab'U+(\Gamma,f)\cdot \nab_4' U+(\Gamma,\nab'f,f)\cdot U.
    \end{equation*}
    Then, applying this formula recursively,  we derive  the desired  formula in the lemma.
\end{proof}
Applying Lemma \ref{Lemma.commutator-primes} to  $U=f$\footnote{Here $f$ is understood as an $H'$-horizontal tensor as explained in Definition \ref{def:psit}.}, and replacing $\nab_4'f$ by the right hand side of \eqref{eq:nab_4'f}, which is schematically $\xi+f\cdot\Gamma$, we obtain
\begin{equation}\label{eq:nab'4nab'if}
\begin{split}
    \nab'_4 \nab'^i f+\left(2\om+\frac {i+1}2\trch\right)\nab'^i f&=\nab'^{\leq i}(\chih,\atrch)\nab'^{\leq i} f+\nab'^{\leq i}\xi+\nab'^{\leq i-1}\big((\nab'\om,\nab'\trch,\Gamma)\cdot f\big)\\
    &\quad +O(|\nab'^{\leq i} f|^2).
    \end{split}
\end{equation}

\subsubsection{Proof of Lemma \ref{Lemma:estimate-trch'}}

The following proposition concludes the proof of Lemma \ref{Lemma:estimate-trch'} and provides estimates of various quantities in the new frame.
\begin{proposition}
\label{Prop:intermediary}
    Suppose $f$ satisfies the equation \eqref{eq:nab_4'f} and is equal to $F$ on $S(v,r^*_\delta)$. Then:
    \begin{enumerate}
    \item  On  the initial sphere  $S(v,r^*_\delta)$ we have, {for all $0\le i \le N$},
    \beaa
        |\nab'^i f|\lesssim_N \delta.
    \eeaa
   
    \item The following bound holds true along $C_v$, for all $0\leq i\leq N$,
    \bea
    \sup_{C_v} |\nab'^{i}f|\lesssim_N \de.
\eea    
    \item  We have, along $C_v$,
    \bea\label{eq:nab'Riccibounds}
  \sup_{C_v}    |(\nab',\nab_4')^{i}\Gamma_1'|\lesssim_N 1, \quad i\leq N-1
\eea
where $\Gamma_1'=\{\chi',\chib',\etab',\zeta',\om'\}$ are  defined w.r.t. the frame $\{e_3',e_4',e_a'\}$ (transformed with $\la=1$).
    \end{enumerate}
\end{proposition}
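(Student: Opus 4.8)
The plan is to prove the three items in the stated order, since each relies on its predecessors. For item (1), on $S(v,r^*_\delta)$ we have $f=F$, $e_4'=\eS_4$, and $\eS_a=e_a'+\frac12\Fb_a e_4'$, so writing $e_a'=\eS_a-\frac12\Fb_a e_4'$ and using $C^\infty$-linearity of $\nab'$ in its lower slot, any string $\nab'^i$ acting on $f$ becomes a combination of tangential derivatives $\nab'_{\eS_a}$ and of $\nab'_4$, up to the bounded transformation parameter $\Fb=O(\max\{\delta,a\})$ and its tangential and $\nab_4$ derivatives (all controlled by \eqref{eq:estim.Gac}); the fact that $\Fb,\lazero$ need not be small when $a$ is not small is harmless, since the implicit constants are allowed to depend on $a,m,\delta$. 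Tangential derivatives of $F$ of all orders $\le N$ are bounded by $\min\{\delta,a\}\le\delta$ by Corollary~\ref{Cor:HigherboundF}, while every occurrence of $\nab'_4$ on $f$ is eliminated via the transport equation \eqref{eq:nab_4'f}, which expresses $\nab'_4 f$ through $\xi$, Ricci coefficients, and $f$ itself. On $\RR_\delta$ these Ricci coefficients are $O(1)$ (with $\xi,\chih,\atrch$ even $O(\epsilon_0)$) and $f=F=O(\delta)$ by Proposition~\ref{Prop:Boundsfor-f}, so an induction on $i$ gives $|\nab'^i f|\lesssim_N\delta$ on $S(v,r^*_\delta)$ for $0\le i\le N$.

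For item (2) we integrate the commuted transport equation \eqref{eq:nab'4nab'if} along the null generators of $C_v$; under the standing bootstrap \eqref{eq:bootstrap-trch'}, Proposition~\ref{Proposition:C_v is regular} guarantees that $C_v$ is regular on $[\tau_*,v-r^*_\delta]$ and all its generators remain in $\RR_\delta$. The structural key is that the effective transport weight in \eqref{eq:nab'4nab'if} is $2\om+\frac{i+1}{2}\trch$, which on $\RR_\delta$ is $\le -c\,\delta_*^{1/2}<0$ by Lemma~\ref{le:nab-omega-near-extremal} together with $|\trch|\lesssim\delta$ and $\delta$ chosen small. We induct on $i$, running a bootstrap $|\nab'^i f|\le C_N\delta$ at each level. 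On the right-hand side of \eqref{eq:nab'4nab'if}: the terms $\nab'^{\le i}(\chih,\atrch)\,\nab'^{\le i}f$ are $O(\epsilon_0)|\nab'^{\le i}f|$, since $\chih,\atrch$ and all their $(\nab',\nab'_4)$-derivatives are $O(\epsilon_0)$ by \eqref{eq:nab'nab'4onGamma} and \eqref{eq:estim.Gac}, and the $\nab'^i f$ piece is absorbed using $\epsilon_0\ll\delta_*$; $\nab'^{\le i}\xi$ is $O(\epsilon_0)$; the source $\nab'^{\le i-1}\big((\nab'\om,\nab'\trch,\Gamma)\cdot f\big)$ is handled by noting that its only non-small ingredient, $\nab_3\om=O(1)$, always appears paired with $f=O(\delta)$, so $\nab'\om=O(\delta_*^{1/2})$ and the whole term is $O(\delta_*^{1/2}\delta)$, which after integration against $e^{-c\delta_*^{1/2}(\cdot)}$ contributes only $O(\delta)$; and the quadratic remainder $O(|\nab'^{\le i}f|^2)$ yields $O(\delta)|\nab'^i f|+O(\delta^2)$, the first term absorbed once $\delta$ is small relative to $\delta_*^{1/2}$. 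The undifferentiated $f$ entering all these products is $O(\delta)$ by Proposition~\ref{Prop:Boundsfor-f}, so there is no circularity, and the commutator contribution from Lemma~\ref{Lemma.commutator-primes} is of the same admissible type once \eqref{eq:nab_4'f} is used to replace any $\nab'_4 f$. Closing the bootstrap level by level yields $\sup_{C_v}|\nab'^i f|\lesssim_N\delta$ for $0\le i\le N$, which in particular completes the proof of Lemma~\ref{Lemma:estimate-trch'}.

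For item (3), specialising the transformation formulas \eqref{eq:trasportation.formulas} to $\lambda=1$ gives $\chi'=\chi+\nab'f+f\cdot(\eta,\zeta,\chib)$, $\chib'=\chib$, $\etab'=\etab+\frac12 f\cdot\chib$, $\zeta'=\zeta-\frac12 f\cdot\chib$, while $\om'=0$ by construction. Applying $(\nab',\nab'_4)^i$ for $i\le N-1$ and expanding by Leibniz: the derivatives of the original Ricci coefficients are $\lesssim 1$ by \eqref{eq:nab'nab'4onGamma} and \eqref{eq:estim.Gac}; the term $(\nab',\nab'_4)^i\nab'f$ reduces to $(\nab',\nab'_4)^{\le i+1}f=O(\delta)$ by item (2), after using \eqref{eq:nab_4'f} to trade every $\nab'_4$ falling on $f$ for $\nab'$ and $\Gamma$ factors; and the products $f\cdot\Gamma$ contribute $O(\delta)$. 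Hence $\sup_{C_v}|(\nab',\nab'_4)^i\Gamma_1'|\lesssim_N 1$ for $i\le N-1$. The main obstacle is item (2): one must verify simultaneously that the transport weight has a definite negative sign of size $\sim\delta_*^{1/2}$ (so that Gr\"onwall produces a gain of $\delta_*^{-1/2}$), that the genuinely dangerous source term built from $\nab_3\om$ always carries a compensating power of $f$, and that the resulting estimate closes as a bootstrap — which is exactly where the hierarchy $\epsilon_0\ll\delta\ll\delta_*^{1/2}$ (with $m$-dependent constants) is used in each place.
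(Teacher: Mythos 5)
Your proposal follows essentially the same route as the paper: for (1) you decompose $e_a'$ on the initial sphere into its $\eS_a$-tangential and $e_4'$ parts (the paper's device of the cone-tangent vectorfield $e_a'+\frac 12 \Fb_a e_4'$), control tangential derivatives of $F$ by Corollary \ref{Cor:HigherboundF} and eliminate every $\nab_4'$ acting on $f$ via \eqref{eq:nab_4'f}; for (2) you run the same Gr\"onwall/bootstrap induction along the generators, driven by the negative weight $2\om+\frac{i+1}{2}\trch\lesssim -\delta_*^{1/2}$ together with $\nab'\om=O(\delta_*^{1/2})$ (the $\nab_3\om$ term being paired with $f=O(\delta)$) and $|\nab'^{i}\trch|\lesssim \delta+|\nab'^{\leq i-1}f|$; for (3) you differentiate the $\lambda=1$ transformation formulas and use the bounds from (2), exactly as in the paper. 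One small correction to (3): your claim that $\om'=0$ ``by construction'' is not valid here, since that normalization was imposed only for the geodesic frame with nontrivial $\lambda$ in \eqref{eq:prime-frame-r=r+delta}; in the $\lambda=1$ frame used for \eqref{eq:nab'Riccibounds} one has $\om'=\om+\frac 12 f\cdot(\zeta-\etab)+O(|f|^2)\neq 0$, but it is bounded (together with its $(\nab',\nab_4')$ derivatives) by the very same Leibniz argument you apply to $\chi',\etab',\zeta'$, so the stated conclusion is unaffected.
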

\begin{proof}[Proof of 1.]
   For the initial condition, note that $f|_{S(v,r^*_\delta)}= F|_{S(v,r^*_\delta)}$ means that the derivatives of $f$ in $\eS_a$ directions are the same as  those of $F$. We define the vector field
    \def\eSf{\, ^{(S,f)}e}
\begin{equation*}
\begin{split}
    \eSf_a&:=e_a+\frac 12 \Fb_a f^b e_b+\frac 12 \Fb_a e_4+\Big(\frac 12 f_a+\frac 18|f|^2 \Fb_a\Big)e_3=\Big(e_a+\frac 12 f_a e_3\Big)+\frac 12 \Fb_a \Big(e_4+f^b e_b+\frac 14|f|^2 e_3\Big)\\
    &=e_a'+\frac 12 \Fb_a e_4'.
    \end{split}
\end{equation*}
This vector field is the same as $\eS_a$ everywhere on $S(v,r^*_\delta)$ but has the advantage of  being  always tangent to the cone $C_v$.

We have, by \eqref{eq:nab'-on-H-tensors}, $\nab_a'f=\nab_{e_a'}f+f\cdot\chib=\nab_{\eSf_a}f-\frac 12 \Fb_a \nab_{e_4'}f+f\cdot \chib$ (the last term is written schematically), and 
\begin{equation*}
\begin{split}
    \nab_a'\nab_b' f&=\nab_{e_a'}\nab_{e_b'}f+{\nab'}^{\leq 1}(f\cdot \chib)=\nab_{\eSf_a}\nab_{e_b'}f-\frac 12 \Fb_a \nab_{e_4'}\nab_{e_b'}f+{\nab'}^{\leq 1}(f\cdot \chib)\\
    &=\nab_{\eSf_a}\nab_{\eSf_b}f-\nab_{\eSf_a}\Big(\frac 12 \Fb_b \nab_{e_4'}f\Big)-\frac 12 \Fb_a \nab_{e_4'}\nab_{e_b'}f+{\nab'}^{\leq 1}(f\cdot \chib)\\
    &=\nab_{\eSf_a}\nab_{\eSf_b}f+\nab_{\eSf_a}\big(\Fb\cdot (\nab_4'f,\Gamma\cdot f)\big)+\Fb\cdot \big(\nab_4'\nab'f+(\nab_4',\nab')^{\leq 1}(f\cdot\Gamma)\big)+\nab'^{\leq 1}(f\cdot\chib)\\
    &=\nab_{\eSf_a}\nab_{\eSf_b}f+\sum_{i+j\leq 1}(\nab,\nab_4)^{\leq i}\Fb\cdot \nab'^{\leq j}(f,\xi)
\end{split}
\end{equation*}
where we use many times \eqref{eq:nab'andnab_4'on-H-tensors}, the boundedness of Ricci coefficients \eqref{eq:nab'nab'4onGamma}, and the fact that in view of \eqref{eq:nab_4'f} and \eqref{eq:nab'4nab'if}, the expression ${\nab_4'}^{i} {\nab'}^j f$ can be written schematically as ${\nab'}^{\leq j}(f,\xi)$.

Therefore, repeating this, we obtain the schematic relation
\begin{equation*}
    \nab'_{a_1}\cdots\nab'_{a_i} f=\nab_{\eSf_{a_1}}\cdots \nab_{\eSf_{a_i}}f+ \sum_{j_1+j_2\leq i-1}(\nab,\nab_4)^{\leq j_1}\Fb \cdot  {\nab'}^{\leq j_2}(f,\xi).
\end{equation*}
When we evaluate this on $S(v,r_\delta^*)$, we have $\eSf_a=\eS_a$ tangent to the sphere, so the first term on the right hand side can be replaced by $\nab_{\eS_{a_1}}\cdots \nab_{\eS_{a_i}}F$. Therefore, using Corollary \ref{Cor:HigherboundF}, we obtain, on $S(v,r_\de^*)$,
\begin{equation*}
    |\nab'^i f|\lesssim \delta+|\nab'^{\leq i-1}f|,
\end{equation*}
so by induction we see that $|\nab'^i f|\lesssim_N \delta$ for all $i\leq N$.
\end{proof}

\begin{proof}[Proof of 2.]
We  derive the estimate of $\nab'^i f$ along the null cone.          Using \eqref{eq:nab-omega-near-extremal}, we have
\begin{equation*}
    \nab'_a\om=\nab_a \om+\frac 12 f_a \nab_3\om=O(\delta_*^\frac 12),
\end{equation*}
and by induction as before,
\begin{equation*}
    |\nab'^i\om|\lesssim \left(1+|\nab'^{\leq i-1} f|\right) \delta_*^\frac 12.
\end{equation*}
Also, since $(\nab,\nab_4)^{\leq i}\trch=(\nab,\nab_4)^{\leq i}(-\frac{2r\Delta}{|q|^2})+O(\epsilon_0)\lesssim \min\{\Delta,e_4(r),\epsilon_0\}\lesssim \delta$ in $\RR_\delta$,
using \eqref{eq:nab'nab'4onGamma} we have
\begin{equation*}
\begin{split}
    |\nab'^i \trch|&\lesssim (1+|\nab'^{\leq i-1 }f|)\Big( |(\nab,\nab_4)^{\leq i}\trch|+O(|\nab'^{\leq i-1}f|)|(\nab,\nab_4,\nab_3)^{\leq i}\trch|\Big) \\
    &\lesssim (1+|\nab'^{\leq i-1 }f|)\left(\delta+O(|\nab'^{\leq i-1}f|)\right).
    \end{split}
\end{equation*}
Therefore, we have
\begin{equation*}
    \nab'_4 \nab'^i f+\left(2\om+\frac {i+1}2\trch\right)\nab'^i f=O(\epsilon_0)\nab'^{\leq i} f+O(\epsilon_0)+O(\delta_*^\frac 12)\nab'^{\leq i-1}f+O(|\nab'^{\leq i} f|^2).
\end{equation*}
Then similar to the zero-order case, we have ($|\nab'^i f|^2:=(\nab'^i f)_{a_1\cdots a_i b}(\nab'^i f)^{a_1\cdots a_i b}$)
\begin{equation*}
    \frac{d}{ds}\left(|\nab'^i f|^2\right)+\left(2\om+\frac{i+1}2\trch\right)|\nab'^i f|^2=O(\epsilon_0)|\nab'^{\leq i} f|^2+O(\delta_*^\frac 12)|\nab'^{\leq i-1} f||\nab'^{\leq i} f|+O(|\nab'^{\leq i}f|^3).
\end{equation*}
We now assume by  induction  that we have proved the estimate for $i\leq j-1$. Then for $i=j$, we integrate the equation as the zero-order case to get \footnote{For simplicity, we still use the parameter $s$; one can also use $\tau$ as the parameter. Note that along each null generator we have $ds/d\tau=1/(e_4(v)+O(\delta))$ which is comparable to $1$.}
\begin{equation*}
\begin{split}
    |\nab'^{j} f|^2&\leq C_0 \delta^2+C\int_{s}^v e^{-C\delta_*^\frac 12 (s'-s)}\left(O(\epsilon_0)|\nab'^{\leq j} f|^2+O(\delta_*^\frac 12)|\nab'^{\leq j-1} f||\nab'^{\leq j} f|+|\nab'^{\leq j} f|^3\right)\, ds'\\
    &\leq  C_0\delta^2+C\Big(\int_0^\infty e^{-C\delta_*^\frac 12} s'\, ds'\Big)
    \cdot C \delta_*^\frac 12 \sup_{s\in [s_*,v]} \left(M|\nab'^{j-1} f|^2+\frac{1}{M}|\nab'^{\leq j}f|^2+ |\nab'^{\leq j}f|^3\right).
    \end{split}
\end{equation*}
This is now similar to the zeroth-order estimate. One can make a bootstrap assumption that
\begin{equation*}
    |\nab'^{\leq N} f|\leq C_b \delta, \quad \text{for }s\in [s_1,v],
\end{equation*}
and derive
\begin{equation*}
\begin{split}
    \sup_{s\in [s_1,v]} |\nab'^{\leq j}f|^2 &\leq C_0\delta^2+C \delta_*^{-\frac 12}\cdot \delta_*^\frac 12 \Big(M\delta^2+\Big(\frac{1}{M}+C_b\delta\Big)\sup_{s\in[s_1,v]}|\nab'^{\leq j}f|^2\Big)\\
    &\leq C_0 \delta^2+CM\delta^2+\left(\frac{C}{M}+C_b \delta\right) \sup_{s\in[s_1,v]} |\nab'^{\leq j}f|^2,
    \end{split}
\end{equation*}
so picking suitably large constants $M$, $C_b$, we obtain the estimate for $j=N$ and improve the bootstrap assumption. Therefore, the estimate $|\nab'^{\leq N}f|\lesssim \delta$ is valid along each null generator until $s$ reaches the value corresponding to the point on $\Sigma_{\tau_*}$ (recall the definition of $\tau_*$ from the bootstrap argument of Proposition \ref{prop:estimate-trch'}). This concludes the proof of the Lemma \ref{Lemma:estimate-trch'}. 
\end{proof}
\begin{proof}[Proof of 3.]
As a consequence of the proof of Part 2, we also obtain the boundedness of  the  quantities $\Gamma_1'=\{\chi',\chib',\etab',\zeta',\om'\}$  in the new frame $\{e_3',e_4',e_a'\}$:
\begin{equation*}
    |(\nab',\nab_4')^{\leq i}\Gamma_1'|\lesssim 1, \quad i\leq N-1.
\end{equation*}
These estimates  follow by  applying $\nab'$ and $\nab_4'$ to  the corresponding transformation formulas and  using the estimates in Part 2. This ends the proof of Proposition \ref{Prop:intermediary}.
\end{proof}

\subsubsection{Higher order estimates on \texorpdfstring{$S_v^\tau$}{Svtau}}
\label{subsubsection:Higher-order-sphere}

Recall that we have shown that for fixed $v,\tau$, $S_v^\tau=C_v\cap \Si_\tau$ is an embedded submanifold, so Lemma \ref{lem:ezero-on-r} becomes a global statement on $S_v^\tau$, i.e., 
there exists  a frame transformation $(\, ^{(0)}f, \fbzero$, $\lazero)$ from $\{e_3,e_4,e_a\}$ on $S_v^\tau$ such that $\{\ezero_a\}$ is tangent to $S_v^\tau$, and $\Nv=\frac 12 (\ezero_4-{\ezero}_3)$ is the outward unit normal vector of $S_v^\tau$ on $\Sigma_\tau$.
In this subsection, since there is no danger of confusion, we denote for simplicity $\fb=\fbzero$, $\la=\lazero$, $\Sigma=\Sigma^\tau_v$, $S=S_v^\tau$. 

We need to prove higher-order bounds of $(\fb,\lambda)$ under the projected covariant derivatives adapted to the sphere $S$. Recall that the adapted frame $\{\ezero_3,\ezero_4,\ezero_a\}$ is transformed from $\{e_3,e_4,e_a\}$ through $(f,\fb=\fbzero,\lambda=\lazero)$. Note that
\begin{equation*}
    \ezero_a=e_a'+\frac 12 \fb_a e_4',
\end{equation*}
and we have estimates of various quantities in the $e'$-frame from Proposition \ref{Prop:intermediary}.
Using this expression of $\ezero_a$, and that $\fb$ is defined by $f$ through \eqref{eq:def-fb}, we immediately have, by the control of $(\nab',\nab_4')^{\leq i}f$,
\begin{equation}\label{eq:nab''i-fb}
    |(\nab'_{\ezero_a})^{\leq i} \fb|\lesssim 1+|(\nab'_{\ezero_a})^{\leq i-1} \fb|, \text{ so by induction, }|(\nab'_{\ezero_a})^{\leq i} \fb|\lesssim_i 1.
\end{equation}
We now prove
\def\nabzero{\, ^{(0)}\nab}
\begin{proposition}\label{Prop:bounds-in-double-primed-nabla}
    We have, on $S$,
    \bea\label{eq:nabzero-f-fb}
    |\nabzero^{\leq i}(f,\fb)
    |\lesssim_N 1,\quad i\leq N-1.\eea
\end{proposition}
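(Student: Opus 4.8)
The plan is to convert the bounds already obtained for $\nabb'$-derivatives of $f$ along the cone (Proposition \ref{Prop:intermediary}) and for $\nabb'_{\ezero_a}$-derivatives of $\fb$ on $S$ (equation \eqref{eq:nab''i-fb}) into bounds for the intrinsic covariant derivative $\nabzero$ adapted to the embedded sphere $S=S_v^\tau$. The key point is that $\nabzero$ and $\nabb'$ differ only by lower-order terms involving the transformation parameters $(f,\fb)$ and the Ricci coefficients, all of which are already controlled. Concretely, $\nabzero_{\ezero_a}$ is the projection onto $\, ^{(0)}H$ of $D_{\ezero_a}$, while $\nabb'_{\ezero_a}$ is the projection onto $H'$ of the same operator; since the frame transformation from $\{e_3',e_4',e_a'\}$ to $\{\ezero_3,\ezero_4,\ezero_a\}$ is the type-$(0,\fb,\la)$ transformation, an analogue of formula \eqref{eq:nab'-on-H-tensors} (or rather its $e_3$--$e_4$ dual, cf. part 3 of Lemma \ref{Le:trasportation.formulas}) gives, schematically,
\begin{equation*}
    \nabzero_{\ezero_a}\psi = \nabb'_{\ezero_a}\psi + \fb\cdot \chi'\cdot \psi,
\end{equation*}
for any $\, ^{(0)}H$-horizontal tensor $\psi$ that is identified with an $H'$-horizontal tensor as in Definition \ref{def:psit}. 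Here $\chi'$ plays the role that $\chib$ played in \eqref{eq:nab'-on-H-tensors}, because the roles of $e_3$ and $e_4$ are interchanged.

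First I would record the precise schematic commutation identity above, iterate it to get
\begin{equation*}
    \nabzero^{\,i}\psi = (\nabb'_{\ezero_a})^i\psi + \sum_{j\leq i-1}(\nabb'_{\ezero_a})^{\leq j}\big(\fb\cdot\chi'\cdot(\nabb'_{\ezero_a})^{\leq i-1-j}\psi\big),
\end{equation*}
exactly as in part 2 of Lemma \ref{Le:trasportation.formulas}. Then I would apply this with $\psi = f$ and with $\psi=\fb$. For the term $(\nabb'_{\ezero_a})^{\leq i}f$: since $\ezero_a = e_a' + \frac12\fb_a e_4'$, expanding $\nabb'_{\ezero_a}$ in terms of $\nabb'_a$ and $\nabb'_4$ produces, by Proposition \ref{Prop:intermediary} (parts 2 and 3, controlling $(\nabb',\nabb'_4)^{\leq i}f$ and $(\nabb',\nabb'_4)^{\leq i}\Gamma_1'$) together with the bounds $|(\nabb'_{\ezero_a})^{\leq i}\fb|\lesssim 1$ from \eqref{eq:nab''i-fb}, a bound of the form $|(\nabb'_{\ezero_a})^{\leq i}f|\lesssim_i \delta + (\text{lower-order in }\fb)$, hence $\lesssim_i 1$ after induction. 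For $\psi=\fb$, \eqref{eq:nab''i-fb} already gives $|(\nabb'_{\ezero_a})^{\leq i}\fb|\lesssim_i 1$ directly. Feeding these and the bound $|(\nabb',\nabb'_4)^{\leq i}\chi'|\lesssim 1$ into the iterated commutation identity yields $|\nabzero^{\leq i}(f,\fb)|\lesssim_N 1$ for $i\leq N-1$, which is \eqref{eq:nabzero-f-fb}.

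The main obstacle is bookkeeping rather than anything deep: one must verify that in every step where a new factor of $\fb$ or $\chi'$ is generated, it carries no more derivatives than are already under control (i.e. the induction closes, with $\nabb'_4$-derivatives never appearing on the right that were not already estimated in Proposition \ref{Prop:intermediary}). The one genuinely delicate point is that $\nabb'_{\ezero_a}$ involves $\nabb'_4$, and $\nabb'_4 f$ must be re-expressed via the transport equation \eqref{eq:nab_4'f} as $\xi + f\cdot\Gamma$ — exactly the reduction already used in the proof of Proposition \ref{Prop:intermediary} — so that no uncontrolled $\nabb'_4$-derivative survives; the same remark applies to $\nabb'_4 \fb$ via the definition \eqref{eq:def-fb} and the commutation formula \eqref{eq:nab'andnab_4'on-H-tensors}. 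Once this reduction is in place, the estimate is a routine induction on $i$, and the implicit constants depend only on $N$, $\delta$, $a$, $m$ and not on $v$, $\tau$, $\tau_*$ or the point $p$.
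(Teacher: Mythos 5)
Your proposal is correct and follows essentially the same route as the paper: reduce to $\lazero=1$, use the $e_3$--$e_4$ dual of the change-of-connection formula (error terms of the form $\fb\cdot\chi'\cdot\psi$), control the $\nab'_4$-derivatives of $f$ via the transport equation, and close an induction on $i$ using \eqref{eq:nab''i-fb}, Proposition \ref{Prop:intermediary} and \eqref{eq:nab'Riccibounds}. The only difference is organizational — the paper first applies the formula to $\psi'=\chi'$ and keeps $\nabzero^{\leq i-1}$ on the error terms for the induction, while you expand fully in $\nab'_{\ezero_a}$-derivatives — which is immaterial.
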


\begin{proof}
The estimate has been proved for $i=0$. We now assume that the estimate holds for $i-1$ and   prove it for $i$.
    
    As before, since the horizontal structure $^{(0)} H$ (given by the new $\ezero$ frame, see \eqref{eq:expression-f0-fb0-la0}) does not depend on $\lazero$, we prove this under the frame given by $(f,\fb)$ with the simplifying assumption  $\lazero=1$, so
    \begin{equation*}
        \ezero_4=e_4',\quad \ezero_a=e_a'+\frac 12 \fb_a e_4',\quad \ezero_3=e_3'+\fb^a e_a'+\frac 14 |\fb|^2 e_3'.
    \end{equation*}

    In this setting, by Lemma \ref{Le:trasportation.formulas}, for any $H'$-horizontal covariant tensor $\psi'$, we have
    \begin{equation*}
        \nabzero^i\psi'=(\nab'_{\ezero_a})^{\leq i}\psi+\fb\cdot\chi'\cdot(\nab'_{\ezero_{a}})^{\leq i-1}\psi'+\nabzero^{\leq i-1}(\fb\cdot\chi'\cdot\psi').
    \end{equation*}
    Let $\psi'=\chi'$ first. This gives
    \begin{equation*}
    \begin{split}
        |\nabzero^{\leq i}\chi'|&\lesssim |(\nab'_{\ezero_a})^{\leq i}\chi'|+|\nabzero^{\leq i-1} \chi'|^2|\nabzero^{\leq i-1}\fb|\\
        &\lesssim |(\nab',\nab'_4)^{\leq i}\chi'|\left(1+|(\nab_{\ezero_a})^{\leq i-1}\fb|\right)+|\nabzero^{\leq i-1} \chi'|^2|\nabzero^{\leq i-1}\fb|.
        \end{split}
    \end{equation*}
Then by \eqref{eq:nab''i-fb}, the bounds of Ricci coefficients in $e'$-frame \eqref{eq:nab'Riccibounds}, and the induction assumption for $i-1$, we obtain the boundedness of $|\nabzero^{\leq i}\chi'|$.
    Then, applying this estimate to $\psi'=(f,\fb)$ and using \eqref{eq:nab''i-fb}, we establish the required estimate for $i$ and conclude the proof.
\end{proof}
It is then also straightforward to derive
\begin{equation}\label{eq:nabzero-la-Ricci'}
    |\nabzero^i(\lambda',\lambda'^{-1},\chi',\chib',\om',\zeta',\etab')|\lesssim_N 1,\quad i\leq N-1.
\end{equation}
Note that those Ricci coefficients are still the ones in the $e'$ frame.

We are now ready to derive the main estimate we need. Denote the Levi-Civita connection on $\Sigma$ by $\overline D$. The extrinsic curvature of $S$ on $\Sigma$ is given by
\begin{equation*}
    k_{ab}=g\left(\overline D_{\ezero_a} \Nv,{\ezero}_b\right)=\frac 12 g\left(\overline D_{\ezero_a} (\ezero_4- {\ezero}_3),\ezero_b\right)=\frac 12\left(\, ^{(0)}\chi_{ab}-{^{(0)}\chib}_{ab}\right).
\end{equation*}
In view of the transformation  formulas \eqref{eq:trasportation.formulas}
\begin{equation*}
    \lambda^{-1}{^{(0)}\chi}=\chi',\quad \lambda\, {^{(0)}\chib}=\chib'+{^{(0)}\nab}\underline f+\underline f\cdot(\zeta',\etab')+\fb\cdot\fb\cdot(\om',\chi'),
\end{equation*}
and the estimates \eqref{eq:nabzero-f-fb}, \eqref{eq:nabzero-la-Ricci'}, we obtain the following estimate
\begin{proposition}
    For every fixed $v,\tau$ satisfying $0\leq \tau\leq v-r_\delta^*$, the extrinsic curvature $k$ of $S_v^\tau$ on $\Sigma_\tau$ satisfies  $|\nabzero^{\leq i}k|\lesssim_N 1$, $i\leq N-2$, where the implicit constant is independent of $v,\tau$.
\end{proposition}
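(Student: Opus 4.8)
The plan is to reduce the claim entirely to the uniform bounds already obtained in Proposition~\ref{Prop:bounds-in-double-primed-nabla} and in \eqref{eq:nabzero-la-Ricci'}, using the transformation formulas of Lemma~\ref{Le:trasportation.formulas}. First I would start from the identity $k_{ab}=\tfrac12\big({}^{(0)}\chi_{ab}-{}^{(0)}\chib_{ab}\big)$ and substitute the two transformation relations ${}^{(0)}\chi=\lambda\,\chi'$ and $\lambda\,{}^{(0)}\chib=\chib'+\nabzero\fb+\fb\cdot(\zeta',\etab')+\fb\cdot\fb\cdot(\om',\chi')$ coming from \eqref{eq:trasportation.formulas}, applied to the transformation $(0,\fbzero,\lazero)$ from the $e'$-frame to the frame $\{\ezero_3,\ezero_4,\ezero_a\}$ adapted to $S_v^\tau$ (here, as in Section~\ref{subsubsection:Higher-order-sphere}, $\fb=\fbzero$, $\lambda=\lazero$). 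This writes $k$ as one fixed algebraic, tensorial expression built from $\lambda$, $\lambda^{-1}$, $\fb$, $\nabzero\fb$, and the $e'$-frame Ricci coefficients $\chi',\chib',\om',\zeta',\etab'$, with a single factor of $\lambda^{-1}$.

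Next I would apply $\nabzero^{\leq i}$ to this expression and expand by the Leibniz rule. This is legitimate because $\nabzero$, restricted to $^{(0)}H$-horizontal ($=S_v^\tau$-tangent) tensors and differentiated in $S_v^\tau$-tangent directions, coincides with the induced Levi-Civita connection of $S_v^\tau$, and the transformation relations above are pointwise tensorial identities. Since all quantities are written in orthonormal frames, the norm of each term of the expansion is bounded by a product of factors of one of the types $|\nabzero^{\leq i}\lambda^{\pm1}|$, $|\nabzero^{\leq i}(\chi',\chib',\om',\zeta',\etab')|$, $|\nabzero^{\leq i}\fb|$ — all $\lesssim_N 1$ for $i\leq N-1$ by \eqref{eq:nabzero-la-Ricci'} and Proposition~\ref{Prop:bounds-in-double-primed-nabla} — together with the type $|\nabzero^{\leq i}(\nabzero\fb)|=|\nabzero^{\leq i+1}\fb|$, which is $\lesssim_N 1$ only for $i+1\leq N-1$, i.e.\ $i\leq N-2$. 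Hence $|\nabzero^{\leq i}k|\lesssim_N 1$ for $i\leq N-2$. The loss of exactly one derivative is forced by the term $\nabzero\fb$ in the transformation law for ${}^{(0)}\chib$; every other term in the expansion requires at most $N-1$ derivatives of $\lambda^{\pm1}$, $\fb$, or the $e'$-frame Ricci coefficients, which are available.

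Finally, the uniformity in $(v,\tau)$ is automatic, since the constants in Proposition~\ref{Prop:bounds-in-double-primed-nabla}, in \eqref{eq:nabzero-la-Ricci'}, and in the attendant bounds for $\lambda^{\pm1}$ are all independent of $v$ and $\tau$ (they trace back to Proposition~\ref{Prop:intermediary} and Lemma~\ref{le:nab-omega-near-extremal}, whose bootstrap constants carry no $v,\tau$ dependence), so the bound on $k$ inherits this uniformity. There is no genuine analytic difficulty here once Proposition~\ref{Prop:bounds-in-double-primed-nabla} is available; the main point to get right is the derivative bookkeeping in the Leibniz expansion, which is exactly what pins down the range $i\leq N-2$.
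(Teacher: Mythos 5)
Your proposal is correct and follows essentially the same route as the paper: decompose $k=\tfrac12\big({}^{(0)}\chi-{}^{(0)}\chib\big)$, substitute the transformation formulas \eqref{eq:trasportation.formulas} for the frame change $(0,\fbzero,\lazero)$, and invoke the uniform bounds \eqref{eq:nabzero-f-fb} and \eqref{eq:nabzero-la-Ricci'}. Your explicit Leibniz bookkeeping, identifying the term $\nabzero\fb$ as the source of the one-derivative loss that fixes the range $i\leq N-2$, simply spells out what the paper leaves implicit.
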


\begin{remark}
It is interesting to point out the  connection between our proof below and  the well  known  Cheeger-Gromov (CG) convergence theorem (see e.g. \cite{Cheeger1970}, \cite{Gromov1998}).
 First, CG does not guarantee that the limit  sphere is embedded in $\Sigma_\tau$, while we want to study the regularity of the pointwise limit of a family of spheres $\{S^\tau_v\}$ on $\Sigma_\tau$. Second, using our estimates, we can show that
    \begin{equation*}
        K(S_v^\tau)=-\, ^{(0)}\rho+\frac 1 2\, ^{(0)}\chih\cdot\, ^{(0)}\chibh+\frac 14 \, ^{(0)}\trch \, ^{(0)}\trchb=\Re\left(\frac{2m}{q^3}\right)+O(\delta),
    \end{equation*}
    so Gauss curvature may not be positive for large $a$. This prevents direct use of the Myers theorem \cite{Myers} to obtain the diameter estimate required in the convergence theorem. 
\end{remark}

\def\Db{\overline{D}}

\subsection{Estimate of the graph function}

From above we know that $i(\mathbb{S}^2)$, also denoted by $S=S_v^\tau$, is an embedded sphere on $\Sigma=\Sigma_\tau$. Furthermore it can be written as a graph function, i.e., there is a function $R\colon \mathbb{S}^2 \to [r_+-\delta,r_+ +\delta]$ so that $\Phi_{\Sigma}\{(R(p),p)\colon p\in \mathbb{S}^2\}=S$.

Recall in \eqref{eq:equilvalence-metrics-on-Sigma} that pullback metric on $(r_+-2\delta,r_+ +2\delta)\times \mathbb{S}^2$ through $\Phi_{\Sigma}$, denoted by $\overline g$, is comparable with the natural metric $\overline g_0$ on $(r_+-2\delta,r_+ +2\delta)\times \mathbb{S}^2$. We take a spherical coordinate $(\theta,\varphi)$, switched to $(x^1,x^2)$ near the poles. In the following, the partial derivative $\pa$ should be understood in these two charts of $\mathbb{S}^2$.

\begin{lemma}
\lab{Lemma:Derivatives-R}
We have the estimates  \begin{equation*}
        |\pa R|\lesssim \delta,\quad 
        |\pr^i R|\lesssim 1, \quad i\leq N.
    \end{equation*}
    In particular the constants do not depend on $v$, $\tau$.
\end{lemma}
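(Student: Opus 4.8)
The plan is to translate the frame estimates just established into statements about the graph function $R$. First, recall the setup: $S = S_v^\tau$ is embedded in $\Sigma = \Sigma_\tau$, written as $\Phi_\Sigma(\{(R(p),p)\}) = S$, and on $S$ we have the adapted frame $\{\ezero_3,\ezero_4,\ezero_a\}$ with $\{\ezero_a\}$ tangent to $S$ and $\Nv = \tfrac12(\ezero_4 - \ezero_3)$ the outward unit normal. The key point for the first, $\delta$-small, estimate is the Remark following Lemma \ref{Lemma:ezero}, which gives $\ezero_a(r) = \tfrac12\fbzero_a e_4(r) + O(|f|) = O(\delta)$, since $e_4(r) = \Delta/|q|^2 + \widecheck{e_4(r)} = O(\delta)$ on $\RR_\delta$ and $|f|\lesssim\delta$ by Proposition \ref{Prop:Boundsfor-f}. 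Because $\{\ezero_a\}$ is an orthonormal frame on $S$ w.r.t.\ the induced metric $\overline g$, and $\overline g$ is uniformly comparable to the Euclidean reference metric $\overline g_0$ by \eqref{eq:equilvalence-metrics-on-Sigma}, the coordinate vectors $\pa_\theta, \pa_\vphi$ (or $\pa_{x^1},\pa_{x^2}$) pushed forward to $S$ are bounded linear combinations of the $\ezero_a$; pairing with $dr$ gives $|\pa R| = |\ezero_a(r)\text{-type quantities}| \lesssim \delta$. More carefully: on the graph, $\pa_i$ in the $\mathbb{S}^2$-chart corresponds to $\overline\pa_i + (\pa_i R)\,\overline\pa_r$ on $\Sigma$ (using the coordinates $(r,\theta,\vphi)$ on $\Sigma_\tau$ from Section \ref{subsect:timefunction}), and applying $r$ to this and comparing with $\ezero_a(r)=O(\delta)$, together with \eqref{eq:grr} which bounds $g(\overline\pa_r,\overline\pa_r)$ from below by $1$, yields $|\pa R|\lesssim\delta$.

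For the higher-order bounds $|\pa^i R|\lesssim 1$, $i\le N$, I would argue by induction on $i$, feeding in the estimates $|\nabzero^{\leq i}(f,\fb)|\lesssim_N 1$ from Proposition \ref{Prop:bounds-in-double-primed-nabla} together with $|\nabzero^{\leq i}(\lambda',\lambda'^{-1},\chi',\chib',\om',\zeta',\etab')|\lesssim_N 1$ from \eqref{eq:nabzero-la-Ricci'}. The tangent vectors $\ezero_a$, written in the $\Sigma_\tau$-coordinate basis, have components that are controlled rational expressions in $(f,\fb,\lambda)$ and the metric coefficients of $\overline\g_{a,m}$ (which are smooth functions of $r$ bounded on $\RR_\delta$), hence their coordinate components and all $\nabzero$-derivatives are bounded. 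Since the graph $S$ is the image of $\mathbb{S}^2$ under $p\mapsto(R(p),p)$, its tangent space at each point is spanned by $\overline\pa_i + (\pa_i R)\overline\pa_r$; expressing this basis in terms of the bounded orthonormal frame $\{\ezero_a\}$ and differentiating the resulting relations repeatedly — converting each $\nabzero$ derivative of $\ezero_a$ into coordinate $\pa$-derivatives at the cost of lower-order terms and bounded Christoffel symbols of $\overline g$ (bounded because $\overline g$ and its derivatives are controlled by \eqref{eq:equilvalence-metrics-on-Sigma} and the explicit form of $\overline\g_{a,m}$) — gives an elliptic-type recursion: $|\pa^i R|$ is bounded by $1$ plus terms involving $|\pa^{\le i-1}R|$ and the already-bounded frame and Ricci quantities. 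The induction then closes.

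The main obstacle I anticipate is bookkeeping the precise relation between the intrinsic covariant derivatives $\nabzero$ (the projected horizontal connection adapted to $S$) appearing in Proposition \ref{Prop:bounds-in-double-primed-nabla} and the flat coordinate derivatives $\pa$ on the $\mathbb{S}^2$-charts that appear in the statement. One must check that the discrepancy — the difference between $\nabzero_{\ezero_a}$ and $\pa_i$ acting on scalars, and more importantly the second fundamental form / connection terms that enter when one differentiates the frame vector fields $\ezero_a$ themselves — is uniformly bounded, independent of $v,\tau$. This uses that $S$ sits in the region $\RR_\delta$ where the background metric $\overline\g_{a,m}$ has all coordinate derivatives bounded on $(r_+-2\delta,r_++2\delta)\times\mathbb{S}^2$, that the perturbation is $O(\epsilon_0)$ in these coordinates by \eqref{eq:perturbed-metric-expression}, and that the frame data $(f,\fb,\lambda)$ and the $e'$-frame Ricci coefficients are bounded by Proposition \ref{Prop:intermediary}. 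Once this comparison is set up cleanly — essentially observing that all relevant structure functions live in a fixed compact coordinate region with uniformly bounded geometry — the estimate for $R$ is a routine consequence, with the uniformity in $v,\tau$ following because every constant invoked (from \eqref{eq:equilvalence-metrics-on-Sigma}, \eqref{eq:grr}, Propositions \ref{Prop:intermediary} and \ref{Prop:bounds-in-double-primed-nabla}, \eqref{eq:nabzero-la-Ricci'}) is itself independent of $v,\tau$.
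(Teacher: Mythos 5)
Your first-order estimate is essentially the paper's argument: both use that every unit tangent vector of $S$ has $r$-derivative $O(\delta)$ (Lemma \ref{lem:ezero-on-r}), the comparability \eqref{eq:equilvalence-metrics-on-Sigma} of $\overline g$ with the reference metric, and the absorption of the factor $1+|\pa_\theta R|$, so that part is fine.

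The higher-order part, however, has a genuine gap, located exactly at the step you defer as ``the main obstacle I anticipate''. The recursion you assert, $|\pa^i R|\lesssim 1+|\pa^{\le i-1}R|+(\text{bounded frame quantities})$, is not justified as stated: when you differentiate the relation expressing the graph basis $\pat_i=\pa_i+(\pa_i R)\,\overline\pa_r$ through the orthonormal frame $\{\ezero_a\}$, the change-of-basis coefficients $\overline g(\pat_i,\ezero_a)$ and the connection/second-fundamental-form terms generated by differentiating the frame contain derivatives of $R$ of \emph{top} order, so $\pa^i R$ reappears on the right-hand side with an a priori $O(1)$ coefficient and the induction does not visibly close. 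The paper removes this circularity by a structural identity in the spirit of \cite{SmithAAlemma}: with $N_{\overline g}=\mathrm{grad}_\Sigma(r-R)$ one has $\pa_i\pa_j R=|N_{\overline g}|\,k(\pat_i,\pat_j)+\mathcal{N}(\Gamma_{\overline g},\pa R)$, and inductively, via \eqref{eq:nabNk-expression}, $|N_{\overline g}|\,\nabzero^m k(\pat_i,\pat_j;\pat_{l_1},\cdots,\pat_{l_m})=\pa_{l_m}\cdots\pa_{l_1}\pa_i\pa_j R+\mathcal{N}(\pa^{\le m+1}R,\pa^{\le m+1}\overline g)$; thus the pure top-order derivative of $R$ is \emph{exactly} the quantity $\nabzero^m k$, and everything else is of strictly lower order in $R$. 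This hinges on the uniform bounds $|\nabzero^{\le i}k|\lesssim 1$, $i\le N-2$, for the extrinsic curvature of $S^\tau_v$ in $\Sigma_\tau$ (the proposition proved immediately before the lemma), which your proposal never invokes: you cite Propositions \ref{Prop:intermediary} and \ref{Prop:bounds-in-double-primed-nabla} and \eqref{eq:nabzero-la-Ricci'}, from which those $k$-bounds are \emph{derived}, but they do not by themselves order the derivatives of $R$. Alternatively, your frame-based route can be salvaged by noting that the only top-order occurrence of $\pa^i R$ on the right-hand side arises from differentiating the change-of-basis matrix and is multiplied by the small factor $\ezero_a(r)=O(\delta)$, hence can be absorbed into the left-hand side for $\delta$ small; but this absorption mechanism (or, equivalently, the second-fundamental-form identity) is precisely the missing idea, and without one of the two the proposed induction does not go through.
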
 

\begin{proof} 

To prove the estimates, we focus on the chart $(r,\theta,\varphi)$; the argument for the other chart is the same. To start with, we derive the estimate of the first-order derivative.

{\bf First order derivatives.}
    Let $\widetilde\pa_{\theta}:=\pa_{\theta}+(\pa_{\theta}R)\overline\pa_r$, $\widetilde\pa_{\vphi}:=\pa_{\vphi}+(\pa_{\vphi}R)\overline\pa_r$. Then $\widetilde\pa_{\theta}r=\pa_{\theta}R$, $\pat_{\vphi} r=\pa_{\vphi} R$, so $\pat_{\theta},\pat_{\vphi}$ are tangent to $S$ as we get zero when applying them to $r-R(\theta,\varphi)$. 

    By Lemma \ref{lem:ezero-on-r}, for any unit vector $V$ tangent to $S$, we have $|V(r)|\lesssim \delta$, so in particular,
    \begin{equation*}
        |\pat_\theta|_{\overline g}^{-1}|\pat_\theta r|\lesssim \delta,\quad |\pat_\vphi|_{\overline g}^{-1}|\pat_\vphi r|\lesssim \delta.
    \end{equation*}
    Since the metric coefficients on $\Sigma$ under $(r,\theta,\varphi)$ are bounded, we have
    \begin{equation*}
        |\pat_\theta|_{\overline g}=|\pa_\theta+(\pa_\theta R)\overline{\pa}_r|_{\overline g}\lesssim 1+|\pa_\theta R|,\text{ and similarly, }|\pat_\varphi|_{\overline g}\lesssim 1+|\pa_\vphi R|.
    \end{equation*}
    Therefore,
    \begin{equation*}
    |\pa_{\theta}R|=|\widetilde\pa_\theta r| \leq C\delta (1+|\pa_\theta R|),
\end{equation*}
which yields $|\pa_\theta R|\lesssim \delta$. Similar for $|\pa_\varphi R|$.

{\bf Second-order derivatives.} 
We compare  $\overline g$ with the Euclidean metric in the $(r,\theta,\vphi)$ coordinates, following    \cite{SmithAAlemma}.
 Recall that in Euclidean space, the second fundamental form of a graph function  is expressed in terms of  the Hessian of the function. In the following, we use Greek letters $\a,\b,\cdots$ to denote the indices on $\Sigma$, and Latin letters $i,j,\cdots$ to denote the indices on $S$. Recall the vector field $\pat_\theta$, $\pat_\varphi$ defined in the proof of the last proposition. They are the natural coordinate vectors in the $(\theta,\varphi)$ coordinate on $S$. Nevertheless, we denote them by $\pat_i$, $\pat_j$, and reserve the notation $\pa_i$, $\pa_j$ for $\pa_\theta$, $\pa_\varphi$. We use $\mathcal N(a,b,\cdots)$ to denote schematically nonlinear functions of $a,b,\cdots$. 
    
Denote the induced metric on $\Sigma$ by $\overline g$. The vector field $N_{\bar{g}}=\mathrm{grad_{\Sigma}}(r-R(\theta,\varphi))=\overline g^{\a\b}\pa_\b (r-R(\theta,\varphi))\pa_\a$ is normal to $S$ (not necessarily of unit length). We have
\begin{equation*}
    |N_{\bar{g}}|k(\pat_i,\pat_j)=\overline g(\overline D_{\pat_i}\pat_j,N_{\overline{g}})=(\overline D_{\pat_i}\pat_j)^\a \pa_\a (r-R(\theta,\varphi))
\end{equation*}
Notice that $\overline D_{\pat_i}\pat_j$ differs from $\overline D^e_{\pat_i}\pat_j$, where $\overline D^e$ is the connection associated with the Euclidean metric with respect to $(r,\theta,\varphi)$ (i.e., $dr^2+d\theta^2+d\vphi^2$), by Christoffel symbols of $\overline g$ in the $(r,\theta,\varphi)$ coordinates. For the Euclidean connection, we have
\[\overline D^e_{\pat_i}\pat_j=\overline D^e_{\pat_i}(\pa_j+(\pa_j R)\overline\pa_r)=(\pa_i\pa_j R)\overline\pa_r.\]
Therefore 
\begin{equation*}
    (\overline D^e_{\pat_i}\pat_j)^\a \pa_\a(r-R(\theta,\varphi))=(\pa_i \pa_j R(\theta,\varphi)) \overline\pa_r (r-R(\theta,\varphi))=\pa_i\pa_j R.
\end{equation*} 
So we have shown that
\begin{equation*}
    \pa_i\pa_j R=|N_{\bar{g}}|k(\pat_i,\pat_j)+\mathcal N(\Gamma_{\bar{g}},\pa R),
\end{equation*}
where $\Gamma_{\bar{g}}$ represents the Christoffel symbol components of $\overline g$ in $(r,\theta,\varphi)$, which are independent of the sphere $S$.
Since $|N_{\bar{g}}|$ is bounded as long as $\pa R$ and $\overline g^{-1}$ are bounded, also using $\overline g(\pat_i,\pat_i)\lesssim 1+|\pa_i R|^2$ we deduce $|k(\pat_i,\pat_j)|\lesssim |k|(1+|\pa R|^2)\lesssim 1$. This  establishes  the boundedness of $|\pa^2 R|$.

\def\gb{{\bar{g}}}

{\bf Higher-order derivatives.} For higher orders, we have (recall that the covariant derivative on the sphere $S$ is denoted by $\nabzero$) 
\begin{equation}\label{eq:nabNk-expression}
\begin{split}
    \nabzero^m &k(\pat_i,\pat_j;\pat_{l_1},\cdots,\pat_{l_m})=\nabzero_{\pat_{l_m}}\nabzero^{m-1}k(\pat_i,\pat_j;\pat_{l_1},\cdots,\pat_{l_{m-1}})\\
    &
    =\pat_{l_{m}}(\nabzero^{m-1}k(\pat_i,\pat_j;\pat_{l_1},\cdots,\pat_{l_{m-1}}))-\nabzero^{m-1}k(\nabzero_{\pat_{l_m}}\pat_i,\pat_j;\pat_{l_1},\cdots,\pat_{l_{m-1}})\\
    &\qquad \qquad -\cdots-k(\pat_i,\pat_j;\pa_{l_1},\cdots,\nabzero_{\pat_{l_m}}\pat_{l_{m-1}}).  
\end{split}
\end{equation} 
To estimate $\nabzero_{\pat_i} \pat_j$, note that the induced metric on the sphere, denoted by $\slashed{g}$, satisfies
\begin{equation*}
    \slashed{g}(\pat_i,\pat_j)=\gb(\pa_i+(\pa_i R)\overline\pa_r,\pa_j+(\pa_j R)\overline\pa_r)=\mathcal N(\pa R)\cdot \gb,
\end{equation*}
where $\gb$ in the last term means coefficients of $\gb$ in $(r,\theta,\varphi)$ coordinates. Therefore, since $\nabzero_{\pat_i} \pat_j$ is given by Christoffel symbols, we have $|\nabzero_{\pat_i} \pat_j|\lesssim \mathcal N(\pa^{\leq 2}R,\pa^{\leq 1}\gb)$.

We claim that for any $m\leq N-2$ we have
\begin{equation*}
    |N_\gb|\nabzero^m k(\pat_i,\pat_j;\pat_{l_1},\cdots,\pat_{l_m})=\pa_{l_m}\cdots \pa_{l_1}\pa_i\pa_j R+\mathcal N(\pa^{\leq m+1}R,\pa^{\leq m+1}\gb).
\end{equation*}
In the proof above, we have shown that this is true when $m=0$. Now suppose that this holds with $m$ replaced by $m-1$. Then using \eqref{eq:nabNk-expression}, $|N_\gb|^{-1}=\mathcal N(\gb,\gb^{-1},\pa R)$ and the induction assumption, we have
\begin{equation*}
\begin{split}
    \nabzero^m k(\pat_i,\pat_j;&\pat_{l_1},\cdots,\pat_{l_m})
    =\pat_{l_{m}}\left(|N_\gb|^{-1}|N_\gb|\nabzero^{N-1}k(\pat_i,\pat_j;\pat_{l_1},\cdots,\pat_{l_{N-1}})\right)\\
    &\quad +\mathcal N(\nabzero^{m-1}k,\pa^{\leq 2}R,\pa^{\leq 1}\gb)\\
    &=|N_\gb|^{-1} \pat_{l_{m}}\left(\pa_{l_{m-1}}\cdots \pa_{l_1}\pa_i\pa_j R+\mathcal N(\pa^{\leq m}R,\pa^{\leq m}\gb)\right)+\mathcal N(\nabzero^{m-1}k,\pa^{\leq 2}R,\pa^{\leq 1}\gb)\\
    &=|N_\gb|^{-1} \pa_{l_m}\cdots \pa_{l_1}\pa_i\pa_j R+\mathcal N(\pa^{\leq m+1}R,\pa^{\leq m+1}\gb)
    \end{split}
\end{equation*} 
which means that it also holds for $m\leq N-2$, concluding the induction. Again noting that $\overline g(\pat_i,\pat_i)\lesssim 1$, we obtain
\begin{equation*}
    |\pa^{\leq N}R|\leq C_N,
\end{equation*}
which is independent of $v$. 
This end the proof of Lemma \ref{Lemma:Derivatives-R}.
\end{proof}


\subsection{Convergence of the spheres  \texorpdfstring{$S^\tau_v$}{Svtau} on \texorpdfstring{$\Sigma_\tau$}{Sigmatau}}
\lab{section:Convergence of S^0_v}

With the uniform bounds obtained in Lemma \ref{Lemma:Derivatives-R}, we can now apply the Arzela-Ascoli lemma to get
\begin{proposition}
\lab{Proposition:convergence}
 On each $\Sigma_\tau$, the family of spheres $S^\tau_v$ converges  pointwise             to a $C^{N-1}$ sphere $S_\tau^*$ as $v\to\infty$.  
\end{proposition}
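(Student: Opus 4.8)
The plan is to run a compactness argument using the uniform $C^N$ bounds on the graph functions $R_v$ established in Lemma \ref{Lemma:Derivatives-R}, combined with the monotonicity property proved in Section 3.6. First I would fix $\tau\geq 0$ and recall that each sphere $S_v^\tau=C_v\cap\Sigma_\tau$ is, by Proposition \ref{prop:estimate-trch'} and the Embedding Criterion, an embedded sphere that is the graph $\{(R_v(p),p)\colon p\in\mathbb{S}^2\}$ of a function $R_v\colon\mathbb{S}^2\to[r_+-\delta,r_++\delta]$ (using the coordinate charts $(\theta,\varphi)$ and $(x^1,x^2)$ near the poles). Lemma \ref{Lemma:Derivatives-R} gives $|\partial^i R_v|\leq C_N$ for $i\leq N$, with $C_N$ independent of $v$. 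I would then invoke Arzel\`a--Ascoli: the family $\{R_v\}_v$ is bounded in $C^N(\mathbb{S}^2)$, hence precompact in $C^{N-1}(\mathbb{S}^2)$, so along any sequence $v_n\to\infty$ there is a subsequence along which $R_{v_n}\to R^*_\tau$ in $C^{N-1}$, with $\|R^*_\tau\|_{C^N}\leq C_N$ by lower semicontinuity.

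Next I would upgrade subsequential convergence to genuine convergence as $v\to\infty$ using the monotonicity from Step 1 of Section \ref{section:strategy}. Since for $v_1<v_2$ the sphere $S^0_{v_1}$ lies in the exterior of $S^0_{v_2}$ on $\Sigma_0$, and more generally (by the same achronality argument applied on $\Sigma_\tau$, or by propagating the $\Sigma_0$ statement along the cones) the spheres $S^\tau_v$ are nested and moving inward as $v$ increases, the functions $R_v(p)$ are monotone nonincreasing in $v$ for each fixed $p$, and bounded below by $r_+-\delta$. Hence the pointwise limit $R^*_\tau(p):=\lim_{v\to\infty}R_v(p)$ exists for every $p$. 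Combining this with the $C^{N-1}$ precompactness: every subsequential limit must coincide with the pointwise limit $R^*_\tau$, so in fact $R_v\to R^*_\tau$ in $C^{N-1}(\mathbb{S}^2)$, and $R^*_\tau\in C^{N-1}$ (indeed with a $C^N$ bound). This shows that $S^*_\tau:=\Phi_{\Sigma_\tau}(\{(R^*_\tau(p),p)\colon p\in\mathbb{S}^2\})$ is a $C^{N-1}$ embedded sphere on $\Sigma_\tau$ and that $S^\tau_v\to S^*_\tau$ pointwise (in fact in $C^{N-1}$), as claimed.

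The one point requiring a little care — and the main obstacle — is justifying the monotonicity of $R_v$ in $v$ on a general slice $\Sigma_\tau$ rather than just on $\Sigma_0$. On $\Sigma_0$ this was Step 1; for general $\tau$ one argues identically, since $S(v_1,r^*_\delta)\subset I^-(S(v_2,r^*_\delta))$ implies $\JJ^-(S(v_1,r^*_\delta))$ is disjoint from the achronal set $\widehat C_{v_2}(\tau,v_2-r^*_\delta)$, so $S^\tau_{v_1}\subset\JJ^-(S(v_1,r^*_\delta))$ cannot meet the interior of $S^\tau_{v_2}$ on $\Sigma_\tau$ and must lie in its exterior; translated to graph functions this is exactly $R_{v_1}\geq R_{v_2}$. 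A minor subtlety is that monotonicity only gives pointwise convergence a priori, and one needs the $C^N$ bound to promote it to $C^{N-1}$ convergence and to conclude $S^*_\tau$ is a smooth (here $C^{N-1}$) embedded sphere rather than merely a Lipschitz graph; both ingredients are already in hand from Lemma \ref{Lemma:Derivatives-R}, so the proof is essentially the assembly of these two facts.
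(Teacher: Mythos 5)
Your proof is correct and follows essentially the same route as the paper: uniform $C^N$ bounds on the graph functions from Lemma \ref{Lemma:Derivatives-R}, Arzel\`a--Ascoli to obtain $C^{N-1}$ subsequential limits, and identification of these with the pointwise limit supplied by the monotonicity of $R_v$ in $v$. Your explicit remark that the Step~1 achronality argument must be repeated on each slice $\Sigma_\tau$ (not just $\Sigma_0$) is a point the paper leaves implicit, and you handle it correctly.
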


\begin{proof}
We have shown that under the coordinate charts $(\theta,\varphi;\frac{\pi}4< \theta < \frac{3\pi}4)$ and $(x^1,x^2;0\leq \theta < \frac{\pi}3\text{ or }\frac {2\pi}3< \theta\leq \pi)$, each sphere $S_v^\tau$ can be expressed as $r=R_v(\theta,\varphi)$, $r=R_v'(x^1,x^2)$ and their $C^N$ norms are uniformly bounded, independent of $v$. 
Therefore, we can find a subsequence $\{v_n\}\rightarrow \infty$ such that $R_{v_n}$ and $R_{v_n}'$ converges in $C^{N-1}$ norm (over the corresponding region) to $C^{N-1}$ functions $R^*(\theta,\varphi)$, $R'^*(x^1,x^2)$, which, of course, coincide with the pointwise limit. This in fact gives a diffeomorphism from $\mathbb{S}^2$ to $S_\tau^*$. Therefore, $S_\tau^*$ is an embedded $C^{N-1}$ submanifold.
\end{proof}

\subsection{The event horizon}
\lab{section:EvenHorizon}
We have constructed a limit sphere $S^*_\tau$ on $\Sigma_\tau$. In this section, we show that the union $\cup_{\tau\geq 0} S^*_\tau$ coincides with the future outgoing null cone of $S^*=S^*_0$ and gives the event horizon. 

For any given $\tau_1$, we consider the future outgoing (past incoming) null cone of $S_{\tau_1}^*$, denoted by $C_{\tau_1}^*$. Since $S_{\tau_1}^*$ is smoothly embedded and the background metric is smooth, by Lemma \ref{lem:local-in-time-regular}, $C_{\tau_1}^*$ is regular between $\Sigma_{\tau_1-\delta_{\tau_1}}$ and $\Sigma_{\tau_1+\delta_{\tau_1}}$ for some $\delta_{\tau_1}>0$. For any $\tau\in (\tau_1-\delta_{\tau_1},\tau_1+\delta_{\tau_1})$, we denote the intersection of $C_{\tau_1}^*$ and $\Sigma_\tau$ by $S^*_{\tau_1;\tau}$.
We have
\begin{proposition}
    For any $\tau\in (\tau_1- \delta_{\tau_1},\tau_1+\delta_{\tau_1})$, we have $S_{\tau_1;\tau}^*=S_{\tau}^*$.
\end{proposition}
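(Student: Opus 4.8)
The plan is to pass to the limit $v\to\infty$ in the relation ``$S^\tau_v$ is the $\Sigma_\tau$-section of the single regular null cone $C_v$'' and to identify the limiting object with $C^*_{\tau_1}$. Fix $v$ large enough that $v-r^*_\delta>\tau_1+\delta_{\tau_1}$ (if $\tau_1<\delta_{\tau_1}$, intersect everything below with $\{\tau\ge 0\}$). By Proposition \ref{prop:estimate-trch'}, $C_v$ is a regular null cone on $[\,\tau_1-\delta_{\tau_1},\,\tau_1+\delta_{\tau_1}\,]$, and its null generators stay in the fixed compact region $\RR_\delta$ (the confinement used in the proof of Proposition \ref{Proposition:C_v is regular}). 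A regular null hypersurface is ruled by its generators, and along a spacelike cross-section $S^{\tau_1}_v=C_v\cap\Sigma_{\tau_1}$ each generator is null and tangent to $C_v$, hence orthogonal to $S^{\tau_1}_v$; thus over this $\tau$-range $C_v$ coincides with the null cone issued from $S^{\tau_1}_v$. Labelling points of $S^{\tau_1}_v$ by $p\in\mathbb{S}^2$ through the graph representation $r=R^{\tau_1}_v$, let $\gamma_v(\cdot,p)$ be the null geodesic generator of $C_v$ through that point, affinely parametrized so that at $\Sigma_{\tau_1}$ its velocity is the null normal of $S^{\tau_1}_v$ in $\Sigma_{\tau_1}$ singled out by the orientation of $C_v$, with $s=0$ there. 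Since $\tau$ is a time function (Section \ref{subsect:timefunction}) it is strictly increasing along $\gamma_v(\cdot,p)$, so there is a smooth $s_v(\tau,p)$ with $\gamma_v(s_v(\tau,p),p)\in\Sigma_\tau$ and
\[
S^\tau_v=\{\,\gamma_v(s_v(\tau,p),p)\ :\ p\in\mathbb{S}^2\,\},\qquad \tau\in(\tau_1-\delta_{\tau_1},\tau_1+\delta_{\tau_1}).
\]

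Next I would let $v\to\infty$. By Proposition \ref{Proposition:convergence} together with the uniform bounds of Lemma \ref{Lemma:Derivatives-R} (and uniqueness of the pointwise limit), $S^{\tau_1}_v\to S^*_{\tau_1}$ in $C^{N-1}$; hence the base points $\gamma_v(0,p)$ converge in $p$ in $C^{N-1}$, and the selected null normals, being smooth functions of the cross-section, of its first derivatives, and of the smooth ambient metric, converge in $p$ in $C^{N-2}$ to the corresponding data of $S^*_{\tau_1}$. Because the metric is smooth and all of the geodesics involved remain in the single compact set $\RR_\delta$, continuous dependence of the geodesic flow on initial data gives $\gamma_v(\cdot,p)\to\gamma_\infty(\cdot,p)$ in $C^{N-2}_{\mathrm{loc}}$, uniformly in $p$, where $\gamma_\infty(\cdot,p)$ solves the geodesic equation with the limiting initial position and velocity, i.e. is exactly the generator of $C^*_{\tau_1}$ through the point of $S^*_{\tau_1}$ labelled by $p$. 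As $\tau$ is smooth and increasing along these curves, $s_v(\tau,p)\to s_\infty(\tau,p)$ uniformly in $p$ for each $\tau\in(\tau_1-\delta_{\tau_1},\tau_1+\delta_{\tau_1})$, and therefore
\[
S^\tau_v=\{\gamma_v(s_v(\tau,p),p)\}\ \longrightarrow\ \{\gamma_\infty(s_\infty(\tau,p),p)\}=C^*_{\tau_1}\cap\Sigma_\tau=S^*_{\tau_1;\tau}.
\]
On the other hand $S^\tau_v\to S^*_\tau$ by Proposition \ref{Proposition:convergence}. By uniqueness of limits, $S^*_{\tau_1;\tau}=S^*_\tau$, which is the claim.

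The main obstacle is exactly this limiting step for the cones: one must verify that $C^{N-1}$ convergence of the cross-sections $S^{\tau_1}_v$ propagates, through the geodesic flow, to convergence of the hypersurfaces $C_v$ and of their sections $C_v\cap\Sigma_\tau$. The two points needing care are the uniform confinement of all the relevant generators to one compact region, supplied by the a priori estimate in the proof of Proposition \ref{Proposition:C_v is regular} that the generators of a regular $C_v$ stay in $\RR_\delta$, and the convergence of the arrival-time functions $s_v(\tau,\cdot)$, which follows from the strict monotonicity of $\tau$ along generators. Once these are in place, the identification of $\gamma_\infty(\cdot,p)$ with a generator of $C^*_{\tau_1}$ is immediate from uniqueness of solutions of the geodesic equation with prescribed initial position and velocity, and no further analysis beyond what has already been developed is needed.
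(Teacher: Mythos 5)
Your argument is correct in substance, but it takes a genuinely different route from the paper. The paper's proof is a soft causal-theoretic one: it splits into the cases $\tau\geq\tau_1$ and $\tau\leq\tau_1$, shows via achronality of the regular pieces $\widehat C_v$ and Corollary \ref{Cor:int-ext} that $S^*_{\tau_1;\tau}$ lies (weakly) inside every $S^\tau_v$, and then rules out a strict gap by a compactness argument using the closedness of $\JJ^\pm$ in a globally hyperbolic spacetime; it only ever uses the pointwise convergence $S^\tau_v\to S^*_\tau$ and the interior/exterior ordering. You instead prove a stability statement for the null-cone construction itself: since $C_v$ coincides over the relevant $\tau$-range with the orthogonal null congruence of its section $S^{\tau_1}_v$, the $C^{N-1}$ convergence $S^{\tau_1}_v\to S^*_{\tau_1}$ (full-family, as you correctly note via uniqueness of the pointwise limit plus Lemma \ref{Lemma:Derivatives-R} and Arzela--Ascoli) propagates, by continuous dependence of geodesics on initial position and velocity in the compact region $\RR_\delta$ and by convergence of the arrival-time functions, to convergence of the sections $S^\tau_v\to C^*_{\tau_1}\cap\Sigma_\tau$; uniqueness of (Hausdorff) limits then gives $S^*_{\tau_1;\tau}=S^*_\tau$. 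Your approach handles both time directions at once, yields quantitative convergence of the sections rather than mere coincidence of the limits, and makes transparent why the limit cone is $C^*_{\tau_1}$; the cost is that it leans on the hard estimates and on several standard-but-unstated facts (canonical normalization of the null normal, e.g. $L=\hat N+\hat\nu$, so that the selected normals and hence the geodesic initial data converge with the correct outgoing orientation; a uniform positive lower bound on $d\tau/ds$ along the generators to control $s_v(\tau,p)$; and the passage from uniform convergence of parametrizations to uniqueness of the set-theoretic limit), whereas the paper's argument needs only $C^0$ information and stays entirely within the causal framework it reuses later to identify $\HH^+$. You should state the orientation-matching and normalization points explicitly, but they are routine given the $C^{N-1}$ convergence, so I see no genuine gap.
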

\begin{proof}
Throughout this proof, all discussions are restricted to $\tau\in (\tau_1- \delta_{\tau_1},\tau_1+\delta_{\tau_1})$,  where $C^*_{\tau_1}$ is regular.

{\bf Case  $\tau\geq \tau_1$. }

{\bf Step 1: }Show that $S_{\tau_1;\tau}^*$ is in the interior\footnote{Throughout the proof, the words ``future", ``past", ``interior" and ``exterior" of an object include the object itself. We will specify (e.g. by using the word ``strict") otherwise.} of $S_\tau^*$.

Since $S^*_{\tau}=\lim_{v\rightarrow\infty} S^\tau_v$, it suffices to show that $S_{\tau_1;\tau}^*$ is in the interior of $S_v^\tau$ for each $v$. By construction, on $\Sigma_{\tau_1}$ we know that $S_{v}^{\tau_1}$ is in the exterior of $S_{\tau_1}^*$ for any $v$, by Corollary \ref{Cor:int-ext} applied to $\widehat{C}_v(\tau_1,\tau)$, we know that all points in $\JJ^+(S_{\tau_1^*})$ cannot be {strictly}  in the exterior of $S^\tau_v$
    on $\Sigma_\tau$, i.e. $S_{\tau_1;\tau}^*$ must be in the interior of $S^\tau_v$,
    for any $v$, as required.

\begin{center}
\begin{tikzpicture}
    \draw[thick] (0,1) -- (6,1) node[right] {$\Sigma_{\tau_1}$};
    \draw[thick] (2,3) -- (6,3) node[right] {$\Sigma_{\tau}$};
    \draw[dashed] (4,3) -- (2,1) ;

    \draw[thick] (1,1) -- (3.2,3.2) node[midway,left=0.15cm] {$C_{\tau_1}^*$};
    \draw[thick] (3,1)  -- (5.2,3.2) node[midway,right=0.1cm] {$C_v$};

    \fill[] (3,3) circle (2pt) node[left=0.15cm,above] {$S^*_{\tau_1;\tau}$};
    \fill[] (5,3) circle (2pt) node[left=0.1cm,above] {$S^\tau_v$};
    \fill[] (1,1) circle (2pt) node[below] {$S^*_{\tau_1}$};
    \fill[] (3,1) circle (2pt) node[below] {$S^{\tau_1}_v$};
    \fill[] (4,3) circle (2pt) node[above] {$p_\tau$};
    \fill[] (2,1) circle (2pt) node[below] {$p_{\tau_1}$};
\end{tikzpicture}
\end{center}

{\bf Step 2:} Show that they must coincide.

If they do not, then there is a point $p_\tau\in \Sigma_\tau$ in between, i.e., in the strict exterior of $S_{\tau_1;\tau}^*$, and the interior of every $S_v^\tau$. For each $v'$, there exists a point $p_{\tau_1,v'}\in \mathcal J^-(p_\tau)$ on $\Sigma_{\tau_1}$ that lies in the interior of $S_{v'}^{\tau_1}$.
Then, we can pick a sequence of $p_{\tau_1,v_n}$, lying in the interior of $S_{v_n}^{\tau_1}$ for each $n$. By compactness, there is a converging subsequence of $p_{\tau_1,v_n}$ on $\Sigma_{\tau_1}$. On the other hand, by definition
\begin{equation*}
    S^*_{\tau_1}=\lim_{v\rightarrow\infty} S^{\tau_1}_v,
\end{equation*}
this subsequential limit must be on $S_{\tau_1}^*$. This means that a subsequence of $p_{\tau,v}\in \JJ^-(p_\tau)$ is converging to $S_{\tau_1}^*$.  Since the causal past of a point is closed for any globally hyperbolic spacetime (\cite[Theorem 8.3.11]{Wald}, this implies $S_{\tau_1}^*\cap \JJ^{-1}(p_\tau)\neq \varnothing$, i.e., $p_\tau\in \JJ^+(S_{\tau_1}^*)$, 
which contradicts the assumption that $p_\tau$ is in the strict exterior of $S_{\tau_1;\tau}^*$, hence not on $S_{\tau_1;\tau}^*$.

{\bf Case $\tau\leq \tau_1$.} The proof is similar. We first have that $S_{v}^\tau\subset \JJ^-(S_v^{\tau_1})$ must be in the exterior $S_{\tau_1;\tau}^*$ since $\widehat{C}_{\tau}^*(\tau,\tau_1)$ is achronal.
To prove they coincide, assume again a point $p_\tau$ in between (not on $S_{\tau_1;\tau}^*$ or $S_v^\tau$ and for all $v$). Then by a similar argument, there must be a point $p_{\tau_1}\in \JJ^+(p)\cap \Sigma_\tau$ that lies in the exterior of $S_{\tau_1}^*$ on $\Sigma_{\tau_1}$. Then since $S_{\tau_1}^*$ is the limit of $S_v^{\tau_1}$, there exists some $v'$ such that $p_{\tau_1}$ is in the exterior of $S_{v'}^{\tau_1}$, so applying Corollary \ref{Cor:int-ext} to $C_{v'}(\tau,\tau_1)$ we see that $p_\tau$ cannot be in the strict interior $C_{v'}(\tau,\tau_1)$, a contradiction.
\end{proof}

\begin{center}
\begin{tikzpicture}
    \draw[thick] (0,1) -- (6,1) node[right] {$\Sigma_{\tau}$};
    \draw[thick] (2,3) -- (6,3) node[right] {$\Sigma_{\tau_1}$};
    \draw[dashed] (4,3) -- (2,1) ;

    \draw[thick] (1,1) -- (3.2,3.2) node[midway,left=0.15cm] {$C_{\tau_1}^*$};
    \draw[thick] (3,1)  -- (5.2,3.2) node[midway,right=0.1cm] {$C_v$};

    \fill[] (3,3) circle (2pt) node[left=0.15cm,above] {$S^*_{\tau_1}$};
    \fill[] (5,3) circle (2pt) node[left=0.1cm,above] {$S^{\tau_1}_v$};
    \fill[] (1,1) circle (2pt) node[below] {$S^*_{\tau_1;\tau}$};
    \fill[] (3,1) circle (2pt) node[below] {$S^\tau_v$};
    \fill[] (4,3) circle (2pt) node[above] {$p_{\tau_1}$};
    \fill[] (2,1) circle (2pt) node[below] {$p_\tau$};
\end{tikzpicture}
\end{center}

\vspace{1ex}

We now show that the cone $C_0^*$ is globally regular towards the future and is the union of all $S^*_\tau$ with $\tau\geq 0$. In a neighborhood of $S_0^*$, $C_0^*$ is regular, and by the proposition, coincides with the union of $S_\tau^*$. Now let $\tau^*$ be the supremum of $\tau'$ so that $C_0^*$ is regular for $\tau\in [0,\tau')$.
Applying the proposition with $\tau_1=\tau^*$, we obtain a piece of null cone for $\tau\in (\tau^*-\delta_{\tau^*},\tau^*+\delta_{\tau^*})$, which coincides with the union of $S_\tau^*$. 
On the other hand, by assumption, $C_0^*$ is regular for $\tau\in [0,\tau^*-\frac 12 \delta_{\tau^*}]$, and, since $[0,\tau^*-\frac 12 \delta_{\tau^*}]$ is a compact interval, we see that $C_0^*$ equals $\cup_\tau S^*_\tau$ on this interval using the proposition.
This extends $C_0^*$ regularly to $\tau\in [0,\tau_*+\delta')$ with $C_0^*=\cup_\tau S^*_\tau$, so we must have $\tau^*=\infty$. 

We have constructed a globally regular future outgoing null cone $C_0^*$ which equals the union of $S_\tau^*$. We now show that it is the event horizon.
\begin{proposition}
    The future null cone $C_0^*$  coincides with  the event horizon (in the future of $\Sigma_0$).
\end{proposition}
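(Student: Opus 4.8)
\emph{Setup and plan.} The goal is to prove $\HH^+\cap\JJ^+(\Sigma_0)=C_0^*$. I would use that $C_0^*=\bigcup_{\tau\geq 0}S^*_\tau$ is a globally regular — hence smoothly embedded and closed — null hypersurface, with $S^*_\tau=C_0^*\cap\Sigma_\tau=\{r=R^*_\tau\}$ an embedded sphere on $\Sigma_\tau$ satisfying $R^*_\tau<r_++\delta$ \emph{strictly} (for $\tau<v-r^*_\delta$ each section $S^\tau_v$ lies in $\{r<r_++\delta\}$, since $C_v$ meets $\{r=r_++\delta\}$ only along its initial sphere $S(v,r^*_\delta)$, and $R^*_\tau=\inf_v R^\tau_v$). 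Thus $C_0^*$ separates $\JJ^+(\Sigma_0)=\bigcup_{\tau\geq 0}\Sigma_\tau$ into two open regions, the interior region $\mathcal I=\bigcup_\tau\Sigma_\tau^{*i}$ (the side towards $\{r=r_+-\delta_{\HH}\}$) and the exterior region $\mathcal E=\bigcup_\tau\Sigma_\tau^{*e}$ (towards $r=+\infty$), so $\JJ^+(\Sigma_0)=\mathcal I\sqcup C_0^*\sqcup\mathcal E$. I would then prove the two claims \textbf{(A)}\ $(\mathcal I\cup C_0^*)\cap\JJ^-(\ip)=\varnothing$ and \textbf{(B)}\ $\mathcal E\subseteq\JJ^-(\ip)$. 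Granting these: $\mathcal E$ is open and contained in $\JJ^-(\ip)$, so $\mathcal E\subseteq\mathrm{int}\,\JJ^-(\ip)$ and $\mathcal E\cap\HH^+=\varnothing$; $\mathcal I$ is open and disjoint from $\JJ^-(\ip)$, hence disjoint from $\overline{\JJ^-(\ip)}\supseteq\HH^+$; and $C_0^*\subseteq\overline{\mathcal E}\subseteq\overline{\JJ^-(\ip)}$ together with $C_0^*\cap\JJ^-(\ip)=\varnothing$ forces $C_0^*\subseteq\overline{\JJ^-(\ip)}\setminus\mathrm{int}\,\JJ^-(\ip)=\HH^+$. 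Since $\JJ^+(\Sigma_0)=\mathcal I\sqcup C_0^*\sqcup\mathcal E$, this gives $\HH^+\cap\JJ^+(\Sigma_0)=C_0^*$.

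\emph{For (A)} I would take $p\in\Sigma_{\tau'}^{*i}\cup S^*_{\tau'}$ and apply the first identity of Corollary \ref{Cor:int-ext} to the regular cone $C_0^*$ on each slab $[\tau',\tau'']$, $\tau''\geq\tau'$, obtaining $\JJ^+(p)\cap\Sigma_{\tau''}\subseteq\Sigma_{\tau''}^{*i}\cup S^*_{\tau''}$. Hence every $q\in\JJ^+(p)$ in the foliated region $\{|r-r_+|<2\delta\}$ has $r(q)\leq R^*_{\tau(q)}<r_++\delta$, and a $q\in\JJ^+(p)$ with $r(q)\geq r_++\delta$ would produce, along the connecting causal curve, a point of $\JJ^+(p)$ with $r$-value in $[r_++\delta,r_++2\delta)$ lying on some $\Sigma_{\tau''}$ — contradicting the bound just obtained. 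Therefore $\JJ^+(p)\cap\{r\geq r_++\delta\}=\varnothing$, so by the characterization \eqref{eq:pastofscri}, $p\notin\JJ^-(\ip)$.

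\emph{For (B)} I would take $p\in\Sigma_{\tau'}^{*e}$. If $r(p)\geq r_++\delta$ then $p\in\JJ^-(\ip)$ since every point of $\{r\geq r_++\delta\}$ is, by the paragraph following \eqref{eq:pastofscri}. If $r(p)<r_++\delta$, then since $R^\tau_v\to R^*_\tau$ pointwise (Proposition \ref{Proposition:convergence}; the monotonicity of Step 1 confirms the sections shrink inward), the point $p$ lies in the open exterior of $S^{\tau'}_v=C_v\cap\Sigma_{\tau'}$ for all $v$ large (so also $v-r^*_\delta>\tau'$). Now $C_v$ is a regular cone on the slab $\{\tau'\leq\tau\leq v-r^*_\delta\}$, lies in $\RR_\delta$, and sits on $\partial\JJ^-(S(v,r^*_\delta))$ as its interior-facing component there (recall $\{r=r_++\delta\}$ is timelike and $C_v$ meets it only along $S(v,r^*_\delta)$). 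Consequently the closed region of $\{|r-r_+|\leq\delta\}\cap\{\tau'\leq\tau\leq v-r^*_\delta\}$ exterior to $C_v$ is contained in $\JJ^-(S(v,r^*_\delta))$; as $p$ lies in this region, $p\in\JJ^-(S(v,r^*_\delta))\subseteq\JJ^-(\{r\geq r_++\delta\})\subseteq\JJ^-(\ip)$, the last inclusion because $\JJ^-(\ip)$ is a past set (immediate from \eqref{eq:pastofscri} and transitivity of causality).

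\emph{The hard part} is the last geometric step of (B): that the closed region exterior to the regular cone $C_v$ (below $\Sigma_{v-r^*_\delta}$) lies in $\JJ^-(S(v,r^*_\delta))$. This is precisely complementary to the second identity of Corollary \ref{Cor:int-ext}: that identity says points \emph{strictly interior} to $S^{\tau'}_v$ cannot reach the exterior of $S(v,r^*_\delta)$, whereas here one needs that points \emph{exterior} to $S^{\tau'}_v$ \emph{do} reach it. I would establish it by the same achronality argument applied to $\widehat C_v(\tau',v-r^*_\delta)$, the essential input being that $C_v$ develops no caustics on this range (Proposition \ref{prop:estimate-trch'}), so that it is genuinely an achronal boundary piece; everything else in the argument is soft causal theory built on that regularity — the sign of $\om$ near $\HH^+$ having already been consumed, upstream, precisely to secure it.
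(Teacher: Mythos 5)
Your overall architecture is the same as the paper's: first show that points on (and inside) $C_0^*$ are not in $\JJ^-(\ip)$ using the achronality of the regular limit cone together with the characterization \eqref{eq:pastofscri}, then show that points strictly exterior to $C_0^*$ lie in $\JJ^-(\ip)$ by placing them in the exterior of some $C_v$ and relating them to the causal past of $S(v,r^*_\delta)$; your claim (A) and the final bookkeeping with $\mathcal I\sqcup C_0^*\sqcup\mathcal E$ are fine and, if anything, spell out both inclusions more explicitly than the paper does.

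The genuine gap is exactly at the step you yourself single out in (B), and your proposed fix does not work: the statement ``$p$ strictly exterior to $S^{\tau'}_v$ with $r(p)<r_++\delta$ implies $p\in\JJ^-(S(v,r^*_\delta))$'' is \emph{not} soft causal theory and does not follow from the achronality of $\widehat C_v(\tau',v-r^*_\delta)$. Achronality only yields the negative, non-crossing statements of Corollary \ref{Cor:int-ext} (futures of interior points cannot reach the exterior region, pasts of exterior points cannot reach the interior); it can never produce the positive assertion that exterior points \emph{do} causally reach $S(v,r^*_\delta)$ or $\{r\geq r_++\delta\}$. Concretely, in Minkowski space the future outgoing null cone over a round sphere of radius $R_0$ at $t=0$, truncated at $t=T$, is perfectly regular and achronal, yet any point of the initial slice at radius greater than $R_0+2T$ is strictly exterior to it and is \emph{not} in the causal past of the top sphere; so regularity plus achronality of the cone is strictly insufficient. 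What makes the implication true here is the confinement of the relevant exterior points between $C_v$ and the timelike hypersurface $\{r=r_++\delta\}$ inside the slab $\tau'\leq\tau\leq v-r^*_\delta$, and one must exploit that. Two ways to close it: (i) note that the region bounded by $C_v$, $\{r=r_++\delta\}$, $\Sigma_{\tau'}$ and $\Sigma_{v-r^*_\delta}$ is compact, that $\tau$ is a time function there (so no future causal curve exits through the bottom, and the top face degenerates to $S(v,r^*_\delta)$), and that future-inextendible causal curves must leave every compact set in a globally hyperbolic spacetime; hence any such curve from $p$ meets either $C_v\subset\JJ^-(S(v,r^*_\delta))$ or $\{r=r_++\delta\}\subset\JJ^-(\ip)$, and transitivity finishes; or (ii) show, using monotonicity in $v'$ and continuity of $v'\mapsto R^{\tau'}_{v'}$ together with the endpoint values $R^{\tau'}_{v'}\to r_++\delta$ as $v'\downarrow\tau'+r^*_\delta$ and $R^{\tau'}_{v'}\to R^*_{\tau'}$ as $v'\to\infty$, that the spheres $S^{\tau'}_{v'}$ sweep the whole region, so $p$ actually lies \emph{on} some $C_{v'}$ and is therefore on a past-directed generator from $S(v',r^*_\delta)$. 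The paper itself disposes of this step in one sentence (``This means that $p$ is in the causal past of $S(v,r^*_\delta)$''), so your identification of the crux is correct; but attributing it to the achronality argument is a wrong mechanism, and as written your part (B) would not stand on its own.
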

\begin{proof}
 Since $S_\tau^*\subset \RR_\delta$, we see that the future of any points on $C_0^*$ is not reaching $r$-value larger than $r_+ +\delta$. Therefore, by Definition \ref{def:past-of-scri}, points on $C_0^*$ are not in $\mathcal J^{-1}(\ip)$. 

    By definition, to show that $C_0^*$ is indeed the event horizon, it remains to show that any neighborhood of a point on $C_0^*$ contains a point belonging to $\mathcal J^-(\ip)$. To see this, notice that such a neighborhood contains a point $p$ that lies in the strict exterior of $S^*_\tau$ on $\Sigma_\tau$ for some $\tau$, so it must lie in the exterior of some $S^\tau_v$ for some large $v$, hence the exterior of $C_v$. This means that $p$ is in the causal past of $S(v,r^*_\delta)$. Then by the characterization \eqref{eq:pastofscri}, we know that $p\in \mathcal J^{-1}(\ip)$.
\end{proof}

\def\const{\mathrm{const}}


\section{Proof of the uniqueness theorem}
\lab{section:ProofUniqueness}
We have  shown that        the event horizon $\hp$ of the perturbed spacetime $\MM$ is regular. According to the regularity proof, any point in the strict exterior side of $\HH^+$ lies in the causal past of $S(v,r_\delta^*)$ for $v$ large enough, and hence belongs to $\mathcal J^-(\ip)$. Hence,  there cannot be another future inextendible  null hypersurface  (future horizon)  containing a point in the exterior  side of $\hp$\footnote{The notions ``exterior side" and ``interior side" are well-defined    in view of $\HH^+=\cup_\tau S^*_\tau$, see also Section \ref{subsubsection:null-cones-over-spheres}.} that remains in $\RR_\delta$. 

However, it may still be the case that   there exists    such  a future horizon   in the {interior} side of $\HH^+$  that  does not exit   $\RR_\delta$. In this section, we eliminate this possibility by proving a stronger statement that any null geodesic in the interior side hits the spacelike hypersurface $\{r=r_+ -\delta\}$, hence entering the trapped region.

Recall that $v$, $r$ are coordinate functions under the ingoing PG structure of $\MM$, see Section \ref{section:Quantbonds-Mint}. 
Given $V>0$, we construct a foliation in the region $\RR_\delta\cap \{v\leq V\}$ as follows: We define
\[u:=-(r-r_+)\quad \text{ on }\{v=V\}\cap \RR_\delta.\]
So $u=0$ on $S(V,r_+)$, and $\pa_r u=-1$ on $\{v=V\}$.
We then construct the null cone of $S(V,r)$, denoted by $C_{V,r}$, in the past incoming direction, for $-\delta\leq r-r_+\leq \delta$.
The spheres here are also $S(v,r)$-type spheres within $\RR_\delta$, so by a similar argument as in Section \ref{section:RegularityC_vS_v}, we can show that $C_{V,r}$ are regular null cones towards the past of $S(V,r)$, and there exists a null frame transformation
\begin{equation*}
    e_4'=e_4+f^a e_a+\frac 14|f|^2 e_3,\quad e_a'=e_a+\frac 12 f_a e_4,\quad e_3'=e_3
\end{equation*}
such that $e_4'$ is tangent to the null generators of the null cone, and the frame satisfies the estimates $|f|, |\nab'f|\lesssim \delta$ (hence in particular, $|\trch'|\lesssim \delta$) uniformly in $V$ and $r$.

 We extend $u$, from $\{ v= V\}$, so that it is constant on each null cone $C_{V,r}$. We denote the level hypersurfaces   of  $u$ by $H_u$.

Since $e_a'$ is in the orthogonal complement of $e_4'$, hence tangent to the null cone $H_u$, we have, using $\nab'u=0$, $\nab_4' u=0$,
\begin{equation}\label{eq:e_4(e_3(u))}
\begin{split}
    \nab'_4(e_3'(u))&=e_4'(e_3'(u))-e_3'(e_4'(u))=(D_{e_4'}e_3')u-(D_{e_3'}e_4')u\\
    &=2(\etab'-\eta')\cdot\nab' u+2\om'\nab_3' u-2\omb'\nab_4'u=2\om' e_3'(u).
    \end{split}
\end{equation}

\begin{remark}
    To be precise,  we also need to show that $\eta'$ is finite at each point. (The finiteness of other quantities $\omb'=0$, $\eta'$, $\om'$ are clear as they only depends on $f$ itself, but not on any derivatives of $f$.)
    \end{remark}
    \begin{proof}[Proof of remark]
       One has the following transformation formula using that $e_3$ is geodesic: 
\begin{equation*}
    \eta' = \eta +\frac{1}{2}\nab_3' f. 
\end{equation*}
Therefore, we need to estimate of $\nab_3' f$. Commuting $\nab_3'$ with the equation \eqref{eq:nab_4'f} we obtain
\begin{equation}
    \nab_4' \nab_3' f+(\frac 12 \trch+2\om)\nab_3' f+\nab_3(\frac 12\trch+2\om)f=[\nab_4',\nab_3']f+\nab_3'\big(-2\xi-\chih\cdot f+\frac 12 \atrch \dual f+O(|f|^2)\big),
\end{equation}\label{eq:transport-nab3f}
with the initial condition $\nab_3'f =\nab_3 F$ on $\{v=V\}$. In view of the formula
\begin{equation*}
    [\nab_4',\nab_3']f=2(\eta'-\etab')\cdot \nab' f+2\om'\nab_3' f-2\omb'\nab_4'f+2(\etab'\cdot f)\eta'-2(\eta'\cdot f)\etab'-\dual f\cdot \dual \rho'
\end{equation*}
and the transformation formulas \eqref{eq:trasportation.formulas}, we see that with $f$ and $\nab'f$ already controlled, there are no quadratic terms of $\nab_3' f$ in \eqref{eq:transport-nab3f} (which is markedly different from the transport equation of $\nab'f$ studied in Section \ref{section:RegularityC_vS_v}), so $|\nab'_3 f|$ grows at most exponentially. Therefore we see that $\nab_3'f$ remains finite at any given point and hence so is $\eta'$.
\end{proof}

We now prove the uniqueness theorem using this construction. Suppose there is a future null geodesic $\ga$  
in the interior side of $\hp$ that stays  entirely in the region $\RR_\delta$, see Figure \ref{fig:Idea-of-proof-uniqueness} below. Fix a value $v_0$. Let $p_1$  denote a point of    intersection of $\ga$ with  $\{v=v_0\}$\footnote{Since $\{v=v_0\}$ is not achronal, it is possible that $\gamma$ intersects with it more than once. Nevertheless, the argument holds for any points of intersection.}.    Consider the integral curve of $\pa_r=-e_3$ passing through $p_1$ {along $v=v_0$}. This is a transversal null geodesic, hence intersecting $\hp$  at a unique  point $p_0$. 

We want to show that on each such integral curve of $\pa_r=-e_3$, where $r$ is of course an affine parameter, we have $r(p_1)=r(p_0)$, and hence $p_1=p_0$. Then by the arbitrariness of $v_0$, we  deduce  that $\gamma\subset \HH^+$,
contradicting  the above assumption.

To show this, we assume that for $p_1\in\{v=v_0\}$, we have $p_1\neq p_0$, i.e. $d(v_0):=|r(p_1)-r(p_0)|\big|_{v=v_0}\neq 0$.
The proposition below shows that in this case, the difference $u(p_1)-u(p_0)$ would be exponentially amplified compared with $r(p_1)-r(p_0)$.
\begin{proposition}\label{prop:u-difference-estimate-uniqueness}
    For $d(v_0)=|r(p_1)-r(p_0)|\big|_{v=v_0}$ and the function $u$ constructed from $\{v=V\}$ above (for some $V>v_0$), we have $u(p_1)-u(p_0)\geq d(v_0)\, e^{\frac 12 C\delta_*^\frac 12 (V-v_0)}$.
\end{proposition}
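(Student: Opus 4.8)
The plan is to integrate the transport identity \eqref{eq:e_4(e_3(u))} for $e_3'(u)$ backwards along the generators of the past cones $C_{V,r}$, exploiting the strict negativity of $\om$ near $\HH^+$, and then to integrate the resulting exponential lower bound for $e_3'(u)$ along the transversal null geodesic joining $p_1$ to $p_0$ inside $\{v=v_0\}$.

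First I would record that on $\{v=V\}\cap\RR_\delta$ one has $u=-(r-r_+)$ and $e_3'=e_3$ with $e_3(r)=-1$, so $e_3'(u)\equiv 1$ there. Next, along each generator of $C_{V,r}$ I would use the coordinate $v$ as parameter: in $\RR_\delta$ we have $e_4'(v)=e_4(v)+f^a e_a(v)+\frac14|f|^2 e_3(v)=\frac{2(r^2+a^2)}{|q|^2}+O(\delta)$, which is bounded above and below by positive constants depending only on $a,m$, so \eqref{eq:e_4(e_3(u))} reads $\frac{d}{dv}\log\!\bigl(e_3'(u)\bigr)=\frac{2\om'}{e_4'(v)}$ along the generator. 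Since $f,\nab'f=O(\delta)$ uniformly in $(V,r)$ and $\omb=0$ in the ingoing PG structure, the transformation formula for $\om$ in \eqref{eq:trasportation.formulas} gives $\om'=\om+O(\delta)$, and Lemma \ref{le:nab-omega-near-extremal} together with $\delta\ll\delta_*$ yields $\om'\le -c\,\delta_*^{1/2}<0$ throughout $\RR_\delta$. Integrating from $v=V$ down to $v=v_0$ and using $e_3'(u)|_{v=V}=1$ then gives
\[
 e_3'(u)\ \ge\ \exp\Bigl(\int_{v_0}^{V}\frac{2|\om'|}{e_4'(v)}\,dv\Bigr)\ \ge\ e^{\frac12 C\delta_*^{1/2}(V-v_0)}
\]
at every point of $\RR_\delta\cap\{v=v_0\}$ lying on a generator reaching $\{v=V\}$; by the construction of the foliation $\{H_u\}$ this holds on the whole slab, in particular along the segment from $p_1$ to $p_0$.

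Finally I would integrate this bound along that segment. The integral curve of $\pa_r=-e_3$ through $p_1$ inside $\{v=v_0\}$ is an affinely parametrized null geodesic with parameter $r$, along which $\frac{d}{dr}u=\pa_r u=-e_3'(u)<0$; since $p_0\in\HH^+$ lies on the exterior side of the interior point $p_1$ we have $r(p_1)<r(p_0)$ and $d(v_0)=r(p_0)-r(p_1)$. Hence
\[
 u(p_1)-u(p_0)=\int_{r(p_1)}^{r(p_0)}e_3'(u)\,dr\ \ge\ e^{\frac12 C\delta_*^{1/2}(V-v_0)}\bigl(r(p_0)-r(p_1)\bigr)=d(v_0)\,e^{\frac12 C\delta_*^{1/2}(V-v_0)},
\]
as asserted.

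The only genuinely non-bookkeeping input is the bound $\om'\le -c\,\delta_*^{1/2}$, i.e.\ the strict sign of $\om$ near an (almost) extremal horizon supplied by Lemma \ref{le:nab-omega-near-extremal}; without it \eqref{eq:e_4(e_3(u))} carries no amplification at all, and this is precisely the place where the sign of $\om$ near $\HH^+$ is exploited. A secondary point that must be verified is that the cones $C_{V,r}$ do foliate $\RR_\delta\cap\{v\le V\}$ and remain regular there, so that every point of the segment indeed lies on a generator running up to $\{v=V\}$; this follows exactly as in Section \ref{section:RegularityC_vS_v}, using the uniform bounds $|f|,|\nab'f|\lesssim\delta$ and the Embedding Criterion (Proposition \ref{lem:im-em1}), and one also uses the elementary bounds on $e_4'(v)$ above in order to trade the (non-affine) transport along $e_4'$ for a transport in the coordinate $v$.
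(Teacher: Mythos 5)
Your proposal is correct and follows essentially the same route as the paper: you obtain $\om'\le -C\delta_*^{1/2}$ from the transformation formula together with Lemma \ref{le:nab-omega-near-extremal}, integrate \eqref{eq:e_4(e_3(u))} along the generators from $\{v=V\}$ (where $e_3'(u)=1$) down to $\{v=v_0\}$ using $e_4'(v)\sim 1$ to get the exponential lower bound on $e_3'(u)$, and then integrate $-\pa_r u=e_3'(u)$ along the $\pa_r$-curve from $p_1$ to $p_0$. The only additions beyond the paper's argument are your explicit justifications of $r(p_1)<r(p_0)$ and of the regularity of the cones $C_{V,r}$, both of which are consistent with what the paper uses implicitly or has established beforehand.
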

\begin{proof}

For $\om'$, from the transformation formula in \eqref{eq:trasportation.formulas}, the bounds \eqref{eq:nab-omega-near-extremal}, \eqref{eq:estim.Gac}, and $|f|\lesssim \delta$, we have
\begin{equation*}
    \om'={\om}+\frac 12 f\cdot({\zeta}-{\etab})+O(|f|^2)\leq -C\delta_*^\frac 12+O(\delta),
\end{equation*}
so we also have $\om'\leq -C\delta_*^\frac 12$ for some constant $C>0$. Then, since $e_3'(u)=1$ on $\{v=V\}$, and $e_4'(v)=e_4(v)+O(\delta)\sim 1$, using \eqref{eq:e_4(e_3(u))} we get $e_3'(u)\geq e^{\frac 12 C\delta_*^\frac 12(V-v_0)}$ on $\{v=v_0\}$. Therefore,
\begin{equation*}
    u(p_1)-u(p_0)=\int_{r(p_1)}^{r(p_0)} (-\pa_r u)\, dr\geq |r(p_0)-r(p_1))|\big|_{v=v_0}\cdot\inf_{v=v_0}|e_3'(u)|\geq d(v_0)\, e^{C\delta_*^\frac 12(V-v_0)}
\end{equation*}
as stated. In particular,
 we can pick suitably large $V$ (the above process works for any $V$) to make $u(p_1)-u(p_0)$ large (say, larger than $\delta^{-1}$). 
\end{proof}

\begin{figure}[htbp]
    \centering 
    \begin{tikzpicture}
    \draw[thick, name path=Hu] (0,0.5) -- (3.3,3.8) node[above] {$H_\delta$};
    \draw[dashed, name path=HH] (0.8,0.2) -- (4.8,4.2) node[right] {$\HH^+$};
    \draw[thick, name path=Hmd] (1.6,-0.1) -- (4.6,2.9) node[above] {$\quad H_{-\delta}$};

    \draw[thick, name path=v0] plot[smooth, tension=1] coordinates {(0.3,1.8) (1.3,0.8) (2.2,-0.35)}
    node[below] {$v=v_0$}
    ;
    \draw[thick, name path=V] plot[smooth, tension=1] coordinates {(2.7,3.95) (3.6,3.1) (4.5,2)}
    ;
    \node at (4.5,1.82) {$v=V$}; 

    \path[name intersections={of=Hu and V, by=uV}];
    \path[name intersections={of=Hmd and V, by=mdV}];
    \path[name intersections={of=HH and v0, by=Hv0}];

    \filldraw [] (uV) circle (1.5pt) node[left=1mm] {\scriptsize $S(V,r_+-\delta)$};
    \filldraw [] (mdV) circle (1.5pt) node[right=1mm] {\scriptsize $S(V,r_++\delta)$};
    \filldraw [] (Hv0) circle (1.5pt) node[right=1mm] {\small $p_0$};

    \path[name path=line1] (0.1,0) -- (4.1,4);
    \path[name intersections={of=line1 and v0, by=p1}];
    \filldraw [] (p1) circle (1.5pt) node[below=0.6mm,left=0.2mm] {\small $p_1$};

    \path[name path=line2] (0,0.15) -- (4,4.15);
    \path[name intersections={of=line2 and V, by=p1V}];
    \filldraw [] (p1V) circle (1.5pt) node[right=0.25mm] {\small $p_{1,V}$};
    \draw[blue] plot[smooth, tension=1] coordinates {(p1) (2,2.1) (p1V)}node[right] {}
    ;
    \node[text=blue] at (2.35,2.15) {$\gamma$};

\end{tikzpicture}
\captionsetup{width=.7\linewidth, justification=centering}
    \caption{Idea of the proof of uniqueness. Note that the event horizon $\HH^+$ does not correspond to any $H_u$, but it lies between $H_{-\delta}$ and $H_\delta$.} 
    \label{fig:Idea-of-proof-uniqueness} 
\end{figure}
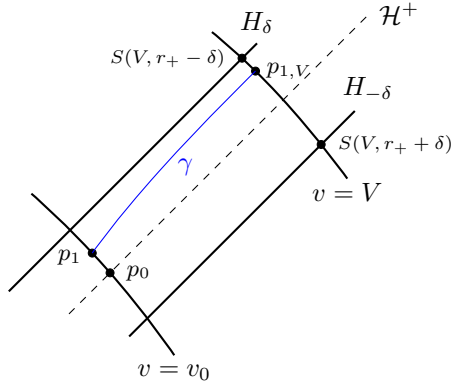

On the other hand, let $p_{1,V}$ be the intersection of the null geodesic $\gamma$ and $\{v=V\}$. By assumption, $\gamma$ lies in $\RR_\delta$, so we have $p_{1,V}\in \RR_\delta$, and hence $-\delta \leq u(p_{1,V})\leq \delta$. Since $u$ is an optical function, it is non-increasing backward along any null geodesic, which means that $u(p_1)\leq \delta$. Moreover, since $p_0\in \hp$, 
by the characterization \eqref{eq:pastofscri}, we know that any point in the causal future of $p_0$ lies in $\{r<r_+ +\delta\}$, so in particular we must have $u(p_0)> -\delta$.
Therefore,
\begin{equation*}
    0\leq u(p_1)-u(p_0)\leq 2\delta,
\end{equation*}
where the lower bound is due to the fact that $p_1$ and $p_0$ are connected by the integral curve of $-\pa_r=e_3$, which is a causal curve. Therefore, we get a contradiction on the size of $u(p_1)-u(p_0)$, so we must have $r(p_1)=r(p_0)$, hence $p_1=p_0$.

\appendix
\section{Proof of  the Embedding Criterion}\label{subsubsect:proof-im-em1}
In this appendix, we prove the Embedding Criterion (Proposition \ref{lem:im-em1}). For simplicity, we denote $S=S_v^\tau$, $i=i_v^\tau$, $\Sigma=\Sigma_\tau$. We restate the proposition:
\begin{proposition}\label{prop:immersed-embedded}
    Suppose that $S$ is the image of an immersion $i:\mathbb{S}^2\to \Sigma$ with $i(\mathbb{S}^2)\subset \Si\cap \RR_\delta$, so for each point $p\in\mathbb{S}^2$,  there exists  a neighborhood $\OO_p$ such that $i|_{\OO_p}$ is an embedding into $\Sigma$. Then there exists a small number $\mathring \delta>0$ such that  the following statement holds:
    
    If for each $p\in \mathbb{S}^2$,  there exists a tangent orthonormal  frame $\{\ezero_a\}$ (w.r.t. the induced metric from $\Sigma$) on $i(\OO_p)$ verifying
$|\ezero_a(r)|\leq \mathring\delta$, then the immersion is  in fact  an embedding. Moreover, there exists a smooth map
    \begin{equation*}
        R\colon \mathbb{S}^2\to [r_+-\delta,r_+ +\delta], \quad p\mapsto R(p)
    \end{equation*}
    such that $\Phi_{\Sigma}(\{(R(p),p)\colon p\in \mathbb{S}^2\})=S$ (with  $\Phi_{\Sigma}=\Phi_{\Sigma_\tau}$ defined by \eqref{eq:def-diffeomorphism-Sigma0}).
\end{proposition}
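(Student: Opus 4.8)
The plan is to reduce the statement to a covering-space argument for the ``vertical'' projection onto $\mathbb S^2$. Recall from \eqref{eq:def-diffeomorphism-Sigma0} the diffeomorphism $\Phi_\Sigma\colon(r_+-2\delta,r_++2\delta)\times\mathbb S^2\to\Sigma\cap\{|r-r_+|<2\delta\}$ and the associated projection $P_{\Sigma;\mathbb S^2}=P_{\mathbb S^2}\circ\Phi_\Sigma^{-1}$. Since $i(\mathbb S^2)\subset\Sigma\cap\RR_\delta$ lies in the domain of $P_{\Sigma;\mathbb S^2}$, the composition
\begin{equation*}
\pi:=P_{\Sigma;\mathbb S^2}\circ i\colon\mathbb S^2\to\mathbb S^2
\end{equation*}
is well defined and smooth. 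I would first show that $\pi$ is a local diffeomorphism; then, $\mathbb S^2$ being compact and connected, $\pi$ is a (finite) covering map, and $\mathbb S^2$ being simply connected, $\pi$ is in fact a diffeomorphism. Injectivity of $i$ is then immediate, since $i(p)=i(q)$ forces $\pi(p)=\pi(q)$ and hence $p=q$; so $i$ is an injective immersion of a compact manifold, hence an embedding. Finally the graph function is recovered as $R:=r\circ i\circ\pi^{-1}\colon\mathbb S^2\to[r_+-\delta,r_++\delta]$, the range being contained in $[r_+-\delta,r_++\delta]$ precisely because $i(\mathbb S^2)\subset\RR_\delta$, and $\Phi_\Sigma(\{(R(q),q):q\in\mathbb S^2\})=i(\mathbb S^2)=S$ by construction of $\pi$, with smoothness of $R$ inherited from that of $i$, $r$, and $\pi^{-1}$.

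\textbf{The key step: $d\pi$ is injective.} At $p\in\mathbb S^2$ the differential $d\pi_p$ sends $X\in T_p\mathbb S^2$ to the $\mathbb S^2$-component (in the $\Phi_\Sigma$-splitting) of $di_p(X)\in T_{i(p)}\Sigma$; since $\overline\pa_r$ spans the kernel of $dP_{\Sigma;\mathbb S^2}$, we have $d\pi_p(X)=0$ iff $di_p(X)$ is proportional to $\overline\pa_r$. Suppose $d\pi_p(X)=0$ with $X\neq0$. As $i$ is an immersion and $i|_{\OO_p}$ an embedding, $di_p(X)\neq0$ and $T_{i(p)}i(\OO_p)=\mathrm{im}\,di_p$, so $\overline\pa_r\in T_{i(p)}i(\OO_p)$ and therefore $\overline\pa_r=\sum_a c_a\,\ezero_a$ in the given orthonormal tangent frame. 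Applying this to the coordinate function $r$ and using $|\ezero_a(r)|\le\mathring\delta$ gives $1=\overline\pa_r(r)=\sum_a c_a\,\ezero_a(r)$, hence $1\le\sqrt2\,\mathring\delta\,\bigl(\sum_a c_a^2\bigr)^{1/2}$; on the other hand $\bigl(\sum_a c_a^2\bigr)^{1/2}=|\overline\pa_r|_{\overline g}\le\sqrt3$ by \eqref{eq:grr}. Thus $1\le\sqrt6\,\mathring\delta$, which is impossible once $\mathring\delta$ is fixed so that $\sqrt6\,\mathring\delta<1$. Hence $\ker d\pi_p=0$, and since $\pi$ maps between two-dimensional manifolds $d\pi_p$ is an isomorphism, so $\pi$ is a local diffeomorphism.

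\textbf{Conclusion and main obstacle.} With $\pi$ a local diffeomorphism from the compact manifold $\mathbb S^2$, it is proper, hence a covering map onto its image, which is open and closed and therefore all of $\mathbb S^2$; simple connectedness of $\mathbb S^2$ then forces $\pi$ to be a diffeomorphism, and the argument of the first paragraph completes the proof. The only genuinely quantitative input is the smallness of $\mathring\delta$ in the second paragraph, used together with the uniform bound \eqref{eq:grr} (equivalently \eqref{eq:equilvalence-metrics-on-Sigma}); everything else is soft topology. I expect this $\mathring\delta$-step — i.e.\ verifying that a near-horizontal immersed sphere cannot ``fold over'' the $r$-direction, so that vertical projection stays an immersion — to be the only real point, and it is exactly where the hypothesis $|\ezero_a(r)|\le\mathring\delta$ enters.
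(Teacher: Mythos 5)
Your proposal is correct, but it proves injectivity by a genuinely different route than the paper. The quantitative core is the same in both arguments: the hypothesis $|\ezero_a(r)|\leq\mathring\delta$, combined with the uniform bounds \eqref{eq:grr}, \eqref{eq:equilvalence-metrics-on-Sigma}, is used to show that the projection $\pi=P_{\Sigma;\mathbb{S}^2}\circ i$ is a local diffeomorphism (the paper gets this as a ``local graph property'' via transversality of $\overline\pa_r$ and the implicit function theorem; you get it more directly by showing $\ker d\pi_p=0$, since $\overline\pa_r\in\mathrm{im}\,di_p$ would force $1=\overline\pa_r(r)\leq\sqrt{6}\,\mathring\delta$ — a clean and correct computation). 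Where you diverge is the passage from local to global injectivity: you invoke the standard covering-space argument (a local diffeomorphism from compact $\mathbb{S}^2$ is proper, its image is open and closed hence all of $\mathbb{S}^2$, so it is a finite covering, and simple connectedness of $\mathbb{S}^2$ forces it to be a diffeomorphism), after which injectivity of $i$, the embedding property, and the graph function $R=r\circ i\circ\pi^{-1}$ all follow formally. The paper instead argues by contradiction with a bespoke connectedness/continuity analysis of the sublevel sets $G_\vartheta$ of a polar-angle function centered at a putative double point (statements {\bf A1}--{\bf A3} together with the curve-modification Lemma \ref{Lemma-paths}), and only at the very end observes that $\pi$ is a local diffeomorphism, hence a global one onto its image, to produce $R$. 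Your route is substantially shorter and relies on standard covering theory; the paper's is longer but entirely elementary and self-contained. Both deliver the same conclusion, including the range $[r_+-\delta,r_++\delta]$ from $i(\mathbb{S}^2)\subset\RR_\delta$ and the smoothness of $R$.
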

\begin{proof}  We briefly explain the ideas. The smallness of $\ezero_a(r)$ in the lemma will imply that locally the image of $i$ on $\Sigma$ can be written as a graph $r=R(\theta,\varphi)$ (or $R(x^1,x^2)$; we may omit this below)\footnote{Recall that $(\theta,\varphi)$ and $(x^1,x^2)$ are coordinates on $\Sigma$ given by the diffeomorphism $\Phi_{\Sigma}$.}. If the image $i$ fails to be injective, the pre-image of some point on $i(\mathbb{S}^2)$ will give isolated points, hence disconnected.     Using the local graph property, we make a continuity argument with respect to a polar angle on $\Sigma$ (constructed such that the point on $\Sigma$ where the injectivity fails correspond to the direction of the north pole), and reach a conclusion essentially saying that $\mathbb{S}^2$ becomes disconnected with finite points removed, which is a topological contradiction.

We proceed in steps as follows.

{\bf Step 1.} 
We show that for each $p\in \mathbb{S}^2$, there is a smaller neighborhood $\tilde \OO_p\subset \OO_p$, such that $i(\tilde\OO_p)$ can be written as $r=R(\theta,\varphi)$ on $\Sigma$. In other words, the map $P_{\mathbb{S}^2}\circ \Phi_{\Si}^{-1}\circ i|_{\tilde \OO_p}$ is a diffeomorphism onto its image on $\mathbb{S}^2$ (See Figure \ref{fig:diagram}).  We will refer to this as the  \textit{local graph  property at the point $p$}.

By assumption, any unit tangent vector $V$ of $i(\OO_p)$ satisfies $|V(r)|\leq 2\mathring\delta$. Clearly  the vector field $\overline{\pa}_r$ is transversal as can be  easily seen  from \eqref{eq:grr} and  $\overline{\pa}_r(r)=1$.
 In other words, assuming that   $G(r,\theta,\varphi)=0$ (or $G(r,x^1,x^2)=0$) is a local description of the embedded submanifold $i(\OO_p)$, then
\begin{equation*}
    \frac{\pa G}{\pa r}(r,\theta,\varphi)\neq 0.
\end{equation*}
Therefore, by the implicit function theorem,  there exists a neighborhood $\tilde \OO_p$, such that $i(\tilde \OO_p)$ has the form     $r=R(\theta,\varphi)$. In particular, the map $P_{\Si;\SSS^2}\circ i|_{\tilde \OO_p}$\footnote{Recall the definition of $P_{\Sigma;\mathbb{S}^2}$ in \eqref{eq:def-projection-S^2}.} is an embedding to $\SSS^2$.

{\bf Step 2.} 
 The goal of this step is to contradict the following assumption.

{\bf A.} Assume  that there are two distinct points $p_1,p_2\in\mathbb{S}^2$ such that $P_{\Sigma;\mathbb{S}^2}(i(p_1))=P_{\Sigma;\mathbb{S}^2}(i(p_2)) :=P_0$ are the same on $\mathbb{S}^2$.

Assuming {\bf A},  we        introduce some  convenient notations. On $\mathbb{S}^2$ consider the spherical coordinates $(\theta_{P_0},\varphi_{P_0})$ with $P_0$ being the north pole and $\theta_{P_0}$ being the polar angle. Denote the corresponding south pole by $P_0'$, so $\theta_{P_0}(P_0)=0$, $\theta_{P_0}(P_0')=\pi$. For each point $p$ on $\Sigma$, we assign a $\vartheta$-value to $p$, denoted by $\vartheta(p)$,        defined  to be   the the pull back through  $P_{\Si;\SSS^2} $  of the     $\theta_{P_0}$-coordinate value, i.e. the $\theta_{P_0}$ value of $P_{\Si;\SSS^2}(p)$. 
Clearly, $\vartheta$ is a continuous function on $\Sigma$.

\begin{figure}
    \centering
    \begin{tikzpicture}
  \node (A) at (-5.5,0) {$\mathbb{S}^2$};
  \node (B) at (-2.5,0) {$\Sigma\cap \{|r-r_+|<2\delta\}$};
  \node (C) at (2.5,0) {$(r_+-2\delta,r_+ +2\delta) \times \mathbb{S}^2$};
  \node (D) at (6,0) {$\mathbb{S}^2$};
  \node (E) at (-5.5,-0.6) {($G_\vartheta$, $H_{\vartheta}$)};
  \node (F) at (6,-0.6) {($P_0$, $P_0'$)};

  \draw[->] (A) -- (B) node[midway,above] {$i$};
  \draw [->] (B) -- (C) node[midway,above] {$\Phi_{\Sigma}^{-1}$};
  \draw[->] (C) -- (D) node[midway,above] {$P_{\SSS^2}$};
\end{tikzpicture}
\captionsetup{width=.7\linewidth, justification=centering}
    \caption{A diagram illustrating the maps. Note also the notation $P_{\Si;\mathbb{S}^2}=P_{\mathbb{S}^2}\circ \Phi_{\Si}^{-1}$. The local property says that the map from the left $\mathbb{S}^2$ to the right $\mathbb{S}^2$ is a local diffeomorphism.}\label{fig:diagram}
\end{figure}



Define $G_\kappa:=i^{-1} (\{p\in \Si\colon \vartheta(p)\leq \kappa\})$, $0\le \ka\le \pi$. 
Also denote 
    $H_{\kappa}:=i^{-1} (\{p\in \Si\colon \vartheta(p)= \kappa\})$. 
They are closed subsets of $\mathbb{S}^2$, hence compact. Moreover, for each point $p$ in $H_0$, using the local graph  property, we see that $\tilde\OO_p \cap H_0=\{p\}$, where $\tilde\OO_p$ is defined in  Step 1. By compactness we deduce that $H_0$ only contains finite points. The same thing holds for $H_\pi$.

We first prove a lemma.
 \begin{lemma}\label{lem:Near-H_theta}
    Given $\vartheta_0\in [0,\pi)$, and an open neighborhood $\Om$ of $G_{\vartheta_0}$ on $\mathbb{S}^2$, there exists a small $\delta'>0$, such that  $G_{\vartheta_0+\delta'}\subset \Om$. Similarly, for any open neighborhood of $H_\pi$, there exists a small $\delta''>0$ such that $H_{\pi-\delta'}\in \Om$ for all $\delta'\leq\delta''$.
\end{lemma}
\begin{proof}[Proof of lemma]
If not, then there is a sequence $\vartheta_n\searrow \vartheta_0$ such that there there are $\tilde p_n\in G_{\vartheta_n}\backslash \Om$. Since $\mathbb{S}^2$ is compact, $\{\tilde p_n\}$ has a convergent subsequence, still denoted by $\tilde p_n$, to a point $\tilde p\in \SSS^2$. Since the immersion $i$ is continuous, $i(\tilde p_n)$ is also convergent to $i(\tilde p)$. Also notice that $i(\tilde p_n)\in i(G_{\vartheta_n})$, i.e., $\vartheta(i(\tilde p_n))\leq \vartheta_n$, so we have $\vartheta(i(\tilde p))\leq \vartheta_0$, and by definition, $\tilde p\in G_{\vartheta_0}$. On the other hand, since $\tilde p_n\notin \Om$, we have $\tilde p\notin \Om$ since the complement of $\Om$ is closed, a contradiction. The proof of the second statement is almost identical.
\end{proof}
\begin{remark}
\lab{remark-Th_*}
By assumption {\bf A}, $H_0=G_0$ contains at least two points. Then since $H_0=G_0$ is a set of finite distinct points on $\mathbb{S}^2$, it is disconnected. Using $\tilde \OO_p\cap H_0=\{p\}$ for each $p\in H_0$ and the lemma with $\vartheta_0=0$, it is easy to see that there exists a $\delta_0>0$ such that $G_{\vartheta}$ remains disconnected
 for all $\vartheta\leq\delta_0$. 
 \end{remark}

Now define $\vartheta^*:=\sup\{\vartheta\in [0,\pi]\colon G_{\kappa} \text{ is disconnected for all }\kappa\in [0,\vartheta]\}$. In view of Remark \ref{remark-Th_*}  we have  $\vartheta^*\geq \delta_0$. 

With this setup, we show the following statements, which lead to a contradiction with the assumption {\bf A}:
\begin{enumerate}
    \item[{\bf A1}] It is impossible to have $\vartheta^*=\pi$;
    \item[{\bf A2}] If $\vartheta^*<\pi$, then $G_{\vartheta^*}$ is connected;
    \item[{\bf A3}] If $\vartheta^*<\pi$, then $G_{\vartheta^*}$ is disconnected.
\end{enumerate}

Once  we  show   that  {\bf A1}-{\bf A3}  must all  hold  true, we obtain a contradiction with assumption {\bf A}. 
Therefore,  for different points $p_1$, $p_2$ on $\mathbb{S}^2$, $P_{\Si;\SSS^2}(i(p_1))$ and $P_{\Si;\SSS^2}(i(p_2))$ cannot be the same, 
i.e, $P_{\Si;\SSS^2}\circ i\colon \mathbb{S}^2\to \mathbb{S}^2$ is injective. Since $P_{\Si;\SSS^2}\circ i\colon \mathbb{S}^2\to \mathbb{S}^2$ is also a local diffeomorphism between $\mathbb{S}^2$, it is a global diffeomorphism from $\SSS^2$ onto its image. Now noticing that no proper subset of $\mathbb{S}^2$ has the same topology as $\mathbb{S}^2$,\footnote{Recall that $\mathbb{S}^2$ with a point removed is homeomorphic to $\mathbb{R}^2$, so any proper subset of $\mathbb{S}^2$ is topologically a subset of $\mathbb{R}^2$.} we see that it must also be surjective, and hence, $S=i(\mathbb{S}^2)$ can be globally written as a graph of a function $R$ on $\mathbb{S}^2$ stated in Proposition \ref{prop:immersed-embedded}.
\end{proof}

We now show that {\bf A} implies {\bf A1}-{\bf A3}.

{\bf Proof of  {\bf A}$\implies$A1.} 
Indeed,  if $\vartheta^*=\pi$, then  $G_\vartheta$  is disconnected for all $\vartheta<\pi$. We have argued above that the set $H_{\pi}=i^{-1}(\Phi_{\Sigma}(\{(r,p)\colon p=P_0'\}))$ consists of finite number of points, denoted by $\{p_1',p_2',\cdots,p_m'\}$, with neighborhoods $\tilde \OO_{p_j'}$ satisfying the local graph property. Now take a sequence $\vartheta_n\nearrow \pi$. By Lemma \ref{lem:Near-H_theta} above, 
    for sufficiently large $n$, we have $i^{-1}\big(\{\vartheta>\vartheta_n\}\big)\subset \cup_{j=1}^k \tilde \OO_{p_j'}$ and becomes a union of small disks on $\mathbb{S}^2$ with radius arbitraily small. Precisely, for any $\epsilon>0$, there exists $N>0$ such that for all $n>N$, $\OO_{p_j'}\cap i^{-1}(\{\vartheta>\vartheta_n\})$ lies inside an open disk on $\mathbb{S}^2$ with radius\footnote{Evaluated under the standard metric on $\mathbb{S}^2$.} less than $\epsilon$ for all $j=1,\cdots,k$, and
    \[G_{\vartheta_n}=\mathbb{S}^2\backslash \big(\cup_{j=1}^k (\tilde \OO_{p_j'}\cap i^{-1}(\{\vartheta>\vartheta_n\})\big),\] 
    is a set obtained by removing $k$ disjoint open disks with radius less than $\epsilon$ from $\mathbb{S}^2$.
  
    By assumption, $G_{\vartheta_n}$ is disconnected for all $n$. On the other hand, for any given positive integer $k$, there exists $\epsilon'>0$ such that it is impossible to make $\mathbb{S}^2$ disconnected by taking away $k$ geodesic disks with radius less than $\epsilon'$. This gives a contradiction and proves {\bf A1}.
    
    
{\bf Proof of  {\bf A}$\implies$A2.}  
  If $G_{\vartheta^*}$  were   disconnected, since it is a compact subset of $\mathbb{S}^2$, we can write $G_{\vartheta^*}=G_{\vartheta^*}^1\cup G_{\vartheta^*}^2$ with $G_{\vartheta^*}^1$, $G_{\vartheta^*}^2$ being two non-empty, disjoint compact subsets of $\mathbb{S}^2$.
  We can then find two disjoint open sets $\Om_1$, $\Om_2$ such that $G_{\vartheta^*}^1\subset \Om_1$, $G_{\vartheta^*}^2\subset \Om_2$.
  Let $\Om=\Om_1 \cup \Om_2$. By Lemma \ref{lem:Near-H_theta}, for some $\delta'>0$, $G_{\vartheta^*+\delta'}\subset \Om$,  and so is $G_{\vartheta}$ for all $ \vartheta\leq \vartheta^*+\delta'$. Also, for any $\vartheta^*\le \vartheta\leq \vartheta^*+\delta'$, $G_{\vartheta}$ has a point in each connected component of $\Om=\Om_1\cup \Om_2$, so it is itself  disconnected. By the definition of $\vartheta^*$, we also know that $G_{\vartheta}$ is disconnected for $0\leq \vartheta <\vartheta^*$, hence for $0\leq \vartheta\leq \vartheta^* +\delta'$. This contradicts  the maximality of $\vartheta^*$. 

    

{\bf Proof of  {\bf A}$\implies$A3.}
This is shown by the following proposition.
\begin{proposition}
\lab{Prop:A3}
    If $0<\vartheta^*<\pi$, and $G_{\vartheta^*}$ is connected, then there exists $\vartheta''<\vartheta^*$ such that $G_{\vartheta''}$ is also connected.
  
\end{proposition}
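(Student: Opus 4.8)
The plan is to show that, under the local graph property established in Step 1, the number $\vartheta^{*}$ is a \emph{regular value} of the function $g:=\vartheta\circ i\colon\mathbb{S}^{2}\to[0,\pi]$, whose sublevel sets are the $G_{\kappa}=g^{-1}([0,\kappa])$. Once this is known, the topology of $G_{\kappa}$ does not change when $\kappa$ decreases slightly below $\vartheta^{*}$, and one takes $\vartheta''=\vartheta^{*}-\epsilon$ for a suitably small $\epsilon>0$.

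First I would restate Step 1 as the assertion that $h:=P_{\Sigma;\mathbb{S}^{2}}\circ i\colon\mathbb{S}^{2}\to\mathbb{S}^{2}$ is a local diffeomorphism: for each $p$ the map $h|_{\tilde \OO_{p}}$ is an injective immersion of an open subset of a surface into a surface, hence an open embedding, so $h$ is a local diffeomorphism everywhere on $\mathbb{S}^{2}$. Since $g=\theta_{P_{0}}\circ h$, we have $dg_{p}=(d\theta_{P_{0}})_{h(p)}\circ dh_{p}$ with $dh_{p}$ an isomorphism, so $g$ is smooth and $dg_{p}\neq 0$ at every $p$ with $h(p)\notin\{P_{0},P_{0}'\}$, i.e. at every $p$ with $0<g(p)<\pi$; here we use that the polar-angle function $\theta_{P_{0}}$ is smooth on $\mathbb{S}^{2}\setminus\{P_{0},P_{0}'\}$ with nowhere-vanishing differential (it is $\arccos$ applied to the height function $x\mapsto\langle x,P_{0}\rangle$, whose only critical points on $\mathbb{S}^{2}$ are $\pm P_{0}$, and $\arccos'$ never vanishes). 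Thus every $\kappa\in(0,\pi)$ is a regular value of $g$ and $H_{\kappa}=g^{-1}(\kappa)$ is a disjoint union of finitely many circles.

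Next, since $0<\vartheta^{*}<\pi$ by hypothesis, choose $\epsilon>0$ with $0<\vartheta^{*}-\epsilon$, so the compact shell $A:=g^{-1}([\vartheta^{*}-\epsilon,\vartheta^{*}])\subset G_{\vartheta^{*}}$ lies entirely in the region where $g$ is smooth and $dg\neq 0$. Integrating the rescaled negative gradient field $-\nabla g/|\nabla g|^{2}$ (for any fixed metric on $\mathbb{S}^{2}$) on $A$ decreases $g$ at unit speed and keeps the trajectory inside $A$, hence yields a continuous retraction $r\colon G_{\vartheta^{*}}\to G_{\vartheta^{*}-\epsilon}$ which is the identity on $G_{\vartheta^{*}-\epsilon}$ and pushes every point of $A$ down onto the level set $\{g=\vartheta^{*}-\epsilon\}=H_{\vartheta^{*}-\epsilon}$ (equivalently, the standard collar structure of the regular level $H_{\vartheta^{*}}$ identifies $A\cong H_{\vartheta^{*}}\times[\vartheta^{*}-\epsilon,\vartheta^{*}]$ with $g$ the projection onto the second factor). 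Since $r$ is continuous and surjective and $G_{\vartheta^{*}}$ is connected, $G_{\vartheta^{*}-\epsilon}=r(G_{\vartheta^{*}})$ is connected; setting $\vartheta''=\vartheta^{*}-\epsilon<\vartheta^{*}$ completes the proof.

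The main obstacle is the second step: upgrading the purely local graph information into the global statement that $\vartheta^{*}$ is a regular value of $g$, i.e. that every critical point of $g$ lies over one of the two poles $P_{0},P_{0}'$. The only genuine use of smoothness is to run the gradient flow on $A$, which is harmless since the null cone $C_{v}$ and the background foliation are smooth to order $N$; and one should check the trivial bookkeeping that $\vartheta^{*}-\epsilon>0$ — so that $H_{0}=G_{0}$, the finite disconnected set supplied by assumption \textbf{A}, stays strictly inside $G_{\vartheta^{*}-\epsilon}$ — and that the flow on the compact set $A$ is complete for the required time, both immediate.
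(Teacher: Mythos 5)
Your argument is correct, but it is a genuinely different route from the paper's. The paper proves Proposition \ref{Prop:A3} purely topologically: it covers $G_{\vartheta^*}$ by finitely many neighborhoods enjoying the local graph property, distinguishes those meeting $H_{\vartheta^*}$, and then runs the rather lengthy curve-modification induction of Lemma \ref{Lemma-paths} to push any connecting curve strictly below the level $\vartheta^*$, concluding that $G_{\vartheta''}$ is path-connected for a suitable $\vartheta''<\vartheta^*$; the only inputs are continuity of $\vartheta$ and the local graph structure. You instead promote Step 1 to the statement that $h=P_{\Si;\SSS^2}\circ i$ is a local diffeomorphism, so that $g=\theta_{P_0}\circ h$ has no critical points on $g^{-1}\big((0,\pi)\big)$; since $0<\vartheta^*-\epsilon$ and $\vartheta^*<\pi$, the shell $g^{-1}([\vartheta^*-\epsilon,\vartheta^*])$ consists of regular points, and the unit-speed descending gradient flow (well defined for the needed time because the trajectories are trapped in the compact shell, on which $|\nabla g|$ is bounded below) yields a continuous retraction of $G_{\vartheta^*}$ onto $G_{\vartheta^*-\epsilon}$, whence connectedness passes down. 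This is the standard ``no critical values between two sublevels'' mechanism, and it is sound here: $i$, $\Phi_\Si$ and $\theta_{P_0}$ (away from the poles) are smooth, and the shell avoids the preimages of $P_0,P_0'$, so the smoothness you invoke is available even though the paper never uses it at this stage. What each approach buys: the paper's proof is elementary and needs nothing beyond continuity plus the local graph property, at the price of the combinatorial bookkeeping in Lemma \ref{Lemma-paths}; yours is shorter, exploits the smooth structure, and actually gives more (a deformation retraction, hence a homotopy equivalence of the sublevels, not merely transfer of connectedness). One could even push your observation further: a local diffeomorphism from the compact, simply connected $\SSS^2$ to $\SSS^2$ is a covering map and hence a global diffeomorphism, which would shortcut much of the contradiction scheme {\bf A1}--{\bf A3} in the proof of Proposition \ref{lem:im-em1}; the paper deliberately opts for the more hands-on topological argument.
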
  

\begin{proof}
    We construct the following cover of $G_{\vartheta^*}$. Since  $\vartheta^*$ is a given number in $(0,\pi)$, for  each $p\in G_{\vartheta^*}$, we define
    \begin{itemize}
        \item If $\frac{|\vartheta(p)-\vartheta^*|}{\vartheta^*}\leq \frac 12$, 
        (i.e, $\vartheta(p)$ is not so close to $0$), we pick a smaller neighborhood $\OO'_p\subset \tilde \OO_p$ satisfying the property that $P_{\Si;\SSS^2}\circ i(\OO'_p)$ is  a  disk   $D_p(\ep_p) $,  of radius $\ep_p$  sufficiently small,     in the coordinate chart  of  $(\theta_{P_0},\vphi_{P_0})$\footnote{Note that $P_{\Sigma;\mathbb{S}^2}\circ i$ is already a diffeomorphism onto the image when restricted to $\OO_p$, and $(\theta_{P_0},\vphi_{P_0})$ is a regular coordinate chart away from $P_0$, $P_0'$.}.
        This means that for each $\kappa\in (0,\pi)$, the set $\OO_p'\cap i^{-1}(\{\vartheta\leq \kappa\})$ is diffeomorphic to $D_p(\ep_p)\cap \{\theta_{P_0}\leq \kappa\}$ through the map $P_{\Si;\SSS^2}\circ i$, so it is topologically the same as an upper half disk.

        \item If $\frac{|\vartheta(p)-\vartheta^*|}{\vartheta^*}> \frac 12$, we let $\OO'_p\subset \tilde \OO_p$ be a connected open set that lies in $G_{\frac 34 \vartheta^*}$ and contains $p$.
    \end{itemize}
    Since $\{\OO'_p\}_{p\in \mathbb{S}^2}$ is a cover of $G_{\vartheta^*}$, we can extract a finite subcover $\{\OO'_{p_1},\cdots,\OO'_{p_l}\}$, or simply $\{\OO_1',\cdots,\OO_l'\}$. Among them, we can further extract those that intersect with $H_{\vartheta^*}$, and denote them by $\{\UU_1,\cdots,\UU_k\}$. This is a finite cover of $H_{\vartheta^*}$.


We need a lemma that allows us to modify curves.
\begin{lemma}
\lab{Lemma-paths}
    Suppose that there is a curve $\gamma\colon [0,1]\to  G_{\vartheta^*}$, with $\vartheta\circ i(\gamma(0)),\vartheta\circ i(\gamma(1))<\vartheta^*$. Then there exists another curve $\tilde\gamma\colon [0,1]\to G_{\vartheta^*}$ with $\tilde \gamma(0)=\gamma(0)$, $\tilde \gamma(1)=\gamma(1)$, and it satisfies $\sup_{\gamma} \vartheta<\vartheta^*$.
\end{lemma}

Assuming the lemma to be true, we continue as follows:

For each $j\in \{1,\cdots, l\}$, we pick a representative $\mathring p_j\in \OO'_j$ such that $\vartheta(\mathring p_j)<\vartheta^*$. We deduce  that   there is a $\vartheta'<\vartheta$ such that $\vartheta(\mathring p_j)<\vartheta'$ for all $j=1,\cdots,l$. By assumption, for $\mathring p_j$, $\mathring p_k$, there is a curve in $G_{\vartheta^*}$ connecting them. Modifying the curve as above, we see that we can connect $\mathring p_j$, $\mathring p_k$ by a curve in $G_{\vartheta_{j,k}}$ for some $\vartheta_{j,k}<\vartheta^*$. Do this for all pairs of $j,k=1,\cdots,l$, and define $\vartheta''=\max\{\frac 34 \vartheta^*,\vartheta',\vartheta_{j,k}\}<\vartheta^*$.  This ensures that all   deformed curves connecting   $\mathring p_j$ are in $G_{\vartheta''}$.

  We claim that $G_{\vartheta''}$  has to be path connected, hence   connected.  To see this, take two points in $G_{\vartheta''}$. By our way of choosing $\OO'_j$, we see that $\OO'_j\cap i^{-1}(\{\vartheta\leq \vartheta''\})$, if non-empty, is connected, so these two points can be connected to their representatives in $\OO'_j\cap i^{-1}(\{\vartheta\leq \vartheta''\})$ (using $\vartheta'\leq \vartheta''$). Then since we have shown that for any $j,k=1,\cdots,l$, the points $\mathring p_j$, $\mathring p_k$ can be connected in $G_{\vartheta_{j,k}}\subset G_{\vartheta''}$, we see that $G_{\vartheta''}$ is indeed path-connected with $\vartheta''<\vartheta$. This  ends the proof of Proposition \ref{Prop:A3}.
\end{proof}

\subsection{Proof of Lemma \ref{Lemma-paths}}
We first illustrate the idea by assuming $k=2$, so $H_{\vartheta^*}\subset \UU_1\cup \UU_2$. Without loss of generality, we assume $\gamma$ enters $\UU_1$ first, 
and define $p_1\in \pa\UU_1$ as the point of this first entry\footnote{Strictly speaking, this should be described on the time interval $[0,1]$. We use the image point on $G_{\vartheta^*}$ for simplicity.}. This implies that $p_1\notin H_{\vartheta^*}$, because otherwise it would be in $\UU_2$ first. In the case when $\gamma(0)\in \UU_1$, we define instead $p_1:=\gamma(0)$. We then define $q_1\in \pa \UU_1$ to be the point where $\gamma$ leaves $\UU_1$ for the final time\footnote{This means that it is not entering $\UU_1$ after $q_1$; we allow the possibility for it to intersect $\pa\UU_1$.} (similarly, if $\gamma(1)\in \UU_1$, we let $q_1:=\gamma(1)$ instead; in this case the process ends and there are no $p_2$ and $q_2$ defined below). While $p_1\notin H_{\vartheta^*}$, it is possible that $q_1\in H_{\vartheta^*}$. We now define $p_2$ as follows:
\begin{itemize}
    \item If $q_1\in \UU_2$ (which must be true if $q_1\in H_{\vartheta^*}$), we define $p_2=q_1$;
    \item If $q_1\notin \UU_2$ (in this case $q_1\notin H_{\vartheta^*}$), we define $p_2\in \pa \UU_2$ to be point where $\gamma$ enters $\UU_2$ for the first time after $q_1$ (if there is no such point, then end the process and there are no $p_2$ and $q_2$); by assumption, at $p_2$, $\gamma$ has already left $\UU_1$ for the final time, so $p_2\notin \UU_1$, and hence, $p_2\notin H_{\vartheta^*}$.
\end{itemize}
We then define $q_2$ as the point where $\gamma$ leaves $\UU_2$ for the final time (similarly, if $\gamma(1)\in \UU_2$, we define $q_2=\gamma(1)$). By the assumption that $\gamma$ has already left $\UU_1$ for the final time, we see that $q_2\notin H_{\vartheta^*}$. Also, $\gamma$ will never be in $\UU_1\cap \UU_2$, hence never intersect $H_{\vartheta^*}$, after $q_2$. 

Now by the local structure of $\UU_1$ and $\UU_2$, there exists a curve connecting $p_1$ and $q_1$ such that its only possible intersection with $H_{\vartheta^*}$ is $q_1$, and a curve connecting $p_2$ and $q_2$ that does not intersect $\UU_2$. Therefore, if we replace the portion of $\gamma$ between $p_j$ and $q_j$ ($j=1,2$) by such segments, we obtain a new curve which only possibly intersects $H_{\vartheta^*}$ at $q_1$. Since in this case $q_1\in \UU_2$, it is clear that one can further modify near $q_1$ so that the curve does not intersect $H_{\vartheta^*}$. Therefore, there exists a curve having the same endpoints as $\gamma$ but never hits $H_{\vartheta^*}$. This finishes the proof with $k=2$ in view of the compactness of the new curve.

For general $k$, we prove by induction. Assume that we have found $p_1,q_1,\cdots,p_j,q_j$ in time order on $\gamma$, and a subcollection $\{\UU_1,\cdots,\UU_{j}\}$ from $\{\UU_1,\cdots,\UU_k\}$ (with possibly a rearrangement of the indices), such that
\begin{itemize}
    \item We have $p_1\notin H_{\vartheta^*}$; it is possible that $q_{j'}\in H_{\vartheta^*}$ for some $j'$ among $j'=1,\cdots,j$, but for $j'\geq 2$ (when $j\geq 2$), if $p_{j'}\neq q_{j'-1}$, then $q_{j'-1}\notin H_{\vartheta^*}$ and $p_{j'}\notin H_{\vartheta^*}$;
    \item For all $j'\leq j$, the curve $\gamma$ never intersects $\UU_{j'}$ after $q_{j'}$; 
    \item For all $j'\leq j$, there exists a curve $\gamma_j$, having the same two endpoints as $\gamma$, only different from $\gamma_{j'-1}$ ($\gamma_0:=\gamma$) on the segments between $p_{j'}$ and $q_{j'}$. This new segment between $p_{j'}$ and $q_{j'}$ on $\gamma_{j'}$ is denoted by $\gamma_{j'}[\, p_{j'};q_{j'}]$;
    \item The intersection $\gamma_j[\, p_{j'};q_{j'}]\cap H_{\vartheta^*} \subset \{q_{j'},q_{j'-1}\}$ (when $j'=1$, only $\{q_1\}$) for all $j'=1,\cdots,j$.
\end{itemize}
Following the same way of defining $p_1$ and $q_1$ above, we see that this holds for $j=1$.
We now define $p_{j+1}$ in a similar way as we defined $p_2$ above:
\begin{itemize}
    \item If $q_j$ lies in some element in $\{\UU_{j+1},\cdots,\UU_k\}$ (which must be true if $q_j\in H_{\vartheta^*}$; choose any one if there are multiple choices), without loss of generality denoted by $\UU_{j+1}$, we define $p_{j+1}=q_j$;
    \item If $q_j$ is not in any of $\{\UU_{j+1},\cdots,\UU_k\}$, without loss of generality, we denote the next one $\gamma$ enters by $\UU_{j+1}$, and define $p_{j+1}\in \pa \UU_{j+1}$ to be point of entry; this only happens when $q_j\notin H_{\vartheta^*}$. Also, this means that $p_{j+1}$ is not in any among $\{\UU_1,\cdots,\UU_k\}$, so $p_{j+1}\notin H_{\vartheta^*}$.
\end{itemize}
Then, we define $q_{j+1}$ as the point where $\gamma$ (or equivalently, $\gamma_j$) leaves $\UU_{j+1}$ for the final time (again, if $\gamma(1)\in \UU_{j+1}$, define instead $p_{j+1}:=\gamma(1)$ and end the process).

Note that it is possible that $q_{j+1}\in H_{\vartheta^*}$. As above, using the local structure of $\UU_{j+1}$ we can find a curve segment connecting $p_{j+1}$ and $q_{j+1}$, denoted by $\gamma_{j+1}[\, p_{j+1};q_{j+1}]$ such that its only possible intersections with $H_{\vartheta^*}$ are $q_{j+1}$ and, in the case $p_{j+1}=q_j$, the point $q_j$. We then replace the segment of $\gamma_j$ between $p_{j+1}$ and $q_{j+1}$ by this curve segment, and denote the new curve by $\gamma_{j+1}$. One can verify that the induction assumptions now hold true with $j$ replaced by $j+1$. This concludes the induction. The procedure will stop at $j=k'$ for some $k'\leq k$, with $\gamma$ does not intersect with any of $\{\UU_1,\cdots,\UU_k\}$ after $q_{k'}$, or $q_{k'}=\gamma(1)$. The curve $\gamma_{k'}$ satisfies
\begin{itemize}
    \item It has the same two endpoints as $\gamma$, and is only different from $\gamma$ on the segments between $p_{j'}$ and $q_{j'}$ for $j'\leq k'$;
    \item The intersection $\gamma_{k'}\cap H_{\vartheta^*} \subset \{q_1,\cdots,q_{k'-1}\}$.
\end{itemize}
This shows that $\gamma(0)$ and $\gamma(1)$ can be connected by a curve $\gamma_{k'}$ that only possibly intersects with $H_{\vartheta^*}$ at $\{q_1,\cdots,q_{k'-1}\}$. Recall that if $q_{j'}\in H_{\vartheta^*}$, by our construction we have $q_{j'}\in \UU_{j'+1}$, so clearly one can further modify the curve near $q_{j'}$ so that it does not intersect $H_{\vartheta^*}$ nearby. Since we have at most $k'-1$ such points, we can do this for finite times and obtain a curve $\tilde\gamma_{k'}$ connecting $\gamma(0)$ and $\gamma(1)$ that never intersects $H_{\vartheta^*}$. By compactness we see that $\sup_{\tilde\gamma_{k'}} \vartheta<\vartheta^*$, which finishes the proof.




\section{Proof of the Immersion Criterion}
\label{Appendix:Null-hypersurfaces}
For the benefit of the reader  we provide below a  short  outline on the   theory of focal points of null hypersurfaces; for a more complete treatment see  \cite{AretakisNotes}.

In this part, it would be convenient to work with the foliation induced by the affine parameter of the null generators. Fix a smooth choice of $L$ on $S$ and consider the corresponding null cone $C(S)$. Define $s$ by $L(s)=1$, $s=0$ on $S$.
For a local coordinate $(\theta_p,\varphi_p)$ near $p$ on $S$, one can then extend them near $\gamma(p)$ by $L(\theta_p)=L(\vphi_p)=0$, which also defines $\pa_{\theta_p}$, $\pa_{\vphi_p}$ tangent to $\{s=\const\}$ near each point, until the linear span of them becomes degenerate. The vector fields $\pa_{\theta_p}$, $\pa_{\vphi_p}$ are normal Jacobi fields in the sense that $[L,J]=0$ for $J=\pa_{\theta_p}, \pa_{\vphi_p}$ ($L=\pa_s$), and they satisfy the Jacobi equation
($D_s:=D_L$)
\begin{equation*}
    D_s^2 J+R(J,L)L=0,\quad J|_S=\pa_{\theta_p},\pa_{\vphi_p},\quad D_s J|_{S}=D_{J} L.
\end{equation*}
The equation is linear when $L$ is given, and the solution will be smooth if initially smooth.

\begin{proposition}\label{Prop; radius-conjugacy}
Fix a point $p\in S$. Let $s^*$ be the first (minimal)\footnote{This is towards the future; the treatment of the case towards the past is the same.} value of $s'$ such that the span of $\pa_{\theta_p}$, $\pa_{\varphi_p}$ degenerates, 
then $\trch\to -\infty$ towards this point along $\gamma_{p}$.
\end{proposition}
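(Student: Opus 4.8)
The plan is to track the behavior of the null expansion $\trch$ along the generator $\gamma_p$ by relating it to the Jacobi fields $\partial_{\theta_p},\partial_{\varphi_p}$ that span the horizontal distribution. The key algebraic fact is that, with $L=\partial_s$ the geodesic generator and $J_1,J_2$ the two normal Jacobi fields introduced above, one has
\begin{equation*}
\trch = \delta^{ab}\,g(D_{E_a}L,E_b) = \mathrm{tr}\big(\mathcal{W}\big),\qquad \mathcal{W}^{a}{}_{b}:=\text{the operator with } D_s J_b = \mathcal{W}^{a}{}_{b} J_a ,
\end{equation*}
i.e. $\trch$ is the trace of the ``second fundamental form'' operator $\mathcal{W}$ of the leaves $\{s=\const\}$, and if we write the $2\times2$ matrix $\mathbf{A}(s)$ whose columns are the components of $J_1,J_2$ in a parallel-transported frame along $\gamma_p$, then $\mathbf{A}$ solves the matrix Jacobi equation $\mathbf{A}'' + \mathbf{R}\,\mathbf{A}=0$ (with $\mathbf{R}$ the tidal matrix $R(\cdot,L)L$ in that frame), $\mathbf{A}$ is invertible for $s<s^*$, and
\begin{equation*}
\trch(s) = \mathrm{tr}\big(\mathbf{A}'(s)\mathbf{A}(s)^{-1}\big) = \frac{d}{ds}\log\det \mathbf{A}(s).
\end{equation*}

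First I would justify these identifications: that the parallel frame propagated from an orthonormal frame on $S$ gives the $E_a$ appearing in Definition \ref{Def:trch-C(S)}, that $g(D_{E_a}L,E_b)$ equals the components of $D_sJ$ re-expressed via $\mathbf{A}$, and hence the formula $\trch = (\det\mathbf{A})'/\det\mathbf{A}$. I would also record that $\mathbf{A}(0)$ is invertible (the Jacobi fields span $T_pS$ at $s=0$) and that $\det\mathbf{A}$ is a smooth function of $s$ with a first zero exactly at $s=s^*$, the first value where the span $\{J_1,J_2\}$ degenerates — this is precisely the definition of $s^*$ in the statement.

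Next, the analytic heart: as $s\nearrow s^*$, $\det\mathbf{A}(s)\to 0^+$ (positive on $[0,s^*)$ since $\mathbf{A}(0)$ is invertible and $\det\mathbf{A}$ cannot vanish before $s^*$) while $\frac{d}{ds}\det\mathbf{A}(s^*)$ could a priori vanish, so one cannot simply say $\frac{d}{ds}\log\det\mathbf{A}\to -\infty$ from a simple zero. The standard way around this is to use that $\mathbf{A}(s^*)$ has nontrivial kernel: pick $0\neq w\in\ker\mathbf{A}(s^*)$; then $|\mathbf{A}(s)w|^2\to 0$ while $\langle \mathbf{A}'(s)w,\mathbf{A}(s)w\rangle = \tfrac12\frac{d}{ds}|\mathbf{A}(s)w|^2$, and since $\mathbf{A}'(s^*)w\neq 0$ (by uniqueness for the linear ODE, $\mathbf{A}(s^*)w=0$ and $\mathbf{A}'(s^*)w=0$ would force $\mathbf{A}(\cdot)w\equiv 0$, contradicting $\mathbf{A}(0)$ invertible) one gets $|\mathbf{A}(s)w|^2 \sim c(s^*-s)^2$ with $c>0$, hence this factor contributes a term $\sim -\frac{1}{s^*-s}$ to $\trch$. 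More precisely, decomposing with respect to the kernel direction one shows $\det\mathbf{A}(s) = (s^*-s)^k\,(\text{smooth, nonzero at }s^*)$ where $k=\dim\ker\mathbf{A}(s^*)\in\{1,2\}$, so that $\trch(s) = -\frac{k}{s^*-s} + O(1) \to -\infty$. I would carry this out by conjugating $\mathbf{A}$ to block-triangular form adapted to $\ker\mathbf{A}(s^*)$, or equivalently by Taylor-expanding $\det\mathbf{A}$ at $s^*$ using that its lowest-order nonvanishing derivative has order exactly $k$ (a consequence of $\mathbf{A}$ solving a second-order linear ODE with invertible initial data — the zeros of $\det\mathbf{A}$ have finite, controlled order).

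The main obstacle I anticipate is exactly this last point: ruling out pathological degeneration, i.e. showing the first zero of $\det\mathbf{A}$ has finite order and that the order matches the kernel dimension, so the divergence of $\trch$ is genuinely $-\infty$ (a simple pole of negative residue) rather than, say, a removable situation or a sign ambiguity. One must use the linearity and second-order nature of the Jacobi equation together with invertibility of $\mathbf{A}(0)$ — the key lemma being that $\mathbf{A}(s)$ and $\mathbf{A}'(s)$ cannot both annihilate a common nonzero vector for any $s$ (else that Jacobi field and its derivative vanish simultaneously, forcing it to be identically zero). Once that is in hand the sign is automatic: $\det\mathbf{A}>0$ on $[0,s^*)$ and decreasing to $0$, so $\frac{d}{ds}\log\det\mathbf{A}<0$ near $s^*$ and blows up to $-\infty$. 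I would also remark that the choice of $L$ and of orthonormal frame $E_a$ is irrelevant by the invariance noted in Definition \ref{Def:trch-C(S)}, so it suffices to prove the claim for the affine-parameter normalization used here.
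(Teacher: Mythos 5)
Your proposal follows essentially the same route as the paper: Fermi-propagated frame along the generator, the deformation matrix whose columns are the components of the Jacobi fields $\pa_{\theta_p},\pa_{\varphi_p}$, the identity $\trch=\frac{d}{ds}\log\det$ of that matrix, and the degeneration of the determinant at $s^*$. Where you differ is only in the last step: the paper passes directly from $\det M\to 0^+$ to $\trch\to-\infty$, while you add an ODE argument showing the first zero of $\det\mathbf{A}$ has order $k=\dim\ker\mathbf{A}(s^*)$, so that $\trch\sim -k/(s^*-s)$. That refinement is a genuine addition (from $\det M\to 0^+$ alone one only gets that $\trch$ is unbounded below near $s^*$, which is in fact all that Proposition \ref{Prop:Immersion-Criterion} requires), and your key lemma that $\mathbf{A}(s)$ and $\mathbf{A}'(s)$ cannot annihilate a common nonzero vector is correct. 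However, your stated justification of the order-of-vanishing claim --- ``a consequence of solving a second-order linear ODE with invertible initial data'' --- is not sufficient as phrased: the matrix with rows $(1-s,\ s)$ and $(0,\ 1-s)$ solves $\mathbf{A}''=0$ with $\mathbf{A}(0)=I$, yet $\det\mathbf{A}=(1-s)^2$ vanishes to order $2$ while the kernel at $s=1$ is one-dimensional. To get order exactly $k$ you must invoke the Lagrangian (self-adjoint) structure of this particular Jacobi system, namely that the Wronskian $\mathbf{A}^{T}\mathbf{A}'-(\mathbf{A}')^{T}\mathbf{A}$ vanishes identically because the initial condition is $\mathbf{A}'(0)=\chi\,\mathbf{A}(0)$ with $\chi$ symmetric and the tidal matrix $\a_{ab}=R_{a4b4}$ is symmetric; this gives $\mathbf{A}'(s^*)\big(\ker\mathbf{A}(s^*)\big)\perp \operatorname{range}\mathbf{A}(s^*)$ and hence the simple-pole behavior you assert. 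Alternatively, you can drop the refinement altogether and note that unboundedness below of $\trch$ already suffices for the application.
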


\begin{proof}

We choose an orthonormal frame of $S$ near $p$, denoted by $\{E_1,E_2\}$, and extend it as the Fermi frame by the equation (denote $E_4=L$)
\begin{equation}\label{eq:Fermiframe}
    D_{E_4} E_a=-\zeta_a E_4
\end{equation}
where $\zeta_a=g(D_{E_a}E_4,E_3)$ (with $E_3$ being the null companion of $E_4$ orthogonal to $E_a$). We have
\begin{equation*}
E_4(E_i(s))=[E_4,E_i]s=(D_{E_4} E_i-D_{E_i} E_4)(s)=(-\zeta_i E_4-D_{E_i} E_4)s=-\chi(E_i,E_j)E_j(s)
\end{equation*}
so we see that if the time function $s$ satisfies $E_i(s)=0$ initially, then we always have $E_i(s)=0$. Also,
\begin{equation*}
    E_4\, g(E_a,E_b)=g(-\zeta_a E_4,E_b)+g(E_a,-\zeta_b E_4)=0,\quad  E_4\, g(E_a,E_4)=g(-\zeta_a E_4,E_4)=0.
\end{equation*}
Therefore, $E_a$ will always be an orthonormal frame orthogonal to $E_4$.

Since $X=\pa_{\theta_p},\pa_{\vphi_p}$ are tangent to $\{s=\const\}$, we write $X=X^a E_a$. The condition $[L,X]=0$ implies 
\begin{equation*}
    \frac{d}{ds}(X^a(s))E_a+X^a D_L E_a=X^a D_{E_a}L, \text{ hence }\frac{d}{ds}X^a(s)=\chi_{ab} X^b.
\end{equation*}
In particular, this determines the initial condition on $\frac{d}{ds}X^a(s)$.

The Jacobi equation now gives
\begin{equation*}
    \frac{d^2}{ds^2}X^a(s)=\a(E_a,E_b)X^b.
\end{equation*}
Note that $E_a$, $\a$ (defined under the frame $\{E_3,E_4,E_a\}$) are known quantities, so this is a linear equation of $X^a$. Therefore, $X^a$ will be smooth when the initial data and spacetime are smooth.

Consider the deformation matrix $M$, defined by $X^a(s)=M_b^a(s) X^b(0)$. It satisfies 
\begin{equation*}
    \frac{d}{ds}M_b^a(s)=\chi_{ac} M_b^c,
\end{equation*}
which implies
\begin{equation*}
    \frac{d}{ds}\big(\log\det M\big)=\trch.
\end{equation*}
By assumption, we see that $M$ becomes degenrate, i.e., $\det M\to 0^+$ along $\ga_p$ as $s\to s^*$. This implies that $\trch \to -\infty$ towards the point.
\end{proof}

Therefore, if $\trch$ is not diverging to $-\infty$ at any point, then $M$ will always be a linear isomorphism.
In this case, consider the map
\begin{equation*}
    i_\tau\colon S\mapsto \Sigma_\tau
\end{equation*}
where $\Sigma_\tau$ is a level hypersurface of a time function, and we define $i_\tau(p)$ to be the unique point of the intersection $\ga_p\cap \Si_\tau$. 
The tangent map of $i_\tau$ is given by
\begin{equation*}
    \pa_{\theta_p}|_{S}\mapsto \pa_{\theta_p}-\frac{\pa_{\theta_p}\tau}{L(\tau)}L,\quad \pa_{\vphi_p}|_S \mapsto \pa_{\vphi_p}-\frac{\pa_{\vphi_p} \tau}{L(\tau)}L
\end{equation*}
We know that $L(\tau)>0$ when $\tau$ is a time function; it is then clear that we obtain two linearly independent vectors, so the tangent map is injective.
Therefore, $i_\tau$ is an immersion. 
Since the map is also smooth in $\tau$ and $L$ is transversal to $\Si_\tau$, we also see that it gives an immersion from $[0,T]\times  S\to \MM$, $(\tau,p)\mapsto i_\tau(p)$.

{\bf Acknowledgements.} Both authors would like to thank IHES for their hospitality during visits. 
X.C. thanks his advisor, Hans Lindblad, for supporting his many visits to Columbia University 
through the Simons Collaboration Grant 638955. Klainerman  was supported  by the NSF grant 2201031.

\end{document}